\PassOptionsToPackage{prologue,dvipsnames}{xcolor}
\documentclass[acmsmall,screen]{acmart}
\usepackage{style}

\AtBeginDocument{%
  }

\setcopyright{cc}
\copyrightyear{2026}
\acmYear{2026}
\acmDOI{10.1145/3797874}

\makeatletter
\renewcommand{\@journalNameShort}{\@journalName}
\makeatother
\acmJournal{TOPLAS}
\acmVolume{48}
\acmNumber{1}
\acmArticle{5}
\acmMonth{3}

\makeatletter
\renewcommand{\@journalNameShort}{\text{ACM Transactions on Programming Languages and Systems}}
\makeatother



\citestyle{acmauthoryear}

\begin{document}

\title{Denotation-based Compositional Compiler Verification}


\author{Zhang Cheng}
\orcid{0009-0005-3833-6612}
\email{mooc3535@gmail.com}
\affiliation{%
  \institution{Shanghai Jiao Tong University}
  \state{Shanghai}
  \country{China}
}

\author{Jiyang Wu}
\orcid{0009-0002-6638-4839}
\email{wujiyang@sjtu.edu.cn}
\affiliation{%
  \institution{Shanghai Jiao Tong University}
  \state{Shanghai}
  \country{China}
}

\author{Di Wang}
\orcid{0000-0002-2418-7987}
\email{wangdi95@pku.edu.cn}
\affiliation{%
  \institution{Peking University}
  \state{Beijing}
  \country{China}
}

\author{Qinxiang Cao}
\orcid{0000-0002-5678-6538}
\authornote{Corresponding author}
\email{caoqinxiang@gmail.com}
\affiliation{%
  \institution{Shanghai Jiao Tong University}
  \state{Shanghai}
  \country{China}
}

\renewcommand{\shortauthors}{Z. Cheng et al.}

\begin{abstract}
  A desired but challenging property of compiler verification is compositionality, in the sense that the compilation correctness of a program can be deduced incrementally from that of its substructures ranging from statements, functions, and modules.
    This article proposes a novel compiler verification framework based on denotational semantics for better compositionality, compared to previous approaches based on small-step operational semantics and simulation theories.
    Our denotational semantics is defined by semantic functions that map a syntactic component to a semantic domain composed of multiple behavioral \emph{sets}, with compiler correctness established through behavior refinement between the semantic domains of the source and target programs.
    The main contributions of this article include proposing a denotational semantics for open modules, a novel semantic linking operator, and a refinement algebra that unifies various behavior refinements, making compiler verification structured and compositional.
	Furthermore, our formalization captures the full meaning of a program and bridges the gap between traditional power-domain-based denotational semantics and the practical needs of compiler verification.
We apply our denotation-based framework to verify the front-end of CompCert and typical optimizations on simple prototypes of imperative languages. Our results demonstrate that the compositionality from sub-statements to statements, from functions to modules, and from modules to the whole program can be effectively achieved.

\end{abstract}

\keywords{compiler verification, denotational semantics, compositionality, CompCert}

\begin{CCSXML}
<ccs2012>
   <concept>
       <concept_id>10011007.10010940.10010992.10010998.10010999</concept_id>
       <concept_desc>Software and its engineering~Software verification</concept_desc>
       <concept_significance>500</concept_significance>
       </concept>
   <concept>
       <concept_id>10003752.10010124.10010131.10010133</concept_id>
       <concept_desc>Theory of computation~Denotational semantics</concept_desc>
       <concept_significance>300</concept_significance>
       </concept>
   <concept>
       <concept_id>10011007.10011006.10011041</concept_id>
       <concept_desc>Software and its engineering~Compilers</concept_desc>
       <concept_significance>100</concept_significance>
       </concept>
 </ccs2012>
\end{CCSXML}

\ccsdesc[500]{Software and its engineering~Software verification}
\ccsdesc[300]{Theory of computation~Denotational semantics}
\ccsdesc[100]{Software and its engineering~Compilers}


\received{05 December 2024}
\received[revised]{27 October 2025}
\received[accepted]{23 January 2026}

\maketitle

\section{Introduction}
\label{sec:intro}

Research on compiler verification has a long history. The state-of-the-art achievement in this field is CompCert \cite{compcert}.
Specifically,
CompCert translates programs written in a large subset of C into optimized assembly code, going through multiple intermediate languages.
For each of them, program behavior is formulated by a labeled transition system (LTS) according to the small-step operational semantics. Compiler correctness is then achieved by showing a backward simulation\footnote{
In fact, CompCert verifies almost all compilation phases using forward simulation, and then derives the backward simulation by certain side conditions after vertically composing the compilation correctness of each phase.}
asserting that every execution step of the target program can be simulated by zero or more execution steps of the source program in a behaviorally consistent way, as shown in Fig. \ref{subfig:bsim-compcert}.
\begin{figure}[bht]
    \begin{subfigure}[b]{.32\linewidth}
    \centering
       \includegraphics[width=0.6\linewidth]{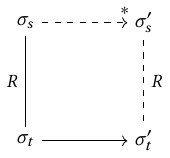}
    \caption{The backward simulation\protect\footnotemark}
    \label{subfig:bsim-compcert}
    \end{subfigure} 
    \begin{subfigure}[b]{.32\linewidth}
    \centering
    \includegraphics[width=0.6\linewidth]{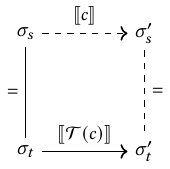}
    \caption{Simple behavior refinement}
    \label{subfig:simpl_bref}
    \end{subfigure} 
    \begin{subfigure}[b]{.32\linewidth}
        \flushright
    \includegraphics[width=0.6\linewidth]{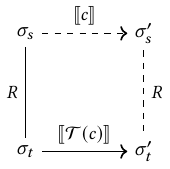}
    \centering        
    \caption{General behavior refinement}
    \label{subfig:general_bref}
    \end{subfigure}   
    \caption{The backward simulation of CompCert and the behavior refinement between denotations.}
    \Description{This figure illustrates the backward simulation of CompCert and the behavior refinement between denotations. (a) The first subfigure shows the backward simulation in CompCert, where each execution step of the target program can be simulated by zero or more execution steps of the source program, ensuring behavior consistency. (b) The second subfigure illustrates simple behavior refinement, where the target program's behavior is contained within the source program's behavior. (c) The third subfigure depicts general behavior refinement, where the source and target may use different state spaces.}
    \label{fig:behavior_refinement}
\end{figure}

\footnotetext{
The relation $R$ can be indexed by a natural number that strictly decreases on source stuttering steps, ensuring progress.}

This article proposes a different framework for compiler verification---a denotation-based approach that focuses on the overall properties of programs and enjoys better proof compositionality.
For example, considering type-safe and nondeterministic programs with the set of program states {\tts state}, a textbook denotational semantics \citep[Chapter 8]{plotkin1983domains} %
for a program statement $c$ can be defined by a binary relation between lifted program states
(i.e., $\llbracket c \rrbracket \subseteq \ttt{state} \times \ttt{state}_{\bot}$\footnote{
$\ttt{state}_{\bot}$ represents $\ttt{state} \cup \{\bot\}$, where $\bot$ represents nontermination.}),
 meaning that for any $(\sigma_1, \sigma_2) \in \llbracket c \rrbracket$,  executing statement $c$ from state $\sigma_1$ may eventually terminate at state $\sigma_2$ if $\sigma_2 \neq \bot$, or otherwise ($\sigma_2 = \bot$) executing $c$ from state $\sigma_1$ may not terminate.
    Thus,
    \begin{itemize}
        \item the sequential statement's denotation can be defined by \emph{composition} of binary relations, namely
    $\llbracket c_1; c_2 \rrbracket \triangleq
    \llbracket c_1 \rrbracket \circ \llbracket c_2 \rrbracket
    $; and
        \item transformation correctness can be defined by set inclusion, i.e., a transformation $\mathcal{T}$ is sound iff. $\forall c, \llbracket \mathcal{T}(c) \rrbracket \subseteq \llbracket c \rrbracket$, which allows the transformation result $\mathcal{T}(c)$ to have fewer possible behaviors than the original program, but no extra behaviors.
    \end{itemize}   
    In this setting, if a transformation $\mathcal{T}$ satisfies $\mathcal{T}(c_1; c_2) = (\mathcal{T}(c_1); \mathcal{T}(c_2))$, then its transformation correctness is compositional in terms of the composition relation, i.e.,
\begin{align}
    \text{if }\llbracket \mathcal{T}(c_1) \rrbracket \subseteq \llbracket c_1 \rrbracket
\text{ and }
   \llbracket \mathcal{T}(c_2) \rrbracket \subseteq \llbracket c_2 \rrbracket
\text{, then }\llbracket \mathcal{T}(c_1) \rrbracket \circ \llbracket \mathcal{T}(c_2) \rrbracket
      \subseteq
      \llbracket c_1 \rrbracket \circ \llbracket c_2 \rrbracket
      \label{eq:compose_mono}
\end{align}
This compositional proof style may look unrealistic for verifying real compilation passes, since it is based on extremely simplified semantic assumptions. We fill the gap between this ideal proof paradigm and reasonably practical compiler verification by overcoming the following difficulties:

\paragraph{General refinement.}
In practice, $\bracket{\mathcal{T}(c)} \subseteq \bracket{c}$ is not general enough for describing all transformation correctness, especially for transformations that may alter how states are represented.
In such cases, a general behavior refinement, based on a state-matching relation $R$ ($\subseteq \ttt{state}_s \times \ttt{state}_t$), is required. We define $\bracket{\mathcal{T}(c)} \sqsubseteq_{R} \bracket{c}$ as follows (we omit the subscript $R$ when there is no ambiguity):
\begin{equation}
  \forall \sigma_s\ \sigma_t\ \sigma_t',  (\sigma_s, \sigma_t)\in R \Rightarrow (\sigma_t,\sigma_t') \in \bracket{\mathcal{T}(c)} \Rightarrow \exists \sigma_s', (\sigma_s', \sigma_t')\in R \wedge (\sigma_s, \sigma_s') \in \bracket{c} 
\end{equation}
This formula is illustrated by Fig. \ref{subfig:general_bref}, and obviously \(\sem{\compile(c)} \subseteq \sem{c}\) is the special case when \(R\) is defined as the equal relation ``\(=\)'', as illustrated in Fig.~\ref{subfig:simpl_bref}. We find that such a general definition of refinement is also compositional w.r.t. common semantic combinators like Formula~(\ref{eq:compose_mono}).

\paragraph{More program features.}
Realistic programming languages involve a richer set of features that must be accounted for in the semantics. To this end, we use input-output event traces to express a program's observable behavior as in CompCert, and use separate \emph{behavior sets} to handle different styles of control flow for CompCert intermediate languages and especially to describe terminating, diverging, and aborting behaviors that traditional powerdomains \cite{DBLP:journals/jcss/Smyth78,DBLP:journals/siamcomp/Plotkin76,DBLP:journals/tcs/Winskel85} in domain theory are unable to capture simultaneously\footnote{That is, semantics of recursion cannot be defined by taking a single Kleene least fixed point.}. 

\paragraph{Semantics of open modules and semantic linking.}
A practical compiler should support separate compilation, i.e., different modules or different source code files can be compiled separately first and linked in the end. The correctness of separate compilation is an important research problem which a series of extensions to CompCert such as CompComp \cite{DBLP:conf/popl/StewartBCA15}, CompCertM \cite{DBLP:journals/pacmpl/SongCKKKH20}, CompCertO \cite{DBLP:conf/pldi/KoenigS21} and recent work by \citeauthor{DBLP:journals/pacmpl/ZhangWWKS24} [\citeyear{DBLP:journals/pacmpl/ZhangWWKS24}] tries to tackle. This article proposes a denotational semantics for open
modules, defines a novel semantic linking operator, and shows that this operator preserves the general behavior refinement, {analogous to how the composition preserves the set inclusion} shown in Formula~(\ref{eq:compose_mono}).
\begin{figure}[t!]
    \centering
\begin{tabular}{|c c|}
\hline
    \begin{lstlisting}[language=Coq]
 S1. while $e$ do $s$   $\ttt{/\!/}$ Clight while loop
 T1. block {      $\ttt{/\!/}$ Csharpminor loop
       loop {
         block { if $e$ then $s$ else exit 1;
         } $\ttt{/\!/}$continue of $s$ branches here
       } 
     } $\ttt{/\!/}$break of $s$ branches here
 S2. switch ($e$) {  $\ttt{/\!/}$ Csharpminor switch
       case N1: $s_1$; 
       case N2: $s_2$;
       default: $s$;
     }
     \end{lstlisting} &
     \begin{lstlisting}[language=Coq]
 T2. block {   $\ttt{/\!/}$ Cminor switch
       block {
         block {
           switch ($e$) {
             N1: exit 0;
             N2: exit 1;
             default: exit 2;
           }
         }; $t_1$ $\ttt{/\!/}$from $s_1$ with exits shifted by 2$\;\;$
       }; $t_2$ $\ttt{/\!/}$from $s_2$ with exits shifted by 1 
     }; $t$ $\ttt{/\!/}$from $s$ with exits unchanged
     $\;$
     \end{lstlisting}
     \\
\hline
\end{tabular}
    \caption{ %
      Clight loop \kwd{S1} and Csharpminor switch \kwd{S2} are respectively translated to Csharpminor loop \kwd{T1} and Cminor switch \kwd{T2}, where ({\tts exit} $n$) will prematurely terminate ($n+1$) layers of nested blocks. In program \kwd{T2}, the exiting number in $s_i$  is properly shifted according to the level of blocks it resides in. In Cminor, unlike Csharpminor, the branches of switch statements cannot be general statements but must be \textbf{exit} statements. }
    \label{fig:transl_exam}
    \Description{This figure presents two examples of recursive control-flow translation in CompCert. On the left, a Clight while loop \texttt{while e do s} is translated into a Csharpminor program with nested \texttt{block} and \texttt{loop} constructs, where \texttt{continue} from the loop body branches to the inner block boundary and \texttt{break} branches to the outer block boundary. The same panel also shows a Csharpminor switch statement with two numbered cases and a default branch. On the right, that switch is translated into Cminor as a cascade of three nested blocks surrounding a switch whose branches are only \texttt{exit} statements: case \(N1\) exits 0 levels, case \(N2\) exits 1 level, and the default exits 2 levels. After each exit, the corresponding translated branch body is executed outside the appropriate number of enclosing blocks, so the exit numbers are shifted according to block nesting. The figure illustrates how structured loops and multi-branch switches are compiled into lower-level control flow using nested blocks and exits.}
\end{figure}

Many compilation passes in CompCert are recursively defined transformations, such as, as shown in Fig. \ref{fig:transl_exam}, the transformation of \textbf{while}, \textbf{do .. while} and \textbf{for} loops into infinite loops with appropriate \textbf{block} and \textbf{exit} constructs, along with statements \textbf{break} and \textbf{continue} into early exits, and
the transformation of multi-branch \textbf{switch} from Csharpminor into a Cminor \textbf{switch} wrapped by the statements associated with the various branches in a cascade of nested Cminor blocks.
In our framework, the denotational semantics of a statement is recursively derived from its substructures with certain semantic operators (e.g., the composition ``$\circ$'').
	  For example, the denotation of a sequential statement $(c_1;c_2)$ is defined by the denotations of $c_1$ and $c_2$, and the denotation of a loop statement is defined based on its loop body's.
When verifying the correctness of statement transformations, we employ induction on the syntax of source program statements, which produces proof obligations asserting that for each syntactic component, its corresponding semantic operators preserve transformation correctness. As illustrated by (\ref{eq:compose_mono}), such obligations should be easy to prove by the favorable algebraic properties inherent in denotational semantics.

We have successfully applied our denotation-based verification framework to the front-end of CompCert and typical optimizations on simple prototypes of imperative languages.
When building this framework, we discover that the following two algebraic structures are highly beneficial.
\begin{itemize}
    \item Denotational semantics can be modeled in an extended Kleene Algebra with Tests (KAT) \cite{DBLP:journals/toplas/Kozen97} with operators $\cup$, $\circ$, the empty relation $\varnothing$ and the identity relation $\idrel$ satisfying certain axioms (further discussed in \S\ref{sec:semantics_nocall}). 
Thus, the properties of programs can be carried out in a purely equational subsystem using the axioms of Kleene Algebras.
    \item On top of extended KATs, we propose a notion of \emph{refinement algebra} to unify various behavior refinements and make refinement proofs structured and automated.
    That is, automated proof tactics are designed to mechanically verify behavior refinement for composition statements (such as if-branch, loop, and sequential statements) and module-level compositionality.
\end{itemize}

\paragbf{Main contribution.}  {This article proposes a novel framework to formalize the denotational semantics of first-order imperative languages, and we apply such framework to the compositional verification of realistic compilers.}
Our denotation-based framework is featured with (i) precise denotational semantics for {full} open modules and realistic languages (putting denotational semantics into practical scenarios and achieving better compositionality), 
(ii) a novel notion of refinement algebra for behavior refinement abstraction and proof automation,
and (iii) successful application to the front-end of CompCert and typical optimizations on {first-order imperative} programs.
 The refinement algebra integrates well with Kleene Algebra with Tests (KAT), and works for both the verification of partial program compilation and module-level compositionality. All results of this article are formalized in \Coq (formerly Coq). 
Compared to previous extensions of CompCert, such as CompComp and CompCertM, our \Coq formalization is notably more lightweight in terms of lines of code. 

\begin{figure}[t]
    \centering 
    \begin{tabular}{|l|c|c|c|c|c|c|c|}
        \hline
        \multirow{2}{*}{Languages}
        & \multicolumn{4}{c|}{Toy languages} & \multicolumn{3}{c|}{CompCert front-end} \\
        \cline{2-8} 
        & WHILE & PCALL & GOTO & CFG & Clight & Csharpminor & Cminor
        \\
        \hline
         Section & \S\ref{subsec:while_lang} &
         \S\ref{subsec:standard_pcall} $\sim$
         \S\ref{subsec:pcall_dvg}&
         \S\ref{subsec:goto_sem} &
         \S\ref{sec:semantics_cfg} &
         \multicolumn{3}{c|}{
         \S\ref{subsec:control_flow} $\sim$ 
         \S\ref{subsec:loop_trace}
          \quad and \quad \S\ref{sec:semantics_front}} 
         \\
         \hline
    \end{tabular}
    \caption{Formal languages for demonstrating our methodology.}
    \label{fig:formal_lang}
    \Description{}
\end{figure}

Our presentation starts with smaller toy languages and later extends the ideas to the more elaborate CompCert front-end languages. The languages used in each section are shown in Fig. \ref{fig:formal_lang}.
\paragbf{Structure of the paper.} We briefly introduce CompCert and the challenges we are facing in \S\ref{sec:background}. We then use the WHILE language to illustrate the basic denotational semantics on KATs and extend it to realistic languages in \S\ref{sec:semantics_nocall}. We further take procedure calls into account and build the semantics of open modules and a novel semantic linking operator via the PCALL languages in \S\ref{sec:semantics_pcall}. We introduce the denotation of unstructured branches using the GOTO language in \S\ref{subsec:goto_sem} and that of control flow graphs (CFG) with the CFG language in \S\ref{sec:semantics_cfg}. We present the full semantics of Clight in \S\ref{sec:semantics_front}.

The highlight of this article is developing novel refinement algebras to establish behavior refinement between denotations in \S\ref{sec:refinement}, and applying our theoretical framework to verify the CompCert front-end, a classical CFG generation, and optimizations on toy languages in \S\ref{sec:compose_correctness}. We further compare our approach with typical related efforts in \S\ref{sec:comparison}, followed by code evaluations in \S\ref{sec:evaluation} and further discussion in \S\ref{sec:discussion}. Finally, we discuss related work in \S\ref{sec:related_work} and conclude the article in \S\ref{sec:conclusion}.

\section{Background and Challenges of Compiler Verification}
\label{sec:background}
\subsection{CompCert Verification Framework}
\label{subsec:compcert_bg}

 The small-step semantics of a language $\mathcal{L}$ in CompCert is modeled as a tuple ($\mathcal{I}$, $\hookrightarrow$, $\mathcal{F}$) where $\mathcal{I}$ denotes the set of initial states loaded by a function {\tts Load}: {$\mathcal{L} \rightarrow \ttt{state}$}, $\mathcal{F}$ denotes the set of final states of program execution, and $\hookrightarrow \subseteq \ttt{state} \times \ttt{trace} \times \ttt{state}$ is a stepping relation. For each single step of execution $(\sigma, \tau, \sigma') \in \hookrightarrow$ (written as $\sigma \stackrel{\tau}{\hookrightarrow} \sigma'$), state $\sigma$ transitions to state $\sigma'$ with a finite sequence of triggered observable events (typically, I/O and other system calls). An observable behavior is defined as a (finite or infinite) trace of observable events occurring in a sequence of execution steps.
The CompCert compilation correctness theorem ensures that the target program preserves the observable behavior of the source program and does not introduce any new behavior that wasn't present in the source program. Such compilation correctness is justified by a backward simulation.
We say a program $P_s$ is backward simulated by a program $P_t$ if given a state matching relation $R$, their initial states $\ttt{Load}(P_s)$ and $\ttt{Load}(P_t)$ is related by $R$, and (simplified for presentation)
\begin{align*}
  \forall \sigma_s\ \sigma_t\ \sigma_t'\ \tau,  (\sigma_s, \sigma_t)\in R \Rightarrow \sigma_t \stackrel{\tau}{\hookrightarrow} \sigma_t' \Rightarrow \exists \sigma_s', (\sigma_s', \sigma_t')\in R \wedge \sigma_s \stackrel{\tau}{\hookrightarrow}{\!^{\!*}}\sigma_s'
\end{align*}
 It reads that for every step taken by the compiled target program, there exists zero or more consistent steps in the source program, such that the two programs remain in a related state.
 Despite the general success of operational semantics in CompCert, the simulation technique and {small-step based proofs of CompCert} face the following inconvenience compared to denotational semantics.
 
\begin{figure}[htb]
    \begin{subfigure}[b]{.4\linewidth}
    \centering
    \includegraphics[width=\linewidth]{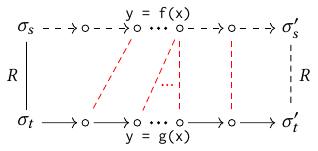}

    \caption{The backward simulation for \kwd{f} and \kwd{g}}
    \label{subfig:fsim-eg1}
    \end{subfigure} 
    \begin{subfigure}[b]{.514\linewidth}
        \flushright
        \includegraphics[width=\linewidth]{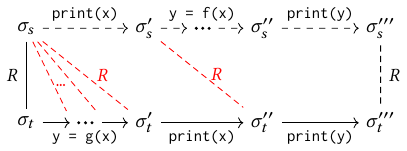}
    \centering
    \caption{The backward simulation for whole programs}
    \label{subfig:fsim-eg2}
    \end{subfigure}   
    \caption{Backward simulation diagrams used in CompCert.}
    \label{fig:f2bsim-eg1}
    \Description{This figure describes two backward-simulation diagrams in CompCert. Subfigure (a) shows a previously established simulation between the internal functions \texttt{f} and \texttt{g}: execution steps of \texttt{g} are matched by corresponding execution steps of \texttt{f} through a sequence of related intermediate states, with no observable events involved. Subfigure (b) shows the larger source and target programs obtained after placing \texttt{f} and \texttt{g} into surrounding code with \texttt{print} statements. The observable outputs introduced by the reordered \texttt{print} operations change the simulation structure, so the intermediate correspondence used for \texttt{f} and \texttt{g} cannot be reused directly for the whole programs. The figure illustrates why backward simulation in CompCert is sensitive to surrounding context and observable events.}
\end{figure}

\paragbf{Intermediate correspondence.}
Consider the following transformation, where \kwd{x} and \kwd{y} are temporary variables,
\kwd{f} and \kwd{g} are pure internal functions, which return identical results for the same input and generate no observable events, and the built-in function \kwd{print} outputs the value of its argument.
{\begingroup
\allowdisplaybreaks[0]
\begin{align*}
  &\text{Source program } S:\ \ \kwd{print(x); y = f(x); print(y);} \\
  &\text{Transformed to } T: \ \ \kwd{y = g(x); print(x); print(y);}
\end{align*}
\endgroup
}
\noindent
Suppose that a (backward) simulation between \kwd{f} and \kwd{g} has been established in advance, where nontrivial intermediate correspondence may be constructed by a series of proof steps. In CompCert, however, these proof steps cannot be directly reused to establish simulation relations between the larger programs $S$ and $T$, since the rearrangement of \kwd{print} statements introduces observable events that alter the simulation structure, as illustrated in Fig.~\ref{fig:f2bsim-eg1}. Consequently, subroutine equivalence proofs must be revisited when establishing correspondence of larger programs.
In comparison, if using denotational semantics, the established semantic equivalence between \kwd{f} and \kwd{g} could potentially be reused in proving the equivalence of the programs $S$ and $T$.

\paragbf{Nondeterminism correspondence}
More importantly, it is hard to use CompCert’s (backward) simulation techniques to verify transformations
where (demonic) nondeterminism straddles event synchronization and thereby impacts later observable behavior. Consider the following example: 
       \begin{align*}
        &\text{Source program: } \ \ \ttt{choice(x = 0, x = 1); print(0); print(x);} \\
        &\text{Transformed to:  }\ \ \ \ttt{print(0); choice(x = 0, x = 1); print(x);}
       \end{align*}
    where the statement ``{\tts choice}'' nondeterministically chooses to execute either ``{\tts x = 0}'' or ``{\tts x = 1}''. 
    The transformation swaps the first two statements while preserving the set of possible observable behaviors (i.e., traces \kwd{00} and \kwd{01}). Intuitively, the denotational semantics of the programs before and after the transformation are equivalent. However, establishing this equivalence is difficult using CompCert’s current simulation theories.
    Because CompCert simulation theories constrain a one-to-one match of observable events (i.e., strict event synchronization), the observable event produced by the first step of the target program must correspond to the observable event produced by the first two steps of the source program. At this point, the source program must first choose to execute either ``{\tts x = 0}'' or ``{\tts x = 1}'', and whichever is chosen, the subsequent choice in the target program may differ, potentially leading to inconsistent results for ``{\tts print(x)}'' in the end.

\paragbf{Program-state correspondence}
 CompCert models the semantics of high-level languages using {mixed program states} integrating continuations (to represent control flow and nested calls) together with 
the definitions of statements and functions currently under execution. 
As a result, establishing correspondences between program states necessarily involves 
all components of these enriched states. 
In comparison,
the standard denotational semantics of a statement~$c$ with procedure calls 
(as presented, for example, in Winskel’s textbook~\cite{DBLP:books/daglib/0070910}) 
interprets~$c$ as a function mapping the behaviors of callees to the overall behavior of~$c$. 
This naturally leads to a formulation of transformation correctness: 
a transformation~$\mathcal{T}$ is correct if it preserves such semantic mappings.
\begin{align*}
&\bracket{c}: \text{mapping from ``behaviors of source callee procedures'' to ``behavior of }c\text{''}
\\
&\bracket{\mathcal{T}(c)}:  \text{mapping from ``behaviors of compiled callee procedures'' to ``behavior of }\mathcal{T}(c)\text{''}
\\
&\mathcal{T}\text{ is correct iff for any callee behaviors }  \chi_s\ \chi_t,
\text{ if }\chi_t \sqsubseteq \chi_s\text{, then }\bracket{\mathcal{T}(c)}(\chi_t) \sqsubseteq \bracket{c}(\chi_s)
\end{align*}
This formalization benefits from two aspects: (i) we can simplify the definition of program states and their correspondence, and further decompose the refinement proof into smaller and simpler conclusions by splitting the original correspondence of CompCert into static environment correspondence and dynamic memory correspondence; (ii) we can define semantic linking operators by taking semantic fixed points, and we further find that the equivalence between semantic linking and syntactic linking is reduced to the concise Beki\'{c}’s theorem \cite{DBLP:conf/ibm/Bekic84e} about fixed points.

To sum up,
the above statements demonstrate the {potential benefits} of denotational semantics for composing proofs, handling nondeterministic correspondence, and lightening program-state correspondence.
Moreover,
we also observe that recent research in simulation theories, such as FreeSim~\cite{FreeSim}, as well as more advanced approaches like prophecy variables~\cite{DBLP:journals/tcs/AbadiL91,DBLP:journals/iandc/LynchV95}, proposes mechanisms intended to mitigate these challenges. Some of these approaches, such as FreeSim, have been formalized and partially integrated into existing proof frameworks, and some of them have not yet been formalized and used for realistic compiler verification like prophecy variables. Particularly, they face their own inherent difficulties in tackling the issues discussed above. We will examine these related works in greater detail in \S\ref{sec:related_work}.

\subsection{Denotational Semantics: Achievements and Challenges}
\label{subsec:denote_bg}
Denotational semantics enjoys good compositionality, meaning that the semantics of a syntactic construct can be systematically derived from the semantics of its substructures using semantic operators. A central concern in this approach is handling loops and recursion effectively. Existing research on deterministic programs demonstrates that the semantics of recursive constructs can be defined on the basis of domain theory \cite{scott1970outline,scott1971toward} or Kleene Algebras \cite{DBLP:journals/toplas/Kozen97}. In domain theory, the semantics of programs are modeled as elements of domains (typically CPOs), and the key tool in domain theory is the Kleene fixed-point theorem where recursive constructs are given meaning by finding the least fixed point of a continuous function. The concepts of CPOs, continuous functions and Kleene fixed points are outlined as follows.

\begin{definition} [Complete Partial Order]
    A \emph{Directed Complete Partial Order (DCPO)} is a partially ordered set \( P \) such that every \emph{directed} subset \( D \subseteq P \) has a supremum (i.e. \(\sup D\)) in \( P \). A subset \( D \) of \( P \) is directed if it is nonempty and for any elements \( x, y \in D \), there exists an element \( z \in D \) such that \( x \leq z \) and \( y \leq z \). A DCPO with a least element \(\bot\) is usually called a \emph{pointed} DCPO. 
    
\end{definition}
\begin{definition} [Scott-Continuous Function]
    A function \( f \) between two DCPOs is called \emph{Scott-continuous} if it preserves the directed suprema, i.e., for any directed subset \( D \) of the domain,
    \[ f(\sup D) = \sup f(D)\]
\end{definition}

\begin{theorem} [Kleene fixed point]
\label{thm:kleene_fixpoint}
Suppose $(A,\leq_A)$ is a directed-complete partial order (DCPO) with a least element $\bot$, and let $F: A \rightarrow A$ be a Scott-continuous (and therefore monotone) function. Then $F$ has a least fixed point $\mu F = \sup(\{F^{n}(\bot) \mid n \in \mathbb{N}\})$, where $\sup$ means taking the supremum.
\end{theorem}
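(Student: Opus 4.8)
The plan is to carry out the standard Kleene-style argument in three stages: first establish that the iterates $F^n(\bot)$ form an $\omega$-chain (hence a directed set), so that their supremum exists in $A$; then show $\mu F \triangleq \sup\{F^n(\bot) \mid n \in \mathbb{N}\}$ is a fixed point of $F$ using Scott-continuity; and finally show it is the least among all fixed points (indeed, the least \emph{pre}-fixed point) by a straightforward induction. Throughout I would use the elementary fact, noted parenthetically in the statement, that a Scott-continuous function is monotone — this deserves a one-line justification: for $x \leq y$, the set $\{x,y\}$ is directed with supremum $y$, so $f(y) = f(\sup\{x,y\}) = \sup\{f(x),f(y)\}$, whence $f(x) \leq f(y)$.

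First I would build the chain. We have $\bot \leq F(\bot)$ since $\bot$ is the least element of $A$. Applying monotonicity of $F$ repeatedly and arguing by induction on $n$, we get $F^n(\bot) \leq F^{n+1}(\bot)$ for all $n$, so $\{F^n(\bot) \mid n \in \mathbb{N}\}$ is a totally ordered, nonempty subset of $A$ — in particular directed — and therefore has a supremum in $A$ by the DCPO hypothesis. Call it $\mu F$.

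Next, the fixed-point property. Since $F$ is Scott-continuous and the chain $D = \{F^n(\bot) \mid n \in \mathbb{N}\}$ is directed, we compute
\[
  F(\mu F) = F(\sup D) = \sup\, F(D) = \sup\{F^{n+1}(\bot) \mid n \in \mathbb{N}\}.
\]
The last supremum equals $\sup\{F^n(\bot) \mid n \in \mathbb{N}\} = \mu F$, because the set $\{F^{n+1}(\bot) \mid n \in \mathbb{N}\}$ differs from $D$ only by the omission of the element $\bot = F^0(\bot)$, which is below every other element and so does not affect the supremum (formally: the two sets have the same set of upper bounds). Hence $F(\mu F) = \mu F$.

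Finally, minimality. Let $a \in A$ be any fixed point of $F$ — in fact it suffices to take any pre-fixed point, i.e. any $a$ with $F(a) \leq a$. I claim $F^n(\bot) \leq a$ for all $n$, by induction: $F^0(\bot) = \bot \leq a$; and if $F^n(\bot) \leq a$ then $F^{n+1}(\bot) = F(F^n(\bot)) \leq F(a) \leq a$ using monotonicity and the pre-fixed-point assumption. Thus $a$ is an upper bound of $D$, so $\mu F = \sup D \leq a$. Combined with the previous stage, $\mu F$ is the least fixed point.

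I do not anticipate a genuine obstacle here — this is a textbook result and every step is routine. The only point requiring a little care is the supremum manipulation in the fixed-point stage, namely the reindexing argument that $\sup\{F^{n+1}(\bot)\} = \sup\{F^n(\bot)\}$; one should phrase it via equality of upper-bound sets rather than hand-waving about "shifting the index", and should be explicit that Scott-continuity is applied to the \emph{directed} set $D$ (which is why the chain-building stage must come first). If one wanted the sharper statement that $\mu F$ is the least pre-fixed point, the minimality argument above already delivers it verbatim, and then the fixed-point stage can be streamlined using the Knaster–Tarski-style observation that $F(\mu F) \leq \mu F$ follows from $\mu F$ being a sup of pre-fixed-point approximants — but for the theorem as stated the direct computation via continuity is cleanest.
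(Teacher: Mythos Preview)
Your proof is the standard textbook argument and is entirely correct. Note, however, that the paper does not actually give a proof of this statement: Theorem~\ref{thm:kleene_fixpoint} is stated in the background section as a well-known result and is used thereafter without proof, so there is no paper proof to compare against.
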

Kleene algebras are algebraic structures with operators $\circ$, $\cup$, $^*$, $\emptyset$, $\idrel$ that satisfy certain axioms. Typically, the family of binary relations with operators $\cup$, $\circ$, the empty relation $\varnothing$, the identity relation $\idrel$ and the reflexive transitive closure $^*$ constitute a Kleene Algebra, where iteration is captured algebraically through the star operator.
The following example shows the definition of the textbook denotational semantics on a toy language, by domain theory or by Kleene Algebras.
\begin{example} \label{ex:domain_vs_ka}
    Consider a simple deterministic and type-safe toy language whose syntactic constructs (extremely simplified for presentation purpose) and semantics can be defined as follows.
\begin{align*}
&\begin{aligned}
   c \triangleq \textbf{skip}\ |\ c_1; c_2\ |\ \textbf{if}\ b\ \textbf{then}\ c_1\ \textbf{else}\ c_2\
                                   |\ \textbf{while}\ b\ \textbf{do}\ c   
\end{aligned} \\
&\begin{aligned}
\bracket{\textbf{skip}} \triangleq
  \idrel
\quad
\bracket{c_1; c_2} \triangleq
  \bracket{c_1} \circ \bracket{c_2}
\quad
 \textbf{test}(X) \triangleq \{ (\sigma, \sigma) \mid \sigma \in X\}\text{, for } X \subseteq \ttt{state}
\end{aligned}
\\
& \begin{aligned}
    \llbracket \textbf{if}\ b\ \textbf{then}\ c_1\ \textbf{else}\ c_2 \rrbracket \triangleq
    \mathbf{test}(\bracket{b}.\ttt{(tts)}) \circ
      \llbracket c_1 \rrbracket \cup
    \mathbf{test}(\bracket{b}.\ttt{(ffs)}) \circ
      \llbracket c_2 \rrbracket  
\end{aligned}
\\
&\begin{aligned}
    \llbracket \textbf{while}\ b\ \textbf{do}\ c \rrbracket \triangleq
      \mu x. \textbf{test}(\bracket{b}.\ttt{(ffs)}) \cup
      \big(\textbf{test}(\bracket{b}.\ttt{(tts)}) \circ
      \bracket{c} \circ x\big), \text{ by domain theory}
\end{aligned}
\\
&\begin{aligned}
    \llbracket \textbf{while}\ b\ \textbf{do}\ c \rrbracket &\triangleq
      \big(\textbf{test}(\bracket{b}.\ttt{(tts)}) \circ
      \bracket{c}\big)^* \circ \textbf{test}(\bracket{b}.\ttt{(ffs)}),  \text{ by Kleene Algebras} \\
    \text{where } \fof{}{b}{tts} &\triangleq \{\sigma \mid \sigma(b) = \textbf{T}\},\ 
    \fof{}{b}{ffs} \triangleq \{\sigma \mid \sigma(b) = \textbf{F}\},\ 
    \idrel = \{(\sigma, \sigma)\mid \sigma \in \ttt{state}\}
\end{aligned}
\end{align*}
\end{example}

Although domain theory and Kleene algebras have a long research history and a strong mathematical foundation, it remains challenging to precisely characterize the full semantics of practical programming languages, which may exhibit termination, divergence, or even abortion.
We may naively construct a domain by treating abortion as the least element in the partial order and regarding normal termination and divergence as incomparable (i.e., without any ordering between them). However, in a nondeterministic program, a state may nondeterministically transition to either a terminating or a diverging state. If these outcomes are incomparable, the semantics cannot represent their “join” (supremum) in a manner consistent with the domain’s ordering. This inconsistency prevents the definition of a continuous function required by the Kleene fixed-point theorem.
In order to handle nondeterministic and possible nonterminating behaviors\footnote{Unless otherwise specified, nondeterminism means demonic nondeterminism.}, computer scientists proposed three classical power domain constructions, known as Hoare powerdomain, Smyth powerdomain \cite{DBLP:journals/jcss/Smyth78}, and Plotkin powerdomain \cite{DBLP:journals/siamcomp/Plotkin76}, in the 1970s.
Power domains in our context are constructed from a flat domain $(D, \leqslant_D)$, which may stand for possible results of executions. Suppose that for any $x$ and $y$ in $D$,
  $$ x \leqslant_D y \Leftrightarrow (x = \bot \vee x = y)$$
 a power domain of $D$, written as $\mathcal{M}(D)$, involves the subsets of the ground domain $D$ and a new partial order $\sqsubseteq$, such that ($\mathcal{M}(D)$, $\sqsubseteq$) forms a CPO and one can take a fixed point over it. For any $X$ and $Y$ in $\mathcal{M}(D)$, there are generally three natural ways to construct the new ordering:
\begin{align*} 
  & X \sqsubseteq_0 Y \Leftrightarrow \forall x \in X, \exists y \in Y, x \leqslant_D y \\
  & X \sqsubseteq_1 Y \Leftrightarrow \forall y \in Y, \exists x \in X, x \leqslant_D y \\
  & X \sqsubseteq_2 Y \Leftrightarrow 
   X \sqsubseteq_0 Y  \wedge X \sqsubseteq_1 Y 
\end{align*}

\paragbf{The Hoare powerdomain.} The Hoare powerdomain is established with the first ordering $\sqsubseteq_0$ meaning that everything $X$ can do, $Y$ can do better. However, this ordering is just a preorder but not a partial order since it fails antisymmetry.
The issue is solved by defining an equivalence relation $\simeq_0$ ($X \simeq_0 Y$ iff $X \sqsubseteq_0 Y \text{ and } Y \sqsubseteq_0 X$) and restricting that every set $X$ in $\mathcal{M}(D)$ is a \emph{downset}. Specifically,
  $$ X \text{ is a \emph{downset} iff } X \simeq_0\ \downarrow X, \text{ where } \downarrow X \triangleq \{x_0\in D \mid \exists x\in X, x_0 \leqslant_D x\}$$
 Consequently, with the Hoare powerdomain, one gets an angelic semantics for nondeterminism since the downsets always include the least element $\bot$. That means, any program that could either terminate or diverge will have the same meaning with its terminating behavior. For instance, given the following four programs, program 1 and 3 will have the same semantics under this power domain (where``{\tts x = ?}'' in program 4 means ``nondeterministically set {\tts x} to any natural number'').

  \begin{tabular}{ll}
          \begin{lstlisting} [language = Coq]
            1. $\textbf{skip}$
            2. $\textbf{while}$ true $\textbf{do skip}$
            3. $\textbf{choice}$($\textbf{while}$ true $\textbf{do skip}$, $\textbf{skip}$) 
          \end{lstlisting}   & 
          \begin{lstlisting} [language = Coq]
            4. $\textbf{while}$ (y $==$ 2 $||$ x > 0) $\textbf{do}$
                 $\textbf{if}$ (y $==$ 2) $\textbf{then}$ y = 1; x = $\ttt{?}$
                 $\textbf{else}$ x = x - 1
          \end{lstlisting}
  \end{tabular}
\paragbf{The Smyth powerdomain.} The Smyth powerdomain is obtained with the preorder $\sqsubseteq_1$, which says that everything $Y$ can do is approximated by some behavior of $X$. Since this ordering is still a preorder but not a partial order, the Smyth powerdomain defines the preorder's natural equivalence $\simeq_1$ ($X \simeq_1 Y$ iff $X \sqsubseteq_1 Y \text{ and } Y \sqsubseteq_1 X$) and constrains every element $X$ in $\mathcal{M}(D)$ to be an \emph{upset}, i.e., 

   $$ X \text{ is an \emph{upset} iff } X \simeq_1\ \uparrow X, \text{ where } \uparrow X \triangleq \{x_0\in D \mid \exists x\in X, x \leqslant_D x_0\}$$  
As a result, one will achieve a demonic semantics in the sense that
 any program that can diverge has the semantics $\uparrow \{\bot\} = D$. Thus, for instance, programs 2 and 3 will have the same semantics. 

\paragbf{The Plotkin powerdomain.} The Plotkin powerdomain uses the well-known \emph{Egli-Milner ordering} $\sqsubseteq_2$, with which every set $X$ in $\mathcal{M}(D)$ is restricted to be equivalent with its \emph{convex} closure:
 $$ conv (X) \triangleq \{x_1\in D \mid \exists x_0, x_2\in X, x_0 \leqslant_D x_1 \leqslant_D x_2\}$$
By this powerdomain, all the first three programs will have different meanings as one would expect. However, some constructs like program 4 that can produce infinitely many different results and yet be certain to terminate, i.e., unbounded nondeterminism, are excluded (see \citeN{DBLP:journals/tcs/Back83} for details). 

To sum up, power domains often fail to capture the full range of possible program outcomes. Such inability is known as an intrinsic defect of domain theory. 

\paragbf{Park's approach} \citeN{DBLP:conf/ac/Park79} proposed an alternative method to characterize the possible nonterminating behavior of nondeterministic programs, by introducing an additional set to capture nonterminating behaviors.
However, when observable events are considered, it is still challenging to explicitly distinguish reactive divergence and silent divergence for realistic compiler verification.

 In a word, a notable gap exists between denotational semantics and realistic programming languages. %
Additionally, practical programs often include structured (e.g., \textbf{break}, \textbf{continue}) and even unstructured branches (e.g., \textbf{goto}), which may prematurely terminate the execution of subsequent statements and hence disrupt the elegant property shown in Formula~(\ref{eq:compose_mono}). Precisely capturing the full meaning of realistic programs while still keeping good compositionality turns out to be a big challenge.
We show the solution of this problem in \S \ref{sec:semantics_nocall} and then extend it to open modules in \S \ref{sec:semantics_pcall}.

\section{Denotational Semantics of Statements (No Procedure Calls)}
\label{sec:semantics_nocall}

\subsection{The Formal Framework for Defining Denotations}
\label{subsec:while_lang}
Our solution is to separately define different program behaviors of interest using different sets, with each behavior set defined by corresponding semantic operators in the form of an extended Kleene Algebra with Tests (KAT).
This manipulation is somewhat similar to Park's approach. In comparison, we design new mechanisms to handle divergence with event traces, capture the full meaning of realistic programs, and especially concisely formalize them in \Coq.

Consider a type-safe toy language WHILE whose set of statements {\tts Com}, ranged over by $c$, is parameterized on the set of Boolean expressions {\tts Exp} ranged over by $b$ (no side effects):
$$ c \triangleq \textbf{skip}\ |\ atom \ |\ \textbf{choice} (c_1, c_2)\ |\ c_1; c_2\ |\ \textbf{if}\ b\ \textbf{then}\ c_1\ \textbf{else}\ c_2\ |\ \textbf{while}\ b\ \textbf{do}\ c  
$$
Nondeterminism can be introduced either by atomic statements (ranged over by $atom$) like nondeterministic assignments or by \textbf{choice} statements that unpredictably choose $c_1$ or $c_2$ to execute.
We formalize the denotation of statements and Boolean expressions with the following signature.
\begin{lstlisting}[language=Coq]
    Record CDenote: Type := {                          
      nrm: state -> state -> Prop;            (* nrm $\subseteq$ state $\times$ state *)
      dvg: state -> Prop                     (* dvg $\subseteq$ state *)
    }.
    Record BDenote: Type := {
      tts: state -> Prop;                    (* tts $\subseteq$ state *)
      ffs: state -> Prop                     (* ffs $\subseteq$ state *)
    }.
\end{lstlisting}
Specifically, $(\sigma_1, \sigma_2) \in \bracket{c}.(\ttt{nrm})$ iff executing $c$ from initial state $\sigma_1$ could terminate on state $\sigma_2$, and $\sigma \in \bracket{c}.(\ttt{dvg})$ iff executing $c$ from $\sigma$ could diverge.
An element $\sigma \in  \bracket{b}$.({\tts tts}) iff state $\sigma$ satisfies $b$, and $\sigma \in  \bracket{b}$.({\tts ffs}) iff state $\sigma$ does not satisfy $b$.
We can use set operators (like union) and relation operators (like composition) to define program semantics as follows.
\begin{align*}
&\begin{aligned}
&\llbracket \textbf{skip} \rrbracket.(\ttt{nrm}) \triangleq
  \idrel \quad
    \llbracket \textbf{skip} \rrbracket.(\ttt{dvg}) \triangleq \mathbf{\varnothing} \; \; \; \!
&\llbracket \textbf{choice} (c_1, c_2) \rrbracket.( \ttt{nrm}) \triangleq
  \llbracket c_1 \rrbracket.(\ttt{nrm}) \cup
  \llbracket c_2 \rrbracket.(\ttt{nrm}) \\
&\llbracket c_1; c_2 \rrbracket.(\ttt{nrm}) \triangleq
  \llbracket c_1 \rrbracket.(\ttt{nrm}) \circ \llbracket c_2 \rrbracket.(\ttt{nrm})
&\llbracket \textbf{choice} (c_1, c_2) \rrbracket.(\ttt{dvg}) \triangleq
  \llbracket c_1 \rrbracket.(\ttt{dvg}) \cup
  \llbracket c_2 \rrbracket.(\ttt{dvg})
\end{aligned}
\\
& \begin{aligned}
    \llbracket \textbf{if}\ b\ \textbf{then}\ c_1\ \textbf{else}\ c_2 \rrbracket.(\ttt{nrm}) \triangleq
    \mathbf{test}(\bracket{b}.\ttt{(tts)}) \circ
      \llbracket c_1 \rrbracket.(\ttt{nrm}) \cup
    \mathbf{test}(\bracket{b}.\ttt{(ffs)}) \circ
      \llbracket c_2 \rrbracket.(\ttt{nrm})  
\end{aligned} \\
&\begin{aligned}
    \llbracket \textbf{while}\ b\ \textbf{do}\ c \rrbracket.(\ttt{nrm}) \triangleq
      \big(\textbf{test}(\bracket{b}.\ttt{(tts)}) \circ
      \bracket{c}.(\ttt{nrm})\big)^* \circ \textbf{test}(\bracket{b}.\ttt{(ffs)}),
\end{aligned}
\\
&\begin{aligned}
      \ \ \text{where} \quad &\textbf{test}(X) \triangleq \{ (\sigma, \sigma) \mid \sigma \in X\}\text{, for } X \subseteq \ttt{state}
    &\text{identity relation on } X \\
    &\idrel \triangleq \{ (\sigma,\sigma) \mid \sigma \in \ttt{state}\} 
    &\text{identity relation on } \ttt{state} \\
    &R^* \triangleq \idrel \cup R \cup (R \circ R) \cup \cdots \cup R^n \cup \cdots
    &\text{reflexive transitive closure}\\ 
    &R^+ \triangleq R \cup (R \circ R) \cup \cdots \cup R^n \cup \cdots
    &\text{transitive closure}\\    
    &R^\infty \triangleq R \circ R \circ R \circ \cdots
    &\text{infinitely iterated composition}\footnotemark
\end{aligned}
\end{align*}
As is well known, the family of binary relations with operators $\cup$, $\circ$, the empty relation $\varnothing$, the identity relation $\idrel$ and the reflexive transitive closure $^*$ constitute a Kleene Algebra.
Moreover, if we overload the ``$\circ$'' operator as follows,
\footnotetext{Formally, an element $\sigma_0 \in R^\infty$ iff \( \exists \sigma_1 ...\sigma_n..., \text{ s.t. }  \forall i, (\sigma_i, \sigma_{i+1}) \in R\), where \(\sigma_0 \sigma_1 ...\sigma_n...\) is an infinite sequence of states.} %

\begin{align*}
    &R_1 \circ R_2 \triangleq
    \{ \left(\sigma_1, \sigma_3\right) \mid 
    \exists \sigma_2,
      (\sigma_1, \sigma_2) \in R_1
    \wedge
      (\sigma_2, \sigma_3) \in R_2
    \} & \text{original}
\\
    &R \circ X\triangleq 
    \{ \sigma_1 \mid \exists \sigma_2,
      (\sigma_1, \sigma_2) \in R
    \wedge
    \sigma_2 \in X
    \}  &\text{overloaded}
\end{align*}
these operators form a typed Kleene Algebra \cite{kozen1998typed} with two base types.
 The diverging behavior of sequential, \textbf{if}, and \textbf{while} statements can then be formalized as follows.
\begin{align*}
&\llbracket c_1; c_2 \rrbracket.(\ttt{dvg}) \triangleq
  \llbracket c_1 \rrbracket.(\ttt{dvg}) \cup
    \big(\llbracket c_1 \rrbracket.(\ttt{nrm}) \circ \llbracket c_2 \rrbracket.(\ttt{dvg})\big) \\
& \begin{aligned}
    \llbracket \textbf{if}\ b\ \textbf{then}\ c_1\ \textbf{else}\ c_2\rrbracket.(\ttt{dvg}) \triangleq
    \big(\mathbf{test}(\bracket{b}.\ttt{(tts)}) \circ
      \llbracket c_1 \rrbracket.(\ttt{dvg})\big) \cup
    \big(\mathbf{test}(\bracket{b}.\ttt{(ffs)}) \circ
      \llbracket c_2 \rrbracket.(\ttt{dvg})\big)
\end{aligned}
\\
&\begin{aligned}
    \llbracket \textbf{while}\ b\ \textbf{do}\ c \rrbracket.(\ttt{dvg}) \triangleq \,
    &\big(\mathcolor{dkgreen}{(}\mathbf{test}(\bracket{b}.\ttt{(tts)}) \circ \bracket{c}.(\ttt{nrm})\mathcolor{dkgreen}{)}^*
    \circ
    \mathcolor{dkgreen}{(}\mathbf{test}(\bracket{b}.\ttt{(tts)}) \circ \bracket{c}.(\ttt{dvg})\mathcolor{dkgreen}{)}\big)
    \ \cup \\
      &\big(\mathbf{test}(\bracket{b}.\ttt{(tts)}) \circ \bracket{c}.(\ttt{nrm})\big)^\infty
\end{aligned}
\end{align*}
The interesting case is that the \textbf{while} statement diverges if and only if either the loop body $c$ diverges after a finite number of normally executing $c$, or $c$ is executed an infinite number of times.

The above style of modeling denotational semantics based on a typed Kleene Algebra can be naturally scaled to realistic settings by extended denotations and redefined operators.
In CompCert, the behavior of a C program encompasses not only state transitions but also the observable events (I/O and other system calls) triggered during execution. Therefore, we can formalize the denotation of C-like program statements with the following signature, where {\tts trace} is the set of all finite sequences of observable events, and {\tts itrace} the set of all infinite sequences of observable events.  \label{Clight_denote}
\begin{lstlisting}[language=Coq]
    Record Denote: Type := {
      nrm: state -> trace -> state -> Prop;      (* nrm $\subseteq$ state $\times$ trace $\times$ state *)
      ...                                               
      err: state -> trace -> Prop;              (* err $\subseteq$ state $\times$ trace            *)
      fin_dvg: state -> trace -> Prop;          (* fin_dvg $\subseteq$ state $\times$ trace$\;\,$  *)
      inf_dvg: state -> itrace -> Prop          (* inf_dvg $\subseteq$ state $\times$ itrace *)
    }.
\end{lstlisting}
Specifically, the terminating behavior here is extended to a ternary relation which additionally records the event trace generated during execution.
Diverging behavior is divided into finite and infinite parts for distinguishing behaviors with finite events and infinite events.
In addition, we use a set {\tts err} to capture the aborting behavior of programs.
We then redefine operators $\circ$, \textbf{test}, and $\idrel$ (and therefore the operator for reflexive transitive closure) as follows, where $R \subseteq \ttt{state} \times \ttt{trace} \times \ttt{state}$, $Y \subseteq \ttt{state} \times ( \ttt{trace} \cup \ttt{itrace})$, ``{\tts nil}'' is the empty sequence and “$\cdot$” is the concatenation of sequences.
\begin{align*}
  & \idrel \triangleq \{(\sigma, \ttt{nil}, \sigma) \mid \sigma \in \ttt{state} \} \quad \quad \ \ 
  \textbf{test}(X) \triangleq \{(\sigma, \ttt{nil}, \sigma) \mid \sigma \in X \}, \text{ for } X \subseteq \ttt{state}
  \\
  &R \circ Y \triangleq 
    \{ (\sigma_1, \tau) \mid
    \exists \sigma_2\ \tau_1\ \tau_2,
      (\sigma_1, \tau_1, \sigma_2) \in R
    \wedge
    (\sigma_2, \tau_2) \in Y
    \wedge
    \tau = \tau_1 \cdot \tau_2
    \}
  \\
  &\begin{aligned}
    R_1 \circ R_2 \triangleq
    \{ \left(\sigma_1, \tau, \sigma_3\right) \mid 
    \exists \sigma_2\ \tau_1\ \tau_2, \
    (\sigma_1, \tau_1, \sigma_2) \in R_1
    \wedge
    (\sigma_2, \tau_2, \sigma_3) \in R_2
    \wedge
    \tau = \tau_1 \cdot \tau_2
    \}
  \end{aligned}
\end{align*}

In summary, we use various behavior sets to define different behaviors of interest with typed operators, thus solving the limitations of traditional powerdomains in compiler verification scenarios. In our implementation, these sets operators are provided by a \Coq set library~\cite{DBLP:sets/cao}.
Besides, we address two additional challenges for practical languages: supporting various control flow constructs (in \S\ref{subsec:control_flow}) and defining program divergence in the presence of event traces (in \S\ref{subsec:loop_trace}).

\subsection{Semantics of Control Flow Constructs}
\label{subsec:control_flow}
For handling different control flow statements like \textbf{break} and \textbf{continue} in Clight, or \textbf{block} and \textbf{exit} in Csharpminor, we add corresponding fields of sets to the denotation record.
\paragbf{Clight} The Clight language provides control flow constructs including \textbf{break}, \textbf{continue}, and \textbf{return} statements to prematurely terminate the normal execution of statements. To capture the behavior of these program constructs, we add the following sets to {\tts Denote}.
\begin{lstlisting}[language=Coq]
    brk: state -> trace -> state -> Prop;               (* brk $\subseteq$ state $\times$ trace $\times$ state *)
    ctn: state -> trace -> state -> Prop;               (* ctn $\subseteq$ state $\times$ trace $\times$ state *)
    rtn: state -> trace -> state -> val -> Prop.         (* rtn $\subseteq$ state $\times$ trace $\times$ state $\times$ val *)
\end{lstlisting}
These sets represent that for any element $(\sigma_0, \tau, \sigma_1)$ in $\bracket{s}.$({\tts brk}) or $\bracket{s}.$({\tts ctn}), the execution of Clight statement $s$ from $\sigma_0$ could eventually reach state $\sigma_1$ and then exit because of a \textbf{break} or \textbf{continue} statement, producing a sequence of observable events $\tau$. In other words,  once a \textbf{break} or \textbf{continue} statement is encountered, the execution will end prematurely, and the corresponding set {\tts brk} or {\tts ctn} records how it ends. Therefore, we formalize such semantics as follows (selectively listed).
\begin{align*}
    &\begin{aligned}
    &\fof{}{\textbf{break}}{brk} \triangleq \idrel, \text{ and other fields are assigned the empty set}. \\
    &\fof{}{\textbf{continue}}{ctn} \triangleq \idrel, \text{ and other fields are assigned the empty set} \\
    &\fof{}{s_1;s_2}{brk} \triangleq \fof{}{s_1}{brk} \cup \big(\fof{}{s_1}{nrm} \circ \fof{}{s_2}{brk}\big) \\ 
    &\fof{}{s_1;s_2}{ctn} \triangleq \fof{}{s_1}{ctn} \cup \big(\fof{}{s_1}{nrm} \circ \fof{}{s_2}{ctn}\big)
    \end{aligned}
\end{align*}
The semantics of exiting by a \textbf{return} statement can be defined similarly since the {\tts rtn} set has a similar meaning but additionally with a return value (where {\tts val} is the set of values ranging over integers, floats, pointers, as well as an undefined value {\tts Vundef}).
 Clight uses one special loop (\textbf{loop} $s_1$ $s_2$) to encode all three kinds of C loops. Its semantics is to repeatedly and sequentially execute statements $s_1$ and $s_2$ until a certain break or return statement, and a ``continue jump'' in $s_1$ will branch to $s_2$.
 Therefore, the \textbf{for} statement of C language for example can be defined as:
\[ 
    \textbf{for}(s_1; b; s_2)\ \{s_3\} \triangleq s_1;\ \big(\textbf{loop } \big((\textbf{if } b \textbf{ then skip else} \textbf{ break});s_3\big)\ s_2\big)
\]
We then formalize the terminating behavior of Clight loops as follows, where $N_1$ denotes the behavior that the loop body $s_1$ either ends normally or prematurely due to a continue statement in $s_1$, and $N_{12}$ denotes the behavior of a ``sequential'' execution of loop bodies $s_1$ and then $s_2$.
\begin{align}
& N_1 = \fof{}{s_1}{nrm} \cup \fof{}{s_1}{ctn}
  \quad \quad N_{12} = N_1 \circ
    \fof{}{s_2}{nrm} \label{formula:n1_n2_def}
\\
&\begin{aligned}
  \fof{}{\textbf{loop }s_1 \ s_2}{nrm} \triangleq
    &\ N_{12}^* \circ \big(\fof{}{s_1}{brk}\cup (N_1 \circ
    \fof{}{s_2}{brk})\big)
\end{aligned}
\end{align}

\paragbf{Csharpminor} The control flow of Csharpminor is structured with the \textbf{block} and \textbf{exit} statement. For instance, the execution of statement (block \{ block \{ $u_1$; exit(1)\}; $u_2$\}; $u_3$) is equivalent to the execution of ($u_1; u_3$) since the statement \textbf{exit}($n$) prematurely terminates the execution of the ($n$+1) layers of nested \textbf{block} statements.
We then use a function mapping from natural numbers to relations to capture this behavior, so that an element 
($\sigma_0, \tau, \sigma_1) \in$ $\fof{}{u}{blk}_n$ if and only if executing the Csharpminor statement $u$ from $\sigma_0$ could terminate at $\sigma_1$, and $n$ records the layers of nested blocks to exit.
Thus, the denotation of Csharpminor statements $\bracket{u}$ satisfies the following equations.
\begin{align*}
    &\begin{aligned}
      \fof{}{\textbf{block}\{u\}}{nrm} \triangleq
      \fof{}{u}{nrm} \cup \fof{}{u}{blk}_0
        \quad \quad \,
      \fof{}{\textbf{block}\{u\}}{blk}_n \triangleq
      \fof{}{u}{blk}_{n+1}
    \end{aligned}
    \\
    &\begin{aligned}
      \fof{}{\textbf{exit}(n_0)}{blk}_n \triangleq \idrel
      \text{, if } n_0 = n 
      \quad \quad \quad \quad
      \quad \quad 
      \fof{}{\textbf{exit}(n_0)}{blk}_n \triangleq \varnothing \text{, if } n_0 \neq n 
    \end{aligned}
    \\
    &\begin{aligned}
        \fof{}{u_1;u_2}{blk}_n \triangleq \fof{}{u_1}{blk}_n \cup \big(\fof{}{u_1}{nrm} \circ \fof{}{u_2}{blk}_n\big)
    \end{aligned}
\end{align*}
The loop structure of Csharpminor is an infinite loop (\textbf{Sloop} $u$), whose behavior is to execute the loop body $u$ repeatedly until a certain early exit. The semantics is defined as follows.
\begin{align*}
    \fof{}{\textbf{Sloop } u}{nrm} \triangleq \varnothing \quad \quad 
    \fof{}{\textbf{Sloop } u}{blk}_n \triangleq \fof{}{u}{nrm}^* \circ \fof{}{u}{blk}_n
\end{align*}

\subsection{Divergence with Event Trace}
\label{subsec:loop_trace}
Handling divergence with event traces is more involved. Previous research indicates that simulations between semantics have to address the classic stuttering problem \cite{DBLP:journals/jar/Leroy09}. Therefore, unlike defining divergence directly using $R^\infty$ in simple languages, we must explicitly consider whether the event trace in program execution is empty or not. We demand that $R^\infty$ is only defined if either all traces in $R$ are nonempty (i.e., nonsilent), or all traces in $R$ are empty (i.e., silent).
To this end, we use silent operator $\triangle$ and nonsilent operator $\blacktriangle$ to explicitly filter silent and nonsilent event traces as follows, where $R \subseteq \ttt{state} \times \ttt{trace} \times \ttt{state}$ and $Y \subseteq \ttt{state} \times (\ttt{trace} \cup \ttt{itrace})$.
\begin{align*}
  \blacktriangle R& \triangleq \{ (\sigma, \tau, \sigma') \mid (\sigma, \tau, \sigma') \in R \wedge \tau \neq \text{\tts nil} \} \\
  \triangle R& \triangleq \{ (\sigma, \tau, \sigma') \mid (\sigma, \tau, \sigma') \in R \wedge \tau = \text{\tts nil} \} \\
  \triangle Y& \triangleq \{ (\sigma, \tau) \mid (\sigma, \tau) \in Y \wedge \tau = \text{\tts nil} \}
\end{align*}
Recall that $\fof{}{s}{fin\_dvg}$ and $\fof{}{s}{inf\_dvg}$ denote diverging behavior with finite and infinite event traces generated by executing statement $s$, respectively.
For Clight loops, $\fof{}{\textbf{loop }s_1 \ s_2}{fin\_dvg}$ indicates that after sequentially executing the loop bodies $s_1$ and $s_2$ a finite number of times, a Clight loop could silently diverge since $s_1$ does, or $s_2$ does, or the entire loop itself silently diverges.
Similarly, the nonsilently diverging behavior (i.e, reacting behavior) $\fof{}{\textbf{loop }s_1 \ s_2}{inf\_dvg}$ indicates that after sequentially executing the loop bodies $s_1$ and $s_2$ a finite number of times, a Clight loop could nonsilently diverge since $s_1$ does, or $s_2$ does, or the entire loop itself nonsilently diverges.
They are formalized as follows, where $N_1$ and $N_{12}$ are defined by Formula~(\ref{formula:n1_n2_def}).
\begin{align*}
  &\begin{aligned}
    \fof{}{\textbf{loop }s_1 \ s_2}{fin\_dvg} \triangleq
    N_{12}^* \circ \big(\fof{}{s_1}{fin\_dvg} \cup (N_1 \circ
      \fof{}{s_2}{fin\_dvg})
      \cup (\triangle N_{12})^\infty \big)
  \end{aligned}
  \\
  &\begin{aligned}
    \fof{}{\textbf{loop }s_1 \ s_2}{inf\_dvg} \triangleq
    \Big(\big((\triangle N_{12})^* \circ \blacktriangle N_{12}\big)^* \circ
    (\triangle N_{12})^* \circ \mathbb{D}\Big) \cup
    \big((\triangle N_{12})^* \circ \blacktriangle N_{12}\big)^\infty
  \end{aligned}
  \\
  &\quad \quad \begin{aligned}
    \text{where } \mathbb{D} = \fof{}{s_1}{inf\_dvg} \cup
        \big(N_1 \circ \fof{}{s_2}{inf\_dvg}\big)  
  \end{aligned}
  \end{align*}
Remark that $\mathbb{D}$ denotes the reacting behavior by the loop body $s_1$ or $s_2$,
and $(\triangle N_{12})^* \circ \blacktriangle N_{12}$ denotes that one observable behavior must be triggered after silently looping a finite number of times.

\section{Denotational Semantics of Procedure Calls}
\label{sec:semantics_pcall}
 So far, we have shown how to precisely define the denotational semantics of programs with nondeterministic and nonterminating behaviors without considering procedure calls.
 
Another challenge in applying such semantics to realistic compiler verification is how to model the semantics of open modules and make it compositional, in the sense that the semantics of every separately compiled module can be linked together and keep behavioral consistency with the linked semantics of source modules. Our key idea, extended from the standard way of defining denotational semantics for \emph{whole programs} with procedure calls, is to define an open module's denotation under a semantic ``oracle'' that describes the terminating behavior of external calls. Then when linking, the semantics is given by merging the denotation of modules and then taking fixed points, in which external calls may now be ``handled'' by one of the linked modules.
In this section, we begin with a simple language for introducing the standard denotational semantics for programs with procedure calls in \S\ref{subsec:standard_pcall}, and then propose a denotational semantics for open modules and a novel semantic linking operator in \S\ref{subsec:pcall_module}. Finally, their divergence is defined in \S\ref{subsec:pcall_dvg}.

For presentation, this section ignores observable events and focuses solely on procedure calls.

\subsection{Standard Denotational Semantics of Procedure Calls (Whole Program)}
\label{subsec:standard_pcall}

Consider a language, PCALL, defined as follows: a whole program in PCALL consists of a list of procedures. Each procedure \( p \), belonging to a set \(\kwd{Pdef}\), is characterized by its procedure name \( id_p \), a list of local variables \(x_1;...; x_n \), and its procedure body \( c_p \), which is represented as a statement.
\[
\begin{array}{l@{\qquad}l}
  \begin{aligned}
  \text{Statements: } 
    c \triangleq \textbf{skip} \sep atom \sep \textbf{choice}(c_1,c_2) \sep c_1;c_2 
    \sep &
    \\
    \textbf{if}\ b\ \textbf{then}\ c_1\ \textbf{else}\ c_2 
    \sep \textbf{while}\ b\ \textbf{do}\ c 
    \sep \mathcolor{red}{\textbf{call}\ id} \;& 
  \end{aligned}
  & \begin{aligned}
    &\text{Procedures: } p \triangleq (id_p, x_1;\dots;x_n, c_p) 
    \\
    &\text{Whole programs: } P \triangleq p_1;\dots;p_n    
  \end{aligned}
\end{array}
\]
In this toy language, procedure calls have no arguments and no return values.
Before defining its semantics, we partition program states into local and global components: {\tts state} $\triangleq \ttt{lstate} \times \ttt{gstate}$. Let the symbol $l$ range over the set of local states {\tts lstate} and $g$ over the set of global states {\tts gstate}.

The standard way \cite{DBLP:journals/tcs/Back83,DBLP:conf/ac/Park79} to define the denotational semantics of programs with procedure calls is parameterizing the denotation of procedures with the behavior of invoked procedures, and then taking the Kleene fixed point (see Thm. \ref{thm:kleene_fixpoint}) on recursive domain equations.
\label{subsec:pcall_terminate}

Specifically,
let {\tts ident} be the set of procedure names. Consider a semantic oracle $\chi$ ($\subseteq \ttt{ident} \times \ttt{gstate} \times \ttt{gstate}$) such that an element $(id, g_0, g_1) \in \chi$ if and only if executing the procedure named $id$ from global state $g_0$ could normally terminate at global state $g_1$.
The terminating behavior of PCALL statements is still a binary relation $\fofchi{c}{nrm}$ ($\subseteq \ttt{state} \times \ttt{state}$) so that an element $(\sigma_0, \sigma_1) \in\fofchi{c}{nrm}$ if and only if
executing statement $c$ from state $\sigma_0$ could terminate at state $\sigma_1$, in which the semantics of procedure calls will be given by $\chi$, i.e.,
\begin{align*}
  &\fofchi{\textbf{call } id}{nrm} \triangleq \{\left((l, g_0), (l, g_1)\right)\mid (id, g_0, g_1) \in \chi\}
\end{align*}
Notably, the local state remains unchanged after executing the procedure call.
The semantics of other PCALL statements  can refer to the semantic definitions of WHILE statements, for example:
\begin{align*}
    \fofchi{c_1;c_2}{nrm} \triangleq 
    \fofchi{c_1}{nrm} \circ
    \fofchi{c_2}{nrm} 
\end{align*}
We further use $\fofchi{p}{nrm}$ ($\subseteq \{id_p\} \times \ttt{gstate} \times \ttt{gstate}$) to represent the terminating behavior of the procedure $p$ such that an element $(id_p, g_0, g_1) \in \fofchi{p}{nrm}$ if and only if executing procedure $p$ at global state $g_0$ could eventually terminate at global state $g_1$. It's easy to see $\fofchi{p}{nrm}$ can be derived from the semantics of its body $\fofchi{c_p}{nrm}$ as illustrated in Fig. \ref{fig:procedure_sem}, and the formal definition is shown in Formula~(\ref{eq:procedure_sem}), where the function {\tts Init}: {\tts Pdef} $\times$ {\tts gstate} $\rightarrow$ {\tts state} gives the initial local and global state after allocating local variables for the execution of the procedure body \(c_p\), and the function {\tts Free}: {\tts lstate} \(\times\) {\tts gstate} $\rightarrow$ {\tts gstate} gives the final global state after freeing local addressed variables.
\begin{figure}[t]
    \centering
    \includegraphics[width=0.7\linewidth]{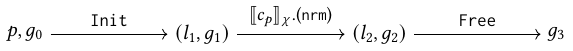}
    \caption{An illustration of the semantic definition of \(\fofchi{p}{nrm}\).}
    \label{fig:procedure_sem}
    \Description{This figure illustrates how the terminating semantics of a procedure \(p\) is derived from the terminating semantics of its body \(c_p\). Starting from an initial global state \(g_0\), the function \texttt{Init} allocates the procedure’s local variables and produces an initial local-global state \((l_1, g_1)\). The body semantics \(\fofchi{c_p}{nrm}\) then relates this input state to a final local-global state \((l_2, g_2)\) after normal termination of the procedure body. Finally, the function \texttt{Free} releases the local addressed variables and produces the resulting global state \(g_3\). }
\end{figure}
\begin{equation}
 \label{eq:procedure_sem}
   \fofchi{p}{nrm} \triangleq \Set{(id_p, g_0, g_3) | \exists l_1\ l_2\ g_1\ g_2,\
  \begin{aligned}
    &\!\big((l_1, g_1), (l_2, g_2)\big) \in \fofchi{c_p}{nrm}
     \ \wedge \\
    &(l_1, g_1) = \code{Init}(p, g_0) \wedge g_3 = \code{Free}(l_2, g_2)
  \end{aligned}}   
\end{equation}
Finally, the terminating behavior of the whole program, i.e., \(\bracket{p_1;...;p_n}.(\ttt{nrm})\), can be defined as:
\begin{align}
\label{eq:pcall_nrm}
 &\begin{aligned}
     \bracket{p_1;...;p_n}.(\ttt{nrm}) \triangleq \mu  \chi.\left(\fofchi{p_1}{nrm} \cup \cdots \cup\fofchi{p_n}{nrm}\right)
 \end{aligned}
\end{align}
 According to the Kleene fixed point theorem, Formula~(\ref{eq:pcall_nrm}) means that a procedure call could terminate if and only if there exists a natural number $k$ such that the procedure call terminates with at most $k$ layers of nested calls.
We proceed to discuss how we extend this approach to support semantic linking of open modules in \S\ref{subsec:pcall_module}, and how we define the diverging behavior in \S\ref{subsec:pcall_dvg}.

\subsection{Semantics of Open Modules and Semantic Linking}
\label{subsec:pcall_module}
A more realistic program is usually composed of multiple modules (or partial programs), each of which contains a list of procedures and is compiled individually. We find that the traditional approach from~\citeN{DBLP:journals/tcs/Back83} (which we summarize in \S\ref{subsec:pcall_terminate}) can be easily extended to this setting by parameterizing the semantics of open modules with the terminating behavior of external procedures\footnote{
 External procedures are those not defined in the current module.
 }.
\subsubsection{Semantics of Open Modules}
Formally, we use $\fofchi{M}{nrm}$ ($\subseteq \ttt{ident} \times \ttt{gstate} \times \ttt{gstate}$) to represent the terminating behavior of an open module. It means that an element $(id, g_0, g_1)\in\fofchi{M}{nrm}$ if and only if there exists a procedure named $id$ in module $M$ and its execution beginning with global state $g_0$ could finally terminate at global state $g_1$, in which the terminating behavior of external procedures is given by an oracle $\chi$ ($\subseteq\ttt{ident} \times \ttt{gstate} \times \ttt{gstate}$). The terminating behavior of an open module $M$ that consists of a list of procedures $p_1;...;p_n$ can then be defined as:
\begin{figure}[t]
    \begin{subfigure}[b]{.9\linewidth}
    \centering
    \includegraphics[width=0.75\linewidth]{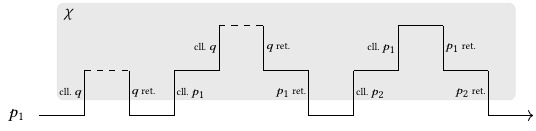}
    \caption{Semantics of a single procedure $p_1$}
    \label{subfig:procedure_sem}
    \end{subfigure} 
    \begin{subfigure}[b]{.9\linewidth}
        \flushright
    \includegraphics[width=0.75\linewidth]{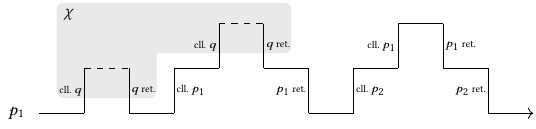}
    \centering
    \caption{Semantics of module $M$ containing procedure $p_1$}
    \label{subfig:moudle_sem}
    \end{subfigure}
    \caption{Comparison between the semantics of procedures and that of modules. Here raising edges denote procedure calls (abbr. cll.), falling edges denote procedure returns (abbr. ret.), and the dashed line represents the behavior of external procedure $q$.
    The shaded areas show that for $\fofchi{p_1}{nrm}$, the behavior of every procedure call of $p_1$ is interpreted by $\chi$, while for $\fofchi{M}{nrm}$, only the behavior of external calls are interpreted by $\chi$.}
    \label{fig:modproc}
    \Description{This figure compares the terminating semantics of a single procedure with that of an open module containing that procedure. Subfigure (a) shows the semantics of procedure \(p_1\): starting from \(p_1\), all procedure calls made during its execution, including recursive calls to \(p_1\) itself and calls to other procedures such as the external procedure \(q\), are interpreted by the semantic oracle \(\chi\). Subfigure (b) shows the semantics of a module \(M\) containing \(p_1\): calls between procedures defined inside the module are resolved internally by the module semantics, while only calls to procedures outside the module, represented by the dashed edge to \(q\), are interpreted by \(\chi\).}
\end{figure}

\begin{align}
     \fofchi{M}{nrm}& \triangleq 
        \mu  \chi_0.\left(\fof{\chi_0 \cup \chi}{p_1}{nrm} \cup \cdots \cup\fof{\chi_0 \cup \chi}{p_n}{nrm}\right),\text{ where } M = p_1;\dots;p_n
        \label{eq:module_terminate}
 \end{align}
 It is trivial to see that the standard whole-program denotational semantics given in Formula~(\ref{eq:pcall_nrm}) is a special case of Formula~(\ref{eq:module_terminate}). In Formula~(\ref{eq:module_terminate}), the semantic oracle $\chi_0$ ($\subseteq\ttt{ident} \times \ttt{gstate} \times \ttt{gstate}$) represents the terminating behavior of internal procedures, and the Kleene least fixed-point operator is used to capture arbitrary finite layers of internal calls.
It is worth noting that although both the terminating behavior of a single procedure and that of an open module are subsets of $\ttt{ident} \times \ttt{gstate} \times
	\ttt{gstate}$, they depend on {different} semantic oracles.
 As shown in Fig. \ref{fig:modproc}, the semantics of a procedure depends on the semantics of all invoked procedures, even including recursive calls to itself. In comparison, the semantics of an open module depends on the semantics of external procedures only.

\subsubsection{Semantic Linking}
\label{subsec:semantic_linking}
Semantic linking means composing the behavior of individual modules as a whole so that cross-module calls between module $M_1$ and module $M_2$ before linking become “internal” calls after linking.
That is, after semantic linking, only external calls outside module $M_1$ and module $M_2$ need to be interpreted by a semantic oracle.
Therefore, the definition of semantic linking is straightforward:
assume that the terminating behavior of procedures outside module $M_1$ and $M_2$ are given by an oracle $\chi$, the terminating behavior of semantic linking is defined as:
\begin{align}
\label{eq:module_semantic_linking}
  \linkofchi{\bracket{M_1}\oplus \bracket{M_2}}{nrm} \triangleq
       \mu \chi_0.\left(\fof{\chi_0 \cup \chi}{M_1}{nrm} \cup \fof{\chi_0 \cup \chi}{M_2}{nrm}\right)  
\end{align}
As we can see, similar to the semantic definition of modules, the semantic linking between two modules is defined by merging their denotations as if all the procedures reside in one module and then taking the fixed point.
Based on these definitions, the equivalence between
semantic and syntactic linking is shown as follows, which is reduced to Beki\'{c}'s theorem \cite{DBLP:conf/ibm/Bekic84e}.
Therefore, proving horizontal compositionality in our framework is straightforward.

\begin{theorem} [Equivalence between semantic and syntactic linking]
\label{thm:ss_equiv}
    For any modules $M_1$ and $M_2$, 
    $\bracket{M_1} \oplus \bracket{M_2} = \bracket{M_1 + M_2}$, where “$+$” is syntactic linking, physically putting modules together.
\end{theorem}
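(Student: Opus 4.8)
Write $M_1 = p_1;\dots;p_m$ and $M_2 = q_1;\dots;q_n$, and assume, as is required for $M_1 + M_2$ to be well formed, that the two modules declare disjoint procedure names. Denotation records are compared field by field; I carry out the argument for the $\ttt{nrm}$ component and indicate at the end why the divergence components follow. For a module $M = r_1;\dots;r_k$ introduce its ``one-layer generator'' $\mathcal{G}_M(\psi) \triangleq \fof{\psi}{r_1}{nrm} \cup \cdots \cup \fof{\psi}{r_k}{nrm}$, so that Formula~(\ref{eq:module_terminate}) becomes $\fof{\chi}{M}{nrm} = \mu \chi_0.\ \mathcal{G}_M(\chi_0 \cup \chi)$, and observe that $\mathcal{G}_M(\psi)$ only ever produces triples whose first component is a name declared in $M$. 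The plan is to introduce one joint monotone operator on the complete lattice of oracles and show, by computing its least fixed point in two different ways — which is exactly what Beki\'c's theorem licenses — that this fixed point equals both sides of the desired equation.

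Concretely, let $\Lambda$ act on triples by
\[
  \Lambda(\psi_1,\psi_2,\chi_0) \triangleq \big(\, \mathcal{G}_{M_1}(\psi_1 \cup \chi_0 \cup \chi),\ \ \mathcal{G}_{M_2}(\psi_2 \cup \chi_0 \cup \chi),\ \ \psi_1 \cup \psi_2 \,\big),
\]
and let $(\psi_1^{*}, \psi_2^{*}, \chi_0^{*})$ be $\mu \Lambda$. Computing $\mu\Lambda$ by resolving $\psi_1,\psi_2$ \emph{before} $\chi_0$: for each fixed $\chi_0$ the inner least fixed point in $(\psi_1,\psi_2)$ is taken componentwise (since $\mathcal{G}_{M_1}$ does not mention $\psi_2$ and $\mathcal{G}_{M_2}$ does not mention $\psi_1$), giving $\big(\mu\psi_1.\ \mathcal{G}_{M_1}(\psi_1\cup\chi_0\cup\chi),\ \mu\psi_2.\ \mathcal{G}_{M_2}(\psi_2\cup\chi_0\cup\chi)\big) = \big(\fof{\chi_0\cup\chi}{M_1}{nrm},\ \fof{\chi_0\cup\chi}{M_2}{nrm}\big)$ by the definition of module semantics; hence $\chi_0^{*} = \mu\chi_0.\big(\fof{\chi_0\cup\chi}{M_1}{nrm} \cup \fof{\chi_0\cup\chi}{M_2}{nrm}\big)$, which is precisely $\linkofchi{\bracket{M_1}\oplus\bracket{M_2}}{nrm}$ by Formula~(\ref{eq:module_semantic_linking}).

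Computing $\mu\Lambda$ the other way, resolving $\chi_0$ \emph{before} $\psi_1,\psi_2$: for each fixed $(\psi_1,\psi_2)$ the $\chi_0$-equation has body $\psi_1\cup\psi_2$, constant in $\chi_0$, so its least fixed point is just $\psi_1\cup\psi_2$; substituting this and using idempotence of $\cup$ turns the first two equations into $\psi_i = \mathcal{G}_{M_i}(\psi_1\cup\psi_2\cup\chi)$, and at the fixed point $\chi_0^{*} = \psi_1^{*}\cup\psi_2^{*}$. Because $M_1$ and $M_2$ declare disjoint names, the lattice of oracles supported on the declared names of $M_1+M_2$ is the product of the two name-indexed sublattices, and since $\mathcal{G}_{M_1}$ outputs only on the first factor and $\mathcal{G}_{M_2}$ only on the second, the componentwise least fixed point of $(\psi_1,\psi_2)\mapsto\big(\mathcal{G}_{M_1}(\psi_1\cup\psi_2\cup\chi),\ \mathcal{G}_{M_2}(\psi_1\cup\psi_2\cup\chi)\big)$, read back as the union $\psi_1^{*}\cup\psi_2^{*}$, is the same oracle as $\mu\chi_0.\big(\mathcal{G}_{M_1}(\chi_0\cup\chi)\cup\mathcal{G}_{M_2}(\chi_0\cup\chi)\big)$; and the latter is $\fof{\chi}{M_1+M_2}{nrm}$, because concatenating the two procedure lists simply concatenates their generators in Formula~(\ref{eq:module_terminate}). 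Comparing the two computations gives $\linkofchi{\bracket{M_1}\oplus\bracket{M_2}}{nrm} = \chi_0^{*} = \fof{\chi}{M_1+M_2}{nrm}$, which is the claim for the $\ttt{nrm}$ component; Formula~(\ref{eq:pcall_nrm}) is recovered by taking $M_2$ empty.

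The step I expect to carry the weight is precisely this double computation of $\mu\Lambda$: reconciling the \emph{nested} presentation $\mu\chi_0.(\mu\psi_1.\cdots\cup\mu\psi_2.\cdots)$ coming out of semantic linking with the \emph{flat} simultaneous fixed point coming out of syntactic linking is exactly the content of Beki\'c's theorem, and the care needed around it is all bookkeeping — monotonicity of every operator in sight on its complete lattice, the product-lattice splitting, and the $\cup$-idempotence rewrites — which in turn relies on the disjointness of the declared names being threaded through consistently. Everything else (unfolding the definitions, recognizing $\mu\psi.\ \mathcal{G}_M(\psi\cup\eta)$ as $\fof{\eta}{M}{nrm}$, and the constant-function fixed point) is routine, and in \Coq reduces to establishing the oracle lattices and the Beki\'c lemma once and then discharging the rewrites mechanically. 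The divergence components of the denotation are then settled by running the same fixed-point reshuffling over the operators that define them, which are layered on top of the $\ttt{nrm}$ fixed point already shown to agree, so no new obstacle appears.
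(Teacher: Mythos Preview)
Your argument for the $\ttt{nrm}$ component is correct and is precisely the paper's approach: the paper itself offers no more than the one-line remark that the equality ``is reduced to Beki\'c's theorem'' followed by the statement of that theorem, so you have in fact supplied considerably more detail than the paper does. Your auxiliary third variable $\chi_0$ with body $\psi_1\cup\psi_2$ is a clean packaging --- it lands directly on the semantic-linking form without the side lemma a naive two-variable instantiation would need --- and, incidentally, the disjoint-names hypothesis is not actually required at the step where you invoke it: matching Kleene iterates already gives $\psi_1^{*}\cup\psi_2^{*}=\mu\chi_0.(\mathcal{G}_{M_1}(\chi_0\cup\chi)\cup\mathcal{G}_{M_2}(\chi_0\cup\chi))$ regardless of name overlap.

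The one place to tighten is your last sentence. The $\ttt{dvg}$ and $\ttt{cll}$ fields are not themselves defined by least fixed points, so there is no ``same fixed-point reshuffling'' to run. Once $\hat\chi$ agrees on both sides via your $\ttt{nrm}$ result, what remains is a closed-form relational identity: the semantic-linking side uses the module-level $\bracket{M_i}_{\hat\chi}.(\ttt{cll})$ and $\bracket{M_i}_{\hat\chi}.(\ttt{dvg})$, each of which already packages an internal $\parents{M_i}.(\ttt{cll})^{+}$ or $\parents{M_i}.(\ttt{cll})^{\infty}$ together with an external-call filter, whereas the syntactic side uses the flat procedure-level $\parents{M_1+M_2}$. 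Collapsing one into the other is standard Kleene-algebra rewriting --- essentially $(C_1\cup C_2)^{*}=(C_1^{+}\cup C_2^{+})^{*}$ and its $^{\infty}$ analogue, plus bookkeeping on the $\textbf{test}(\ttt{ecall}(\cdot))$ guards --- so your ``no new obstacle'' verdict stands, but the mechanism there is algebraic, not Beki\'c.
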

\begin{theorem}[Beki\'{c}'s theorem]
\label{lemma:lemma_co1}
    Given CPOs $A_1$ and $A_2$, for any monotonic and continuous functions $f: A_1 \times A_2 \rightarrow A_1$ and $g: A_1 \times A_2 \rightarrow A_2$,
    $$ \mu (x, y). \big(\mu x_0.f(x_0, y), \mu y_0.g(x,y_0)\big) = \mu (x, y). \big(f(x, y), g(x, y)\big)
    $$
\end{theorem}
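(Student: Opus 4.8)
The plan is to reduce the identity to the universal property of least fixed points, then close it by antisymmetry. Write the right-hand side as $\mu H$, where $H:A_1\times A_2\to A_1\times A_2$ is $H(x,y)=(f(x,y),g(x,y))$, taken over the product CPO $A_1\times A_2$ ordered componentwise, which is itself a pointed DCPO with least element $(\bot,\bot)$; since $f$ and $g$ are monotone and continuous, so is $H$, and $\mu H$ exists by Theorem~\ref{thm:kleene_fixpoint}. For the left-hand side, abbreviate $F(y)\triangleq\mu x_0.f(x_0,y)$ and $G(x)\triangleq\mu y_0.g(x,y_0)$, and set $K(x,y)\triangleq(F(y),G(x))$, so that the left-hand side is exactly $\mu K$. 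I will establish $\mu H=\mu K$ by proving the two inequalities $\mu H\le\mu K$ and $\mu K\le\mu H$ separately.

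The two facts I repeatedly invoke are the two halves of the Kleene characterization: for a monotone continuous $\phi$ on a pointed DCPO, (i) $\mu\phi$ is a fixed point, $\phi(\mu\phi)=\mu\phi$, and (ii) $\mu\phi$ is the least \emph{pre}-fixed point, i.e. $\phi(z)\le z$ implies $\mu\phi\le z$ (proved by the routine induction $\phi^n(\bot)\le z$). For the direction $\mu H\le\mu K$, I write $(\hat x,\hat y)=\mu K$; by definition $\hat x=F(\hat y)$ is the least fixed point of $x_0\mapsto f(x_0,\hat y)$, hence $\hat x=f(\hat x,\hat y)$, and symmetrically $\hat y=g(\hat x,\hat y)$, so $(\hat x,\hat y)$ is a fixed point of $H$ and therefore $\mu H\le(\hat x,\hat y)=\mu K$ by leastness. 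For the reverse direction I write $(a,b)=\mu H$, so $a=f(a,b)$ and $b=g(a,b)$; then $a$ is a fixed point of $x_0\mapsto f(x_0,b)$, whence $F(b)\le a$, and likewise $G(a)\le b$, giving $K(a,b)=(F(b),G(a))\le(a,b)$. Thus $(a,b)$ is a pre-fixed point of $K$, and fact (ii) applied to $K$ yields $\mu K\le(a,b)=\mu H$. Antisymmetry then gives the equality.

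The main obstacle is not this algebraic core but the preliminary well-definedness of the left-hand side: for $\mu K$ to exist via the Kleene construction, $K$ must be continuous, which reduces to showing that the parametrized least fixed point $F$ (and symmetrically $G$) is continuous in its parameter. I would prove this standard lemma through the iteration formula $F(y)=\sup_n h_n(y)$, where $h_0(y)=\bot$ and $h_{n+1}(y)=f(h_n(y),y)$: each $h_n$ is continuous by induction (as a composition of $f$ with the continuous pairing $y\mapsto(h_n(y),y)$), the chain $h_n$ is pointwise increasing by monotonicity of $f$, and a pointwise supremum of an $\omega$-chain of continuous functions is again continuous, the only delicate point being the interchange $\sup_n\sup_{y\in D}=\sup_{y\in D}\sup_n$, which is justified once one checks that the relevant doubly-indexed family is directed. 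Granting this lemma, monotonicity of $F$ and $G$ is immediate, $K=(F\circ\pi_2,\,G\circ\pi_1)$ is continuous, so $\mu K$ is well-defined by Theorem~\ref{thm:kleene_fixpoint} and the two-inequality argument above applies verbatim.
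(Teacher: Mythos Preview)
Your argument is correct and is essentially the standard proof of Beki\'{c}'s theorem. Note, however, that the paper does not actually prove this result: it is stated as Theorem~\ref{lemma:lemma_co1} with a citation to Beki\'{c}'s original paper and then used as a black box to derive Theorem~\ref{thm:ss_equiv}. There is therefore no paper-proof to compare against; your write-up simply supplies the classical justification that the paper omits.
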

{Note that} Thm.~\ref{thm:ss_equiv} is not specific to PCALL. It generally holds for settings where the semantics of open modules and their semantic linking are defined in the form of Formulae (\ref{eq:module_terminate}) and (\ref{eq:module_semantic_linking}).

\subsection{Diverging Behavior of PCALL}
\label{subsec:pcall_dvg}
The diverging behavior of a procedure can be induced either by its internal statements or by recursive procedure calls.
We define the diverging behavior of a PCALL program as follows: when defining statement denotation $\bracket c_\chi$ and procedure denotation $\bracket p_\chi$,
 we only consider internal divergence (caused by dead loops) first. Those caused by procedure calls (including dead loops in callee procedures and infinite layers of nested calls) are captured at the moment of defining module denotation $\bracket M_\chi$. For this purpose, we further add the following set {\tts cll} to {\tts CDenote} %
 to record the point of procedure calls, so that an element $(\sigma_0, (id, g)) \in \fofchi{c}{cll}$ if and only if executing statement $c$ from state $\sigma_0$ could reach a point of procedure call named $id$ at global state $g$.
\begin{lstlisting}[language=Coq]
    cll: state -> query -> Prop, $\text{\normalsize where }$query $\triangleq$ ident $\times$ gstate.
\end{lstlisting}
The calling behavior
$\fofchi{c}{cll}$ is also recursively defined on the syntax of statements, for example,
\begin{align*}
    &\fofchi{c_1;c_2}{cll} \triangleq 
\fofchi{c_1}{cll} \cup \big(\fofchi{c_1}{nrm} \circ \fofchi{c_2}{cll}\big) \\
&\fofchi{\textbf{call } id}{cll} \triangleq \big\{\big((l, g), (id, g)\big) \mid (l, g) \in \text{\tts state} \big\}
\end{align*}
Correspondingly, the calling behavior of a procedure $\fofchi{p}{cll}$ is derived from that of its body, i.e., 
  \begin{align*}
      \fofchi{p}{cll} \triangleq \big\{ \big((id_p, g_0), (id, g_2)\big) \mid 
      \exists l_1\ g_1, (l_1, g_1) = \ttt{Init}(p, g_0) \wedge
      \big((l_1, g_1), (id, g_2)\big) \in \fofchi{c_p}{cll}\big\}
  \end{align*}
\subsubsection{Diverging Behavior of Open Modules} We use $\fofchi{M}{dvg}$ ($\subseteq \ttt{ident} \times \ttt{gstate}$) to denote the diverging behavior of open modules. Specifically, $(id, g) \in \fofchi{M}{dvg}$ if and only if there exists a procedure named $id$ in module $M$ and its execution beginning with global state $g$ could diverge due to internal statements after a finite number of procedure calls, or an infinite number of internal procedure calls inside module $M$.
Thus, assume that the terminating behavior of external procedures is given by semantic oracle $\chi$, and let $\hat{\chi} = \fofchi{M}{nrm} \cup \chi$ (i.e., $\hat{\chi}$ specifies the terminating behavior of all procedures, including both internals and externals), $\fofchi{M}{dvg}$ can be defined as:
\begin{align*}
     & \parents{M}_{\hat{\chi}} \triangleq \bracket{p_1}_{\hat{\chi}} \cup \cdots \cup \bracket{p_n}_{\hat{\chi}} \text{, where } M = p_1;\dots;p_n
     \\
     &\fofchi{M}{dvg} \triangleq \big(\pop{\hat{\chi}}{M}{cll}^* \circ \pop{\hat{\chi}}{M}{dvg}\big) \cup \pop{\hat{\chi}}{M}{cll}^\infty
\end{align*}
where $ \parents{M}_{\hat{\chi}}$ is defined by simply merging corresponding behavior sets of procedures $p_1, \dots, p_n$.

We further use $\fofchi{M}{cll}$ ($\subseteq \kwd{query} \times \kwd{query} $) to denote the calling behavior of a module such that an internal procedure of module $M$ could eventually call an external procedure after a finite number of internal calls. Let $\fofchi{M}{dom}$ (\(\subseteq \kid\)) be the set of procedure names defined in $M$, and then 
\begin{align*}
      &\kecall(X) \triangleq \Set{(\ivar, g) \mid  id \not\in X }, \text{ for } X \subseteq \kid    
  \\
  &\fofchi{M}{cll} \triangleq
    \pop{\hat{\chi}}{M}{cll}^+ \circ \ktest\bigl(\kecall(\fofchi{M}{dom})\bigr)
\end{align*}
\subsubsection{Diverging Behavior of Semantic Linking}
Based on the definitions above, semantic linking can be easily extended to the {\tts dvg} field and {\tts cll} field:
 assume that the terminating behavior of external procedures outside $M_1$ and $M_2$ is given by semantic oracle $\chi$, and let $\Hat{\chi} = \linkofchi{\bracket{M_1}\oplus \bracket{M_2}}{nrm} \cup \chi$, 
$\linkofchi{\bracket{M_1}\oplus \bracket{M_2}}{dvg}$ can be defined as follows.
\begin{align*}
    &\begin{aligned}
      \linkofchi{\bracket{M_1}\oplus \bracket{M_2}}{dvg} \triangleq \
      &\big(\fof{\hat{\chi}}{M_1}{cll} \cup \fof{\hat{\chi}}{M_2}{cll}\big)^* \circ
      \big(\fof{\hat{\chi}}{M_1}{dvg} \cup \fof{\hat{\chi}}{M_2}{dvg}\big) 
       \ \cup \\
      &\big(\fof{\hat{\chi}}{M_1}{cll} \cup \fof{\hat{\chi}}{M_2}{cll}\big)^\infty
    \end{aligned}
    \\
    &\begin{aligned}
      \linkofchi{\bracket{M_1}\oplus \bracket{M_2}}{cll} \triangleq
      \big(\fof{\hat{\chi}}{M_1}{cll} \cup \fof{\hat{\chi}}{M_2}{cll}\big)^+ \circ
      \ktest\bigl(\kecall(\fofchi{M_1 + M_2}{dom})\bigr)
    \end{aligned}   
\end{align*}

\paragbf{Remark} There may exist traditional approaches~\cite{DBLP:journals/tcs/Back83} that, for example, assume another semantic oracle $\chi^{\prime}(\subseteq$ {\tts ident} $\times$ {\tts gstate}) to interpret the diverging behavior of procedure calls, so that the diverging behavior of a module can be defined by taking appropriate fixed points of recursive equations similar to Formula~(\ref{eq:module_terminate}). However, this approach is not available for practical C modules, since one will meet the same stuttering problem as we do in \S\ref{subsec:loop_trace} when observable events are taken into account. Our approach successfully solves this problem. We can similarly define $\fofchi{M}{fin\_dvg}$ and $\fofchi{M}{inf\_dvg}$ for realistic C modules like how we handle Clight loops.
The key point is to use $\triangle \pop{\chi}{M}{cll}$ and $\blacktriangle\pop{\chi}{M}{cll}$ in the definition so that we can distinguish the cases that zero events happen before a call and the cases that at least one event happens before the call. {We illustrate this in \S\ref{subsec:module_sem}},
where the semantics of function calls and open modules in practical languages are defined similarly to the semantics of PCALL procedures and PCALL modules shown in this section, and the primary differences lie in the additional manipulation w.r.t. arguments and return values.

\subsection{Semantics of Unstructured Branches}
\label{subsec:goto_sem}
We further consider unstructured branches on top of PCALL with the following GOTO language.
\[
    c \triangleq \textbf{skip} \sep
    atom \sep c_1;c_2 \sep
    \textbf{choice}(c_1, c_2)\sep
    \textbf{if}\ b\ \textbf{then}\ c_1\ \textbf{else}\ c_2 \sep
    \textbf{while}\ b\ \textbf{do}\ c \sep
    \textbf{call} \ id \sep
     \mathcolor{red}{j: c} \sep
     \mathcolor{red}{\textbf{goto}\ j}
\]
where $j$ ranges over the set of labels. For handling \textbf{goto} statements, we add the following {\tts gto} set to {\tts CDenote}, so that an element $(\sigma_0,  (j, \sigma_1)) \in \fofchi{c}{gto}$ if and only if executing statement $c$ from state $\sigma_0$ could reach a point of jumping by \textbf{goto} statement at state $\sigma_1$ for the first time.
\begin{lstlisting}[language=Coq]
 gto: state -> trace -> goto_info -> Prop, $\text{\normalsize where }$goto_info $\triangleq$ label $\times$ state, $\text{\normalsize and }$label $\text{\normalsize is the set of labels}$.
\end{lstlisting}
We additionally define the behavior of \textbf{goto} statements as follows, where the normal execution of a program will be interrupted by a \textbf{goto} statement, so $\fofchi{\textbf{goto } j}{nrm}$ is assigned the empty set.
\begin{align*}
    \fofchi{\textbf{goto } j}{nrm} \triangleq \emptyset
    \quad \quad
    \fofchi{\textbf{goto } j}{gto} \triangleq \{(\sigma, (j, \sigma)) \mid \sigma \in \ttt{state}\}
\end{align*}
Existing semantic definition of other statements is the same as PCALL. The ``goto'' exiting behavior of other statements is recursively defined similarly to the calling behavior in \S\ref{subsec:pcall_dvg}, for example,
\begin{align*}
    \fofchi{c_1; c_2}{gto} \triangleq \fofchi{c_1}{gto} \cup \fofchi{c_1}{nrm} \circ \fofchi{c_2}{gto}
\end{align*}
We then use the following auxiliary relations to denote the terminating, ``goto'' exiting, and diverging behavior of executing $c$ from given labels, respectively.
\begin{align*}
    &\lol{\chi}{c}{nrm} \subseteq \ttt{goto\_info} \times \ttt{state} \\
    &\lol{\chi}{c}{gto} \subseteq \ttt{goto\_info} \times \ttt{goto\_info} \\
    &\lol{\chi}{c}{dvg} \subseteq \ttt{goto\_info}
\end{align*}
They are defined by evaluating part of statement $c$ from a given label, such that, for example, an element ($(j, \sigma_0), \sigma_1) \in \lol{\chi}{c}{nrm}$ if and only if we can find a labeled position ``$j: ...$'' in $c$ and executing $c$ from $j$ at state $\sigma_0$ could terminate at $\sigma_1$. 
For instance, given statements $s_1$ and $s_2$,
\[ 
  \lol{\chi}{c_1; j_1: c_2}{nrm} \triangleq \{((j_1, \sigma_0), \sigma_1) \mid (\sigma_0, \sigma_1) \in \fofchi{c_2}{nrm} \}
\]
Finally, the behavior of a procedure $p$ whose body $c$ supports \textbf{goto} constructs is extended as follows.
\begin{align*}
&\begin{aligned}
  \fofchi{p}{nrm} \triangleq \Set{(id_p, g_0, g_3) | \exists l_1\ g_1\ l_2\ g_2,\
  \begin{aligned}
    &\!\big((l_1, g_1), (l_2, g_2)\big) \in
            \mathcolor{red}{\fofchi{c}{nrm} \cup \big(\fofchi{c}{gto} \circ \mathbb{N} \big)}
     \\
    &\wedge (l_1, g_1) = \ttt{Init}(p, g_0) \wedge g_3 = \ttt{Free}(l_2, g_2)
  \end{aligned}}  
\end{aligned}
\\
&\begin{aligned}
  \fofchi{p}{dvg} \triangleq \Set{(id_p, g_0) | \exists l_1\ g_1,(l_1, g_1) = \ttt{Init}(p, g_0)\wedge 
  \begin{aligned}
    \!(l_1, g_1) \in
            \mathcolor{red}{\fofchi{c}{dvg} \cup \big(\fofchi{c}{gto} \circ \mathbb{F} \big)}
  \end{aligned}}  
\end{aligned}
\\
&\begin{aligned}
      \quad \text{ where } \mathbb{N} & = \lol{\chi}{c}{gto}^* \circ \lol{\chi}{c}{nrm}
  \quad \quad 
  \mathbb{F} = \big(\lol{\chi}{c}{gto}^* \circ \lol{\chi}{c}{dvg} \big) \cup
           (\lol{\chi}{c}{gto})^\infty
\end{aligned}
\end{align*}
Here, \(\mathbb{N}\) denotes that executing \(c\) from a given label could terminate normally after a finite number of ``goto'' iterations, and \(\mathbb{F}\) denotes that executing \(c\) from a given label could diverge, either by a certain sub-statement that could diverge after a finite number of ``goto'' iterations or by an infinite number of ``goto'' iterations. 
The procedure would then terminate if its body $c$ normally terminate, or the body meets a \textbf{goto} statement (i.e., \textbf{goto} $j$) and then execution from the given label \(j\) would finally terminate. The diverging behaviors are handled similarly, with additional consideration of infinite iterations by \textbf{goto}. Intuitively, unstructured branchings resembles a loop with multiple entry points, where each labeled statement acts as an entry, and each \textbf{goto} statement specifies which entry to jump to.
Finally,
the way of defining the semantics of open modules and their semantic linking is equally applicable to programs with \textbf{goto} constructs, as further illustrated in \S\ref{sec:semantics_front}.

\section{Denotational Semantics of Control Flow Graphs}
\label{sec:semantics_cfg}
Most compiler back-ends, including those of LLVM, GCC, and CompCert, are based on Control Flow Graphs (CFGs). We use the following CFG language to illustrate how the semantic linking shown in \S\ref{sec:semantics_pcall} can be applied to these settings. We here do not consider observable events for simplicity.
\begin{align*}
    &\text{Instructions}&&I\triangleq \mathbf{Do}(atom, j) \sep 
    \mathbf{Call}(i, j) \sep
    \mathbf{Cond} (b, j_1, j_2)
    \\
    &\text{Control flow graph}&& G\triangleq j \mapsto I
    \\
    &\text{Procedures}  &&p\triangleq (i_p, j_\ttt{entry}, j_\ttt{exit}, G) \\
    &\text{Modules} &&M\triangleq p_1;...;p_n
\end{align*}
A CFG is formulated as a finite mapping from labels to instructions.
 Each instruction carries the labels of its possible successors. Specifically, the execution of \(\mathbf{Do}(atom, j)\) involves executing the atomic instruction \(atom\) (e.g., an assignment) and then branching to node \(j\). The instruction \(\mathbf{Call}(i, j)\) means invoking procedure \(i\) and then branching to node \(j\). Finally, the instruction \(\textbf{Cond}(b, j_1, j_2)\) denotes a conditional branch determined by the truth value of the pure condition \(b\), which is assumed to have no side effects. 
 We model the nodes of CFGs with instructions rather than basic blocks since this simplifies semantics and reasoning over static analyses \cite{DBLP:conf/cc/KnoopKS98}.

The set of program states is defined as {\tts state} $\triangleq \ttt{label} \times \ttt{lstate} \times \ttt{gstate}$.
We again use a semantic oracle $\chi$ ($\subseteq \ttt{ident} \times \ttt{gstate} \times \ttt{gstate}$) to interpret the semantics of callee procedures, so as to apply the method presented in \S\ref{sec:semantics_pcall} to define the semantics of open modules and semantic linking in this context. The denotational semantics of typical instructions can then be defined as follows.
\begin{align*}
    &\begin{aligned}
        \mathbf{jump}(j) \triangleq
           \{\big((j_\ttt{pc}, l, g), (j'_\ttt{pc}, l, g)\big) \mid
            j'_\ttt{pc} = j
           \}, \text{ where } j_\ttt{pc} \text{ is the label of the current node.}
    \end{aligned}
     \\
    &\begin{aligned}
    \fofchi{\mathbf{Do}(atom, j)}{nrm} \triangleq
        \fofchi{atom}{nrm} \circ \mathbf{jump}(j)
    \end{aligned}
    \\
    &\begin{aligned}
    \fofchi{\mathbf{Call}(i, j)}{nrm} \triangleq
        \{\big((j_\ttt{pc}, l, g), (j_\ttt{pc}, l, g')\big) \mid
            (i, g, g') \in \chi
           \} \circ \mathbf{jump}(j)
    \end{aligned}
    \\
    &\begin{aligned}
        \fofchi{\mathbf{Cond}(b, j_1, j_2)}{nrm} \triangleq
        \mathbf{test}(\fof{}{b}{tts}) \circ \mathbf{jump}(j_1) \cup
        \mathbf{test}(\fof{}{b}{ffs}) \circ \mathbf{jump}(j_2)
    \end{aligned}
\end{align*}
We next use $\fofchi{G}{nrm}$ to denote the general union of denotations of all nodes. Because a procedure could normally terminate if and only if executing instructions from the entry node could finally branch to the exit node after a finite number of branches between nodes,
the terminating behavior of procedures and open modules (i.e., $\fofchi{p}{nrm}$ and $\fofchi{M}{nrm}$) can then be defined as follows. 
 \begin{align*}
     &\fofchi{G}{nrm} \triangleq {\textstyle \bigcup_j}\big(\fofchi{G(j)}{nrm}\big)
     \\
     &\fofchi{p}{nrm} \triangleq \{
        (i_p, g, g') \mid \exists l',
        \big((j_\ttt{entry}, l_\ttt{empty}, g), (j_\ttt{exit}, l', g')\big) \in \mathcolor{red}{\fofchi{G}{nrm}^*}
     \}
     \\
     &\fofchi{M}{nrm} \triangleq 
        \mu  \chi_0.\left(\fof{\chi_0 \cup \chi}{p_1}{nrm} \cup \cdots \cup\fof{\chi_0 \cup \chi}{p_n}{nrm}\right),\text{ where } M = p_1;\dots;p_n
 \end{align*}
The semantics of open modules is defined the same way as that in \S\ref{subsec:pcall_module}, so is the semantic linking. In essence, the denotational semantics of control-flow graphs (CFGs) does not fundamentally differ from their small-step semantics, except for the treatment of procedure calls. The purpose of presenting the semantics of CFGs in this section is to illustrate that this semantic construction naturally scales to the CompCert back-end languages. Accordingly, the semantic linking operator defined in \S\ref{subsec:pcall_module} can also be applied to the CompCert back-end, in particular to its assembly languages.

\section{Semantics of CompCert Front-end Languages}
\label{sec:semantics_front}
\subsection{Languages and Semantic Domain}
\label{subsec:dom_front}

\paragbf{Syntax of Clight.} We write \(s\) for the syntax of Clight statements, \(F\) for Clight functions, \(e\) for Clight expressions, and $x, y, z$ for program variables. 
 We recall the key definitions as follows, where $?term$ denotes the optional occurrence of a syntactic term, and $\Vec{term_k}$ is a list of syntactic terms with the representative element \(term_k\). The full syntax of Clight can be found in \cite{DBLP:conf/fm/BlazyDL06}.
\begin{align*}
  &\text{(Statements)} &&s \triangleq\
  \begin{aligned}[t]
     & \Sskip \mid
      \Sseq \mid
      \Sif \mid
      \Sloop \mid
      \Scall \mid
      \Sset \mid
       \\ 
      & %
      \Sbreak \mid
      \Scontinue \mid
      \Sreturn \mid
      \Sgoto \mid 
      \Slabel \mid
      \Sswitch
      \\
      &\text{where }\,  ?term \DEF\; ? \mid term
    \qquad \lslist{k} \DEF \kLSnil \mid (n_1, s_1);\lslist{2}
  \end{aligned} \numberthis \label{eqn:clit_stmt_syntax}
  \\
  & \text{(Functions)}
  &&\begin{aligned}
    F \triangleq (\Vec{(x_k, \type_{k})}, \Vec{(y_k, \type_{k})}, \Vec{(z_k, \type_{k})}, s_F, \return_F)
  \end{aligned} \numberthis \label{eqn:clit_funct_syntax}
\end{align*}
A Clight function is defined as a quintuple consisting, from left to right, of 
the parameter list, the list of local variables, the list of temporary variables, 
the function body, and the return type of \(F\).

\paragbf{CompCert memory model} CompCert models memories as a collection of disjoint blocks, 
each identified by a block identifier (ranged over by \(b\) of a countable set $\block$). Within a block, memory contents 
are addressed by an integer offset (ranged over by \(o\) of the integer set \(\offset\)), and operations (e.g., load and 
store) are defined in terms of these block-offset pairs (i.e., memory locations). 
A memory state (of set \kmem) can be simply viewed as a mapping from memory locations to program values\footnote{We refer to \citeN{compcert-memory-v2} for the detailed definition of the CompCert memory model.} (of set \kval) that range over 32-bit integers, 64-bit floating-point numbers, pointers, and a special value \kundef.
This memory model is employed for all intermediate languages of CompCert.

\begin{table}[t]
\centering
\caption{Program states and the query–reply interface for function calls.}
\label{tab:clight-sem-component}
\renewcommand{\arraystretch}{1.1}
\begin{tabular}{|l|l|l|}
\hline
\text{Component}\qquad & \text{Formal definition} \qquad& \text{Representative element} \\ 
\hline
Program states \qquad & ${\kwd{state}\color{white}{\kwd{c}}} \DEF \ktmp \times \kmem$ 
             & $\quad (te, m) \in \kstate$ \\
C queries \qquad & $\kcquery \DEF \kid \times \kvalist \times \kmem$ 
             & $\,(id, \vec{v}, m) \in \kcquery$ \\
C replies \qquad & $\kcreply \DEF \kmem \times \kval$ 
             & $\,\,\,\, (m', v') \in \kcreply$ \\
\hline
\end{tabular}
\end{table}

\paragbf{Clight semantic infrastructures} As shown in Table \ref{tab:clight-sem-component}, a Clight program state is defined as a pair consisting of a temporary environment \(te\) and a memory state \(m\), where a temporary environment is a mapping from variables to its values. A C function call provides a query consisting of the invoked function identifier \(id\), the argument list \(\vec{v}\), and the memory state \(m\). After the function call, the corresponding reply consisting of a memory state \(m'\) and a return value \(v'\) may be received.

\paragbf{Clight statement denotations.}  A Clight statement denotation is defined by the following signature.
\begin{lstlisting}[language=Coq]    
    Record Clit.Denote: Type := {
      nrm: state -> trace -> state -> Prop;              (* nrm $\subseteq$ state $\times$ trace $\times$ state ----------$\,$*)
      brk: state -> trace -> state -> Prop;              (* brk $\subseteq$ state $\times$ trace $\times$ state ----------$\,$*)
      ctn: state -> trace -> state -> Prop;              (* ctn $\subseteq$ state $\times$ trace $\times$ state ----------$\,$*)
      rtn: state -> trace -> state -> val -> Prop;        (* rtn $\subseteq$ state $\times$ trace $\times$ state $\times$ val ----*)
      err: state -> trace -> Prop;                      (* err $\subseteq$ state $\times$ trace -------------------*)
      cll: state -> trace -> cquery -> Prop;             (* cll $\subseteq$ state $\times$ trace $\times$ cquery ---------$\,$*)
      gto: state -> trace -> (label $\times$ state) -> Prop;     (* gto $\subseteq$ state $\times$ trace $\times$ (label $\times$ state) *)
      fin_dvg: state -> trace -> Prop;                  (* fin_dvg $\subseteq$ state $\times$ trace ---------------*)
      inf_dvg: state -> itrace -> Prop                  (* inf_dvg $\subseteq$ state $\times$ itrace --------------*)
    }.
\end{lstlisting}
CompCert classifies program behaviors into four categories: {normal termination}, {abortion}, {silent divergence}, and {reacting divergence}.   
They are captured in our denotation by the sets \(\knrm\), \(\kerr\), \(\kfin\), and \(\kinf\), respectively.  
Additionally, the sets \(\kbrk\), \(\kctn\), and \(\krtn\) are introduced to account for Clight’s structured control flow, corresponding to the \textbf{break}, \textbf{continue}, and \textbf{return} statements.  
Furthermore, the set \(\kcll\) is used to conveniently capture abortion and divergence arising from recursive calls, while the set \(\kgto\) is designed to capture possible unstructured branching by \textbf{goto}.  

\paragbf{Function/Open module denotations} 
Functions or open modules communicate with each other by language interfaces~\cite{DBLP:conf/pldi/KoenigS21,DBLP:journals/pacmpl/ZhangWWKS24}. Formally, a \emph{language interface} is a tuple \(\pair{\kquery, \kreply}\), where \kquery is a set of queries and \kreply is a set of replies. The signature of function/open module denotations is parameterized by a language interface, as defined below.
\begin{lstlisting}[language=Coq]    
    Record FDenote {query reply: Type}: Type := {
      dom: ident -> Prop;                               (* dom $\subseteq$ ident ----------------------------$\,$*)
      nrm: query -> trace -> reply -> Prop;               (* nrm $\subseteq$ query $\times$ trace $\times$ reply -----------$\,$*)
      err: query -> trace -> Prop;                       (* err $\subseteq$ query $\times$ trace --------------------$\!$*)
      cll: query -> trace -> query -> Prop;               (* cll $\subseteq$ query $\times$ trace $\times$ query -----------$\,$*)
      fin_dvg: query -> trace -> Prop;                   (* fin_dvg $\subseteq$ query $\times$ trace ----------------$\!$*)
      inf_dvg: query -> itrace -> Prop                   (* inf_dvg $\subseteq$ query $\times$ itrace ---------------$\!$*)
    }.
\end{lstlisting}
At the function level, the denotation ignores statement-level details (captured by the sets $\kbrk$, $\kctn$, $\krtn$, and $\kgto$ of \ClitDenote), and instead focuses on the four resulting behaviors from a given incoming query 
(i.e., termination, abortion, silent divergence, and reacting divergence), each capturing the original observable behaviors in CompCert (by the \ktrace or \kitrace field of each behavior set).

\subsection{Denotational Semantics of Clight Statements}
\label{subsec:clight_sem}

\paragbf{Overview} We compositionally build the semantics of Clight statements from broken semantics, labeled broken semantics, and finally to the intact semantics, as shown in Table~\ref{tab:clight-semantics}. \emph{Broken semantics} captures the behavior of 
a statement $s$ under the convention that execution halts immediately upon 
encountering a \goto statement, thereby disregarding the semantics of any 
subsequent instructions that the \goto might target. Building upon this, 
\emph{labeled broken semantics} interprets $s$ from a specified label, but still 
within the broken-semantics discipline. Finally, the \emph{intact semantics} of Clight statements is obtained by 
systematically composing the (labeled) broken semantics, thus recovering the full 
control-flow behavior of \goto statements.

These semantics are parameterized by an environment \(\ENV\) and a semantic oracle \(\chi\) (\(\subseteq \Tnee\)), where \(\mathbb{E}\) is a tuple of global environment \(\gvar\) and local environment \(\evar\), and the oracle \(\chi\)  is used for specifying the terminating behavior of function calls from the Clight statement \(s\).
\begin{table}[t]
\centering
\caption{Three related semantic notions of Clight statements.}
\label{tab:clight-semantics}
    \begin{tabular}{|l|c|c|l|}
    \hline
    \text{Semantics} & \text{Symbol} & \text{Denotation} & \text{Brief Description} \\
    \hline
     \rule{0pt}{2.5ex} Broken semantics & $\ssec{s}$ & $\ClitDenote$ & Halt at \textbf{goto} statements. \\
     \rule{0pt}{2.5ex} Labeled broken semantics & $\llec{s}$ & $\klabel \to \ClitDenote$ & Start from a given label. \\
     \rule{0pt}{2.5ex} Intact semantics & $\hhec{s}$ & $\ClitDenote$ & Full semantics via composition. \\
    \hline
    \end{tabular}
\end{table}
\subsubsection{Broken Semantics of Clight Statements}

\paragbf{Atomic statements} The denotation of atomic statements satisfy the following equations, where $\eeval(\ENV, \sigma, e)$ means trying to derive the value of \(e\) under environment $\ENV$ and state \(\sigma\).
\begin{align}
    &\sosec{\Sskip}{nrm} \DEF \idrel 
      \qquad
      \ssec{\Sskip}.(\kX) \DEF \oset, \kforx \in \Set{\kbrk, \kctn, \krtn, \kerr, \kcll, \kgto, \kfin, \kinf} \nonumber
    \\
    &\sosec{\Sset}{nrm} \DEF \Set{\big((te, m),\knil, (te', m)\big)  |
       \begin{aligned}
        &\exists v, \eeval(\ENV, (te, m), e) = \ksome(v)
        \\
        &\quad  \land te' = te[x \mapsto v]   
       \end{aligned}
     }
    \\
    & \sosec{\Sset}{err} \DEF \Set{\big(\sigma,\knil\big) \mid
    \eeval(\ENV, \sigma, e) = \knone}
    \\ 
    & \ssec{\Sset}.(\kX) \DEF \oset, \kforx \in \Set{\kbrk, \kctn, \krtn, \kcll, \kgto, \kfin, \kinf }
    \\
    & \sosec{\Sreturn}{rtn} \DEF 
        \Set{(\sigma, \knil, \sigma, v) | \eeval(\ENV, \sigma, ?e) = \ksome(v)} 
    \\
    & \sosec{\Sreturn}{err} \DEF 
        \Set{(\sigma, \knil) | \eeval(\ENV, \sigma, ?e) = \knone}
    \\ 
    & \ssec{\Sreturn}.(\kX) \DEF \oset, \kforx \in \Set{\knrm, \kbrk, \kctn, \kcll, \kgto, \kfin, \kinf }
\end{align}

\paragbf{Clight function calls} The denotation of Clight function calls satisfies the following equations, where $\qeval(\ENV, \sigma, e_1, \Vec{e_2})$ means trying to derive the query of a call from \(e_1\) and \(\Vec{e_2}\) under environment \(\ENV\) and state \(\sigma\). 
\begin{align*}
&\begin{aligned}
    \sosec{\Scall}{nrm} \triangleq
      \Set{ ((te, m), \tau, (te', m')) |  
        \begin{aligned}
            & \exists q, \, \qeval(\ENV, (te, m), e_1, \Vec{e_2}) = \ksome(q)
            \\
            & \;\; \land \exists v',\, (q, \tau, (m', v')) \in \chi \land
              te' = te[?x\mapsto v']  
        \end{aligned}
      }
\end{aligned}
\\
&\begin{aligned}
    \sosec{\Scall}{cll} \triangleq
      \Set{(\sigma, \ttt{nil}, q) \mid \qeval(\ENV, \sigma, e_1, \Vec{e_2}) = \ksome(q) }
\end{aligned} \numberthis
\\
&\begin{aligned}
    \sosec{\Scall}{err} \DEF \Set{(\sigma, \knil) \mid \qeval(\ENV, \sigma, e_1, \Vec{e_2}) = \knone}
\end{aligned} \numberthis
\\
&\begin{aligned}
    \ssec{\Scall}.(\kX) \DEF \oset, \kforx \in \Set{\kbrk, \kctn, \krtn, \kgto, \kfin, \kinf}
\end{aligned} \numberthis
\end{align*}

\paragbf{Operators for composing denotations} Before showing the semantics of composite statements, we define the following operators for semantic composition. Specifically, for given denotations \(D_1\) and \(D_2\) of \ClitDenote, we use \(D_1 \Cup D_2\) to denote the union of them, and use \(D_1 \fatsemi D_2\) to denote the sequential composition of them, in which the composition operator $\circ$ is overloaded as follows.
\begin{align*}
    & (D_1 \Cup D_2).(\kX) \DEF D_1.(\kX) \cup D_2.(\kX), \kforx \in \Full
    \\
    & (D_1 \fatsemi D_2).(\knrm) \DEF D_1.(\knrm) \circ D_2.(\knrm)
    \\
    &\begin{aligned}
      (D_1\fatsemi D_2).(\kX) \DEF D_1.(\kX) \cup (D_1.(\knrm) \circ D_2.(\kX)),\;
      \kX \in \{\kbrk, \kctn, \krtn, \kerr, \kcll, \kgto, \kfin, \kinf \}
    \end{aligned}
    \\
&\qquad 
\begin{aligned} \text{where}\quad\!
  &R_1 \circ Y &&\!\!\!\!\! \triangleq 
    \{ (\sigma_1, \tau) \mid
    \exists \sigma_2\ \tau_1\ \tau_2,
      (\sigma_1, \tau_1, \sigma_2) \in R_1
    \wedge
    (\sigma_2, \tau_2) \in Y
    \wedge
    \tau = \tau_1 \cdot \tau_2
    \}
  \\
  &R_1 \circ R_2 &&\!\!\!\!\! \triangleq
    \{ \left(\sigma_1, \tau, \sigma_3\right) \mid 
    \exists \sigma_2\ \tau_1\ \tau_2, \
    (\sigma_1, \tau_1, \sigma_2) \in R_1
    \wedge
    (\sigma_2, \tau_2, \sigma_3) \in R_2
    \wedge
    \tau = \tau_1 \cdot \tau_2
    \}
  \\
  &R_1 \circ V &&\!\!\!\!\! \triangleq
    \{ \left(\sigma_1, \tau, \sigma_3, v\right) \mid 
    \exists \sigma_2\ \tau_1\ \tau_2, \
    (\sigma_1, \tau_1, \sigma_2) \in R_1
    \wedge
    (\sigma_2, \tau_2, \sigma_3, v) \in V
    \wedge
    \tau = \tau_1 \cdot \tau_2
    \}
\end{aligned}
\end{align*}

\paragbf{Clight control flow} The denotations of Clight control flow statements (i.e., conditional branching, sequential statements, \textbf{break}, \textbf{continue}, \textbf{return}, and Clight loops) satisfy the following equations, where \(\beval(\ENV, \sigma, e)\) meaning trying to derive the Boolean value of the Clight expression \(e\) under environment \(\ENV\) and state \(\sigma\). We write $\Uparrow X$ to lift a set of states $X$ into a binary relation with empty trace, 
and $\nabla R$ to project a ternary relation $R$ onto a binary one by canceling its target state.
\begin{align*}
&\begin{aligned} %
    &\fof{\ENV}{e}{tss}\; (\subseteq \kstate) \DEF \Set{\sigma | \beval(\ENV, \sigma, e) = \ksome(\textbf{T})  }
    \\
     &\fof{\ENV}{e}{fss}\; (\subseteq \kstate) \DEF \Set{\sigma | \beval(\ENV, \sigma, e) = \ksome(\textbf{F})  }
     \\
    &\fof{\ENV}{e}{err}\; (\subseteq \kstate) \DEF \Set{\sigma |\beval(\ENV, \sigma, e) = \knone  }
\end{aligned}
\\
&\Uparrow X \;\DEF\; \{\,(\sigma, \knil) \mid \sigma \in X\,\},  
\qquad \quad\, \text{ for }  X \subseteq \kstate,
\\ 
&\nabla R \;\DEF\; \{\,(\sigma, \tau) \mid \exists \sigma',\, (\sigma, \tau, \sigma') \in R \},  
\text{ for } R \subseteq \kstate \times \ktrace \times \kstate.
\\    
    & \ssec{\Sseq} \DEF \ssec{s_1}\fatsemi \ssec{s_2}
\\
    &\begin{aligned} %
        &\ssec{\Sif}.(\kX) \DEF \ktest(\fof{\ENV}{e}{tss}) \circ \ssec{s_1}.(\kX) \cup \ktest(\fof{\ENV}{e}{fss}) \circ \ssec{s_2}.(\kX), \\
        &\qquad \qquad\;\;
        \kforx \in \Set{\knrm, \kbrk, \kctn, \krtn, \kcll, \kgto, \kfin, \kinf }
    \end{aligned} 
    \\
    &\begin{aligned}
        &\sosec{\Sif}{err} \DEF\;
        \\
        &\qquad \qquad\;\; (\Uparrow \fof{\ENV}{e}{err}) \cup \ktest(\fof{\ENV}{e}{tss}) \circ \sosec{s_1}{err} \cup \ktest(\fof{\ENV}{e}{fss}) \circ \sosec{s_2}{err}
    \end{aligned} 
\\
&\begin{aligned}
&\sosec{\textbf{break}}{brk} \triangleq \idrel, \text{ and other fields are assigned the empty set}. \\
&\sosec{\textbf{continue}}{ctn} \triangleq \idrel, \text{ and other fields are assigned the empty set}.
\end{aligned}
\\
& N_1 = \sosec{s_1}{nrm} \cup \sosec{s_1}{ctn}
  \quad \quad N_{12} = N_1 \circ
    \sosec{s_2}{nrm}
\\
&\begin{aligned}
  \sosec{\Sloop}{nrm} \triangleq
    &\ N_{12}^* \circ(\sosec{s_1}{brk}\cup N_1 \circ
    \sosec{s_2}{brk}) 
\end{aligned}
\\
& \sosec{\Sloop}{brk} = 
  \sosec{\Sloop}{ctn} \triangleq \oset 
\\
&\begin{aligned}
  \ssec{\Sloop}.(\kX) \triangleq
    &\ N_{12}^* \circ(\ssec{s_1}.(\kX)\cup N_1 \circ
    \ssec{s_2}.(\kX)), \kforx \in \Set{\krtn, \kcll, \kgto}
\end{aligned}
\\
&\begin{aligned}
  \sosec{\Sloop}{err} \triangleq
    &\ N_{12}^* \circ \big(\sosec{s_1}{err}\cup N_1 \circ
    \sosec{s_2}{err} \cup N_1 \circ (\nabla \sosec{s_2}{ctn})\big)
\end{aligned}
\\
&\begin{aligned}
  \sosec{\Sloop}{\ttfin} \triangleq
  N_{12}^* \circ \big(\fof{}{s_1}{\ttfin} \cup N_1 \circ
  \fof{}{s_2}{\ttfin}
  \cup (\triangle N_{12})^\infty \big)
\end{aligned}
\\
&\begin{aligned}
    \sosec{\Sloop}{\ttinf} \DEF\; 
    &\text{ let }
    \mathbb{D} = \sosec{s_1}{\ttinf} \cup
    \big(N_1 \circ \sosec{s_2}{\ttinf}\big)
    \text{ in} 
    \\
    & \left((\triangle N_{12})^* \circ \blacktriangle N_{12}\right)^* \circ
    (\triangle N_{12})^* \circ \mathbb{D} \cup
    \left((\triangle N_{12})^* \circ \blacktriangle N_{12}\right)^\infty
\end{aligned}
\end{align*}

\paragbf{Switch statement} 
In Clight, a \textbf{switch} statement evaluates a case expression~$e$ and then selects 
the execution branch from a list of case statements~$\lslist{k}$. Each case statement 
consists of an optional case label~$?n_k$ and its associated statement~$s_k$. 
If $e$ matches a case label, execution begins at that case; if no match is found, 
execution starts from the first default case (denoted $(?, s)$); and if neither a 
matching case nor a default exists, the switch terminates immediately, behaving as 
\(\Sskip\). 
Formally, we use $\ssec{\lslist{k}}$ to denote the semantics of sequentially executing 
the list of case statements $\lslist{k}$ one by one, as defined by 
Equations~(\ref{eqn:broken_nil}) and~(\ref{eqn:broken_sl}). 
We further define $\wwec{\lslist{k}}(n)$ to denote execution beginning from the 
case labeled $n$, if it exists, or from the default case otherwise.

\begin{align*}
   &\seqec{\kLSnil} \DEF \ssec{\Sskip} \numberthis \label{eqn:broken_nil}
   \\ 
   &\seqec{(?n_1,s_1);\lslist{2}} \DEF \ssec{s_1}\fatsemi \seqec{\lslist{2}} \numberthis \label{eqn:broken_sl}
   \\
   \\
   &\wwec{\kLSnil}(n) \DEF \ssec{\Sskip} \numberthis \label{eqn:sl_nil}
   \\
   &\begin{aligned}
        & \wwec{(n_1,s_1);\lslist{2}}(n) \DEF \wwec{\lslist{2}}(n), && \text{ if } n \neq n_1 
   \\
        & \wwec{(n_1,s_1);\lslist{2}}(n) \DEF \ssec{s_1} \fatsemi \seqec{\lslist{2}}, && \text{ if } n = n_1 
   \end{aligned} \numberthis \label{eqn:sl_case_some}
   \\
   &\begin{aligned}
   & \wwec{(?,s_1);\lslist{2}}(n) \DEF \ssec{s_1} \fatsemi \seqec{\lslist{2}}, && \ \ \text{ if case } n \text{ not found in } \lslist{2} 
   \\
   & \wwec{(?,s_1);\lslist{2}}(n) \DEF  \wwec{\lslist{2}}(n), && \ \  \text{ if case } n \text{ found in } \lslist{2} 
   \end{aligned} \numberthis \label{eqn:sl_case_none}
   \\
   \\
    &\vvec{\ENV}{n}{e}.(\knrm) \DEF \Set{(\sigma, \knil, \sigma) | \weval(\ENV, \sigma, e) = \ksome(n)} 
    \\
    &\vvec{\ENV}{n}{e}.(\kerr) \DEF \Set{(\sigma, \knil) | \weval(\ENV, \sigma, e) = \knone}
    \\
    &\ssec{\Sswitch}.(\knrm) \DEF  \bcup{n}{
        \Bigl(\vvec{\ENV}{n}{e}.(\knrm) \circ
            \bigl(\wwec{\lslist{k}}(n).(\knrm) \cup
                \wwec{\lslist{k}}(n).(\kbrk)\bigr)
        \Bigr)} \nonumber
    \\
    &\ssec{\Sswitch}.(\kerr) \DEF
    \bcup{n}{
      \Bigl(\vvec{\ENV}{n}{e}.(\knrm)) \circ
        \wwec{\lslist{k}}(n).(\kerr)\Bigr)}
        \cup \vvec{\ENV}{n}{e}.(\kerr)
  \\  
  &\ssec{\Sswitch}.(\kbrk) \DEF \oset 
  \\
  &\begin{aligned}
    \ssec{\Sswitch}.(\kX) \DEF\;
    &\bcup{n}{
    \Bigl(\vvec{\ENV}{n}{e}.(\knrm)) \circ
    \wwec{\lslist{k}}(n).(\kX)\Bigr)}, \\
    & \kforx \in \Set{\krtn, \kcll, \kgto, \kfin, \kinf}
  \end{aligned}
\end{align*}

\paragbf{Unstructured branching} The denotation of a \textbf{goto} statement and a labeled statement satisfy the following equations, where Equation~(\ref{eqn:goto_nrm}) and~(\ref{eqn:goto_other}) reflect the essence of the name \emph{broken} for \(\ssec{s}\).
\begin{align}
  & \sosec{\Sgoto}{gto}  \DEF \Set{(\sigma, \knil, (j, \sigma)) | \sigma \in \kstate} \label{eqn:goto_nrm}
  \\
  & \ssec{\Sgoto}.(\kX)  \DEF \oset, \kforx \in \Set{\knrm, \kbrk, \kctn, \krtn, \kerr, \kcll, \kfin, \kinf } \label{eqn:goto_other}
  \\
  & \ssec{\Slabel}.(\kX) \DEF \ssec{s}.(\kX), \kforx \in \Full
\end{align}

\subsubsection{Labeled Broken Semantics of Clight Statements}
We use $\llec{s}(j)$ to  denote the semantics of executing the statement \(s\) from the given label \(j\). If there is no such a labeled statement inside \(s\), then $\llec{s}(j)$ is assigned the empty denotation (i.e., each field of which is the empty set). 
\begin{align*}
& \llec{\Sseq}(j) \DEF  
\big(\llec{s_1}(j)\fatsemi \ssec{s_2}\big)
\Cup \llec{s_2}(j)
\\
& \llec{\Sif}(j) \DEF \llec{s_1}(j) \Cup \llec{s_2}(j)
\\
& N_1^j = (\llec{s_1}(j).(\knrm) \cup \llec{s_1}(j).(\kctn))
  \qquad N_{12}^j = N_1^j \circ \ssec{s_2}.(\knrm)
\\
&\begin{aligned}
    \llec{\Sloop}(j).(\knrm) \DEF\;
    &\llec{s_1}(j).(\kbrk) \cup  N_1^j \circ \ssec{s_2}.(\kbrk) \cup  N_{12}^j \circ \ssec{\Sloop}.(\knrm) \ \cup
    \\
    & \llec{s_2}(j).(\kbrk) \cup
   \llec{s_2}(j).(\knrm) \circ \ssec{\Sloop}.(\knrm)
\end{aligned}
\\
&\begin{aligned}
    \llec{\Sloop}(j).(\kerr) \DEF\; 
   &  \llec{s_1}(j).(\kerr) \cup N_1^j \circ  \bigl(\nabla \ssec{s_2}.(\kctn) \cup (\ssec{s_2}\fatsemi \ssec{\Sloop}).(\kerr) \bigr)
   \ \cup 
  \\ 
  &(\nabla \llec{s_2}(j).(\kctn))
   \cup \bigl(\llec{s_2}(j) \fatsemi \ssec{\Sloop}\bigr).(\kerr)
\end{aligned}
\\
&\begin{aligned}
    \llec{\Sloop}(j).(\kbrk) = \llec{\Sloop}(j).(\kctn)  \DEF \oset 
\end{aligned}
\\
&\begin{aligned}
  \llec{\Sloop}(j).(\kX) \DEF\;
  &\llec{s_1}(j).(\kX) \cup N_1^j \circ \bigl((\ssec{s_2} \fatsemi \ssec{\Sloop}).(\kX)\bigr)  \ \cup
  \\ 
  & \bigl(\llec{s_2}(j) \fatsemi \ssec{\Sloop}\bigr).(\kX),\;\!\!\,
  \kforx \in \Set{\krtn, \kcll, \kgto, \kfin, \kinf}
\end{aligned}
\\
&\begin{aligned} %
    &\llec{\kLSnil}(j).(\kX) \DEF \oset, \kforx \in \Full
    \\
    &\llec{(n_1, s_1);\lslist{2}}(j) \DEF
     \bigl(\llec{s_1}(j)\fatsemi\ssec{\lslist{2}}\bigr) \Cup \llec{\lslist{2}}(j)
\end{aligned}
\\
& \llec{\Sswitch}(j).(\knrm) \DEF \llec{\lslist{k}}(j).(\knrm) \cup \llec{\lslist{k}}(j).(\kbrk) 
\\
& \llec{\Sswitch}(j).(\kbrk)\DEF \oset
\\
& \llec{\Sswitch}(j).(\kX) \DEF \llec{\lslist{k}}(j).(\kX), \kX \in \{\kctn, \krtn, \kerr, \kcll, \kgto, \kfin, \kinf \}
\\
&\llec{\Slabel}(j') \DEF \ssec{s} \qquad\,\, \text{ if } j' = j 
\\
&\llec{\Slabel}(j') \DEF \llec{s}(j') \ \ \text{ if } j' \neq j 
\\
& \begin{aligned}
   \textbf{Otherwise }
   &(\text{i.e., } s = \Sskip \mid \Scall \mid 
    \Sset \mid \Sbreak \mid \Scontinue \mid \Sreturn \mid \Sgoto)
   \\
   & \llec{s}(j).(\kX) = \oset, \kforx \in \Full
\end{aligned}
\end{align*}

\subsubsection{Intact semantics of Clight statements}We define the intact semantics of Clight statement \(\hhec{s}\) by composing the broken semantics \(\ssec{s}\) and labeled broken semantics \(\llec{s}\) systematically. To achieve this, we use \(\unllec{s}\) to denote the uncurried form of \(\llec{s}(j)\), such that, for instance, 
\begin{align*}
    & \unllec{s}.(\kgto) \subseteq (\klabel \times \kstate) \times \ktrace \times (\klabel \times \kstate) 
    \\
    &\llec{s}(j).(\kgto) \subseteq \kstate \times \ktrace \times (\klabel \times \kstate) 
    \\
    &\unllec{s}.(\kgto) \DEF \Set{\bigl((j, \sigma), \tau, (j', \sigma')\bigr) \mid \bigl(\sigma, \tau, (j', \sigma')\bigr) \in \llec{s}(j).(\kgto))} 
\end{align*}
The intact semantics of Clight statements are then defined by the following equations, where \(\labelsof{s}\) denotes the set of statement labels that have appeared in $s$ in the form of ``\(j: ...\)''.
\begin{align*}
    & \hhec{s}.(\kX) \DEF \ssec{s}.(\kX) \cup \big(\sosec{s}{gto} \circ
         \unllec{s}.{(\kgto)}^* \circ \unllec{s}.{(\kX)} \big),\;
         \kX \!\in\! \{\knrm, \kbrk, \kctn, \krtn, \kcll\}
    \\
    & \hohec{s}{err} \DEF \sosec{s}{err} \cup \Big(\sosec{s}{gto} \circ
         \unllec{s}.{(\kgto)}^* \circ \bigl(\unllec{s}.{(\kwd{err})} \cup \exlabel{s}\bigr) \Big)
    \\
    & \qquad \qquad \quad\;\; \text{ where }\exlabel{s} = \Set{((j, \sigma), \knil) | j \notin \labelsof{s} \land \sigma \in \kstate}
    \\
    & \begin{aligned}
        &\hhec{s}.(\kfin) \DEF
       \\
       &\quad\ \ \ssec{s}{.(\kfin)} \cup \sosec{s}{gto} \circ \unllec{s}.{(\kgto)}^* \circ \big( 
            \unllec{s}.(\kfin) \cup (\triangle \unllec{s}.(\kgto))^\infty \big)
    \end{aligned}
    \\
    &\begin{aligned}
        &\hhec{s}.(\kinf) \DEF \textbf{let } 
        G = \unllec{s}.{(\kgto)}
        \textbf{ in}
        \\
        &\quad\ \ \ssec{s}.(\kinf)  \cup \sosec{s}{gto} \circ \Big(\bigl((\triangle G)^* \circ \blacktriangle G \bigr)^*\circ (\triangle G)^* \circ \unllec{s}.{(\kinf)} \cup \bigl((\triangle G)^* \circ \blacktriangle G\bigr)^\infty\Big)
    \end{aligned}
\end{align*}

\subsection{Denotational Semantics of Clight Functions}
A Clight function \(F\) is defined by Equation~(\ref{eqn:clit_funct_syntax}).
We use \(\ssgc{(\ivar_F, F)}\) (of \FDenote) to denote the semantics of the Clight function \(F\) with the identifier \(\ivar_F\). This function-level semantics is parameterized by a global environment \(\gvar\) and a semantic oracle \(\chi\) (\(\subseteq \Tnee\)) for specifying the terminating behavior of the functions invoked by \(F\), and satisfies the following equations, where $\fentry$ intuitively represents the setup of a function’s internal initial state and local environment at entry, 
while $\varsfree$ represents the release of its local resources at exit.
\begin{align*}
    &\sosgc{(\ivar_F, F)}{dom} \triangleq \Set{\ivar_F}, \text{ i.e., the domain of } F \text{ is a singleton set with the unique element } \ivar_F
    \\ 
    &\begin{aligned} %
       \sosgc{(\ivar_F, F)}{nrm} \triangleq 
       \Set{
       \!((\ivar_F, \vec{v}, m), \tau, (m', v')) \!|\! 
       \begin{aligned}
           & \exists \evar\; \sigma, \fentry(\gvar, m, F, \vec{v}) = \ksome(\evar, \sigma) \land
           \\
           & \quad\;\;\,  \exists \sigma', \ENV = (\gvar, \evar) \land 
             \bolds{\big(}(\sigma, \tau, \sigma', v') \in \hohec{s_F}{rtn} \;\! \vee
           \\
           & \qquad \qquad \; (\sigma, \tau, \sigma') \in \hohec{s_F}{nrm} \land v' = \kundef \bolds{\big)} \;\!\land
           \\ 
           & \quad\;\,\, \qquad \varsfree(\gvar, \evar, \sigma') = \ksome(m') 
       \end{aligned}
       \!}
    \end{aligned}
    \\ 
    &\begin{aligned} %
       \sosgc{(\ivar_F, F)}{err} \triangleq 
       \Set{ 
       ((\ivar_F, \vec{v}, m), \tau) | 
       \begin{aligned}
           &  (1)\,\fentry(\gvar, m, F, \vec{v})  = \knone \land \tau = \knil; \textbf{ or}
           \\
           & (2)\,\exists \evar\, \sigma, \fentry(\gvar, m, F, \vec{v}) = \ksome(\evar, \sigma) \;\, \land
           \\
           & \qquad\quad\;\; \ENV = (\gvar, \evar) \land (\sigma, \tau) \in \hohec{s_F}{err}; \textbf{ or}
           \\ 
           & (3)\,\exists \evar\, \sigma, \fentry(\gvar, m, F, \vec{v}) = \ksome(\evar, \sigma) \;\, \land
           \\
           & \qquad\quad\;\;  \exists \sigma', \ENV = (\gvar, \evar) \land 
             \bolds{\big(}(\sigma, \tau, \sigma') \in \hohec{s_F}{brk} \ \vee
           \\
           & \qquad \qquad \qquad \quad\,  (\sigma, \tau, \sigma') \in \hohec{s_F}{ctn}  \bolds{\big)}; \textbf{ or}
           \\ 
           & (4)\,\exists \evar\, \sigma, \fentry(\gvar, m, F, \vec{v}) = \ksome(\evar, \sigma) \;\, \land
           \\
           & \qquad\quad\;\;  \exists \sigma'\, v', \ENV = (\gvar, \evar) \land 
             \bolds{\big(}(\sigma, \tau, \sigma', v') \in \hohec{s_F}{rtn} \ \vee
           \\
           & \qquad \qquad \qquad \,  (\sigma, \tau, \sigma') \in \hohec{s_F}{nrm} \land v' = \kundef \bolds{\big)} \ \land
           \\ 
           & \qquad \qquad \qquad \; \varsfree(\gvar, \evar, \sigma') = \knone 
       \end{aligned}
       }
    \end{aligned}
    \\ 
    &\begin{aligned} %
       \sosgc{(\ivar_F, F)}{cll} \triangleq 
       \Set{\big((\ivar_F, \vec{v}, m), \tau, q\big) | 
       \begin{aligned}
           &  \exists \evar\, \sigma, \fentry(\gvar, m, F, \vec{v}) = \ksome(\evar, \sigma) \ \land
           \\
           & \quad\quad \; \ENV = (\gvar, \evar) \land (\sigma, \tau, q) \in \hohec{s_F}{cll}  
       \end{aligned}
       }
    \end{aligned}
    \\ 
    &\begin{aligned} %
       \sosgc{(\ivar_F, F)}{\kX} \triangleq 
       \Set{\:\,\big((\ivar_F, \vec{v}, m), \tau\big)\;\, | 
       \begin{aligned}
           &  \exists \evar\, \sigma, \fentry(\gvar, m, F, \vec{v}) = \ksome(\evar, \sigma) \ \land
           \\
           & \qquad\; \ENV = (\gvar, \evar) \land (\sigma, \tau) \in \hohec{s_F}{\kX}  
       \end{aligned}
       }
    \end{aligned}
\end{align*}
 where \(\kX \in \Set{\kfin, \kinf}\). It is worth noting that the aborting behavior \(\ssgc{(\ivar_F, F)}.(\kerr)\) accounts for four possible cases:
(1) abnormal initialization at function entry;
(2) abnormal execution within the function body;
(3) premature exit of the function body via \textbf{break} or \textbf{continue};
(4) abnormal deallocation of local variables at function exit.
Besides, aborting and diverging behaviors of a function do not include those of its callees. Instead, we record the function’s call behavior explicitly, so that such cases can be handled later when defining the semantics of open modules.

\subsection{Semantics of Open Modules}
\label{subsec:module_sem}
 \paragbf{Definition of open module} An open module is modeled as a collection of Clight functions, i.e.,  \(M \DEF (id_1, F_1); ...; (id_n, F_n)\). Each \(F_i\) may call functions internal or external to module \(M\) (i.e., \emph{open}).

\paragbf{Semantic notations}  We use \(\ssgc{M}\) (of \FDenote) to denote the semantics of open Clight modules, where the parameter \(\gvar\) denotes a global environment, and \(\chi\) \((\subseteq \Tnee)\) is a semantic oracle for specifying the terminating behavior of functions external to module \(M\).
  
  Before showing the semantic equations, we overload the symbol ``\(\Cup\)'' to denote the union of function-level denotations, and define \(\ppgc{M}\) to represent the combined denotation of all functions defined in \(M\). Formally, given function-level denotations \(D_1\) and \(D_2\) of \FDenote,  we have:
\begin{align}
    & (D_1 \Cup D_2).(\kX) \DEF D_1.(\kX) \cup D_2.(\kX), \kforx \in \Set{\kdom, \knrm, \kerr, \kcll, \kfin, \kinf}
    \\
    &\ppgc{M} \triangleq \ssgc{(id_1, F_1)} \Cup \cdots \Cup\ssgc{(id_n, F_n)}\text{, where } M = (id_1, F_1);...;(id_n, F_n)
\end{align}
\paragbf{Semantic equations} The semantics of open Clight modules then satisfies the following equations:
\begin{align*}
  & \sosgc{M}{dom} \triangleq \ppgc{M}.(\kdom)
  \\
  & \sosgc{M}{nrm} \triangleq 
    \mu  \chi_0. \popg{\chi_0 \cup \chi}{M}{nrm}
  \\
  &\sosgc{M}{err} \triangleq
     \popg{\hatchi}{M}{cll}^* \circ \popg{\hatchi}{M}{err}    
\\
&\kecall(X) \triangleq \{(id, \vec{v}, m) \mid id \not\in {X} \}   \text{ for } X \subseteq \kid
  \\
  &\sosgc{M}{cll} \triangleq
     \popg{\hatchi}{M}{cll}^+ \circ \ktest \bigl(\kecall(\popgc{M}{dom})\bigr)
  \\
  &\sosgc{M}{\ttfin} \triangleq
      \popg{\hatchi}{M}{cll}^* \circ
	  \popg{\hatchi}{M}{\ttfin} \cup
	  (\triangle\popg{\hatchi}{M}{cll})^\infty
  \\
  &\begin{aligned}
      \sosgc{M}{\ttinf} \triangleq \
      &\bigl((\triangle\popg{\hatchi}{M}{cll})^* \circ \blacktriangle\popg{\hatchi}{M}{cll}\bigr)^* \circ
      (\triangle\popg{\hatchi}{M}{cll})^* \circ \popg{\hatchi}{M}{\ttinf}
       \ \cup \\
       &\bigl((\triangle\popg{\hatchi}{M}{cll})^* \circ \blacktriangle\popg{\hatchi}{M}{cll}\bigr)^\infty,
       \text{ where } \hatchi = \sosgc{M}{nrm} \cup \chi
  \end{aligned}
\end{align*}
 Here, the set \(\kecall(X)\) denote the set of external queries w.r.t. the given identifier set \(X\), so as to capture calls to functions outside the module. Besides, once the normal semantics \(\ssgc{M}.(\knrm)\) is defined, we extend the oracle \(\chi\) (specifying the terminating behavior of external functions) to \(\hatchi = \sosgc{M}{nrm} \cup \chi\) so as to cover the terminating behavior of both internal and external functions.

\subsection{Semantic Linking}
\label{subsec:sem_link}
We use \((\semlink{M_1}{M_2})_{\chi}^\gvar\) to denote the semantic linking of two open Clight modules, where \(\gvar\) is the same global environment, and the semantic oracle \(\chi\) \((\subseteq \Tnee)\) specifies the terminating behavior of function calls external to modules \(M_1\) and \(M_2\). Intuitively, semantic linking merges the domains and behaviors of the two modules while continuing to rely on the oracle $\chi$ for any calls external to both. The denotation of semantic linking satisfies the following equations:
\begin{align*}
   &\kokgc{\semlink{M_1}{M_2}}{dom} \DEF \sosgc{M_1}{dom} \cup \sosgc{M_2}{dom} 
   \\
    &\begin{aligned}
      \kokgc{\semlink{M_1}{M_2}}{nrm} \triangleq
        \mu \chi_0. \bigl(\sosg{\chi_0 \cup \chi}{M_1}{nrm} \cup \sosg{\chi_0 \cup \chi}{M_2}{nrm}\bigr)
    \end{aligned}
    \\ %
    &\begin{aligned}
      \kokgc{\semlink{M_1}{M_2}}{err} \triangleq
      \kokg{\dotchi}{\semcup{M_1}{M_2}}{cll}^* \circ
      \kokg{\dotchi}{\semcup{M_1}{M_2}}{err}
    \end{aligned}  
    \\ %
    &\begin{aligned}
      \kokgc{\semlink{M_1}{M_2}}{cll} \triangleq
        \kokg{\dotchi}{\semcup{M_1}{M_2}}{cll}^+ \circ \ktest\bigl(\kecall\bigl(\sosgc{M_1}{dom} \cup \sosgc{M_2}{dom}\bigr)\bigr)
    \end{aligned}
    \\
    &\begin{aligned}
    \kokgc{\semlink{M_1}{M_2}}{\ttfin}
    \DEF 
     \kokg{\dotchi}{\semcup{M_1}{M_2}}{cll}^* \circ \kokg{\dotchi}{\semcup{M_1}{M_2}}{\ttfin}
    \cup   \bigl( \triangle  C \bigr)^\infty 
    \end{aligned}
    \\
    &\begin{aligned}
    \kokgc{\semlink{M_1}{M_2}}{\ttinf}
    \DEF\; &\big((\triangle C)^* \circ \blacktriangle C\big)^* \circ
      (\triangle C)^* \circ \kokg{\dotchi}{\semcup{M_1}{M_2}}{\ttinf}
       \cup \big((\triangle C)^* \circ \blacktriangle C\big)^\infty
    \\
    & \text{where } C = \kokg{\dotchi}{\semcup{M_1}{M_2}}{cll}  \text{, and }
    \dotchi = \kokgc{\semlink{M_1}{M_2}}{nrm} \cup \chi
    \end{aligned}
\end{align*}

\begin{theorem} [Semantic adequacy]
\label{app:thm_ss_equiv}
    For any Clight modules $M_1$ and $M_2$, 
    $\semlink{M_1}{M_2} = \bracket{M_1 + M_2}$.
\end{theorem}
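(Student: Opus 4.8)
The plan is to prove the equality field by field over the denotation record $\FDenote$, reducing the only substantial field ($\knrm$) to Theorem~\ref{thm:ss_equiv} and discharging the others by unfolding. The recurring observation is that the syntactic linking $M_1 + M_2$ merely concatenates the two function lists, so under any oracle the combined function-level denotation of $M_1 + M_2$ is the $\Cup$-union of the combined function-level denotations of $M_1$ and of $M_2$ --- the function-level semantics depends on the global environment and the oracle only, never on the enclosing module. The $\kdom$ field is then immediate, since both $\bracket{M_1+M_2}.(\kdom)$ and $\kokgc{\semlink{M_1}{M_2}}{dom}$ are, by definition, the union of the two function-name sets.

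The heart of the proof is the $\knrm$ field. Here the Clight definitions are in exactly the shape of Formulae~(\ref{eq:module_terminate}) and~(\ref{eq:module_semantic_linking}): $\sosgc{M}{nrm}$ is the Kleene least fixed point (Theorem~\ref{thm:kleene_fixpoint}) of the functional that unions the function-level $\knrm$-behaviours of $M$ under the enlarged oracle, and $\kokgc{\semlink{M_1}{M_2}}{nrm} = \mu\chi_0.\bigl(\sosg{\chi_0\cup\chi}{M_1}{nrm} \cup \sosg{\chi_0\cup\chi}{M_2}{nrm}\bigr)$, where the two summands are themselves least fixed points. Thus the $\knrm$ case is a direct instance of Theorem~\ref{thm:ss_equiv}, which --- as noted there --- is not specific to PCALL but holds for any semantics of this form. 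To invoke it I must first check that the function-level $\knrm$-functional is monotone and Scott-continuous in the oracle: this holds because the oracle enters a function's denotation only through a membership test in the semantics of $\Scall$, after which the denotation is assembled using only $\cup$, $\circ$, and $(\cdot)^*$ (the loop and intact-semantics equations), all Scott-continuous. With continuity in hand, the nested least fixed point defining $\kokgc{\semlink{M_1}{M_2}}{nrm}$ collapses to the flat one defining $\bracket{M_1+M_2}.(\knrm)$ by Beki\'c's theorem (Theorem~\ref{lemma:lemma_co1}), exactly as in the proof of Theorem~\ref{thm:ss_equiv}; what makes the inner and outer recursion variables interchangeable is that $\sosg{\chi_0\cup\chi}{M_i}{nrm}$ depends on $\chi_0$ and $\chi$ only through their union.

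The remaining fields $\kerr$, $\kcll$, $\kfin$, $\kinf$ follow by unfolding once $\knrm$ is settled. The equality $\kokgc{\semlink{M_1}{M_2}}{nrm} = \bracket{M_1+M_2}.(\knrm)$ forces the two auxiliary oracles to coincide, $\dotchi = \hatchi$; consequently the combined function-level denotations occurring in the linking and module equations agree on their $\kcll$, $\kerr$, $\kfin$, and $\kinf$ components (again because $M_1 + M_2$ is the concatenation of the function lists). The module-level and linking-level equations for these four fields are then syntactically the same expressions over the same ingredients --- $(\cdot)^* \circ (\cdot)$ for $\kerr$, $(\cdot)^+ \circ \ktest(\kecall(\cdots))$ for $\kcll$, and the $\triangle$/$\blacktriangle$-guarded combinations of $(\cdot)^*$ and $(\cdot)^\infty$ for $\kfin$ and $\kinf$ --- so they denote the same sets. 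Assembling all five fields gives $\semlink{M_1}{M_2} = \bracket{M_1+M_2}$.

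The main obstacle is the $\knrm$ case, and within it the Scott-continuity obligation on the function-level semantic functional, which is needed both for the Kleene fixed points to be well defined and for Beki\'c's theorem to apply; once that is discharged, the collapse of the nested fixed point into the flat one is a verbatim replay of Theorem~\ref{thm:ss_equiv}. A secondary point to watch is the behaviour of $+$ when $M_1$ and $M_2$ declare a common function name, but since both syntactic and semantic linking merge colliding definitions by union of denotations, this introduces no discrepancy.
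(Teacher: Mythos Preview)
Your treatment of the $\knrm$ field is correct and matches the paper's own reduction to Theorem~\ref{thm:ss_equiv} via Beki\'c's theorem; that is indeed the substantive step.

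However, there is a gap in your handling of the remaining fields $\kerr$, $\kcll$, $\kfin$, $\kinf$. You write that once $\dotchi = \hatchi$ is established, ``the combined function-level denotations occurring in the linking and module equations agree \ldots\ [so] the module-level and linking-level equations for these four fields are then syntactically the same expressions over the same ingredients.'' This is a misreading of the linking definitions in \S\ref{subsec:sem_link}. The semantic-linking equations are written in terms of the \emph{module-level} denotations $\bracket{M_i}_{\dotchi}$, not the function-level $\parents{M_i}_{\dotchi}$. Concretely, for $\kerr$ the linked side is
\[
\bigl(\bracket{M_1}_{\dotchi}.(\kcll)\cup\bracket{M_2}_{\dotchi}.(\kcll)\bigr)^*\circ\bigl(\bracket{M_1}_{\dotchi}.(\kerr)\cup\bracket{M_2}_{\dotchi}.(\kerr)\bigr),
\]
whereas the syntactic side is $\parents{M_1{+}M_2}_{\hatchi}.(\kcll)^*\circ\parents{M_1{+}M_2}_{\hatchi}.(\kerr)$. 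Since each $\bracket{M_i}.(\kcll)$ is itself $\parents{M_i}.(\kcll)^+$ post-filtered by $\ktest(\kecall(\cdot))$, and each $\bracket{M_i}.(\kerr)$ is $\parents{M_i}.(\kcll)^*\circ\parents{M_i}.(\kerr)$, the two sides are \emph{not} syntactically identical: the linked side is a nested iteration (iterating module-level calls, which are themselves transitive closures of function-level calls) while the syntactic side is a flat iteration over function-level calls of the combined module.

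What is actually needed for these fields is a KAT-style flattening argument: essentially $(A^+\cup B^+)^* = (A\cup B)^*$ together with the observation that the intermediate $\ktest(\kecall(M_i.\kdom))$ filters are absorbable because any function-level call chain within $M_1{+}M_2$ decomposes into maximal same-module segments whose boundaries are precisely the cross-module calls. The analogous collapse for the $\triangle/\blacktriangle$-guarded $\kfin$ and $\kinf$ equations requires the same idea plus compatibility with the silent/nonsilent splitting. These are provable in the relational KAT (and indeed the paper's \Coq development for Thm.~\ref{thm:ss_equiv} is several hundred lines, not a one-liner), but they are not ``unfolding''; you should identify this flattening as a second genuine proof obligation alongside the Beki\'c step.
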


\subsection{Csharpminor and Cminor}
\label{subsec:Csminor}

\paragbf{Syntax of Csharpminor}  Csharpminor is an \emph{untyped} low-level imperative language featured with infinite loops, control flow blocks, and premature exits from blocks, which are defined as follows.
\begin{align*}
& \text{(Statements)} && u \DEF\
\begin{aligned}[t]
 &\textbf{skip} \mid
  u_1;u_2 \mid
   \Cif{u} \mid
   \Ccall \mid 
   \Cgoto \mid 
   \Clabel{u} \mid
   \Sset \mid
   \\ 
  &\Cloop{u} \mid 
  \textbf{block}\{u\} \mid
  \textbf{exit}(n) \mid
  \textbf{return} \ ?e \mid
  \textbf{switch}\ b \ e\ \Vec{(?n_k, u_k)} \mid ...  
\end{aligned}
\\
&\text{(Functions)} && F \DEF 
\begin{aligned}[t]
(\Vec{x_k}, \Vec{y_k}, \Vec{z_k}, u_F, sig)  
\end{aligned}
\end{align*}
A Csharpminor function \(F\) is composed of a list of parameter names \(\Vec{x_k}\), a list of temporary variables \(\Vec{x_k}\), a list of local variables \(\Vec{z_k}\) (whose addresses are taken in \(F\)), a statement \(u_F\) representing the function body, and a signature $sig$ specifying the types of the arguments and return value of \(F\).

\paragbf{Syntax of Cminor} Cminor functions is defined by the parameter list \(\Vec{x_k}\), the temporary variables \(\Vec{x_k}\), the size of memory block for stack space \(stacksize\), the function body \(t_F\), and the signature of \(F\).  
 In Cminor, Csharpminor local variables become sub-blocks of the Cminor stack memory block. 
\begin{align*}
& \text{(Statements)} && t \triangleq\
\begin{aligned}[t]
 &\textbf{skip} \mid
  t_1;t_2 \mid
  \Cif{t} \mid
  \Ccall  \mid
  \Cgoto  \mid
  \Clabel{t}  \mid
  \Sset \mid
  \\ 
  &\textbf{loop } t \mid 
  \textbf{block}\{t\} \mid
  \textbf{exit}(n) \mid
  \textbf{return} \ ?e \mid
  \textbf{switch}\ b \ e\ table \ \mdefault \mid ...  
\end{aligned}
\\
& \text{(Functions)}
&&\begin{aligned}
  F \DEF (\Vec{x_k}, \Vec{y_k}, stacksize, t_F, sig)
\end{aligned}
\end{align*}

\paragbf{Semantic domain}  Csharpminor and Cminor enjoy the same form of signatures for statement denotations, since they adopt the same block-exit mechanism to model structured control flow.
\begin{lstlisting}[language=Coq]
    Record Cshm/Cmin.Denote: Type := {
      nrm: state -> trace -> state -> Prop;              (* nrm $\subseteq$ state $\times$ trace $\times$ state ----------$\,$*)
      blk: nat -> state -> trace -> state -> Prop;        (* blk $\subseteq$ nat $\times$ state $\times$ trace $\times$ state ----*)
      rtn: state -> trace -> state -> val -> Prop;        (* rtn $\subseteq$ state $\times$ trace $\times$ state $\times$ val ----*)
      err: state -> trace -> Prop;                      (* err $\subseteq$ state $\times$ trace -------------------*)
      cll: state -> trace -> cquery -> Prop;             (* cll $\subseteq$ state $\times$ trace $\times$ cquery ---------$\,$*)
      gto: state -> trace -> (label $\times$ state) -> Prop;     (* gto $\subseteq$ state $\times$ trace $\times$ (label $\times$ state) *)
      fin_dvg: state -> trace -> Prop;                  (* fin_dvg $\subseteq$ state $\times$ trace ---------------*)
      inf_dvg: state -> itrace -> Prop                  (* inf_dvg $\subseteq$ state $\times$ itrace --------------*)
    }.
\end{lstlisting}

The main difference between \kwd{Cshm}/\kwd{Cmin.Denote} and \ClitDenote lies in how they handle structured control flow. While \ClitDenote uses $\kbrk$ and $\kctn$, \kwd{Cshm}/\kwd{Cmin.Denote} 
instead employs $\kblk$; the meanings of the other sets remain the same as in Clight. 
The semantic infrastructure (i.e., states, queries, and replies) of Csharpminor and Cminor builds the same with that of Clight, as summarized in Table~\ref{tab:clight-sem-component}.  

The Cminorgen phase translates Csharpminor to Cminor, where CompCert coalesces the separate memory blocks allocated to local variables into a single block that represents the activation record as a whole.
We write $\ssem{\ENV}{\chi}{u}$ for the semantics of Csharpminor statements, where 
$\ENV = (\gvar, \evar)$; and write $\ssem{\ENV}{\chi}{t}$ for that of Cminor statements, where $\ENV = (\gvar, sp)$ with $sp$ denoting the stack pointer, as shown in Table \ref{tab:Cshm_cmin_notation}. Other notations (i.e., $\chi$, $\gvar$, and $\evar$) carry the same meaning as in Clight.  

The function-level denotational semantics of Csharpminor and Cminor, including those for functions, open modules, and semantic linking, are defined in the same way as in Clight.

\begin{table}[t]
\centering
\caption{Semantic notation for Csharpminor and Cminor.}
\label{tab:Cshm_cmin_notation}
\renewcommand{\arraystretch}{1.1}
\begin{tabular}{|l|c|l|l|l|}
\hline
Language     & Symbol       & Denotation    & Environment                        & Program state \\
\hline
\rule{0pt}{2.7ex} 
Csharpminor  & $\ssec{u}$        & $\CshmDenote$ & $\ENV = (\gvar, \evar)$      & $\sigma = (te, m)$ \\
\rule{0pt}{2.7ex} 
Cminor       & $\ssec{t}$        & $\CminDenote$ & $\ENV = (\gvar, sp)$           & $\sigma = (te, m)$ \\
\hline
\end{tabular}
\end{table}

\section{Behavior Refinement}
\label{sec:refinement}
  Compilation correctness is ubiquitously formulated as a behavior refinement relation, which ensures that every behavior exhibited by the target program is one possible behavior of the source. As previously discussed, we define the semantics of program statements, procedures, and open modules through a series of behavior sets, using extended KATs and suitable fixed-point theorems. The highlight of our theoretical contribution is proposing a novel algebraic structure to unify various behavior refinements between different behavior sets.
  In this section, we begin with typical examples of behavior refinement in \S\ref{subsec:ref_examples}, and then propose the novel refinement algebra in \S\ref{subsec:ref_algebra}$\sim$\S\ref{subsec:ra_instances}. We define Kripke relations for Cminorgen in \S\ref{subsec:krel-cminorgen}, and finally explain our design choices in \S\ref{subsec:design_choice}

\subsection{Examples of Behavior Refinement}
\label{subsec:ref_examples}

\begin{example} [Behavior refinement for a simple transformation]
\label{ex:bref_while}
Consider a simple transformation that converts a WHILE program $c$ into another WHILE program $\mathcal{T}(c)$ with sequenced \textbf{skip} statements removed. Because how states are represented before and after the transformation is unchanged, we can use the set inclusion relation to define the behavior refinement between the programs before and after transformation, i.e.,
\begin{align}
    \sem{\mathcal{T}(c)} \sqsubseteq \sem{c} \DEF 
    \sem{\mathcal{T}(c)}.(\knrm) \subseteq \sem{c}.(\knrm) \text{ and } \sem{\mathcal{T}(c)}.(\kdvg) \subseteq \sem{c}.(\kdvg)
\end{align}
For the transformation of type-safe and nondeterministic WHILE programs, the behavior refinement implies that compiler correctness allows the compiled program to exhibit fewer normal termination possibilities and fewer divergence possibilities. In other words, if the target program terminates normally, the source program must also be able to terminate normally; if the target program diverges, the source program must have a behavior that can diverge. The transformation correctness is then compositional in terms of sequential statements, in the sense that, for any $c_1$ and $c_2$,
\begin{align}
    \text{if } \sem{\mathcal{T}(c_1)} \sqsubseteq \sem{c_1} \text{ and } \sem{\mathcal{T}(c_2)} \sqsubseteq \sem{c_2}\text{, then } \sem{\mathcal{T}(c_1); \mathcal{T}(c_2)} \sqsubseteq \sem{c_1;c_2}.
    \label{bref_seqmono}
\end{align}
Its proof follows from the monotonicity of the composition and set union w.r.t. the set inclusion: 
    \begin{align*}
        \fof{}{\mathcal{T}(c_1);\mathcal{T}(c_2)}{nrm} &= \fof{}{\mathcal{T}(c_1)}{nrm} \circ \fof{}{\mathcal{T}(c_2)}{nrm}
            \subseteq \fof{}{c_1}{nrm} \circ \fof{}{c_2}{nrm} = \fof{}{c_1;c_2}{nrm}
        \\
        \fof{}{\mathcal{T}(c_1);\mathcal{T}(c_2)}{dvg} &= 
          \fof{}{\mathcal{T}(c_1)}{dvg} \cup
          \big(\fof{}{\mathcal{T}(c_1)}{nrm} \circ 
          \fof{}{\mathcal{T}(c_2)}{dvg}\big)
        \\
        &\subseteq\;\!\!
          \fof{}{c_1}{dvg} \cup
          \big(\fof{}{c_1}{nrm} \circ \fof{}{c_2}{dvg}\big) = \fof{}{c_1;c_2}{dvg}
        \tag*{\qed}
    \end{align*} 
The transformation correctness of \textbf{while} loops is compositional as well, in the sense that, for any $c$,
\begin{align}
    \text{if } \sem{\compile(c)} \refines
    \sem{c},
    \text{ then }
    \sem{\Cwhile{\compile(c)}} \refines \sem{\Cwhile{c}} 
\end{align}
Its proof similarly follows from the monotonicity of operators \(\circ\), \(\cup\), \(^*\), and \(^\infty\) w.r.t. the set inclusion.
    \begin{align*}
       \fof{}{\Cwhile{\compile(c)}}{nrm}
       & = \big(\ktest(\sem{b}.\ttt{(tts)}) \circ
         \sem{\compile(c)}.(\knrm)\big)^* \circ \ktest(\sem{b}.\ttt{(ffs)}) 
       \\
       & \subseteq \big(\ktest(\sem{b}.\ttt{(tts)}) \circ
         \sem{c}.(\knrm)\big)^* \circ \ktest(\sem{b}.\ttt{(ffs)})
       \\
       &= \fof{}{\Cwhile{c}}{nrm}
\\
        \fof{}{\Cwhile{\compile(c)}}{dvg}
          & = 
           \bigl(\ktest(\sem{b}.\ttt{(tts)}) \circ \sem{\compile(c)}.(\knrm)\bigr)^*
         \circ
           \bigl(\ktest(\sem{b}.\ttt{(tts)}) \circ \sem{\compile(c)}.(\kdvg)\bigr) 
         \\
         &\quad\; \cup
            \bigl(\ktest(\sem{b}.\ttt{(tts)}) \circ \sem{\compile(c)}.(\knrm)\bigr)^\infty
         \\
          & \subseteq 
           \bigl(\ktest(\sem{b}.\ttt{(tts)}) \circ \sem{c}.(\knrm)\bigr)^*
         \circ
           \bigl(\ktest(\sem{b}.\ttt{(tts)}) \circ \sem{c}.(\kdvg)\bigr) 
         \\
         & \quad\; \cup
            \bigl(\ktest(\sem{b}.\ttt{(tts)}) \circ \sem{c}.(\knrm)\bigr)^\infty 
         \\
         & =  \fof{}{\Cwhile{c}}{dvg}
         \tag*{\qed}
    \end{align*}
\end{example}

\begin{example}
\label{ex:bref_abort}
To take abortion into account, we extend the WHILE language to WHILEE, which is identical to WHILE except that it is not type-safe. In other words, WHILEE programs may abort (e.g., division by zero). We then use the following signature to capture WHILEE's program behavior:
\begin{lstlisting}[language=Coq]
    Record EDenote: Type := {                          
      nrm: state -> state -> Prop;            (* nrm $\subseteq$ state $\times$ state *)
      err: state -> Prop;                    (* err $\subseteq$ state *)
      dvg: state -> Prop                     (* dvg $\subseteq$ state *)
    }.
\end{lstlisting}
To allow for more flexible optimization, modern compilers usually do not guarantee behavioral consistency for programs with undefined behavior. That is, if the source program could abort, the compiled program may either terminate normally, diverge, or abort as well. In such cases, behavior refinement $\sem{\mathcal{T}(c)} \sqsubseteq \sem{c}$ (still consider the simple transformation in Example \ref{ex:bref_while}) can be redefined as follows, where $\fof{}{c}{err}$ ($\subseteq \kstate$) denotes potential aborting behaviors of program $c$.
\begin{align}
  \sem{\compile(c)} \refines \sem{c} \DEF \left\{ \,
  \begin{aligned}
    & \fof{}{\mathcal{T}(c)}{nrm} \subseteq \fof{}{c}{nrm} \cup (\fof{}{c}{err} \times \kstate)
    \text{ and }
    \\
    &\fof{}{\mathcal{T}(c)}{dvg} \subseteq \fof{}{c}{dvg} \cup \fof{}{c}{err} \text{ and }
    \\ 
    &\fof{}{\compile(c)}{err} \subseteq \fof{}{c}{err}
  \end{aligned} \right.
  \label{eq:bref_abort}
\end{align}
The transformation correctness of sequential statements like Formula~(\ref{bref_seqmono}) can be proved as follows.
    \begin{align}
        &\fof{}{\mathcal{T}(c_1);\mathcal{T}(c_2)}{nrm} = \fof{}{\mathcal{T}(c_1)}{nrm} \circ \fof{}{\mathcal{T}(c_2)}{nrm}
        \\
        &\qquad\subseteq
            \big(\fof{}{c_1}{nrm} \cup (\fof{}{c_1}{err} \times \kstate)\big)
             \circ
             \big(\fof{}{c_2}{nrm} \cup (\fof{}{c_2}{err} \times \kstate)\big) \label{proof:line2}
            \\
        &\qquad\subseteq
             \big(\fof{}{c_1}{nrm} \circ \fof{}{c_2}{nrm}\big) \cup
              (\fof{}{c_1}{err} \times \kstate) \cup
              \big(\fof{}{c_1}{nrm} \circ (\fof{}{c_2}{err} \times \kstate)\big)
           \\
        &\qquad\subseteq
             \big(\fof{}{c_1}{nrm} \circ \fof{}{c_2}{nrm}\big) \cup
              \Big(\big(\fof{}{c_1}{err} \cup 
              (\fof{}{c_1}{nrm} \circ \fof{}{c_2}{err})\big) \times \kstate\Big) \label{proof:line4}
        \\
        &\qquad= \fof{}{c_1;c_2}{nrm} \cup (\fof{}{c_1;c_2}{err} \times \kstate) 
    \nonumber
    \\
    &\fof{}{\compile(c_1);\compile(c_2)}{dvg} =
    \fof{}{\compile(c_1)}{dvg} \cup
       \bigl(\fof{}{\compile(c_1)}{nrm} \circ
        \fof{}{\compile(c_2)}{dvg}\bigr)
    \\
    &\qquad\subseteq
      \big(\fof{}{c_1}{dvg} \cup \fof{}{c_1}{err}\big)
         \cup
          \big(\fof{}{c_1}{nrm} \cup (\fof{}{c_1}{err} \times \kstate)\big)
         \circ 
         \big(\fof{}{c_2}{dvg} \cup \fof{}{c_2}{err}\big)
    \\ 
    &\qquad\subseteq 
      \bigl(\fof{}{c_1}{dvg} \cup \fof{}{c_1}{nrm} \circ \fof{}{c_2}{dvg}\bigr)
      \cup 
        \big(\fof{}{c_1}{err} \cup 
            (\fof{}{c_1}{nrm} \circ \fof{}{c_2}{err})\big)
    \\
     &\qquad=
      \fof{}{c_1;c_2}{dvg} \cup \fof{}{c_1;c_2}{err}
      \nonumber
    \\
    &\fof{}{\compile(c_1);\compile(c_2)}{err} =
    \fof{}{\compile(c_1)}{err} \cup
       \bigl(\fof{}{\compile(c_1)}{nrm} \circ
        \fof{}{\compile(c_2)}{err}\bigr)
    \\
    &\qquad\subseteq
    \fof{}{c_1}{err}
      \cup
      \big(\fof{}{c_1}{nrm} \cup (\fof{}{c_1}{err} \times \kstate)\big)
      \circ 
      \fof{}{c_2}{err}
    \\
    &\qquad\subseteq
    \fof{}{c_1}{err} \cup \bigl(\fof{}{c_1}{nrm} \circ \fof{}{c_2}{err} \bigr) = \fof{}{c_1;c_2}{err}
    \tag*{\qed}
\end{align}

\paragraph{Summary of existing proof system.}
In the preceding examples, we showed by hand how relational Kleene algebra can be used to verify the correctness of simple program transformations. This style of reasoning is supported by a well-developed proof system: the equational theory of Kleene algebra with tests (KAT). Beyond such manual reasoning, a rich body of research has developed automated proof techniques for regular expressions, relation algebra, and KAT. Classic approaches build on the notion of derivatives~\cite{DBLP:journals/jacm/Brzozowski64}, which enable the construction of automata directly from algebraic expressions. This idea has been extended to mechanized proofs of regular expression equivalence in Isabelle/HOL \cite{DBLP:journals/jar/KraussN12}, decision procedures for (in-)equivalence of regular expressions in \Coq \cite{DBLP:conf/RelMiCS/MoreiraPS12}, \Coq tools for KAT applied to while programs \cite{DBLP:conf/itp/Pous13}, unified decision procedures for regular expression equivalence \cite{DBLP:conf/itp/NipkowT14}, and symbolic algorithms for language equivalence and KAT \cite{DBLP:conf/popl/Pous15}. Together, these advances demonstrate the maturity of automated algebraic reasoning about relational (in-)equations.

\paragraph{Challenges for realistic compiler verification.}
Despite the expressiveness of relational Kleene algebra and the maturity of its automated reasoning, directly applying these techniques to realistic compiler verification remains challenging. Existing methods focus primarily on equational reasoning over algebraic structures, whereas compiler correctness demands reasoning about simulation relations between source and target programs. In other words, the correctness of a compiler often cannot be expressed simply in terms of set inclusion between source and target denotations. Instead, one must account for changes in state representations across abstraction layers—most notably the transformation of memory models—so that correctness is proved not by inclusion alone but by carefully designed state-matching relations. We illustrate this point with the following examples.

\end{example}

Before showing general definitions, we first consider the compilation correctness of Cshmgen.
\begin{example} [Behavior refinement for Cshmgen of CompCert]
\label{ex:bref_Cshmgen}
Cshmgen is mainly responsible for translating structured C control flow into simpler primitives and erasing Clight types. In this process, the representation of program states remains unchanged. Suppose that a Clight statement \(s\) is transformed to Csharpminor by Cshmgen, we can then define the behavior refinement \(\sem{t} \precsim \sem{s}\):
\begin{align*}
&\begin{aligned} %
    \forall \sigma\ \tau\ \sigma',\ 
    (\sigma, \tau, \sigma') \in \fof{}{t}{nrm} \Rightarrow
    \big((\sigma, \tau, \sigma') \in \fof{}{s}{nrm}\big) \vee
    \left(\exists \tau_0, (\sigma, \tau_0) \in \fof{}{s}{err} \wedge \tau_0 \leqslant_T \tau\right);
\end{aligned}
\\
&\begin{aligned} %
    \forall \sigma\ \tau,\ 
    (\sigma, \tau) \in \fof{}{t}{err} \Rightarrow 
    \left(\exists \tau_0, (\sigma, \tau_0) \in \fof{}{s}{err} \wedge \tau_0 \leqslant_T \tau\right);
\end{aligned}
\\
&\begin{aligned} %
    \forall \sigma\ \tau,\ 
    (\sigma, \tau) \in \fof{}{t}{\ttfin} \Rightarrow 
    (\sigma, \tau) \in \fof{}{s}{\ttfin} \vee
    \left(\exists \tau_0, (\sigma, \tau_0) \in \fof{}{s}{err} \wedge \tau_0 \leqslant_T \tau\right);
\end{aligned}
\\
&\begin{aligned} %
    \forall \sigma\ \tau,\ 
    (\sigma, \tau) \in \fof{}{t}{\ttinf} \Rightarrow 
    (\sigma, \tau) \in \fof{}{s}{\ttinf} \vee
    \left(\exists \tau_0, (\sigma, \tau_0) \in \fof{}{s}{err} \wedge \tau_0 \leqslant_T \tau\right).
\end{aligned}
\end{align*}
where $\tau_0 \leqslant_T \tau$ means that $\tau_0$ is the prefix of $\tau$. 
{This example illustrates the way how observable event traces are preserved for realistic compilation, especially when undefined behaviors (UB) are raised in the source program, which conforms to CompCert's prefix-preserving view of UB.
}

\end{example}

We now generalize the above definition to capture the correctness of more complex phases of CompCert, where source and target program states may differ in representation and are related through a state-matching relation. The following example illustrates this generalization.
\begin{figure}[t]
    \begin{subfigure}[b]{.48\linewidth}
    \centering
    \includegraphics[width=0.8\linewidth]{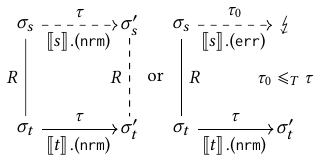}
    \caption{Refinement of terminating behavior}
    \label{subfig:bref-nrm1}
    \end{subfigure} 
    \begin{subfigure}[b]{.48\linewidth}
        \flushright
    \includegraphics[width=0.8\linewidth]{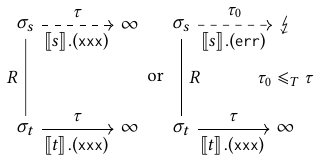}
    \centering
    \caption{Refinement of diverging behavior}
    \label{subfig:bref-dvg1}
    \end{subfigure}   
    \caption{Behavior refinement of termination and divergence for CompCert, where $\kX \in \Set{\kfin, \kinf}$.}
    \label{fig:compcert_bref}
    \Description{This figure illustrates behavior refinement in CompCert for both normal termination and divergence. Subfigure (a) shows refinement of terminating behavior: each terminating execution of the target program must correspond either to a related terminating execution of the source program, preserving the observable trace and the state-matching relation, or to a premature abort of the source program, marked by a lightning symbol. Subfigure (b) shows refinement of diverging behavior: each diverging execution of the target program, with either finite or infinite observable trace, must correspond either to a related diverging execution of the source program or again to a premature source abort.}
\end{figure}

\begin{example} [Behavior refinement for general phases of CompCert]
\label{ex:bref_compcert}
Suppose that a compilation phase of CompCert transforms a source statement \(s\) into the target statement \(t\).
For a given state-matching relation \(R\), we define their behavior refinement, denoted by \(\sem{t} \precsim_R \sem{s}\), as follows:
\begin{align*}
&\begin{aligned} %
    \forall \sigma_t\ \tau\ \sigma'_t\ &\sigma_s,\
    (\sigma_t, \tau, \sigma'_t) \in \fof{}{t}{nrm} \Rightarrow
    (\sigma_s, \sigma_t) \in R \Rightarrow 
    \\
    &\big(\exists \sigma'_s, (\sigma_s, \tau, \sigma'_s) \in \fof{}{s}{nrm}
    \wedge (\sigma'_s, \sigma'_t) \in R\big) \vee
    \left(\exists \tau_0, (\sigma_s, \tau_0) \in \fof{}{s}{err} \wedge \tau_0 \leqslant_T \tau\right);
\end{aligned}
\\
&\begin{aligned} %
    \forall \sigma_t\ \tau\ \sigma_s, \,
    &(\sigma_t, \tau) \in \fof{}{t}{err} \Rightarrow 
    (\sigma_s, \sigma_t) \in R \Rightarrow 
    \left(\exists \tau_0, (\sigma_s, \tau_0) \in \fof{}{s}{err} \wedge \tau_0 \leqslant_T \tau\right);
\end{aligned}
\\
&\begin{aligned} %
    \forall \sigma_t\ \tau\ \sigma_s, \,
    &(\sigma_t, \tau) \in \fof{}{t}{\ttfin} \Rightarrow 
    (\sigma_s, \sigma_t) \in R \Rightarrow 
    \\
    &(\sigma_s, \tau) \in \fof{}{s}{\ttfin} \vee \left(\exists \tau_0, (\sigma_s, \tau_0) \in \fof{}{s}{err} \wedge \tau_0 \leqslant_T \tau\right);
\end{aligned}
\\
&\begin{aligned} %
    \forall \sigma_t\ \tau\ \sigma_s, \,
    &(\sigma_t, \tau) \in \fof{}{t}{\ttinf} \Rightarrow 
    (\sigma_s, \sigma_t) \in R \Rightarrow 
    \\
    &(\sigma_s, \tau) \in \fof{}{s}{\ttinf} \vee \left(\exists \tau_0, (\sigma_s, \tau_0) \in \fof{}{s}{err} \wedge \tau_0 \leqslant_T \tau\right)
\end{aligned}
\end{align*}
Fig.~\ref{fig:compcert_bref} illustrates behavior refinement for both terminating and diverging executions. It states that for every terminating (resp. diverging) behavior of the target program, either there may exist consistent terminating (resp. diverging) behavior in the source end, or the source program may abort (denoted as $\lightning$) prematurely. Once the source program may abort, the compiled target program could do anything unpredictable. 
Furthermore, though not depicted in Fig.~\ref{fig:compcert_bref}, the definition also requires that if the target program may abort, then the corresponding source program may abort as well. This rules out compilers that intend to translate a safe program into one that may aborts.
It is also worth noting that refinement between behavior sets entails refinement between the triggered observable behaviors, thereby demonstrating the same compilation correctness as in CompCert.

Besides, the correctness of each compilation phase can be proved with different state-matching relations and can be vertically composed to form end-to-end correctness of the whole compiler.
This is demonstrated by the following theorem, where \(L_1, L_2\) and \(L_3\) are the semantics of statements.
\begin{theorem}
    For any matching relations \(R_1\) and \(R_2\), if 
   \(L_3 \precsim_{R_2} L_2\) and \(L_2 \precsim_{R_1} L_1\), then \(L_3 \precsim_{R_2 \circ R_1} L_1\).
\end{theorem}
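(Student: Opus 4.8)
The plan is to unfold the definition of $\precsim_{R}$ from Example~\ref{ex:bref_compcert}, which is a conjunction of four implications---one each for the behavior sets $\knrm$, $\kerr$, $\ttfin$, and $\ttinf$---and to establish each of them by chaining the two hypotheses through an intermediate state. Write $S_1, S_2, S_3$ for the state spaces underlying $L_1, L_2, L_3$, so that $R_1 \subseteq S_1 \times S_2$, $R_2 \subseteq S_2 \times S_3$, and $R_2 \circ R_1$ is the relation on $S_1 \times S_3$ relating $\sigma_1$ to $\sigma_3$ exactly when some $\sigma_2 \in S_2$ satisfies $(\sigma_1,\sigma_2) \in R_1$ and $(\sigma_2,\sigma_3) \in R_2$. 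Fix such a witness $\sigma_2$ at the start of each clause. I would also first record the elementary fact that the prefix order $\leqslant_T$ on $\ttt{trace}$ is transitive; it is the glue for every abort sub-case.

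For the $\knrm$ clause, assume $(\sigma_3, \tau, \sigma_3') \in L_3.(\knrm)$ and $(\sigma_1, \sigma_3) \in R_2 \circ R_1$ with intermediate $\sigma_2$ as above. Applying $L_3 \precsim_{R_2} L_2$ at $(\sigma_2, \sigma_3) \in R_2$ gives either (a) some $\sigma_2'$ with $(\sigma_2, \tau, \sigma_2') \in L_2.(\knrm)$ and $(\sigma_2',\sigma_3') \in R_2$, or (b) some $\tau_0 \leqslant_T \tau$ with $(\sigma_2, \tau_0) \in L_2.(\kerr)$. In case (a), feed $(\sigma_2,\tau,\sigma_2') \in L_2.(\knrm)$ into $L_2 \precsim_{R_1} L_1$ at $(\sigma_1,\sigma_2) \in R_1$: either we get $\sigma_1'$ with $(\sigma_1,\tau,\sigma_1') \in L_1.(\knrm)$ and $(\sigma_1',\sigma_2') \in R_1$---then $(\sigma_1',\sigma_3') \in R_2 \circ R_1$ and the left disjunct of the goal holds---or $L_1$ aborts on some $\tau_0 \leqslant_T \tau$, giving the right disjunct. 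In case (b), apply the $\kerr$ clause of $L_2 \precsim_{R_1} L_1$ to $(\sigma_2,\tau_0) \in L_2.(\kerr)$, obtaining $\tau_0' \leqslant_T \tau_0$ with $(\sigma_1,\tau_0') \in L_1.(\kerr)$; transitivity of $\leqslant_T$ gives $\tau_0' \leqslant_T \tau$, again the right disjunct.

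The $\kerr$, $\ttfin$, and $\ttinf$ clauses follow the same two-level case split but are lighter, because the ``good'' outcome of each hypothesis stays inside the same behavior set and carries the same (finite or infinite) trace through both refinements, so no intermediate target state has to be reconstructed; every sub-case in which an intermediate language aborts is closed, as above, by the $\kerr$ clause and transitivity of $\leqslant_T$, and every fully-matching sub-case is closed by the corresponding clause of $L_2 \precsim_{R_1} L_1$. The whole argument is purely structural over the fixed set of behavior fields and invokes no fixed-point machinery.

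The step I expect to require the most care is the trace-prefix bookkeeping across the nested case analysis in the $\knrm$ clause (three terminal sub-cases) and in the divergence clauses: one must verify in each branch that the trace, or abort-prefix, ultimately exhibited at the $L_1$ end is indeed $\leqslant_T$ the original target trace $\tau$. This reduces entirely to transitivity of $\leqslant_T$ together with the observation that $\tau$ is transmitted verbatim (not rewritten) by both $\precsim$-hypotheses, so no genuinely new inequality has to be proved---just chased through the branches.
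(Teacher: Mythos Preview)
Your proposal is correct. The paper states this theorem without giving a textual proof (it is one of the many results discharged only in the \Coq development), so there is no paper argument to compare against; your direct unfolding of the four clauses of $\precsim_R$, chaining through an intermediate $\sigma_2$, and closing every abort branch with the $\kerr$ clause plus transitivity of $\leqslant_T$ is exactly the expected route and contains no gaps.

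One small remark on notation: under the paper's relational composition convention $R_1 \circ R_2 = \{(\sigma_1,\sigma_3)\mid \exists\sigma_2,(\sigma_1,\sigma_2)\in R_1 \wedge (\sigma_2,\sigma_3)\in R_2\}$, the composite that type-checks for $L_3 \precsim L_1$ is literally $R_1 \circ R_2$, not $R_2 \circ R_1$. You noticed the typing issue and simply defined the composite you need; that is fine, but be aware that the discrepancy is a notational slip in the theorem statement rather than something your proof has to work around.
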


\end{example}

However, the behavior refinement defined in Example~\ref{ex:bref_compcert} is inherently limited to closed settings, i.e., the verification of whole-program compilation, as is the case for the original CompCert\footnote{Extended by \citeN{DBLP:conf/popl/KangKHDV16}, CompCert supports verified separate compilation by the same compiler from version 2.7. }.

\paragbf{Toward verified compositional compilation (VCC)}
Verified compositional compilation requires two dimensions of compositionality in open settings:
(i)  \emph{horizontal compositionality} requires that the compilation correctness of individual modules can be composed to obtain compilation correctness of the linked whole program;
(ii) \emph{vertical compositionality} demands that correctness of each compilation phase can be composed to yield an end-to-end correctness across the compilation chain.

Several works stemming from CompCert have moved in this direction. Early attempts such as CompComp~\cite{DBLP:conf/popl/StewartBCA15} and CompCertM~\cite{DBLP:journals/pacmpl/SongCKKKH20} extend CompCert with interaction semantics~ \cite{DBLP:conf/esop/BeringerSDA14} and employ open simulations to prove compilation correctness of open modules. More recent developments, notably CompCertO~\cite{DBLP:conf/pldi/KoenigS21} and the work of \citeN{DBLP:journals/pacmpl/ZhangWWKS24}, introduce Kripke logical relations to express the correspondence between open module interactions. We will return to these developments in more detail later.

To support VCC in our denotation-based framework, we leverage the notion of Kripke relations and the Kripke interface relations (also called simulation conventions) from CompCertO. We formulate the denotation-based behavior refinement for open functions and modules as follows.

\begin{definition} [Kripke relations]
    Given sets \(A_1\), \(A_2\), and \(W\), a \emph{Kripke relation} 
\(R : W \rightarrow \{X \mid X \subseteq A_1 \times A_2 \}\) 
is a family of relations indexed by the set of \emph{Kripke worlds} \(W\), 
written as \(R: \mathcal{K}_W(A_1, A_2)\).
\end{definition}

 \emph{Kripke interface relations} (KIR) are used to relate the source and target language interfaces.

\begin{definition} [Kripke interface relations, or KIR]
\label{def:kir}
A Kripke interface relation between two language interfaces 
\(\pair{\kquery_1,\kreply_1}\) and \(\pair{\kquery_2,\kreply_2}\) is a tuple
\(
B = \pair{W_B, \mcal{\leadsto}{B}, \msal{R}{q}{B}, \msal{R}{r}{B}},
\)
where $\mcal{\leadsto}{B}$ (\( \subseteq W_B \times W_B\)) is an accessibility 
relation between worlds, and \(\msal{R}{q}{B}: \mathcal{K}_{W_B}(\kquery_1,\kquery_2)\) and 
\(\msal{R}{r}{B}: \mathcal{K}_{W_B}(\kreply_1,\kreply_2)\) are Kripke relations used for relating 
queries and replies, respectively.
\end{definition}

According to CompCertO~\cite{DBLP:conf/pldi/KoenigS21}, Kripke interface relations can be vertically composed as follows: 
given Kripke interface relations 
\(A = \pair{W_A, \mcal{\leadsto}{A}, \msal{R}{q}{A}, \msal{R}{r}{A}}\) and 
\(B = \pair{W_B, \mcal{\leadsto}{B}, \msal{R}{q}{B}, \msal{R}{r}{B}}\), 
$A {\bcirc} B \DEF \pair{W_A \times W_B, \leadsto_{AB},\msal{R}{q}{A} \bcirc \msal{R}{q}{B}, \msal{R}{r}{A} \bcirc \msal{R}{r}{B}}$, 
where for any worlds \( w_A, w_A' \in W_A\), and \( w_B,\,w_B' \in W_B\), 
\begin{gather*}
    (w_A, w_B) \leadsto_{AB} (w_A', w_B') \iff 
    w_A \leadsto_{A} w_A' \wedge
    w_B \leadsto_{B} w_B';
    \\
     \msal{R}{q}{A} \bcirc \msal{R}{q}{B}(w_A, w_B) \DEF \msal{R}{q}{A}(w_A) \circ \msal{R}{q}{B}(w_B); \text{ and }  
     \msal{R}{r}{A} \bcirc \msal{R}{r}{B}(w_A, w_B) \DEF \msal{R}{r}{A}(w_A) \circ \msal{R}{r}{B}(w_B).
\end{gather*}

\begin{example} [Behavior refinement for open functions/modules of CompCert]
\label{ex:bref-open-module} 

Suppose that a source open module \(M_s\) is transformed to a target module \(M_t\).
We use semantic oracles $\chi_s$ (\(\subseteq \TEE{s}\)) and \(\chi_t\) \((\subseteq \TEE{t})\) to specify the terminating behavior of invoked functions external to the source module \(M_s\) and functions external to the target module \(M_t\) respectively. Additionally, we use a semantic oracle \(\chi_e\) (\(\subseteq \kquery_s \times \ktrace\)) to specify the aborting behavior of functions external to the source module \(M_s\).
Given a KIR \(A = \pair{W_A, \mcal{\leadsto}{A}, \msal{R}{q}{A}, \msal{R}{r}{A}}\) between languages interfaces \(\pair{\kquery_s, \kreply_s}\) and \(\pair{\kquery_t, \kreply_t}\), we define \(\chi_t \precsim_A (\chi_s, \chi_e) \) as follows.
\begin{align*}
    &\begin{aligned} %
        \chi_t \precsim_A (\chi_s, \chi_e)
        \DEF
        \forall w_A\ & q_t\ \tau\ r_t\ q_s,\;
        (q_t, \tau, r_t) \in \chi_t \Rightarrow
        (q_s, q_t) \in \msal{R}{q}{A}(w_A) \Rightarrow 
        \\
        &\big(\exists w_A'\,r_s, (q_s, \tau, r_s) \in \chi_s \wedge 
        w_A \leadsto_A w_A' \wedge
         (r_s, r_t) \in \msal{R}{r}{A}(w_A')\big) \ \vee
        \\
        &\big(\exists \tau_0, (q_s, \tau_0) \in \chi_e \wedge \tau_0 \leqslant_T \tau\big);
    \end{aligned}
\end{align*}
Similarly, given a KIR \(B = \pair{W_B, \mcal{\leadsto}{B}, \msal{R}{q}{B}, \msal{R}{r}{B}}\), 
we then define \(\sem{M_t}_{\chi_t} \precsim_B \sem{M_s}_{\chi_s}^{\chi_e}\) as follows, where \(\sem{M_t}_{\chi_t}\) has the denotation signature \FDenote (defined in \S\ref{subsec:dom_front}) with language interface \(\pair{\kquery_t, \kreply_t}\), and \(\sem{M_s}_{\chi_s}\) has the denotation signature \FDenote with language interface \(\pair{\kquery_s, \kreply_s}\).
\begin{align*} %
    &\sem{M_t}_{\chi_t} \precsim_B \sem{M_s}_{\chi_s}^{\chi_e} \DEF
        \text{let }
        \sem{M_s}_{\chi_s}^{\chi_e}.(\kerr) \DEF \fof{\chi_s}{M_s}{err} \cup (\fof{\chi_s}{M_s}{cll} \circ \chi_e)
        \text{ in }
    \\
    &\quad\qquad\begin{aligned}
        \forall w_B\ q_t\ \tau\ r_t\ q_s,\
        &(q_t, \tau, r_t) \in \fof{\chi_t}{M_t}{nrm} \Rightarrow
        (q_s, q_t) \in \msal{R}{q}{B}(w_B) \Rightarrow 
        \\
        &\big(\exists w_B'\ r_s, (q_s, \tau, r_s) \in \fof{\chi_s}{M_s}{nrm}
        \wedge w_B \leadsto_B w_B'
        \wedge (r_s, r_t) \in \msal{R}{r}{B}(w_B')\big)
        \ \vee
        \\
        &\big(\exists \tau_0, (q_s, \tau_0) \in \efof{\chi_s}{\chi_e}{M_s}{err} \wedge \tau_0 \leqslant_T \tau\big);
    \end{aligned}
    \\
    &\quad\qquad\begin{aligned}
        \forall w_B\ q_t\ \tau\ q_t'\ q_s,\
        &(q_t, \tau, q_t') \in \fof{\chi_t}{M_t}{cll} \Rightarrow
        (q_s, q_t) \in \msal{R}{q}{B}(w_B) \Rightarrow 
        \\
        &\big(\exists w_B'\ q_s', (q_s, \tau, q_s') \in \fof{\chi_s}{M_s}{cll}
        \wedge w_B \leadsto_B w_B'
        \wedge (q_s', q_t') \in \msal{R}{q}{B}(w_B')\big)
        \ \vee
        \\
        &\big(\exists \tau_0, (q_s, \tau_0) \in \efof{\chi_s}{\chi_e}{M_s}{err} \wedge \tau_0 \leqslant_T \tau\big);
    \end{aligned}
    \\
    &\quad\qquad\begin{aligned}
        \forall w_B\ q_t\ \tau\ q_s,\:\!
        (q_t, \tau) \in \fof{\chi_t}{M_t}{err} \Rightarrow
        (q_s,&\; q_t) \in \msal{R}{q}{B}(w_B) \Rightarrow 
        \\
        &\!\!\!
        \big(\exists \tau_0, (q_s, \tau_0) \in \efof{\chi_s}{\chi_e}{M_s}{err} \wedge \tau_0 \leqslant_T \tau\big);
    \end{aligned}
    \\
    &\quad\qquad\begin{aligned}
        \forall w_B\ q_t\ \tau\ q_s,\
        &(q_t, \tau) \in \fof{\chi_t}{M_t}{\ttfin} \Rightarrow
        (q_s, q_t) \in \msal{R}{q}{B}(w_B) \Rightarrow 
        \\
        & (q_s, \tau) \in \fof{\chi_s}{M_s}{\ttfin} \vee
        \big(\exists \tau_0, (q_s, \tau_0) \in \efof{\chi_s}{\chi_e}{M_s}{err} \wedge \tau_0 \leqslant_T \tau\big);
    \end{aligned}
    \\
    &\quad\qquad\begin{aligned}
        \forall w_B\ q_t\ \tau\ q_s,\
        &(q_t, \tau) \in \fof{\chi_t}{M_t}{\ttinf} \Rightarrow
        (q_s, q_t) \in \msal{R}{q}{B}(w_B) \Rightarrow 
        \\
        & (q_s, \tau) \in \fof{\chi_s}{M_s}{\ttinf} \vee
        \big(\exists \tau_0, (q_s, \tau_0) \in \efof{\chi_s}{\chi_e}{M_s}{err} \wedge \tau_0 \leqslant_T \tau\big).
    \end{aligned}
\end{align*}
Finally, for given Kripke interface relations \(A\) and \(B\), the behavior refinement between open modules \(M_s\) and \(M_t\), denoted by 
\(\sem{M_t} \precsim_{A \ttarrow B} \sem{M_s}\), is defined as follows.
\begin{align*}
    &\begin{aligned}
      \sem{M_t} \precsim_{A \ttarrow B} \sem{M_s} \DEF \forall \chi_t\,\chi_s\, \chi_e,\;
       \chi_t \precsim_A (\chi_s, \chi_e) \imply  \sem{M_t}_{\chi_t} \precsim_B \sem{M_s}_{\chi_s}^{\chi_e}.
    \end{aligned}
\end{align*}
We illustrate the above definitions (typically for terminating behaviors) in  Fig.~\ref{fig:compcerto_bref}.
This formalization enables classical rely–guarantee reasoning~\cite{DBLP:conf/popl/LiangFF12} by the accessibility relation between worlds.
Specifically, the evolution of worlds in the callee’s behavior refinement 
provides the rely condition for the caller’s behavior refinement. 
Conversely, assuming \(w_A \leadsto_A w_A'\), 
the evolution of worlds in the caller’s behavior refinement 
must respect the guarantee condition \(w_B \leadsto_B w_B'\).
When each module is proved to satisfy its guarantee condition, and these guarantee conditions entail their rely conditions, their compilation correctness becomes horizontally compositional.
\end{example}

\begin{theorem} [Horizontal compositionality]
For any source modules \(M_s,\, M_s'\), and target modules \(M_t,\, M_t'\),
if 
\(\sem{M_t} \precsim_{A \ttarrow A} \sem{M_s}\)
and
\(\sem{M_t'} \precsim_{A \ttarrow A} \sem{M_s'}\), then
\(\semlink{M_t}{M_t'} \precsim_{A \ttarrow A} \semlink{M_s}{M_s'}.\)
\end{theorem}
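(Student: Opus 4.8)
The plan is to recognise this theorem as the module-level counterpart of the monotonicity principle~(\ref{eq:compose_mono}): it asserts that the semantic-linking operator preserves the refinement $\precsim_{A \ttarrow A}$, in the same way composition preserves set inclusion. Accordingly I would first unfold $\precsim_{A \ttarrow A}$ on the goal, fixing arbitrary oracles $\chi_t,\chi_s,\chi_e$ with $\chi_t \precsim_A (\chi_s,\chi_e)$ and reducing to $\sem{\semlink{M_t}{M_t'}}_{\chi_t} \precsim_A \sem{\semlink{M_s}{M_s'}}_{\chi_s}^{\chi_e}$. I then introduce the ``closed-up'' source oracles $\widehat{\chi}_s \DEF \sem{\semlink{M_s}{M_s'}}_{\chi_s}.(\knrm) \cup \chi_s$ (the terminating behavior of \emph{all} functions, internal or external to the link) and $\widehat{\chi}_e \DEF \sem{\semlink{M_s}{M_s'}}_{\chi_s}^{\chi_e}.(\kerr) \cup \chi_e$ (the aborting behavior a caller can observe: a direct abort, an abort reached through a finite chain of internal calls, or a truly external abort). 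The whole proof hinges on one lemma: $\sem{\semlink{M_t}{M_t'}}_{\chi_t}.(\knrm)$, regarded as a target oracle, satisfies $\sem{\semlink{M_t}{M_t'}}_{\chi_t}.(\knrm) \precsim_A (\widehat{\chi}_s,\ \widehat{\chi}_e)$; note also that $\chi_t \precsim_A (\widehat{\chi}_s,\widehat{\chi}_e)$ is immediate from $\chi_t \precsim_A (\chi_s,\chi_e)$ and monotonicity in the source oracles.

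I would prove the lemma by fixed-point induction on $\sem{\semlink{M_t}{M_t'}}_{\chi_t}.(\knrm) = \mu\chi_0.\big(\sem{M_t}_{\chi_0\cup\chi_t}.(\knrm) \cup \sem{M_t'}_{\chi_0\cup\chi_t}.(\knrm)\big)$. The predicate ``$\lambda\chi_0.\ \chi_0 \precsim_A (\widehat{\chi}_s,\widehat{\chi}_e)$'' is admissible---it holds vacuously at $\varnothing$ and is closed under directed unions, since the refinement condition quantifies over the elements of the oracle---so it suffices to show it is preserved by one unrolling. Given $\chi_0 \precsim_A (\widehat{\chi}_s,\widehat{\chi}_e)$ we get $\chi_0\cup\chi_t \precsim_A (\widehat{\chi}_s,\widehat{\chi}_e)$ using the $\chi_t$-fact above, so the two hypotheses $\sem{M_t} \precsim_{A \ttarrow A} \sem{M_s}$ and $\sem{M_t'} \precsim_{A \ttarrow A} \sem{M_s'}$, instantiated at $(\chi_0\cup\chi_t,\ \widehat{\chi}_s,\ \widehat{\chi}_e)$, yield $\sem{M_t}_{\chi_0\cup\chi_t} \precsim_A \sem{M_s}_{\widehat{\chi}_s}^{\widehat{\chi}_e}$ and the primed analogue. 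Two absorption facts then close the step: $\sem{M_s}_{\widehat{\chi}_s}.(\knrm) \subseteq \sem{\semlink{M_s}{M_s'}}_{\chi_s}.(\knrm)$ because the latter is a fixed point of the very body that defines it; and $\sem{M_s}_{\widehat{\chi}_s}^{\widehat{\chi}_e}.(\kerr) \subseteq \widehat{\chi}_e$ because prepending one more internal call to a finite-internal-call-chain abort is again such an abort, which is routine algebra on the $(\cdot)^* \circ (\cdot)$ shape of the $\kerr$/$\kcll$ equations (symmetrically for $M_s'$). Substituting these into the refinements just obtained shows $\sem{M_t}_{\chi_0\cup\chi_t}.(\knrm) \cup \sem{M_t'}_{\chi_0\cup\chi_t}.(\knrm) \precsim_A (\widehat{\chi}_s,\widehat{\chi}_e)$, i.e.\ the unrolled oracle again satisfies the predicate; fixed-point induction then gives the lemma.

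With the lemma in hand I would finish by applying the two module hypotheses once more, now at $(\sem{\semlink{M_t}{M_t'}}_{\chi_t}.(\knrm)\cup\chi_t,\ \widehat{\chi}_s,\ \widehat{\chi}_e)$---exactly the rely established above---obtaining $\sem{M_t}_{\widehat{\chi}_t} \precsim_A \sem{M_s}_{\widehat{\chi}_s}^{\widehat{\chi}_e}$ and the primed version for \emph{all} components $\knrm,\kerr,\kcll,\kfin,\kinf$. By the semantic-linking equations of \S\ref{subsec:sem_link}, every component of $\sem{\semlink{M_t}{M_t'}}_{\chi_t}$ is assembled from these module-level denotations (after re-associating call chains into maximal per-module segments where needed) using only $\Cup$, $\circ$, $(\cdot)^*$, $(\cdot)^+$, $(\cdot)^\infty$, the trace filters $\triangle,\blacktriangle$, and $\ktest(\kecall(\cdot))$, and each of these operators is proved to preserve $\precsim_A$ in the refinement algebra of \S\ref{subsec:ref_algebra}; for the $\kecall$ filters one additionally invokes that compilation preserves $\kdom$, so the source and target external-call filters agree. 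The spurious $\chi_s,\chi_e$ added into $\widehat{\chi}_s,\widehat{\chi}_e$ cause no trouble: a query in the link's domain has---by $\kdom$-preservation and well-formedness of oracles---no image under the purely external oracles, so it is resolved by the link's own denotation. This yields $\sem{\semlink{M_t}{M_t'}}_{\chi_t} \precsim_A \sem{\semlink{M_s}{M_s'}}_{\chi_s}^{\chi_e}$, hence the theorem.

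I expect the core lemma to be the hard part---not the induction scheme, which is routine once admissibility is noted, but the design and bookkeeping of the closed-up oracles, above all $\widehat{\chi}_e$, which must simultaneously absorb $\chi_e$ through arbitrarily long internal-call chains and sit on the right of the refinement at every unrolling. It is here that the use of the \emph{same} Kripke interface relation $A$ on both sides of the arrow is essential: the world evolution $\leadsto_A$ and reply relation guaranteed by one module's refinement are precisely what discharges the other module's rely once their cross-module calls are internalised, so the rely--guarantee circularity that linking introduces is broken. The remaining work---the per-operator monotonicity lemmas, $\kdom$-preservation, and the limit and base cases---is standard given the refinement algebra of \S\ref{subsec:ref_algebra}.
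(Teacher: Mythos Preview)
Your plan is sound and uses the same engine as the paper—fixed-point induction for the $\knrm$ component, then the refinement-algebra laws (union-, concatenation-, star-, infinity-preservation) for the remaining fields—but the packaging is different. The paper does not build explicit closed-up oracles. Instead it works in the algebraic formulation $\sqsubseteq_{A\ttarrow A}$ (equivalent to your $\precsim_{A\ttarrow A}$ by Thm.~\ref{thm:module-ref-sound}) and isolates a single abstract lemma at the RA level: for Scott-continuous $f,g,h$, if $N_t\subseteq\gamma(N_s,E_s)$ always implies $f(N_t)\subseteq\gamma(g(N_s),h(E_s))$, then $\mu f\subseteq\gamma(\mu g,\mu h)$. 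Here $\mu f$ is the target link's $\knrm$, $\mu g$ the source link's $\knrm$, and $\mu h$ a fixed-point form of the source link's extended error $\fof{\chi_s}{\semlink{M_s}{M_s'}}{err}\cup(\fof{\chi_s}{\semlink{M_s}{M_s'}}{cll}\circ\chi_e)$. The premise is discharged by one invocation of each module hypothesis; the other components ($\kcll$, $\kerr$, $\kfin$, $\kinf$) are then handled mechanically by the \texttt{gamma\_simpl}/\texttt{solvefix} tactics, exactly as in the loop case of \S\ref{subsec:proof_style}. What this buys over your version is that all three sides grow from $\bot$ in lockstep, so no ``spurious $\chi_s,\chi_e$'' ever enter the bound and no bookkeeping about which queries land where is needed.

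That last point is where your write-up has a small wrinkle. The dismissal of the spurious $\chi_s,\chi_e$ via ``well-formedness of oracles'' is not an assumption the theorem makes: $\chi_s,\chi_e$ are arbitrary, and for a general KIR $A$ the query relation need not preserve identifiers. You do not actually need that escape hatch. Your induction step, via the absorption facts you already state, shows the sharper conclusion that each target iterate refines $\bigl(\kokgc{\semlink{M_s}{M_s'}}{nrm},\,\widehat{\chi}_e\bigr)$ rather than $(\widehat{\chi}_s,\widehat{\chi}_e)$: the extra $\chi_s$ appears only when you adjoin $\chi_t$ before invoking the module hypothesis, and its image is immediately swallowed by the $\widehat{\chi}_s$ used \emph{inside} that invocation—never in the outer bound. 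State the core lemma with this tighter right-hand side and the ``spurious $\chi_s$'' vanishes. The analogous worry for $\chi_e$ (a cross-module call from $M_s$ into $M_s'$ composed with a bare $\chi_e$) is exactly what the paper's symmetric three-fixed-point formulation sidesteps; in your one-sided set-up it is handled by letting the error invariant itself be a growing fixed point rather than the pre-computed $\widehat{\chi}_e$.
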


\begin{figure}[tbh]
    \begin{subfigure}[b]{.49\linewidth}
    \centering
    \includegraphics[width=0.98\linewidth]{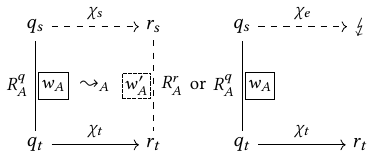}
    \caption{Refinement of callee's behavior}
    \label{subfig:bref-callee}
    \end{subfigure} 
    \begin{subfigure}[b]{.49\linewidth}
        \flushright
    \includegraphics[width=0.98\linewidth]{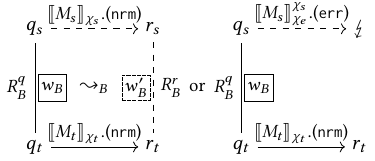}
    \centering
    \caption{Refinement of caller's behavior}
    \label{subfig:bref-caller}
    \end{subfigure}   
    \caption{Behavior refinement for CompCert open functions/modules.}
    \label{fig:compcerto_bref}
    \Description{This figure illustrates behavior refinement for open functions or modules in CompCert-style compositional verification. Subfigure (a) shows refinement of a callee’s behavior: when the target callee produces a terminating behavior under a target semantic oracle, there must be either a corresponding source callee behavior related by the reply relation, with the Kripke world advancing according to the accessibility relation, or a premature source-side error. This captures how external function behavior assumed by the target is matched by behavior available on the source side. Subfigure (b) shows refinement of a caller’s behavior: assuming related source and target call interfaces, each terminating behavior of the target caller must correspond either to a related terminating behavior of the source caller, again with suitable world evolution and related replies, or to a source-side error. Together, the two diagrams illustrate rely-guarantee reasoning for open modules, where refinement of callees provides assumptions for callers, and world evolution records the compatibility conditions needed for horizontal compositionality.}
\end{figure}

\begin{theorem} [Vertical compositionality] Given KIRs \(A_1, A_2, B_1, B_2\), and modules 
 \(M_1, M_2, M_3\),
   \begin{align*}
       \sem{M_3} \precsim_{A_2 \imply B_2} \sem{M_2}
        \imply 
        \sem{M_2} \precsim_{A_1 \imply B_1} \sem{M_1}
        \imply
        \sem{M_3} \precsim_{A_2 \bcirc A_1 \imply B_2 \bcirc B_1} \sem{M_1}.
   \end{align*}
\end{theorem}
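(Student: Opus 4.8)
I would run the classical ``middle-oracle'' argument for vertical composition, but carried out at the level of the semantic oracles that parameterise module denotations rather than over simulation states. Unfolding the open-module behavior refinement of Example~\ref{ex:bref-open-module}, fix a target oracle $\chi_3$, a source oracle $\chi_1$, and a source error oracle $\chi_e$ with $\chi_3 \precsim_{A_2 \bcirc A_1} (\chi_1, \chi_e)$; it then suffices to manufacture an oracle $\chi_2$ and an error oracle $\chi_e'$ for $M_2$ such that (i) $\chi_3 \precsim_{A_2} (\chi_2, \chi_e')$ and (ii) $\chi_2 \precsim_{A_1} (\chi_1, \chi_e)$. From (i) and the first hypothesis we obtain $\sem{M_3}_{\chi_3} \precsim_{B_2} \sem{M_2}_{\chi_2}^{\chi_e'}$; from (ii) and the second, $\sem{M_2}_{\chi_2} \precsim_{B_1} \sem{M_1}_{\chi_1}^{\chi_e}$; and the conclusion $\sem{M_3}_{\chi_3} \precsim_{B_2 \bcirc B_1} \sem{M_1}_{\chi_1}^{\chi_e}$ follows by glueing these two module-level refinements. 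The engine throughout is the definition of KIR composition: $W_{A_2 \bcirc A_1} = W_{A_2} \times W_{A_1}$, accessibility is componentwise, and the composed query/reply relations factor as $\msal{R}{q}{A_2}(w_2) \circ \msal{R}{q}{A_1}(w_1)$ at a paired world $(w_2, w_1)$, so any derivation against the composed convention can be split by naming the intermediate query and reply.

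\textbf{Building the middle oracle.} I would take $\chi_2$ to be the weakest-precondition pullback of $\chi_1$ along $A_1$: a triple $(q_2, \tau, r_2)$ lies in $\chi_2$ iff for \emph{every} source query $q_1$ and world $w_1$ with $(q_1, q_2) \in \msal{R}{q}{A_1}(w_1)$, either there exist $w_1', r_1$ with $(q_1, \tau, r_1) \in \chi_1$, $w_1 \leadsto_{A_1} w_1'$ and $(r_1, r_2) \in \msal{R}{r}{A_1}(w_1')$, or there is $\tau_0 \leqslant_T \tau$ with $(q_1, \tau_0) \in \chi_e$; and I would let $\chi_e'$ be the matching pullback of $\chi_e$ (a pair $(q_2, \tau)$ lies in $\chi_e'$ iff some $(q_1, \tau_0) \in \chi_e$ has $(q_1, q_2) \in \msal{R}{q}{A_1}(w_1)$ for some $w_1$ with $\tau_0 \leqslant_T \tau$), extended over queries of $M_2$ not in the image of $\msal{R}{q}{A_1}$ at all. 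Condition (ii) is then immediate by unfolding the definition of $\chi_2$. For condition (i), given $(q_3, \tau, r_3) \in \chi_3$ and $q_2$ with $(q_2, q_3) \in \msal{R}{q}{A_2}(w_2)$, I would, for each source query $q_1$ and world $w_1$ related to $q_2$ through $\msal{R}{q}{A_1}(w_1)$, form $(q_1, q_3) \in \msal{R}{q}{A_2}(w_2) \circ \msal{R}{q}{A_1}(w_1)$ and feed the paired world $(w_2, w_1)$ into $\chi_3 \precsim_{A_2 \bcirc A_1}(\chi_1, \chi_e)$, reading off either a matched source reply (which must serve as the witness for $\chi_2$-membership of $(q_2, \tau, r_2)$) or a source-abort prefix (which lands $(q_2, \tau)$ in $\chi_e'$); the $A_2$ world-evolution and reply conditions then come from the $W_{A_2}$ component of the composed world.

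\textbf{Glueing the two module refinements.} This is a per-behavior-set chase, a module-level analogue of the statement-level vertical composition theorem in Example~\ref{ex:bref_compcert}. For each field $\kX \in \{\knrm, \kcll, \kerr, \kfin, \kinf\}$ of $\sem{M_3}_{\chi_3}$ I would take a target behavior, apply the $B_2$-refinement to land it inside $\sem{M_2}_{\chi_2}^{\chi_e'}$, and case on the disjunct obtained. In the matched case the behavior is a genuine behavior of $\sem{M_2}_{\chi_2}$, so I apply the $B_1$-refinement, pair the two Kripke worlds into $W_{B_2} \times W_{B_1}$, compose the query/reply relations and accessibility steps componentwise exactly as in the definition of $B_2 \bcirc B_1$, and land either in a matched behavior of $\sem{M_1}_{\chi_1}$ or in $\efof{\chi_1}{\chi_e}{M_1}{err}$. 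In the ``source aborts'' case the behavior lies in $\sem{M_2}_{\chi_2}^{\chi_e'}.(\kerr) = \fof{\chi_2}{M_2}{err} \cup \bigl(\fof{\chi_2}{M_2}{cll} \circ \chi_e'\bigr)$: the first summand is routed through the $\kerr$-clause of the $B_1$-refinement, and the second through its $\kcll$-clause followed by the defining relation between $\chi_e'$ and $\chi_e$, both ending up in $\efof{\chi_1}{\chi_e}{M_1}{err}$. Prefix-closure of $\leqslant_T$ is threaded through each composition; the rest are routine relational-Kleene-algebra manipulations of the kind illustrated in Examples~\ref{ex:bref_abort} and~\ref{ex:bref_Cshmgen}.

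\textbf{Main obstacle.} I expect the crux to be the second step: showing that one middle oracle satisfies both (i) and (ii) at once. The tension is that (ii) forces every triple in $\chi_2$ to be ``uniformly good'' across all source queries related to a given intermediate query, whereas (i) must exhibit such a triple from a single derivation against the composed convention, whose witnessing intermediate reply and world a priori depend on which source query was chosen. Overcoming this rests on the fact that an $A_2 \bcirc A_1$-derivation genuinely decomposes into an $A_2$-step and an $A_1$-step at the level of worlds, which is precisely what the product world set and componentwise accessibility of $A_2 \bcirc A_1$ provide, together with coherence properties of the CompCertO-style KIRs in use, so that the intermediate reply and world can be chosen independently of the source query (and so that ``orphan'' intermediate queries, unreachable from the source, are absorbed into $\chi_e'$). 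Getting this error-oracle bookkeeping exactly right, so that the source-abort disjunct survives both steps and the trace prefixes compose, is the remaining delicate point; everything downstream of it is mechanical.
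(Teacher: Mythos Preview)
The paper gives no in-text proof of this theorem; it only remarks that vertical compositionality is formalized in \textsc{Rocq} and ``follows CompCertO's approach.'' So there is nothing to compare against line by line; I can only assess your plan on its own merits.

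Your middle-oracle strategy and your identification of the quantifier-swap as the crux are both on target. The gap is in the resolution. The claim that ``the intermediate reply and world can be chosen independently of the source query'' is not a consequence of the product-world/componentwise-accessibility structure of $A_2\bcirc A_1$, and it fails for arbitrary KIRs. Take $A_1$ with two worlds $p,q$, identity accessibility, a single query $\ast$ related to itself at both worlds, and $R^{r}_{A_1}(p)=\{(r_p,A)\}$, $R^{r}_{A_1}(q)=\{(r_q,B)\}$; take $A_2$ with one world and $R^{r}_{A_2}=\{(A,t),(B,t)\}$. With $\chi_1=\{(\ast,\tau,r_p),(\ast,\tau,r_q)\}$ and $\chi_e=\varnothing$ one has $(\ast,\tau,t)\in\gamma^{\mathsf N}_{A_2\bcirc A_1}(\chi_1,\varnothing)$, yet your pullback $\chi_2=\gamma^{\mathsf N}_{A_1}(\chi_1,\varnothing)$ is empty: no single level-$2$ reply is $R^{r}_{A_1}$-compatible with $\chi_1$ simultaneously at world $p$ and at world $q$. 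Hence any $\chi_e'$ making your condition~(i) hold must absorb $(\ast,\tau)$, and then the glueing step collapses because the $M_1$-error side (built from $\chi_e=\varnothing$) is empty. With identity pass-through modules and $B_i=A_i$, both hypotheses and the conclusion of the theorem hold trivially here, so this is not a counterexample to the theorem---only to your proof of it.

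A secondary issue: your $\chi_e'$ is an \emph{existential} pullback along $A_1$, but in the glueing step the $\mathsf{cll}$-clause of the second hypothesis hands you a level-$1$ query $q_1$ related to $q_2$ via $R^{q}_{B_1}$, not $R^{q}_{A_1}$. To route that $q_1$ into $\chi_e$ you need $\chi_e'\subseteq\gamma^{\mathsf F}_{B_1}(\varnothing,\chi_e)$, i.e.\ a \emph{universal} pullback along $B_1$---neither the quantifier nor the relation you wrote. This is fixable in isolation, but fixing it shrinks $\chi_e'$ and only aggravates the first gap. Your appeal to ``coherence properties of the CompCertO-style KIRs'' may well be where the actual mechanized proof earns its keep (the concrete conventions built on \textsf{injp} are far more rigid than an arbitrary KIR), but as the theorem is stated for arbitrary $A_i,B_i$, that appeal does not discharge the obligation as written.
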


Since the semantics of a function (e.g., Clight) is defined from the semantics of its function body (i.e., a statement), the correctness of function-level transformations can be derived from the correctness of statement-level transformations. We build the behavior refinement for open statements as follows, in a similar form to the behavior refinement for open modules.

\begin{example} [Behavior refinement for open statement of CompCert]
\label{ex:bref-open-stmt}
A statement of an open module \(M\) may call functions external to \(M\) (i.e., open). Suppose that 
a statement \(s\) in a source module \(M_s\) is compiled to a statement \(t\) in the compiled module \(M_t\). 
Given a Kripke relation  \(R_B: \mcal{K}{W_B}(\kstate_s, \kstate_t)\) between the source and target program states, and an accessibility relation \(\leadsto_B \;\subseteq\;\! W_B \times W_B\) between worlds. The behavior refinement between \(s\) and \(t\), denoted by \(\sem{t}_{\chi_t} \precsim_{R_B} \sem{s}_{\chi_s}^{\chi_e}\), is defined as follows, 
where the semantic oracles \(\chi_s, \chi_e\) and \(\chi_t\) are explained in Example~\ref{ex:bref-open-module}.
\begin{align*} %
&\sem{t}_{\chi_t} \precsim_{R_B}
\sem{s}_{\chi_s}^{\chi_e} \DEF
  \text{let }
    \efof{\chi_s}{\chi_e}{s}{err} \DEF
    \fof{\chi_s}{s}{err} \cup (\fof{\chi_s}{s}{cll} \circ \chi_{e})  
  \text{ in}
    \\
    &\quad\begin{aligned}
        \forall w_B\ \sigma_t\ \tau\ \sigma'_t\ \sigma_s,\
        &(\sigma_t, \tau, \sigma'_t) \in \fof{\chi_t}{t}{nrm} \Rightarrow
        (\sigma_s, \sigma_t) \in R_B(w_B) \Rightarrow 
        \\
        &\big(\exists w_B'\ \sigma'_s, (\sigma_s, \tau, \sigma'_s) \in \fof{\chi_s}{s}{nrm}
        \wedge w_B \leadsto_B w_B'
        \wedge (\sigma'_s, \sigma'_t) \in R_B(w_B')\big)
        \ \vee
        \\
        &\big(\exists \tau_0, (\sigma_s, \tau_0) \in \efof{\chi_s}{\chi_e}{s}{err} \wedge \tau_0 \leqslant_T \tau\big);
    \end{aligned}
    \\
    &\quad\begin{aligned}
        \forall w_B\ \sigma_t\ \tau\ \sigma_s,\:\!
        (\sigma_t, \tau) \in \fof{\chi_t}{t}{err} \Rightarrow
        (\sigma_s, \sigma_t) \in R_B(w_B) \Rightarrow 
        \big(\exists \tau_0, (\sigma_s, \tau_0) \in \efof{\chi_s}{\chi_e}{s}{err} \wedge \tau_0 \leqslant_T \tau\big);
    \end{aligned}
    \\
    &\quad\begin{aligned}
        \forall w_B\ \sigma_t\ \tau\ \sigma_s,\
        &(\sigma_t, \tau) \in \fof{\chi_t}{t}{\ttfin} \Rightarrow
        (\sigma_s, \sigma_t) \in R_B(w_B) \Rightarrow 
        \\
        & (\sigma_s, \tau) \in \fof{\chi_s}{s}{\ttfin} \vee
        \big(\exists \tau_0, (\sigma_s, \tau_0) \in \efof{\chi_s}{\chi_e}{s}{err} \wedge \tau_0 \leqslant_T \tau\big);
    \end{aligned}
    \\
    &\quad\begin{aligned}
        \forall w_B\ \sigma_t\ \tau\ \sigma_s,\
        &(\sigma_t, \tau) \in \fof{\chi_t}{t}{\ttinf} \Rightarrow
        (\sigma_s, \sigma_t) \in R_B(w_B) \Rightarrow 
        \\
        & (\sigma_s, \tau) \in \fof{\chi_s}{s}{\ttinf} \vee
        \big(\exists \tau_0, (\sigma_s, \tau_0) \in \efof{\chi_s}{\chi_e}{s}{err} \wedge \tau_0 \leqslant_T \tau\big).
    \end{aligned}
\end{align*}
Example~\ref{ex:bref_compcert} is a special case of Example~\ref{ex:bref-open-stmt} if we define $R_B \DEF \lambda w_B. R$ and $\leadsto_B \;\DEF W_B \times W_B$. In our \Coq implementation, we verify the statement transformation of the CompCert front-end with this general behavior refinement. 
We show a concrete nontrivial Kripke relation for Cminorgen in \S\ref{subsec:krel-cminorgen}.

\end{example}

\begin{example} [Behavior refinement for typical optimizations]
\label{ex:optimize}
Consider a standard constant propagation optimization on the PCALL language, where the variables known to have a constant value will be substituted with the constant.
At the optimization level, the algorithm first produces static analysis results, telling which variables hold constant values before and after every statement,
 and then generates the target code by replacing the variable with its constant. For example:
\begin{lstlisting}[language=Coq]
    source code $c$: x = 1; y = x + 1
    analysis: [$\dgcolor{\pi_0:= \_ \mapsto \bot}$] x =1; [$\dgcolor{\pi_1$:= x $\mapsto$ const 1}] y = x + 1 [$\dgcolor{\pi_2$:= x $\mapsto$ const 1; y $\mapsto$ const 2}]  
    target code $\mathcal{T}_\ttt{cp}(c)$: x = 1; y = 2
\end{lstlisting}
At the verification level, behavior refinement is based on the analysis results. We use $\Pi$ to denote the set of analysis results that map each variable to the constant value if the variable is constant, or $\bot$ if not.
We use $R_\pi$ to denote the state-matching relation between source and target program states. The relation $R_\pi$ is parameterized by the analysis result $\pi$, which requires that
the value of a variable on the source state respects its value on the analysis result $\pi$.
Assume that the source code $c$ is optimized to the target code $\mathcal{T}_\ttt{cp}(c)$ by standard constant propagation, generating analysis results $\pi_1$ at the end of $c$ with initially known result\footnote{
All the variables are mapped to the unknown value $\bot$ if we know nothing.
} $\pi_0$. We define $\sem{\mathcal{T}_\ttt{cp}(c)} \precsim_{(\pi_0, \pi_1)} \sem{c}$ as follows.
\begin{align*}
&\begin{aligned} %
    \forall \sigma_t\ \sigma'_t\ \sigma_s,\
    (\sigma_t, \sigma'_t) \in &\,\fof{}{\mathcal{T}_\ttt{cp}(c)}{nrm} \Rightarrow
    (\sigma_s, \sigma_t) \in R_{\pi_0} \Rightarrow 
    \\
    &\bigl(\exists \sigma'_s,\,\ (\sigma_s, \sigma'_s) \in \fof{}{c}{nrm}
    \wedge (\sigma'_s, \sigma'_t) \in R_{\pi_1}\bigr)
    \vee \sigma_s \in \fof{}{c}{err};
\end{aligned}
\\
&\begin{aligned} %
    \forall \sigma_t\ \sigma_s,\
    \sigma_t \in \fof{}{\mathcal{T}_\ttt{cp}(c)}{err} \Rightarrow
    (\sigma_s, \sigma_t) \in R_{\pi_0} \Rightarrow 
    \sigma_s \in \fof{}{c}{err};
\end{aligned}
\\
&\begin{aligned} %
    \forall \sigma_t\ \sigma_s,\
    \sigma_t \in \fof{}{\mathcal{T}_\ttt{cp}(c)}{dvg} \Rightarrow
    (\sigma_s, \sigma_t) \in R_{\pi_0} \Rightarrow \sigma_s \in \fof{}{c}{dvg} \ \vee\ \sigma_s \in \fof{}{c}{err};
\end{aligned}
\end{align*}

\paragbf{Summary and our solution}  
Examples~\ref{ex:bref_compcert} $\sim$~\ref{ex:optimize} have shown that nontrivial state-matching relations are required in realistic settings.
To extend the verification technique of the first two examples to such settings, we next propose the notion of Refinement Algebras (RAs). The key idea is to employ isomorphism functions to establish a structured correspondence between the denotational semantics of source and target programs, and then exploit the algebraic properties of semantic operators and RAs to achieve compositionality and to enable mechanized verification.

\end{example}

\subsection{Refinement Algebras}
\label{subsec:ref_algebra}

{ Our proof system is based on a notion of refinement algebras, whose definition is shown as follows.}

\begin{definition}
\label{def:refinement_algebra}
[Refinement algebras]
    A Refinement Algebra (RA) is a tuple $ (\mS, \mE, \mT, \gamma)$, where
    \begin{itemize}
        \item $\mS, \mE$, $\mT$ are behavior sets. Specifically, the set $\mS$ represents one kind of safe behaviors of a source program (e.g., termination), and the set $\mT$ denotes the corresponding behavior set of the target program. The set $\mE$ denotes the aborting behavior set of the source program;

        \item $\gamma: \mS \times \mE \rightarrow \mT$ is a \emph{monotonic} function transforming source behavior sets to the target:
        \begin{align}
        \forall N_1\ N_2\in \mS,\, \forall E_1\ E_2 \in \mE,\,
            N_1\subseteq N_2 \Rightarrow E_1\subseteq E_2 \Rightarrow \gamma (N_1, E_1) \subseteq \gamma(N_2, E_2)
        \end{align}
    \end{itemize}
\end{definition}

To illustrate how RAs encode various transformation correctness discussed in \S\ref{subsec:ref_examples}, we begin with the following RA instances\footnote{
For clarity of presentation, we use subscripts to distinguish gamma functions from different refinement algebra instances. In \Coq, the abstract refinement algebra (Def.~\ref{def:refinement_algebra}) is declared as a type class, and concrete refinement algebras can be instantiated for specific types. When applying an instance to define the refinement between behavior sets, Rocq’s type class resolution mechanism automatically selects the appropriate instance of gamma according to the type of its arguments.
} for Example~\ref{ex:bref_abort}, where $\pset{A}$ is the power set of a given set $A$.

\begin{example} [Termination instance 1]

\(\msal{\mA}{\kN}{1} \DEF 
(\msal{\mS}{\kN}{1},
 \msal{\mE}{\kN}{1},
 \msal{\mT}{\kN}{1},
 \gmm{\kN}{1}) \), where for a given state set \(\kstate\), 
 $\msal{\mE}{\kN}{1} \DEF \pset{\kstate}$,
 $\msal{\mS}{\kN}{1} = \msal{\mT}{\kN}{1} 
  \DEF
  \pset{\kstate \times \kstate}$.
 \(\forall N \in \msal{\mS}{\kN}{1}\),\,
 \(        E \in \msal{\mE}{\kN}{1}\),\, 
 \(
 \gmm{\kN}{1}(N, E) \DEF N \cup (E \times\kstate).
 \)
\end{example}

\begin{example} [Divergence instance 1]
\(\msal{\mA}{\kD}{1} \DEF 
(\msal{\mS}{\kD}{1},
 \msal{\mE}{\kD}{1},
 \msal{\mT}{\kD}{1},
 \gmm{\kD}{1}) \), where for a given state set \(\kstate\),
 $\msal{\mE}{\kD}{1} =
  \msal{\mS}{\kD}{1} = 
  \msal{\mT}{\kD}{1}  
  \DEF
  \pset{\kstate}$.
 \(\forall N \in \msal{\mS}{\kD}{1}\),\,
 \(        E \in \msal{\mE}{\kD}{1}\),\, 
 \(
 \gmm{\kD}{1}(N, E) \DEF N \cup E.
 \)   
\end{example}

With the RA instances \(\msal{\mA}{\kN}{1}\) and \(\msal{\mA}{\kD}{1}\), we can encode \(\sem{\compile(c)} \refines \sem{c}\) of Example~\ref{ex:bref_abort} as follows.
\begin{align}
    \sem{\mathcal{T}(c)} \sqsubseteq \sem{c} \DEF \left\{
    \begin{aligned}
    &\sem{\mathcal{T}(c)}.(\knrm) \subseteq \gmm{\kN}{1}(\sem{c}.(\knrm), \sem{c}.(\kerr)) \text{ and } \\
    &\sem{\mathcal{T}(c)}.(\kdvg) \subseteq 
    \gmm{\kD}{1}(\sem{c}.(\kdvg), \sem{c}.(\kerr)) \text{ and } \\
    &\sem{\mathcal{T}(c)}.(\kerr) \subseteq 
    \gmm{\kD}{1}(\oset, \sem{c}.(\kerr))
    \end{aligned}
    \right.
\end{align}

\begin{example} [Termination instance 2] 
\(\msal{\mA}{\kN}{2} \DEF 
(\msal{\mS}{\kN}{2},
 \msal{\mE}{\kN}{2},
 \msal{\mT}{\kN}{2},
 \gmm{\kN}{2}) \), where for a given state set \(\kstate\), 
 $\msal{\mE}{\kN}{2} \DEF
  \pset{\kstate \times \ktrace}$, 
 $\msal{\mS}{\kN}{2} =
  \msal{\mT}{\kN}{2} \DEF
  \pset{\nrmrel{\kstate}}$.
 \(\forall N \in \msal{\mS}{\kN}{2}\),\,
 \(        E \in \msal{\mE}{\kN}{2}\),\, 
\begin{align*}
&\begin{aligned} %
  \gmm{\kN}{2}(N, E) \DEF
    \Set{(\sigma, \tau, \sigma') |
    \big((\sigma, \tau, \sigma') \in N \big) \vee
    (\exists \tau_0, (\sigma, \tau_0) \in E \wedge \tau_0 \leqslant_T \tau)}
\end{aligned}
\end{align*}
\end{example}

\begin{example} [Finite diverging instance 2]
\(\msal{\mA}{\kF}{2} \DEF 
(\msal{\mS}{\kF}{2},
 \msal{\mE}{\kF}{2},
 \msal{\mT}{\kF}{2},
 \gmm{\kF}{2}) \), where for a given state set \(\kstate\), 
 $\msal{\mE}{\kF}{2} =
  \msal{\mS}{\kF}{2} =
  \msal{\mT}{\kF}{2}\DEF
  \pset{\kstate \times \ktrace}$.
 \(\forall N \in \msal{\mS}{\kF}{2}\),\,
 \(        E \in \msal{\mE}{\kF}{2}\),\, 
\begin{align*}
    &\begin{aligned} %
  \gmm{\kF}{2}(N, E) \DEF
    \Set{(\sigma, \tau) |
    (\sigma, \tau) \in N \vee
    \left(\exists \tau_0, (\sigma, \tau_0) \in E \wedge \tau_0 \leqslant_T \tau\right)}
\end{aligned}
\end{align*}

\end{example}

\begin{example} [Infinite diverging instance 2]
\(\msal{\mA}{\kI}{2} \DEF 
(\msal{\mS}{\kI}{2},
 \msal{\mE}{\kI}{2},
 \msal{\mT}{\kI}{2},
 \gmm{\kI}{2}) \), where for a given state set \(\kstate\), 
 $\msal{\mE}{\kI}{2} =
  \msal{\mS}{\kI}{2} =
  \msal{\mT}{\kI}{2}\DEF
  \pset{\kstate \times \kitrace}$.
 \(\forall N \in \msal{\mS}{\kI}{2}\),\,
 \(        E \in \msal{\mE}{\kI}{2}\),\, 
\begin{align*}
    &\begin{aligned} %
  \gmm{\kI}{2}(N, E) \DEF
    \Set{(\sigma, \tau) |
    (\sigma, \tau) \in N \vee
    \left(\exists \tau_0, (\sigma, \tau_0) \in E \wedge \tau_0 \leqslant_T \tau\right)}
\end{aligned}
\end{align*}
\end{example}

With the RA instances \(\msal{\mA}{\kN}{2}\), \(\msal{\mA}{\kF}{2}\) and \(\msal{\mA}{\kI}{2}\), we can encode \(\sem{t} \precsim \sem{s}\) of Example~\ref{ex:bref_Cshmgen} as follows.
\begin{align}
  \sem{t} \precsim \sem{s} \DEF \left\{
  \begin{aligned}
    &\fof{}{t}{nrm} \subseteq
      \gmm{\kN}{2}(\fof{}{s}{nrm}, 
         {\fof{}{s}{err}})
    \text{ and }
    \fof{}{t}{err} \subseteq \gmm{\kF}{2} (\oset, \fof{}{s}{err}) \text{ and }
    \\
    &\fof{}{t}{\ttfin} \subseteq \gmm{\kF}{2} (\fof{}{s}{\ttfin}, \fof{}{s}{err})
    \text{ and }
    \\
    &\fof{}{t}{\ttinf} \subseteq \gmm{\kI}{2} (\fof{}{s}{\ttinf}, 
    \fof{}{s}{err})
  \end{aligned}
  \right.
\end{align}

{In fact, we can provide more general RA instances to encode nontrivial behavior refinement.}
For instance, recall that we define the preservation of terminating behaviors in Example~\ref{ex:bref_compcert} as follows.
\begin{align}
\begin{aligned} %
    \sem{t}.(\knrm) \precsim_R\; & (\sem{s}.(\knrm), \sem{s}.(\kerr))
    \DEF
    \forall \sigma_t\ \tau\ \sigma'_t\ \sigma_s,\
    (\sigma_t, \tau, \sigma'_t) \in \fof{}{t}{nrm} \Rightarrow
    (\sigma_s, \sigma_t) \in R \Rightarrow 
    \\
    &\big(\exists \sigma'_s, (\sigma_s, \tau, \sigma'_s) \in \fof{}{s}{nrm}
    \wedge (\sigma'_s, \sigma'_t) \in R\big) \vee
    \left(\exists \tau_0, (\sigma_s, \tau_0) \in \fof{}{s}{err} \wedge \tau_0 \leqslant_T \tau\right);
\end{aligned}
\end{align}
We aim to define a termination instance (as given by Example~\ref{ins:compcert-terminate}) of refinement algebras such that
\[\sem{t}.(\knrm) \subseteq \gamma_R(\sem{s}.(\knrm), \sem{s}.(\kerr)) \iff \sem{t}.(\knrm) \precsim_R (\sem{s}.(\knrm), \sem{s}.(\kerr))\]  

\begin{example} [Termination instance 3]
\label{ins:compcert-terminate}
\(\msal{\mS}{\kN}{3} \DEF \pset{\kstate_s \times \ktrace \times \kstate_s} \), \(\msal{\mE}{\kN}{3} \DEF \pset{\kstate_s \times \ktrace}\), and \(\msal{\mT}{\kN}{3} \DEF \pset{\kstate_t \times \ktrace \times \kstate_t}\). For a given relation \(R \subseteq \kstate_s \times \kstate_t\), \(\forall N \in \msal{\mS}{\kN}{3},\, \forall E \in \msal{\mE}{\kN}{3},\)
\begin{align}
    &  \gmm{\kN}{3}(N, E) \DEF \Set{(\sigma_t, \tau, \sigma'_t) | 
     \forall \sigma_s,
     (\sigma_s, \sigma_t) \in R \Rightarrow\, 
    \begin{aligned}
     &(\exists \sigma'_s, (\sigma_s, \tau, \sigma'_s) \in N
     \wedge (\sigma'_s, \sigma'_t) \in R)
     \\
     &  \vee 
    \left(\exists \tau_0, (\sigma_s, \tau_0) \in E \wedge \tau_0 \leqslant_T \tau\right)   
    \end{aligned}
     }
\end{align}

\end{example}

Similarly, we can define RA instances for finite divergence and infinite divergence as follows. 
\begin{example} [Finite divergence instance 3] \(\msal{\mS}{\kF}{3} \DEF \pset{\kstate_s \times \ktrace } \), \(\msal{\mE}{\kF}{3} \DEF \pset{\kstate_s \times \ktrace}\), and \(\msal{\mT}{\kF}{3} \DEF \pset{\kstate_t \times \ktrace}\). For a given relation \(R \subseteq \kstate_s \times \kstate_t\), \(\forall N \in \msal{\mS}{\kF}{3},\, \forall E \in \msal{\mE}{\kF}{3},\)
\begin{align}
    &  \gmm{\kF}{3}(N, E) \DEF \Set{(\sigma_t, \tau) | 
     \forall \sigma_s,
     (\sigma_s, \sigma_t) \in R \Rightarrow\, 
     (\sigma_s, \tau) \in N \vee 
     (\exists \tau_0, (\sigma_s, \tau_0) \in E \wedge \tau_0 \leqslant_T \tau) }
\end{align}
\end{example}
\begin{example} [Infinite divergence instance 3] \(\msal{\mS}{\kI}{3} \DEF \pset{\kstate_s \times \kitrace } \), \(\msal{\mE}{\kI}{3} \DEF \pset{\kstate_s \times \ktrace}\), and \(\msal{\mT}{\kI}{3} \DEF \pset{\kstate_t \times \kitrace}\). For a given relation \(R \subseteq \kstate_s \times \kstate_t\), \(\forall N \in \msal{\mS}{\kI}{3},\, \forall E \in \msal{\mE}{\kI}{3},\)
\begin{align}
    &  \gmm{\kI}{3}(N, E) \DEF \Set{(\sigma_t, \tau) | 
     \forall \sigma_s,
     (\sigma_s, \sigma_t) \in R \Rightarrow\, 
     (\sigma_s, \tau) \in N \vee 
     (\exists \tau_0, (\sigma_s, \tau_0) \in E \wedge \tau_0 \leqslant_T \tau) }
\end{align}
\end{example}

With the RA instances \(\msal{\mA}{\kN}{3}=
(\msal{\mS}{\kN}{3}, \msal{\mE}{\kN}{3}, \msal{\mT}{\kN}{3}, \gmm{\kN}{3})\), \(\msal{\mA}{\kF}{3}=
(\msal{\mS}{\kF}{3}, \msal{\mE}{\kF}{3}, \msal{\mT}{\kF}{3}, \gmm{\kF}{3})\), and \(\msal{\mA}{\kI}{3}=
(\msal{\mS}{\kI}{3}, \msal{\mE}{\kI}{3}, \msal{\mT}{\kI}{3}, \gmm{\kI}{3})\), we can encode \(\sem{t} \precsim_R \sem{s}\) of Example~\ref{ex:bref_compcert} as follows, and obviously we have \(\sem{t} \refines_R \sem{s} \iff \sem{t} \precsim_R \sem{s}\).
\begin{align}
    \sem{t} \refines_R \sem{s} \DEF \left\{
    \begin{aligned}
        &\sem{t}.(\knrm) \subseteq
        \gmm{\kN}{3}(\sem{s}.(\knrm), \sem{s}.(\kerr)) \text{ and }
        \sem{t}.(\kerr) \subseteq
        \gmm{\kF}{3}(\oset, \sem{s}.(\kerr)) \text{ and }
        \\
        &\sem{t}.(\kfin) \subseteq
        \gmm{\kF}{3}(\sem{s}.(\kfin), \sem{s}.(\kerr)) \text{ and }
        \\
        &\sem{t}.(\kinf) \subseteq
        \gmm{\kI}{3}(\sem{s}.(\kinf), \sem{s}.(\kerr))
    \end{aligned}
    \right.
\end{align}

From these RA instances, we observe that the RA-based formulation, via the $\gamma$ function, provides an algebraic characterization of the correspondence between source and target behaviors. This formulation not only offers a uniform and compositional foundation for proving compiler correctness, but also enables the extension of algebraic verification methods from simple cases to more realistic compilation settings. Before turning to more complex RA instances, we first proceed to introduce enhanced RA structures to further consolidate the algebraic underpinnings of our framework.

\subsection{Enhanced Refinement Algebras}
\label{subsec:ra_enriched}
Since the union and composition operators are ubiquitous in the semantic definition of composite statements, we define the following enhanced RAs to facilitate structured compiler verification.

\begin{definition} [Union-preserved RAs]
    A refinement algebra \(\mathcal{A}=(\mS, \mE, \mT, \gamma)\) is \emph{union-preserved} iff:
    \begin{align}
        \forall N_1\,N_2\in \mS,\; \forall E_1\, E_2 \in \mE,\,
        \gamma (N_1, E_1) \cup \gamma (N_2, E_2)
        \subseteq \gamma (N_1 \cup N_2, E_1 \cup E_2) \label{prop:gunion}
    \end{align}
\end{definition}

\begin{definition} [Concatenation-preserved RAs]
    Refinement algebras \(\mcal{A}{1}=(\mS_1, \mE_1, \mT_1, \gamma_1)\) and \(\mcal{A}{2}=(\mS_2, \mE_2, \mT_2, \gamma_2)\) are \emph{concatenation-preserved} to refinement algebra \(\mcal{A}{3} = (\mS_3, \mE_1, \mT_3, \gamma_3) \) iff:
    given operators
    \(\circ: \mT_1 \times \mT_2 \arrow \mT_3\),
    \(\circ: \mS_1 \times \mS_2 \arrow \mS_3\), and
    \(\circ: \mS_1 \times \mE_2 \arrow \mE_1\) s.t. for any
    \( N_1,\, N_2,\, E_1, E_2\),
    \begin{align}
    \gamma_1(N_1, E_1) \circ
    \gamma_2(N_2, E_2) \subseteq
    \gamma_3\big(N_1 \circ N_2, E_1 \cup( N_1 \circ E_2)\big) \label{prop:gconcat}
    \end{align}
    Particularly, if RAs \(\mathcal{A}\) and \(\mathcal{A}\) are concatenation-preserved to \(\mathcal{A}\), \(\mathcal{A}\) is \emph{self-concatenation-preserved}. 
    {Note that for a given RA \((\mS, \mE, \mT, \gamma)\), writing \( \gamma(N, E)\) implicitly means that \(N \in \mS \) and \(E \in \mE \) . }
\end{definition}

\begin{lemma} [Instances of enhanced refinement algebras]
\label{lemma:simple-instance-prop}
     \( \msal{\mA}{\kN}{i}\), \(\msal{\mA}{\kF}{i}\), and \( \msal{\mA}{\kI}{i} \) (defined in \S\ref{subsec:ref_algebra}) are union-preserved refinement algebras;
    (ii)  \( \msal{\mA}{\kN}{i} \) is self-concatenation-preserved;
    (iii) \(\msal{\mA}{\kN}{i}\) and \(\msal{\mA}{\kF}{i}\) are
        concatenation-preserved to \(\msal{\mA}{\kF}{i}\); and
    (iv) \(\msal{\mA}{\kN}{i}\) and \(\msal{\mA}{\kI}{i}\) are
        concatenation-preserved to \(\msal{\mA}{\kI}{i}\) for \(i=1,2,3\).
    {Concrete operators for these RA instances are provided in \S\ref{subsec:while_lang} according to the type of behavior sets.}
\end{lemma}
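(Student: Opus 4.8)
The plan is to verify, instance by instance, the three kinds of statements in the lemma: (a) that each $\gamma$ is monotone, so the tuple is a genuine refinement algebra in the sense of Definition~\ref{def:refinement_algebra}; (b) that each instance satisfies Formula~(\ref{prop:gunion}); and (c) that the stated triples satisfy Formula~(\ref{prop:gconcat}) for suitable overloaded compositions. The main simplification is that the families indexed by $i=1,2,3$ share a common shape: $\gmm{\kN}{i}(N,E)$, $\gmm{\kF}{i}(N,E)$, $\gmm{\kI}{i}(N,E)$ are each built from $N$ and $E$ by either (for $i=1,2$) a union of the ``good part'' $N$ with an ``error part'' manufactured from $E$, or (for $i=3$) a universally quantified disjunction of a ``good'' clause about $N$ and an ``error-prefix'' clause about $E$. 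Monotonicity in both arguments is then immediate, since every constituent is monotone in $N$ and in $E$ and the relevant set comprehensions preserve inclusions; this discharges the ``refinement algebra'' part, so that $\msal{\mA}{\kN}{i}$, $\msal{\mA}{\kF}{i}$, $\msal{\mA}{\kI}{i}$ are indeed RAs.

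\textbf{Union-preservation.}
For $i=1,2$ the inclusion of Formula~(\ref{prop:gunion}) is actually an equality: $\cup$, $\times$, and the prefix-based comprehensions all commute with $\cup$, so distributing the union through the definition gives $\gamma(N_1,E_1)\cup\gamma(N_2,E_2)=\gamma(N_1\cup N_2,E_1\cup E_2)$ directly. For $i=3$ the inclusion is genuine and not an equality. Given an element of $\gmm{\kN}{3}(N_1,E_1)\cup\gmm{\kN}{3}(N_2,E_2)$, say lying in the first summand, its defining property says that for every source state related by $R$ there is either a witness in $N_1$ or an error prefix in $E_1$; weakening along $N_1\subseteq N_1\cup N_2$ and $E_1\subseteq E_1\cup E_2$ yields membership in $\gmm{\kN}{3}(N_1\cup N_2,E_1\cup E_2)$. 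The argument for $\gmm{\kF}{3}$ and $\gmm{\kI}{3}$ is identical.

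\textbf{Concatenation-preservation.}
This is the only part with real content. For a triple $(\mcal{A}{1},\mcal{A}{2})\to\mcal{A}{3}$ I take the three composition operators to be exactly those of \S\ref{subsec:while_lang} and \S\ref{sec:semantics_front}: the ternary--ternary composition $R_1\circ R_2$ serving $\mT_1\times\mT_2\to\mT_3$ and $\mS_1\times\mS_2\to\mS_3$, and the ternary--binary (``overloaded'') composition $R\circ Y$ serving $\mS_1\times\mE_2\to\mE_1$, with $Y$ ranging over finite- or infinite-trace sets according to whether $\mcal{A}{3}$ is the $\kF$ or $\kI$ instance. The proof is a case split on which branch each factor uses. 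For $i=1$: distributing $\circ$ over $\cup$ splits $\gmm{\kN}{1}(N_1,E_1)\circ\gmm{\kN}{1}(N_2,E_2)$ into four pieces; $N_1\circ N_2$ falls into the $N$-slot of the target, and each of the three error pieces is absorbed into $(E_1\cup(N_1\circ E_2))\times\kstate$ via the identities $(E\times\kstate)\circ R\subseteq E\times\kstate$ and $N\circ(E\times\kstate)=(N\circ E)\times\kstate$. For $i=2,3$ the same split is threaded through trace concatenation: if both factors take the $N$-branch the result lies in $N_1\circ N_2$; if the first takes the $N$-branch and the second the error-prefix branch with $\tau_0'\leqslant_T\tau_2$, then $\tau_1\cdot\tau_0'\leqslant_T\tau_1\cdot\tau_2=\tau$ and $(\sigma_1,\tau_1\cdot\tau_0')\in N_1\circ E_2$, giving membership through the $N_1\circ E_2$ part of the error slot; and if the first factor is already in the error-prefix branch, then $\tau_0\leqslant_T\tau_1\leqslant_T\tau$ settles it by transitivity of prefixes. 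For $i=3$ one additionally carries the state-matching relation $R$ across the split point: the $N$-branch of the first factor supplies an intermediate source state related to the intermediate target state, from which the second factor's property is invoked. Parts (ii)--(iv) then follow by instantiating the triple as $(\msal{\mA}{\kN}{i},\msal{\mA}{\kN}{i})\to\msal{\mA}{\kN}{i}$, $(\msal{\mA}{\kN}{i},\msal{\mA}{\kF}{i})\to\msal{\mA}{\kF}{i}$, and $(\msal{\mA}{\kN}{i},\msal{\mA}{\kI}{i})\to\msal{\mA}{\kI}{i}$ (for $i=1$ the divergence instance plays both the $\kF$ and $\kI$ roles and the trace components disappear).

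\textbf{Main obstacle.}
There is no deep difficulty here; the work is essentially bookkeeping. The part most prone to error, and the one I would be most careful about, is keeping the several overloaded $\circ$ operators and their source/target/error types aligned in Formula~(\ref{prop:gconcat}), together with the nested case analysis for the $i=3$ instances, where the state-matching relation $R$ must be threaded correctly through the intermediate state of a composed execution and the error-prefix witness must be extended by left-concatenation with $\tau_1$ (relying on $\leqslant_T$ behaving well, including on finite prefixes of infinite traces in the $\kI$ case).
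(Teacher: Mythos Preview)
Your proposal is correct. The paper itself does not give a written proof of this lemma: it is stated as a fact (with the formalization done in \Coq) and immediately followed by an illustration of how the concatenation-preservation property is \emph{used} in the sequential-statement case. Your direct unfolding---monotonicity by inspection, union-preservation by distributing $\cup$ through the defining formulas (equality for $i=1,2$, genuine inclusion for $i=3$ because of the universal quantifier over related source states), and concatenation-preservation by a four-way case split on which disjunct each factor lands in---is exactly the routine verification one expects a proof assistant to carry out, and your handling of the trace-prefix bookkeeping (left-extending the error witness by $\tau_1$, transitivity of $\leqslant_T$, threading $R$ through the intermediate state for $i=3$) is right. Your remark that for $i=1$ the single divergence instance $\msal{\mA}{\kD}{1}$ stands in for both the $\kF$ and $\kI$ roles is also the correct reading of the lemma's slightly loose indexing.
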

\begin{figure}[t]
    \centering
    \includegraphics[width=0.8\linewidth]{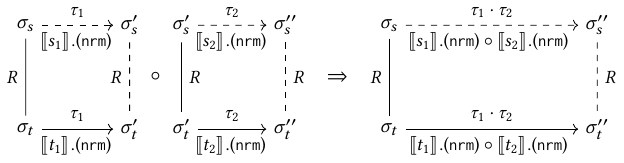}
    \caption{The behavior refinement for CompCert is compositional w.r.t. composition.}
    \label{fig:refine_concat}
    \Description{This figure schematically illustrates that CompCert-style behavior refinement is preserved by sequential composition for normal termination. The figure visually summarizes the concatenation-preservation property used to prove that refinement of sequential statements follows from refinement of their components.}
\end{figure}

We first provide a brief preview of how the enhanced RA instances from Lemma~\ref{lemma:simple-instance-prop} enable a uniform refinement proof (see \S\ref{sec:compose_correctness} for details). For a structural transformation of sequential statements (as in Example~\ref{ex:bref_while}), the proof proceeds as follows: the crucial step \((\ref{pr:ind_hypo}) \subseteq (\ref{pr:concat_incl})\) follows immediately from the concatenation-preservation property of the underlying RA instances. Fig.~\ref{fig:refine_concat} provides a schematic illustration of this property for normally terminating behaviors in CompCert.
\begin{align}
  &\fof{}{t_1;t_2}{nrm} =
   \fof{}{t_1}{nrm} \circ
  \fof{}{t_2}{nrm} \label{pr:def_tgt}
  \\
  &\begin{aligned} \subseteq
  \gmm{\kN}{i}(\fof{}{s_1}{nrm},\ \fof{}{s_1}{err}) \circ
  \gmm{\kN}{i}(\fof{}{s_2}{nrm},\ \fof{}{s_2}{err})
   \end{aligned} \label{pr:ind_hypo}
  \\
  &\begin{aligned} \subseteq
     \gmm{\kN}{i}(\fof{}{s_1}{nrm} \circ \fof{}{s_2}{nrm}, \
           \fof{}{s_1}{err} \cup
           (\fof{}{s_1}{nrm} \circ \fof{}{s_2}{err}))
    \end{aligned} \label{pr:concat_incl}
   \\
    &\begin{aligned} =
    \gmm{\kN}{i}(\fof{}{s_1;s_2}{nrm},
           \fof{}{s_1;s_2}{err})
  \end{aligned} \label{pr:src_def}
\end{align}

Since the star $*$ and infinity $\infty$ operators are ubiquitous in the semantic definition of loops and recursion, we define the following enhanced RAs to facilitate structured compiler verification.
\begin{definition} [Star-preserved RAs]  RAs \(\mcal{A}{1}=(\mS_1, \mE, \mT_1, \gamma_1)\) and \(\mcal{A}{2}=(\mS_2, \mE, \mT_2, \gamma_2)\) are \emph{star-preserved} to \(\mcal{A}{2} \) iff: given operators 
    \(\circ: \mT_1 \times \mT_2 \arrow \mT_2\),
    \(\circ: \mS_1 \times \mS_2 \arrow \mS_2\), and
    \(\circ: \mS_1 \times \mE \arrow \mE\),
    and star operators
    \(*: \mT_1 \arrow \mT_1\) and \(*: \mS_1 \arrow \mS_1\) such that for any
    \( N_1 \in \mS_1\),
    \( N_2 \in \mS_2\),
    \( E_1\) and \(E_2 \in \mE\),
    \begin{align}
        (\gamma_1(N_1, E_1))^* \circ \gamma_2(N_2, E_2)
        \subseteq \gamma_2 \big(N_1^* \circ N_2, N_1^* \circ (E_1 \cup E_2)
        \big) \label{prop:gstar}
    \end{align}
\end{definition}

\begin{definition} [Infinity-preserved RAs]
RAs \(\mcal{A}{1}=(\mS_1, \mE, \mT_1, \gamma_1)\) and \(\mcal{A}{2}=(\mS_2, \mE, \mT_2, \gamma_2)\) are \emph{infinity-preserved} to \(\mcal{A}{2} \) iff: given composition operators
    \(\circ: \mT_1 \times \mT_2 \arrow \mT_2\),
    \(\circ: \mS_1 \times \mS_2 \arrow \mS_2\), and
    \(\circ: \mS_1 \times \mE \arrow \mE\),
    star operators
    \(*: \mT_1 \arrow \mT_1\),
    \(*: \mS_1 \arrow \mS_1\),
    and infinity operators
    \(\infty: \mT_1 \arrow \mT_2\), 
    \(\infty: \mS_1 \arrow \mS_2\) 
    such that for any
    \( N_1 \in \mS_1\),
    \( N_2 \in \mS_2\), and
    \( E_1, E_2 \in \mE\),
    \begin{align}
        \big((\gamma_1(N_1, E_1))^* \circ \gamma_2(N_2, E_2)\big)
        \cup
        \bigl(\gamma_1(N_1, E_1)\bigr)^\infty
        \subseteq \gamma_2 \big((N_1^* \circ N_2) \cup N_1^\infty,
        N_1^* \circ (E_1 \cup E_2)\big)
    \label{prop:ginfty}
    \end{align}
\end{definition}
\paragbf{Connection to fixed points}
Given sets $X$ and $Y$, the form $X^*\circ Y$ is actually the Kleene (least) fixed point of function $f(x) = Y \cup (X \circ x)$, and $(X^* \circ Y) \cup X^\infty$ is the greatest fixed point of function $f$. They are frequently used in the semantic definition of loops and recursion (as shown in \S\ref{sec:semantics_nocall} and \S\ref{sec:semantics_pcall}), so we use enhanced RAs to facilitate the compilation correctness proof of loops and recursion.

\begin{lemma}[Instances of star/infinity-preserved refinement algebras]
\label{lemma:iteration-instance}
    (i) \(\msal{\mA}{\kN}{i}\) and \(\msal{\mA}{\kF}{i}\) (defined in \S\ref{subsec:ref_algebra}) are
        star-preserved to \(\msal{\mA}{\kF}{i}\);
     and
    (ii) \(\msal{\mA}{\kN}{i}\) and \(\msal{\mA}{\kI}{i}\) are
        infinity-preserved to \(\msal{\mA}{\kI}{i}\) for \(i=1,2,3\).
\end{lemma}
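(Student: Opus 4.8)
The plan is to prove both claims uniformly in $i \in \{1,2,3\}$ (reading $\msal{\mA}{\kF}{1}$ and $\msal{\mA}{\kI}{1}$ as the divergence instance $\msal{\mA}{\kD}{1}$ when $i=1$), since the three families of $\gamma$ differ only in how much structure they carry. Fix the data appearing in Properties~(\ref{prop:gstar}) and~(\ref{prop:ginfty}) and abbreviate $\Phi = \gmm{\kN}{i}(N_1,E_1)$, $\Psi_{\kF} = \gmm{\kF}{i}(N_2,E_2)$, $\Psi_{\kI} = \gmm{\kI}{i}(N_2,E_2)$. I will use two basic facts: the overloaded composition $\circ$ (on triples, and on triples against sets) is associative and distributes over arbitrary unions in each argument, so $\Phi^{*}\circ\Psi = \bigcup_{k\ge 0}(\Phi^{k}\circ\Psi)$; and $\gmm{\kN}{i},\gmm{\kF}{i},\gmm{\kI}{i}$ are monotone (Def.~\ref{def:refinement_algebra}) with their concatenations controlled by Lemma~\ref{lemma:simple-instance-prop}(iii)--(iv).

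For Part~(i) I would prove by induction on $k$ that $\Phi^{k}\circ\Psi_{\kF} \subseteq \gmm{\kF}{i}\bigl(N_1^{*}\circ N_2,\; N_1^{*}\circ(E_1\cup E_2)\bigr)$ for every $k$, and then take the union over $k$. The base case is $\Phi^{0}\circ\Psi_{\kF} = \Psi_{\kF} = \gmm{\kF}{i}(N_2,E_2)$, which is contained in the target by monotonicity together with $\idrel \subseteq N_1^{*}$. For the step, write $\Phi^{k+1}\circ\Psi_{\kF} = \Phi\circ(\Phi^{k}\circ\Psi_{\kF})$; the inductive hypothesis and monotonicity of $\circ$ reduce it to $\Phi\circ\gmm{\kF}{i}\bigl(N_1^{*}\circ N_2,\;N_1^{*}\circ(E_1\cup E_2)\bigr)$, and concatenation-preservation of $\msal{\mA}{\kN}{i}$ and $\msal{\mA}{\kF}{i}$ to $\msal{\mA}{\kF}{i}$ (Lemma~\ref{lemma:simple-instance-prop}(iii)) bounds this by $\gmm{\kF}{i}\bigl(N_1\circ N_1^{*}\circ N_2,\; E_1\cup(N_1\circ N_1^{*}\circ(E_1\cup E_2))\bigr)$. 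Since $N_1\circ N_1^{*}\subseteq N_1^{*}$ and $E_1\subseteq \idrel\circ(E_1\cup E_2)\subseteq N_1^{*}\circ(E_1\cup E_2)$, one more use of monotonicity closes the induction.

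For Part~(ii), the summand $\Phi^{*}\circ\Psi_{\kI}$ is handled exactly as in Part~(i), now invoking concatenation-preservation to $\msal{\mA}{\kI}{i}$ (Lemma~\ref{lemma:simple-instance-prop}(iv)) and the inclusion $N_1^{*}\circ N_2 \subseteq (N_1^{*}\circ N_2)\cup N_1^{\infty}$. The substantive part is $\Phi^{\infty}$, which I would attack element-wise. Any element of $\Phi^{\infty}$ comes with an infinite run $\sigma_0,\sigma_1,\dots$ and step traces $\tau_0,\tau_1,\dots$ with $(\sigma_j,\tau_j,\sigma_{j+1})\in\Phi$ for all $j$ and emitted trace $\tau = \tau_0\tau_1\cdots$. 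Unfolding $\gmm{\kN}{i}$, each step is either a \emph{good} step lying in $N_1$ (with a chosen related source successor when $i=3$) or an \emph{error} step exposing a prefix $\tau_j'\leqslant_T\tau_j$ whose (source) state is in $E_1$. If some step $k$ is an error step, the $k$ preceding good steps compose to an element of $N_1^{k}\subseteq N_1^{*}$ reaching that $E_1$-state, so $\tau_0\cdots\tau_{k-1}\tau_k' \in N_1^{*}\circ(E_1\cup E_2)$ and it is a prefix of $\tau$; this places the element in the error component of $\gmm{\kI}{i}$. Otherwise every step is good, and the good steps (together with the chosen source witnesses when $i=3$) assemble into an infinite $N_1$-run emitting exactly $\tau$, so the element lies in $N_1^{\infty}\subseteq(N_1^{*}\circ N_2)\cup N_1^{\infty}$. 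Combining the two summands yields Property~(\ref{prop:ginfty}).

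The main obstacle is the branch in which every step is good, specifically for the instance-$3$ algebra: because $\gmm{\kN}{3}$ quantifies universally over related source states and only existentially produces a related source successor, turning an infinite target-side run into an infinite source-side run requires (countable) dependent choice, and one must check that the per-step trace pieces---which $\gmm{\kN}{3}$ forces to coincide on source and target---reconcatenate to exactly the target trace $\tau$. One should also confirm that $\Phi^{\infty}$ is used here in the same (uniformly silent or uniformly nonsilent) sense as $(\cdot)^{\infty}$ in \S\ref{subsec:while_lang}--\S\ref{subsec:loop_trace}; this is inherited from how the enhanced-RA operators are instantiated and is orthogonal to the element-wise inclusions above. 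The cases $i=2$ (no Kripke relation, so no choice is needed) and $i=1$ (no traces at all, so the error side reduces to reachability of an $E_1$-state via $N_1^{*}$) are strictly simpler specializations of the same argument.
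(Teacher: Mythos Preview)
The paper does not give a textual proof of Lemma~\ref{lemma:iteration-instance}; the result is stated and deferred to the \textsc{Rocq} formalization. Your proposal is a sound and natural way to discharge it: Part~(i) is exactly the induction one would expect, and the inductive step reduces cleanly to Lemma~\ref{lemma:simple-instance-prop}(iii) plus monotonicity together with the Kleene-algebra identities $N_1 \circ N_1^{*}\subseteq N_1^{*}$ and $\idrel\subseteq N_1^{*}$. For Part~(ii) your element-wise treatment of $\Phi^{\infty}$ is correct, and you have put your finger on the one genuinely nontrivial point, namely that for $i=3$ building the infinite source run from the per-step existential witnesses requires (countable) dependent choice. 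Your observation that the per-step traces coincide between source and target under $\gmm{\kN}{3}$, so the reconcatenation is literally the same $\tau$, is exactly what closes that case. The remark about $i=1$ reading $\msal{\mA}{\kF}{1}$ and $\msal{\mA}{\kI}{1}$ as $\msal{\mA}{\kD}{1}$ is the intended interpretation; this collapses both parts to the trace-free special case.

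One small presentational point: in the inductive step of Part~(i) you invoke monotonicity of $\circ$ to push the IH under $\Phi\circ(-)$, but concatenation-preservation in Lemma~\ref{lemma:simple-instance-prop} is phrased for $\gamma_1(\cdot,\cdot)\circ\gamma_2(\cdot,\cdot)$, not for $\gamma_1(\cdot,\cdot)\circ X$ with arbitrary $X\subseteq\gamma_2(\cdot,\cdot)$. This is harmless because the underlying $\circ$ is monotone in its right argument (it distributes over unions), so $\Phi\circ X\subseteq\Phi\circ\gamma_2(\cdot,\cdot)$ first and then apply Lemma~\ref{lemma:simple-instance-prop}(iii); you effectively do this, but it is worth making the order of the two steps explicit.
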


\subsection{Algebraic Behavior Refinement}
\label{subsec:ra_instances}
\subsubsection{RA Instances for Open Statement Transformation of CompCert}

We define the following RA instances to encode the behavior refinement for CompCert phases shown in Example~\ref{ex:bref-open-stmt}.
\begin{theorem}
\label{thm:compcert-stmt-RA}
Given a Kripke relation \(R_B: \mcal{K}{W}(\kstate_s, \kstate_t)\) and a preorder \(\leadsto_B \;\subseteq\;\! W \times W\).
    \( \msal{\mA}{\kN}{}=
      (\msal{\mS}{\kN}{},
       \mcal{\mE}{},
       \msal{\mT}{\kN}{},
       \gmm{\kN}{R_B}) \),
    \( \msal{\mA}{\kF}{}=
     (\msal{\mS}{\kF}{},
      \mcal{\mE}{},
      \msal{\mT}{\kF}{},
      \gmm{\kF}{R_B}) \), and
    \( \msal{\mA}{\kI}{}=
      (\msal{\mS}{\kI}{},
       \mcal{\mE}{},
       \msal{\mT}{\kI}{},
       \gmm{\kI}{R_B}) \), where
  \(\mcal{E}{} \DEF \pset{\kstate_s \times \ktrace}\),
\begin{align*}
   &\msal{\mS}{\kN}{} \DEF \pset{\kstate_s \times \ktrace \times \kstate_s}
   && \msal{\mS}{\kF}{} \DEF \pset{\kstate_s \times \ktrace}
   && \msal{\mS}{\kI}{} \DEF \pset{\kstate_s \times \kitrace}
   \\
   &\msal{\mT}{\kN}{} \DEF \pset{\kstate_t \times \ktrace \times \kstate_t}
   && \msal{\mT}{\kF}{} \DEF \pset{\kstate_t \times \ktrace}
   && \msal{\mT}{\kI}{} \DEF \pset{\kstate_t \times \kitrace}
\end{align*}
\begin{align*}
   &\begin{aligned}
    \gmm{\kN}{R_B}(N, E) \DEF \Set{(\sigma_t, \tau, \sigma'_t) | 
    \begin{aligned}
    \forall \sigma_s\, w_B,\, (\sigma_s, \sigma_t) & \in R_B(w_B) \Rightarrow 
        \big(\exists w_B'\ \sigma'_s, (\sigma_s, \tau, \sigma'_s) \in N
        \wedge w_B \leadsto_B w_B' \ \wedge
        \\
         (\sigma'_s, \sigma'_t) & \in R_B(w_B')\big)
        \vee
        \big(\exists \tau_0, (\sigma_s, \tau_0) \in E \wedge \tau_0 \leqslant_T \tau\big);
      \end{aligned} }
   \end{aligned}
   \\
   &\begin{aligned}
    \gmm{\kF}{R_B}(N, E) \DEF
    \Set{(\sigma_t, \tau) \mid 
      \forall \sigma_s\, w_B,\; (\sigma_s, \sigma_t) \in R_B(w_B) \Rightarrow
        (\sigma_s, \tau) \in N \vee
        (\exists \tau_0,\; (\sigma_s, \tau_0) \in E \wedge \tau_0 \leqslant_T \tau)
    }
   \end{aligned}
   \\
   &\begin{aligned}
    \gmm{\kI}{R_B}(N, E) \DEF
    \Set{(\sigma_t, \tau) \mid 
      \forall \sigma_s\, w_B,\; (\sigma_s, \sigma_t) \in R_B(w_B) \Rightarrow
        (\sigma_s, \tau) \in N \vee
        (\exists \tau_0,\; (\sigma_s, \tau_0) \in E \wedge \tau_0 \leqslant_T \tau) }
   \end{aligned}
\end{align*}
(i) \( \msal{\mA}{\kN}{} \),
    \( \msal{\mA}{\kF}{}\), and
    \( \msal{\mA}{\kI}{}\) are union-preserved refinement algebras; 
(ii)  \( \msal{\mA}{\kN}{} \) is self-concatenation-preserved;
(iii) \(\msal{\mA}{\kN}{}\) and \(\msal{\mA}{\kF}{}\) are
        concatenation-preserved to \(\msal{\mA}{\kF}{}\);
(iv) \(\msal{\mA}{\kN}{}\) and \(\msal{\mA}{\kI}{}\) are
        concatenation-preserved to \(\msal{\mA}{\kI}{}\);
(v) \(\msal{\mA}{\kN}{}\) and \(\msal{\mA}{\kF}{}\) are
    star-preserved to \(\msal{\mA}{\kF}{}\);
 and
(vi) \(\msal{\mA}{\kN}{}\) and \(\msal{\mA}{\kI}{}\) are
    infinity-preserved to \(\msal{\mA}{\kI}{}\).
\end{theorem}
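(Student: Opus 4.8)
### Proof Proposal for Theorem~\ref{thm:compcert-stmt-RA}

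The plan is to verify the six claimed properties by unfolding the definitions of $\gmm{\kN}{R_B}$, $\gmm{\kF}{R_B}$, $\gmm{\kI}{R_B}$ and the relevant composition/star/infinity operators, and then checking each set inclusion pointwise. The key observation that organizes the whole proof is that $\gamma_{R_B}$ has the shape of a universally quantified implication: an element of the target behavior set belongs to $\gamma_{R_B}(N,E)$ iff for every matching source state (and world), either there is a corresponding source behavior in $N$ with the world evolving along $\leadsto_B$, or the source already aborts with a trace prefix recorded in $E$. So each inclusion reduces to: assume membership of the LHS (which gives us, after unfolding composition/star, certain source-side witnesses in the component $N_i$'s or $E_i$'s), then \emph{reassemble} those witnesses into a single witness for the RHS. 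Monotonicity (property of Definition~\ref{def:refinement_algebra}, required for ``RA'') is immediate since enlarging $N$ or $E$ only weakens the obligations under the universal quantifier.

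First I would handle \textbf{(i) union-preservation}: given $(\sigma_t,\tau,\sigma_t') \in \gmm{\kN}{R_B}(N_1,E_1)\cup\gmm{\kN}{R_B}(N_2,E_2)$, pick the disjunct it came from, say the first, and observe its source witness lies in $N_1\subseteq N_1\cup N_2$ and its error witness (if that branch fires) in $E_1\subseteq E_1\cup E_2$; the world-evolution condition is unchanged, so the element lies in $\gmm{\kN}{R_B}(N_1\cup N_2, E_1\cup E_2)$. The $\kF$ and $\kI$ cases are identical but simpler (no $\sigma_t'$, no world in the matching position beyond $w_B$). Next, \textbf{(ii)--(iv) concatenation-preservation}: for $\gamma_1(N_1,E_1)\circ\gamma_2(N_2,E_2)$ with the overloaded $\circ$ splitting the trace as $\tau=\tau_1\cdot\tau_2$, unfold both factors. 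The subtle part is the case analysis on the \emph{first} factor: if the first factor used its normal (source-behavior) disjunct, we get a source state $\sigma_s'$ related at world $w_B'$, and then feed $\sigma_s'$ and $w_B'$ into the second factor to get either a chained source behavior (landing in $N_1\circ N_2$ with a world $w_B''$ reachable via transitivity of $\leadsto_B$ — here the \emph{preorder} assumption on $\leadsto_B$ is essential) or a source abort at $\sigma_s'$ with trace $\tau_2^0\leqslant_T\tau_2$, which we prepend $\tau_1$ to and place in $N_1\circ E_2$. If the first factor used its abort disjunct, we get $\tau_1^0\leqslant_T\tau_1\leqslant_T\tau_1\cdot\tau_2=\tau$, landing directly in $E_1$. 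Both outcomes lie in $E_1\cup(N_1\circ E_2)$, matching the RHS of (\ref{prop:gconcat}).

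Then \textbf{(v) star-preservation and (vi) infinity-preservation} follow from (i)--(iv) by induction on the finite unrolling of $^*$ and a coinductive/limit argument for $^\infty$, exactly as the paper already does for the concrete Clight loop semantics in \S\ref{subsec:loop_trace}: the key lemma is that $N_1^*\circ N_2$ is the least fixed point of $x\mapsto N_2\cup(N_1\circ x)$, so $\big(\gamma_1(N_1,E_1)\big)^*\circ\gamma_2(N_2,E_2)$ is absorbed into $\gamma_2(N_1^*\circ N_2,\, N_1^*\circ(E_1\cup E_2))$ by iterating the concatenation-preservation step and taking the union over all unrollings; the error component accumulates to $N_1^*\circ(E_1\cup E_2)$ because a source abort discovered after $k$ normal iterations of $N_1$ is reached via a path in $N_1^k\subseteq N_1^*$. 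For $^\infty$, an infinite target iteration of $\gamma_1(N_1,E_1)$ either eventually exhibits a source abort at some finite stage (falling into the $N_1^*\circ E_1$ summand) or the source side supports an infinite iteration of $N_1$ (landing in $N_1^\infty$); the world-evolution side condition threads through by composing the accessibility steps, again using transitivity of $\leadsto_B$.

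The main obstacle I anticipate is the bookkeeping in the concatenation step when the first factor takes its \emph{normal} disjunct but the second factor takes its \emph{abort} disjunct: one must carefully check that the resulting witness $(\sigma_s,\tau_1\cdot\tau_2^0)$ with $\tau_2^0\leqslant_T\tau_2$ indeed lies in $N_1\circ E_2$ under the overloaded $\circ: \mS_1\times\mE_2\to\mE_1$ — i.e., that the trace-prefix order interacts correctly with concatenation ($\tau_1\cdot\tau_2^0 \leqslant_T \tau_1\cdot\tau_2$ and $\tau_1\cdot\tau_2^0$ is exactly a witnessed element of $N_1\circ E_2$). This is where the precise definition of $\leqslant_T$ and of the three overloaded composition operators must be pinned down; once that lemma is in hand, everything else is a routine (if tedious) reshuffling of witnesses, and the $\kF$/$\kI$ instances are strictly easier than $\kN$ because they carry no final target state and no target-side world in the matched position.
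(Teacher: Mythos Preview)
Your proposal is correct and is the natural approach. The paper does not give a written proof of this theorem (it states the result and defers all verification to the \Coq formalization), but your plan---unfold the $\gamma$ definitions, verify each inclusion pointwise by case analysis on the normal/error disjunct, use transitivity of the preorder $\leadsto_B$ to chain worlds in the concatenation step, then derive star-preservation by induction on the unrolling of $^*$ and infinity-preservation by a dependent-choice/coinductive construction of the infinite source sequence---is exactly what such a formalization would do, and your identification of the ``first-normal, second-abort'' subcase (with the trace-prefix lemma $\tau_1\cdot\tau_2^0 \leqslant_T \tau_1\cdot\tau_2$) as the only nontrivial bookkeeping is accurate.
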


\subsubsection{RA Instances for CompCert Open Modules}

\begin{theorem}
\label{thm:open-module-RA}
Given a KIR \(B = \pair{W_B, \mcal{\leadsto}{B}, \msal{R}{q}{B}, \msal{R}{r}{B}}\), 
    \( \msal{\mA}{\kN}{}=
      (\msal{\mS}{\kN}{},
       \mcal{\mE}{},
       \msal{\mT}{\kN}{},
       \gmm{\kN}{B}) \),
    \( \msal{\mA}{\kC}{}=
      (\msal{\mS}{\kC}{},
       \mcal{\mE}{},
       \msal{\mT}{\kC}{},
       \gmm{\kC}{B}) \),
    \( \msal{\mA}{\kI}{}=
      (\msal{\mS}{\kI}{},
       \mcal{\mE}{},
       \msal{\mT}{\kI}{},
       \gmm{\kI}{B}) \), and
    \( \msal{\mA}{\kF}{}=
     (\msal{\mS}{\kF}{},
      \mcal{\mE}{},
      \msal{\mT}{\kF}{},
      \gmm{\kF}{B}) \), where
\begin{align*}
&\begin{aligned}
   &\mcal{E}{} \DEF \pset{\kquery_s \times \ktrace}
   && \msal{\mS}{\kF}{} \DEF \pset{\kquery_s \times \ktrace}
   && \msal{\mT}{\kF}{} \DEF \pset{\kquery_t \times \ktrace}
   \\
   &\msal{\mS}{\kN}{} \DEF \pset{\kquery_s \times \ktrace \times \kreply_s}
   && \msal{\mS}{\kC}{} \DEF \pset{\kquery_s \times \ktrace \times \kquery_s}
   && \msal{\mS}{\kI}{} \DEF \pset{\kquery_s \times \kitrace}
   \\
   &\msal{\mT}{\kN}{} \DEF \pset{\kquery_t \times \ktrace \times \kreply_t}
   && \msal{\mT}{\kC}{} \DEF \pset{\kquery_t \times \ktrace \times \kquery_t}
   && \msal{\mT}{\kI}{} \DEF \pset{\kquery_t \times \kitrace}    
\end{aligned}
\end{align*}
\begin{align*}
    &\begin{aligned}
    \gmm{\kN}{B}(N, E) \DEF \Set{(q_t, \tau, r_t) | 
    \begin{aligned}
    \forall q_s\, w_B,\, (q_s, q_t) & \in \msal{R}{q}{B}(w_B) \Rightarrow 
        \big(\exists w_B'\ r_s, (q_s, \tau, r_s) \in N
        \wedge w_B \leadsto_B w_B' \ \wedge
        \\
         (r_s, r_t) & \in \msal{R}{r}{B}(w_B')\big)
        \vee
        \big(\exists \tau_0, (q_s, \tau_0) \in E \wedge \tau_0 \leqslant_T \tau\big);
      \end{aligned} }
    \end{aligned}
    \\
    &\begin{aligned}
    \gmm{\kC}{B}(N, E) \DEF \Set{\;\!\!(q_t, \tau, q_t')\:\!\! | 
    \begin{aligned}
    \forall q_s\, w_B,\, (q_s, q_t) & \in \msal{R}{q}{B}(w_B) \Rightarrow 
        \big(\exists w_B'\ q_s', (q_s, \tau, q_s') \in N
        \wedge w_B \leadsto_B w_B' \ \wedge
        \\
         (q_s', q_t') & \in \msal{R}{q}{B}(w_B')\big)
        \vee
        \big(\exists \tau_0, (q_s, \tau_0) \in E \wedge \tau_0 \leqslant_T \tau\big);
      \end{aligned} }
    \end{aligned}
    \\
    &\begin{aligned}
    \gmm{\kF}{B}(N, E) \DEF
    \Set{(q_t, \tau) \mid 
      \forall q_s\, w_B,\; (q_s, q_t) \in \msal{R}{q}{B}(w_B) \Rightarrow
        (q_s, \tau) \in N \vee
        (\exists \tau_0,\; (q_s, \tau_0) \in E \wedge \tau_0 \leqslant_T \tau)
    }
    \end{aligned}
    \\
    &\begin{aligned}
    \gmm{\kI}{B}(N, E) \DEF
    \Set{(q_t, \tau) \mid 
      \forall q_s\, w_B,\; (q_s, q_t) \in \msal{R}{q}{B}(w_B) \Rightarrow
        (q_s, \tau) \in N \vee
        (\exists \tau_0,\; (q_s, \tau_0) \in E \wedge \tau_0 \leqslant_T \tau) }
    \end{aligned}
\end{align*}
(i) \( \msal{\mA}{\kN}{} \),
    \( \msal{\mA}{\kC}{} \),
    \( \msal{\mA}{\kF}{}\), and
    \( \msal{\mA}{\kI}{}\) are union-preserved refinement algebras; 
(ii)  \( \msal{\mA}{\kC}{} \) is self-concatenation-preserved;
(iii) \(\msal{\mA}{\kC}{}\) and \(\msal{\mA}{\kN}{}\) are
        concatenation-preserved to \(\msal{\mA}{\kN}{}\);
(iv) \(\msal{\mA}{\kC}{}\) and \(\msal{\mA}{\kF}{}\) are
        concatenation-preserved to \(\msal{\mA}{\kF}{}\);
(v) \(\msal{\mA}{\kC}{}\) and \(\msal{\mA}{\kI}{}\) are
        concatenation-preserved to \(\msal{\mA}{\kI}{}\);
(vi) \(\msal{\mA}{\kC}{}\) and \(\msal{\mA}{\kN}{}\) are
    star-preserved to \(\msal{\mA}{\kN}{}\);
(vii) \(\msal{\mA}{\kC}{}\) and \(\msal{\mA}{\kF}{}\) are
    star-preserved to \(\msal{\mA}{\kF}{}\);
 and
(viii) \(\msal{\mA}{\kC}{}\) and \(\msal{\mA}{\kI}{}\) are
    infinity-preserved to \(\msal{\mA}{\kI}{}\).
\end{theorem}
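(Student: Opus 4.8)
The plan is to prove the eight claims by unfolding the definitions of the $\gamma^{B}_{*}$ functions and reducing each property to elementary set-theoretic manipulations, exactly mirroring the proof of Theorem~\ref{thm:compcert-stmt-RA} (which in turn generalizes the hand calculations of Examples~\ref{ex:bref_while} and~\ref{ex:bref_abort}). Only the data of the KIR $B=\pair{W_B,\mcal{\leadsto}{B},\msal{R}{q}{B},\msal{R}{r}{B}}$ is used, so one key preliminary observation is that the accessibility relation $\mcal{\leadsto}{B}$ is a preorder on $W_B$ (as stipulated in Definition~\ref{def:kir}'s use in CompCertO-style settings); reflexivity is needed for the union and concatenation cases, transitivity for the star and infinity cases. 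I would first record the shape all four $\gamma$'s share: each has the form $\{x_t \mid \forall q_s\,w_B,\,(q_s,q_t)\in \msal{R}{q}{B}(w_B)\Rightarrow P(x_t,q_s,w_B) \vee \mathrm{Err}(q_s,\tau)\}$, where $q_t$ is the query component of $x_t$, $\mathrm{Err}(q_s,\tau)\DEF \exists\tau_0,(q_s,\tau_0)\in E\wedge\tau_0\leqslant_T\tau$, and $P$ is the "good" clause (existence of a matching source behavior with world advance). Monotonicity (claim that these are RAs in the sense of Definition~\ref{def:refinement_algebra}) is immediate since $N\subseteq N'$ and $E\subseteq E'$ only enlarge the disjunction.

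The core of the proof is organized by the operator involved. (i) \emph{Union-preservation}: given $x_t\in\gamma^{B}_{*}(N_1,E_1)$, fix $(q_s,q_t)\in\msal{R}{q}{B}(w_B)$; the hypothesis yields either the good clause with some $w_B'$ and source witness, or an error prefix — either way the same disjunct holds for $\gamma^{B}_{*}(N_1\cup N_2,E_1\cup E_2)$ since $N_1\subseteq N_1\cup N_2$ and $E_1\subseteq E_1\cup E_2$; symmetrically for $x_t\in\gamma^{B}_{*}(N_2,E_2)$. (ii)--(v) \emph{Concatenation-preservation}: here I pick the concrete composition operators $\circ$ on the relevant sets as defined in \S\ref{subsec:while_lang} (lifted through the query/reply types, so e.g. $\circ:\msal{\mS}{\kC}{}\times\msal{\mS}{\kN}{}\to\msal{\mS}{\kN}{}$ glues a call step onto a terminating step, and $\circ:\msal{\mS}{\kC}{}\times\mcal{E}{}\to\mcal{E}{}$ glues a call step onto an error). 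Given $x_t^{(3)}$ in the composite, it decomposes as an intermediate target query $q_t''$ with $(q_t,\tau_1,q_t'')\in\gamma^{B}_{\kC}(N_1,E_1)$ and a tail in $\gamma^{B}_{*}(N_2,E_2)$ with $\tau=\tau_1\cdot\tau_2$; given $(q_s,q_t)\in\msal{R}{q}{B}(w_B)$, apply the first refinement to get either a matching source call $q_s''$ with $w_B\leadsto_B w_B''$ and $(q_s'',q_t'')\in\msal{R}{q}{B}(w_B'')$, or an $E_1$-prefix of $\tau_1$ (hence of $\tau$, giving the error clause of the target directly); in the first subcase, feed $q_s''$ and $w_B''$ into the second refinement to get either a matching tail (composing to a witness in $N_1\circ N_2$, with $w_B\leadsto_B w_B''\leadsto_B w_B'$ by transitivity — actually only one step of $\leadsto$ is needed, compositionality of worlds falls out) or an $E_2$-prefix of $\tau_2$, which composes with the source call $q_s''$ to land in $N_1\circ E_2\subseteq E_1\cup(N_1\circ E_2)$, again with a prefix of $\tau$. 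Self-concatenation of $\msal{\mA}{\kC}{}$ is the same bookkeeping with replies replaced by queries throughout. (vi)--(viii) \emph{Star/infinity-preservation}: unfold $X^{*}=\bigcup_n X^n$ and $X^{\infty}$ per the definitions in \S\ref{subsec:while_lang}, then run the finite case by induction on $n$ using the concatenation lemmas just established (base case $n=0$ uses reflexivity of $\leadsto_B$ and $\idrel$); for $X^{\infty}$, from an infinite target iteration produce an infinite chain $w_B\leadsto_B w_B^{1}\leadsto_B w_B^{2}\leadsto_B\cdots$ of source worlds and either an infinite matching source iteration (landing in $N_1^{\infty}$) or, at the first failing stage, an error prefix of the finite trace consumed so far (which is a prefix of the whole, possibly infinite, $\tau$). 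The only subtlety is making the trace-prefix accounting line up: $N_1^{*}\circ(E_1\cup E_2)$ is exactly "finitely many good steps then an error," so any error discovered at stage $k$ fits after folding the first $k{-}1$ successful steps into $N_1^{*}$.

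I expect the main obstacle to be the \emph{infinity-preservation} argument (claim (viii)), because it requires extracting an infinite sequence of source-side worlds and witnesses from an infinite target iteration — this is a coinductive/choice-flavored step, not a finite induction, and one must be careful that the error alternative is handled correctly at \emph{every} finite prefix rather than "in the limit." Concretely, the construction is: attempt to build the infinite source run stage by stage; if the construction never gets stuck, the colimit is an element of $N_1^{\infty}$ related appropriately; if it gets stuck at some finite stage $k$, the partial run is in $N_1^{*}$, the step-$k$ target behavior forces an $E$-prefix $\tau_0$ of the trace consumed through stage $k$, and that $\tau_0$ is a prefix of the full $\tau$, so the target element lies in $N_1^{*}\circ(E_1\cup E_2)\subseteq\gamma^{B}_{\kI}(\dots)$. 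Discharging "never gets stuck $\Rightarrow$ infinite source run exists" cleanly is where dependent choice (or the already-available infinite-iteration combinators of \S\ref{subsec:loop_trace}) is invoked; everything else is the same diagram-chasing as the statement-level Theorem~\ref{thm:compcert-stmt-RA}, with queries/replies in place of states.
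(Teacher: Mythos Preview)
Your plan is correct and follows the natural unfolding-and-case-analysis strategy; the paper gives no in-text proof of this theorem (it is discharged in the \Coq\ development, cf.\ the ``Refinement algebras'' row of Table~\ref{tab:code_evaluate}), so there is nothing substantively different to compare against. The structure you describe---monotonicity by inclusion of disjuncts, union by symmetry, concatenation by splitting at the intermediate target query and chaining the two $\gamma$-hypotheses, star by induction on the iterate, and infinity by a stagewise attempt that either produces an infinite source chain or hits an error at some finite stage---is exactly what the mechanized proof does.

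One small correction: your parenthetical ``actually only one step of $\leadsto$ is needed, compositionality of worlds falls out'' is wrong as written. In the concatenation case (ii)--(v) you obtain $w_B \leadsto_B w_B^{1}$ from the first $\gamma^{\kC}_B$ hypothesis and $w_B^{1} \leadsto_B w_B^{2}$ from the second, and the goal requires a single witness $w_B'$ with $w_B \leadsto_B w_B'$ and the reply related at $w_B'$; you must take $w_B' = w_B^{2}$, and that step genuinely needs transitivity of $\leadsto_B$. So transitivity is already essential for concatenation, not only for star/infinity. Relatedly, note that Definition~\ref{def:kir} does not literally say $\leadsto_B$ is a preorder; the parallel Theorem~\ref{thm:compcert-stmt-RA} states this hypothesis explicitly, and you should record it as an assumption here too (reflexivity is used e.g.\ to embed $N_2$ into $N_1^{*}\circ N_2$ via $\idrel\subseteq N_1^{*}$ together with monotonicity, and transitivity throughout).
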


\begin{definition} [Behavior refinement for open statements of Cshmgen] \label{def:Cshmgen_stmt_ref}
  Given natural numbers\footnote{
    Natural numbers $n_b$ and $n_c$ are arguments of the statement transformation of Cshmgen,
    which are used to count the exiting number from nested blocks (i.e., \textbf{exit} $n_b$ and \textbf{exit} $n_c$ are compiled from \textbf{break} and \textbf{continue} respectively). 
  } $n_b$, $n_c$ and the aborting behavior of external C functions $\chi_e$ $(\subseteq$ \kcquery $\times$ \ktrace). A denotation 
  $D_2$ of \CshmDenote is a refinement of a denotation $D_1$ of \ClitDenote, denoted by $D_2 \sqsubseteq_{(R_B, n_b, n_c, \chi_e)} D_1$, iff:
\begin{align*}
\begin{alignedat}{3}
  & \denof{D_2}{nrm} &\;\subseteq\;& 
     \gmm{\kN}{R_B}\big(\denof{D_1}{nrm}, &\; & 
     \denof{D_1}{err} \cup (\denof{D_1}{cll} \circ \chi_e)\big) \\
  & \denof{D_2}{err} &\;\subseteq\;& 
     \gmm{\kF}{R_B}\big(\oset, &\; & 
     \denof{D_1}{err} \cup (\denof{D_1}{cll} \circ \chi_e)\big) \\
  & \denof{D_2}{blk}_{n_b} &\;\subseteq\;& 
     \gmm{\kN}{R_B}\big(\denof{D_1}{brk}, &\; & 
     \denof{D_1}{err} \cup (\denof{D_1}{cll} \circ \chi_e)\big) \\
  & \denof{D_2}{blk}_{n_c} &\;\subseteq\;& 
     \gmm{\kN}{R_B}\big(\denof{D_1}{ctn}, &\; & 
     \denof{D_1}{err} \cup (\denof{D_1}{cll} \circ \chi_e)\big) \\
  & \denof{D_2}{fin\_dvg} &\;\subseteq\;& 
     \gmm{\kF}{R_B}\big(\denof{D_1}{fin\_dvg}, &\; & 
     \denof{D_1}{err} \cup (\denof{D_1}{cll} \circ \chi_e)\big) \\
  & \denof{D_2}{inf\_dvg} &\;\subseteq\;& 
     \gmm{\kI}{R_B}\big(\denof{D_1}{inf\_dvg}, &\; & 
     \denof{D_1}{err} \cup (\denof{D_1}{cll} \circ \chi_e)\big)
\end{alignedat}
\end{align*}
where the gamma instances are defined in Theorem~\ref{thm:compcert-stmt-RA}, the state-matching relation \(R_B \DEF \lambda w_B.\idrel\), and \(\leadsto_B \DEF W_B \times W_B\) as the representation of program states remains unchanged during Cshmgen.

\end{definition}

\begin{definition} [Refinement between open function/module denotations]
\label{def:ref_function}
  
  Let $\Phi_1$ and $\Phi_2$ be the semantics of open  functions/modules (e.g., that of Clight).
 The semantics $\Phi_2$ is said to be a refinement of $\Phi_1$, written as $\Phi_2 \refines_{A \ttarrow B} \Phi_1$, if and only if:
 (i) $\denof{D_2}{dom} = \denof{D_1}{dom}$,
 and (ii) for any callee behaviors $\chi_t$, $\chi_s$ and $\chi_e$ such that $\chi_t \subseteq \gmm{\kN}{A}(\chi_s, \chi_e)$, the following conditions hold, where $D_1  = (\Phi_1)_{\chi_s}$ of \(\FDenote_s\), $D_2 = (\Phi_2)_{\chi_t}$ of \(\FDenote_t\), and the gamma instances are defined in Theorem~\ref{thm:open-module-RA}:
\begin{align*}
\begin{alignedat}{3}
  & \denof{D_2}{nrm} &\;\subseteq\;& 
     \gmm{\kN}{B}\big(\denof{D_1}{nrm}, &\; & 
     \denof{D_1}{err} \cup (\denof{D_1}{cll} \circ \chi_e)\big) \\
  & \denof{D_2}{err} &\;\subseteq\;& 
     \gmm{\kF}{B}\big(\oset, &\; & 
     \denof{D_1}{err} \cup (\denof{D_1}{cll} \circ \chi_e)\big) \\
  & \denof{D_2}{cll} &\;\subseteq\;& 
     \gmm{\kC}{B}\big(\denof{D_1}{cll}, &\; & 
     \denof{D_1}{err} \cup (\denof{D_1}{cll} \circ \chi_e)\big) \\
  & \denof{D_2}{fin\_dvg} &\;\subseteq\;& 
     \gmm{\kF}{B}\big(\denof{D_1}{fin\_dvg}, &\; & 
     \denof{D_1}{err} \cup (\denof{D_1}{cll} \circ \chi_e)\big) \\
  & \denof{D_2}{inf\_dvg} &\;\subseteq\;& 
     \gmm{\kI}{B}\big(\denof{D_1}{inf\_dvg}, &\; & 
     \denof{D_1}{err} \cup (\denof{D_1}{cll} \circ \chi_e)\big)
\end{alignedat}
\end{align*}
\end{definition}

The following theorems illustrates the soundness of the above
algebraic behavior refinement in terms of \(\precsim_{R_B}\) defined in Example~\ref{ex:bref-open-stmt} and \(\precsim_{A \ttarrow B}\) defined in Example~\ref{ex:bref-open-module}.
\begin{theorem} [Soundness of algebraic behavior refinement 1] 
\label{thm:cshm-ref-sound}
For any \(D_1\) of\, \ClitDenote and \(D_2\) of\, \CshmDenote, for any nature numbers \(n_b, n_c\), and for any semantic oracle \(\chi_e\),
    \begin{align*}
       \textit{if } D_2 \refines_{(R_B, n_b, n_c, \chi_e)} D_1, \textit{ then } D_2 \precsim_{R_B} D_1.
    \end{align*}
\end{theorem}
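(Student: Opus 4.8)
The plan is to prove the theorem by pure unfolding. The gamma instances $\gmm{\kN}{R_B}$, $\gmm{\kF}{R_B}$, and $\gmm{\kI}{R_B}$ of Theorem~\ref{thm:compcert-stmt-RA} are \emph{defined} to be exactly the sets of target behaviors satisfying the universally quantified refinement clauses of Example~\ref{ex:bref-open-stmt}; hence a set inclusion of the form $\denof{D_2}{nrm}\subseteq\gmm{\kN}{R_B}(N,E)$ is, after $\forall$-introduction over the members of $\denof{D_2}{nrm}$ and unfolding the set comprehension on the right, literally the normal-termination clause of $\precsim_{R_B}$, and similarly for the other behavior sets. So no fixed-point reasoning and no KAT manipulation is needed: the entire content is matching the quantifier structure of Definition~\ref{def:Cshmgen_stmt_ref} against that of Example~\ref{ex:bref-open-stmt}. (Both are read here at the level of abstract denotations: the clauses of $\precsim_{R_B}$, originally phrased with $\fof{\chi_s}{s}{nrm}$ etc.\ and $\efof{\chi_s}{\chi_e}{s}{err}$, are read with $\denof{D_1}{nrm}$ etc.\ and $\denof{D_1}{err}\cup(\denof{D_1}{cll}\circ\chi_e)$ in their place.)

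First I would unfold the hypothesis via Definition~\ref{def:Cshmgen_stmt_ref}, extracting its six set inclusions, and discard the two for $\denof{D_2}{blk}_{n_b}$ and $\denof{D_2}{blk}_{n_c}$: the relation $\precsim_{R_B}$ constrains only the four ``resulting'' behaviors (normal termination, abortion, silent divergence, reacting divergence), and Csharpminor's $\kblk$, like Clight's $\kbrk$ and $\kctn$, is not among them. I would also record the definitional identifications that line the two statements up: the error-side argument $\denof{D_1}{err}\cup(\denof{D_1}{cll}\circ\chi_e)$ of Definition~\ref{def:Cshmgen_stmt_ref} coincides with $\efof{\chi_s}{\chi_e}{s}{err}$ in $\precsim_{R_B}$, and the fixed choices $R_B=\lambda w_B.\idrel$, $\leadsto_B=W_B\times W_B$ of Definition~\ref{def:Cshmgen_stmt_ref} are precisely the instance of the general $\precsim_{R_B}$ under consideration --- which is well-typed since Clight and Csharpminor share the program-state type $\ktmp\times\kmem$ (Tables~\ref{tab:clight-sem-component} and~\ref{tab:Cshm_cmin_notation}), so $\idrel$ makes sense between them. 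Then I would handle the four relevant inclusions field by field: $\denof{D_2}{nrm}\subseteq\gmm{\kN}{R_B}\big(\denof{D_1}{nrm},\,\denof{D_1}{err}\cup(\denof{D_1}{cll}\circ\chi_e)\big)$ unfolds, by the definition of $\gmm{\kN}{R_B}$ and quantification over members of $\denof{D_2}{nrm}$, into exactly the $\knrm$ clause of $\precsim_{R_B}$; finite and infinite divergence are identical via $\gmm{\kF}{R_B}$ and $\gmm{\kI}{R_B}$; and for abortion, $\denof{D_2}{err}\subseteq\gmm{\kF}{R_B}\big(\oset,\,\denof{D_1}{err}\cup(\denof{D_1}{cll}\circ\chi_e)\big)$ collapses to the $\kerr$ clause because the empty first argument makes the left disjunct $(\sigma_s,\tau)\in\oset$ of $\gmm{\kF}{R_B}$ vacuously false. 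Assembling the four clauses yields $D_2\precsim_{R_B}D_1$.

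I expect no genuine mathematical obstacle; the ``hard part'' is careful bookkeeping. One must check that dropping the two $\kblk$ inclusions is legitimate (they have no counterpart in the statement-level $\precsim_{R_B}$), that no clause of $\precsim_{R_B}$ has been overlooked (it has exactly four --- abortion and divergence arising from calls are already folded into the error argument via $\denof{D_1}{cll}\circ\chi_e$, so there is no separate $\kcll$ clause at the statement level), and that the abbreviations in the two definitions --- $\efof{\chi_s}{\chi_e}{s}{err}$, and the $\chi_e$ left implicit in the theorem's shorthand $D_2\precsim_{R_B}D_1$ --- expand to exactly the data that Definition~\ref{def:Cshmgen_stmt_ref} supplies. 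It is also worth noting that only the forward implication holds: the algebraic refinement additionally pins down the $\kblk$ components, so it is strictly stronger than $\precsim_{R_B}$, which is why the theorem is stated as an implication rather than an equivalence.
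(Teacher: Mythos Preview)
Your proposal is correct and is essentially the intended argument: the paper does not spell out a proof for this theorem, precisely because it reduces to unfolding Definition~\ref{def:Cshmgen_stmt_ref} and the $\gamma$-instances of Theorem~\ref{thm:compcert-stmt-RA} and matching them against the clauses of Example~\ref{ex:bref-open-stmt}. Your bookkeeping observations---dropping the two $\kblk$ inclusions, collapsing the $\oset$ disjunct in the $\kerr$ case, and identifying the error argument with $\efof{\chi_s}{\chi_e}{s}{err}$---are all on point, as is your remark that the implication is strict.
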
 

\begin{theorem} [Soundness of algebraic behavior refinement 2]
\label{thm:module-ref-sound}
Given KIRs \(A\) and \(B\). For any semantics of open functions/modules $\Phi_1$ and $\Phi_2$,
    \(\Phi_2 \refines_{A \ttarrow B} \Phi_1\) if and only if \(\Phi_2 \precsim_{A \ttarrow B} \Phi_1 \).
\end{theorem}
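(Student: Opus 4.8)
The plan is to prove both directions at once by a direct unfolding, since the algebraic refinement of Definition~\ref{def:ref_function} is, clause by clause, nothing but the set-inclusion repackaging of the pointwise conditions of Example~\ref{ex:bref-open-module}. Both $\Phi_2 \refines_{A \ttarrow B} \Phi_1$ and $\Phi_2 \precsim_{A \ttarrow B} \Phi_1$ have the shape ``for all oracles $\chi_t,\chi_s,\chi_e$, if [a relation between $\chi_t$, $\chi_s$ and $\chi_e$] holds, then [a domain-equality condition together with five conditions relating the behavior sets of $D_2 = (\Phi_2)_{\chi_t}$ and $D_1 = (\Phi_1)_{\chi_s}$]''. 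So it suffices to fix arbitrary $\chi_t,\chi_s,\chi_e$, show the two premises are equivalent, and show each of the five conclusion conditions is equivalent to its algebraic counterpart; re-quantifying over the oracles then yields the theorem.

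First I would dispatch the premise. Unfolding the membership test that defines $\gmm{\kN}{A}$ in Theorem~\ref{thm:open-module-RA}, the inclusion $\chi_t \subseteq \gmm{\kN}{A}(\chi_s,\chi_e)$ states: for every $(q_t,\tau,r_t)\in\chi_t$ and every $q_s,w_A$ with $(q_s,q_t)\in\msal{R}{q}{A}(w_A)$, either there exist $w_A',r_s$ with $(q_s,\tau,r_s)\in\chi_s$, $w_A\leadsto_A w_A'$ and $(r_s,r_t)\in\msal{R}{r}{A}(w_A')$, or $\tau$ has a prefix $\tau_0$ with $(q_s,\tau_0)\in\chi_e$. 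This is verbatim the predicate $\chi_t \precsim_A (\chi_s,\chi_e)$ of Example~\ref{ex:bref-open-module}, modulo the harmless commuting of the $\forall q_s$ and $\forall w_A$ quantifiers and the standard rewriting of ``$x\in\chi_t \Rightarrow P(x)$'' as ``$\chi_t \subseteq \{x \mid P(x)\}$''.

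Next I would treat the conclusion, one behavior kind at a time. For each of the five behavior kinds ($\knrm$, $\kcll$, $\kerr$, $\kfin$, $\kinf$), the corresponding clause of the $\precsim_B$ part of Example~\ref{ex:bref-open-module} is an implication of the form ``for every element of the relevant component of $D_2$ and every source query and world related by $\msal{R}{q}{B}$, either there is a matching source behavior with a suitably evolved world (related by $\msal{R}{r}{B}$ or by $\msal{R}{q}{B}$, as appropriate), or the source aborts with a trace that is a prefix of $\tau$''. Reading this as a subset inclusion and comparing it with the set defining the matching $\gamma$ instance ($\gmm{\kN}{B}$, $\gmm{\kC}{B}$, $\gmm{\kF}{B}$, or $\gmm{\kI}{B}$) of Theorem~\ref{thm:open-module-RA} shows that the two denote the same inclusion; for $\kerr$ one additionally notes that $(q_t,\tau)\in\gmm{\kF}{B}(\oset,E)$ collapses to the abort-with-prefix disjunct because membership in $\oset$ is vacuous. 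Crucially, the source aborting behavior is the same set on both sides: the ``augmented'' error set used in Example~\ref{ex:bref-open-module}, namely $\denof{D_1}{err} \cup (\denof{D_1}{cll}\circ\chi_e)$, is exactly what Definition~\ref{def:ref_function} supplies as the $E$-argument of every $\gamma$; and the domain-equality condition $\denof{D_2}{dom} = \denof{D_1}{dom}$ is shared verbatim by the two formulations.

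The argument carries no genuine mathematical content, so I expect the main obstacle to be purely one of bookkeeping: making the correspondence \emph{exhaustive and exact}. Every clause of $\precsim_B$ must be matched with precisely one RA instance of Theorem~\ref{thm:open-module-RA}; the three oracles must be threaded identically through premise and conclusion (in particular $\chi_e$ occurs both in the premise $\chi_t \subseteq \gmm{\kN}{A}(\chi_s,\chi_e)$ and inside the augmented source error set of \emph{every} conclusion clause); and the universal quantifiers over the oracles must be introduced in the same order on both sides, so that neither direction of the ``iff'' requires any strengthening. In the \Coq development this amounts to checking that type-class resolution picks the intended $\gamma$ instance for each behavior set; the proof is structurally the same as — and, being bidirectional, slightly stronger than — the statement-level soundness result Theorem~\ref{thm:cshm-ref-sound}, whose proof can be adapted almost verbatim.
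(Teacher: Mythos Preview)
Your unfolding approach is correct and is precisely what the theorem invites: the $\gamma$ instances of Theorem~\ref{thm:open-module-RA} are constructed so that each inclusion $X \subseteq \gamma(N,E)$ is, after the standard rewriting you describe, definitionally identical to the corresponding clause of Example~\ref{ex:bref-open-module}. The paper does not spell out a proof for this theorem in the text (it is deferred to the mechanized development), but a clause-by-clause definitional unfolding is clearly the intended argument, and your plan matches it.

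One bookkeeping point to watch: you assert that the domain-equality condition $\denof{D_2}{dom} = \denof{D_1}{dom}$ is ``shared verbatim by the two formulations'', but the display of $\precsim_B$ in Example~\ref{ex:bref-open-module} lists only the five behavioral clauses ($\knrm$, $\kcll$, $\kerr$, $\kfin$, $\kinf$) and contains no explicit $\kdom$ clause. Since the paper states the equivalence as an iff and the result is mechanized, the formal definitions presumably do agree on this; but strictly on the basis of the text, this is the one place where the two sides are not literally identical, and for the $\precsim \Rightarrow \sqsubseteq$ direction you should confirm against the formal definition rather than rely on the paper's informal display.
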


\subsubsection{RA Instances for Typical Optimizations} 
We define the following RA instances to encode the behavior refinement for standard constant propagation optimizations shown in Example~\ref{ex:optimize}.

\begin{theorem} 
\label{thm:optimize-RA}
Let \(\pi_0\), \(\pi_1\) \(\in \Pi\) be analysis results. Given a relation \(R: \Pi  \arrow \pset{\kstate_s \times \kstate_t}\), let
    \( \msal{\mA}{\pi_1}{\pi_0}=
      (\msal{\mS}{\kN}{},
       \mcal{\mE}{},
       \msal{\mT}{\kN}{},
       {\gmm{\pi_1}{ \pi_0}}) \),
    \( \msal{\mA}{\kD}{\pi_0}=
     (\msal{\mS}{\kD}{},
      \mcal{\mE}{},
      \msal{\mT}{\kD}{},
      \gmm{\kD}{\pi_0}) \), where
\begin{align*}
   \msal{\mS}{\kN}{} \DEF \pset{\kstate_s \times \kstate_s}
   \quad
   \msal{\mT}{\kN}{} \DEF \pset{\kstate_t \times \kstate_t}
   \quad 
   \mcal{E}{} = \msal{\mS}{\kD}{} \DEF \pset{\kstate_s}
   \quad
   \msal{\mT}{\kD}{} \DEF \pset{\kstate_t}
\end{align*}
\begin{align*}
    &\begin{aligned}
        \gmm{\pi_1}{\pi_0}(N, E) \DEF \Set{(\sigma_t, \sigma'_t) | 
    \forall \sigma_s,
            (\sigma_s, \sigma_t) \in R(\pi_0) \Rightarrow
            (\exists \sigma'_s, (\sigma_s, \sigma'_s) \in N
            \wedge (\sigma'_s, \sigma'_t) \in R(\pi_1)) 
            \vee \sigma_s \in E
        }
    \end{aligned}
    \\
    &\begin{aligned}
    \gmm{\kD}{\pi_0}(N, E) \DEF
    \Set{\sigma_t \mid
      \forall \sigma_s,\; (\sigma_s,\sigma_t)\in R(\pi_0) \Rightarrow
        (\sigma_s \in N \;\vee\;
          \sigma_s\in E)
    }
    \end{aligned}
\end{align*}
(i) For any \(\pi_0\) and \(\pi_1\), \( \msal{\mA}{\pi_1}{\pi_0} \) and
    \( \msal{\mA}{\kD}{\pi_0}\) are union-preserved refinement algebras;
(ii) for any \(\pi_0, \pi_1, \pi_2\), \(\msal{\mA}{\pi_1}{\pi_0}\) and \(\msal{\mA}{\pi_2}{\pi_1}\) are
        concatenation-preserved to \(\msal{\mA}{\pi_2}{\pi_0}\);
(iii) for any  \(\pi_0, \pi_1\),
 \(\msal{\mA}{\pi_1}{\pi_0}\) and \(\msal{\mA}{\kD}{\pi_1}\) are
        concatenation-preserved to \(\msal{\mA}{\kD}{\pi_0}\);
and (iv) for any \(\pi\),
 \(\msal{\mA}{\pi}{\pi}\) and \(\msal{\mA}{\kD}{\pi}\) are
    star-preserved and infinity-preserved to \(\msal{\mA}{\kD}{\pi}\).
\end{theorem}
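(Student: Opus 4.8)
The plan is to verify each clause by unfolding the two $\gamma$ functions and reasoning pointwise about the $\pi$-indexed state-matching relations $R(\pi)$; since these refinement algebras carry neither a Kripke-world structure nor event traces, they are the world-free, trace-free specialization of the instances in Theorems~\ref{thm:compcert-stmt-RA} and~\ref{thm:open-module-RA}, so the arguments are correspondingly lighter. The operators are the obvious ones: $\circ$ is relation composition on $\kstate_t$ for $\msal{\mT}{\kN}{}\times\msal{\mT}{\kN}{}\to\msal{\mT}{\kN}{}$, relation composition on $\kstate_s$ for $\msal{\mS}{\kN}{}\times\msal{\mS}{\kN}{}\to\msal{\mS}{\kN}{}$, and the overloaded relation-then-set composition $N\circ X=\{\sigma\mid\exists\sigma',(\sigma,\sigma')\in N\wedge\sigma'\in X\}$ for the two remaining typings (a source relation against a source error set, and a target relation against a target divergence set); $\ast$ is reflexive--transitive closure and $\infty$ is infinitely iterated composition, yielding the set of states that admit an infinite chain. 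First I would note that $\gmm{\pi_1}{\pi_0}$ and $\gmm{\kD}{\pi_0}$ are monotone because $N$ occurs only under an existential quantifier and $E$ only as a disjunct, hence positively; this establishes that all the listed tuples are refinement algebras, and clause~(i) follows at once since $\gamma(N_i,E_i)\subseteq\gamma(N_1\cup N_2,E_1\cup E_2)$ by monotonicity.

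For clauses~(ii) and~(iii) I would run a short case chase. Take $(\sigma_t,\sigma''_t)$ in $\gmm{\pi_1}{\pi_0}(N_1,E_1)\circ\gmm{\pi_2}{\pi_1}(N_2,E_2)$, fix the intermediate target state $\sigma'_t$, and take any $\sigma_s$ with $(\sigma_s,\sigma_t)\in R(\pi_0)$. Unfolding the first factor yields either some $\sigma'_s$ with $(\sigma_s,\sigma'_s)\in N_1$ and $(\sigma'_s,\sigma'_t)\in R(\pi_1)$, or $\sigma_s\in E_1$ (which already lies in the target bound $E_1\cup(N_1\circ E_2)$); in the first case unfolding the second factor at $\sigma'_s$ yields either $\sigma''_s$ with $(\sigma'_s,\sigma''_s)\in N_2$ and $(\sigma''_s,\sigma''_t)\in R(\pi_2)$, whence $(\sigma_s,\sigma''_s)\in N_1\circ N_2$, or $\sigma'_s\in E_2$, whence $\sigma_s\in N_1\circ E_2$. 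Every branch lands inside $\gmm{\pi_2}{\pi_0}\big(N_1\circ N_2,\ E_1\cup(N_1\circ E_2)\big)$, which is clause~(ii); clause~(iii) is the identical chase with the second factor a $\gmm{\kD}{\pi_1}$ set and the outer $\circ$ the relation-then-set composition. The self-concatenation-preservation of $\msal{\mA}{\pi}{\pi}$ needed below is just these clauses specialized to $\pi_0=\pi_1=\pi_2=\pi$.

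Clause~(iv) then reduces to (ii) and (iii). Star-preservation to $\msal{\mA}{\kD}{\pi}$ goes by induction on the iteration count, writing $(\gmm{\pi}{\pi}(N_1,E_1))^\ast=\bigcup_k(\gmm{\pi}{\pi}(N_1,E_1))^k$: the $k=0$ case is monotonicity (using $\mathrm{Id}\subseteq N_1^\ast$), and the step from $k$ to $k+1$ peels off one copy of $\gmm{\pi}{\pi}(N_1,E_1)$, applies clause~(iii) at $\pi$, and reabsorbs the result into the bound via $N_1\circ N_1^\ast\subseteq N_1^\ast$ and $\mathrm{Id}\subseteq N_1^\ast$; since $\circ$ distributes over unions the bound is uniform in $k$, so the inclusion lifts to the star. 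Infinity-preservation adds one case beyond the star part: given $\sigma_t$ carrying an infinite target chain $\sigma^0_t\to\sigma^1_t\to\cdots$ whose consecutive pairs lie in $\gmm{\pi}{\pi}(N_1,E_1)$, and $\sigma_s$ with $(\sigma_s,\sigma^0_t)\in R(\pi)$, I would chase along the chain: if some finite stage hits $E_1$ then $\sigma_s\in N_1^\ast\circ E_1$; otherwise the chase continues indefinitely and assembling the successive source witnesses yields an infinite $N_1$-chain from $\sigma_s$, so $\sigma_s\in N_1^\infty$. Either way $\sigma_s$ lies in $\gmm{\kD}{\pi}\big((N_1^\ast\circ N_2)\cup N_1^\infty,\ N_1^\ast\circ(E_1\cup E_2)\big)$.

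The one genuinely delicate point is this ``continue indefinitely'' case: at each stage the chase must pick an $N_1$-successor of the current source state that stays synchronized with the next target state, and since $N_1$ need not be finitely branching this is an appeal to dependent choice rather than K\"onig's lemma. I expect to discharge it either with the classical choice axioms the development already relies on, or by recasting the construction coinductively so that the infinite source chain is read off directly from the infinite target chain. Everything else is routine unfolding of the $\gamma$ definitions together with the standard Kleene-algebra identities for $\ast$ and $\circ$.
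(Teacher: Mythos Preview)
Your proposal is correct; the paper itself does not give an inline proof of this theorem (it is one of several RA-instance results stated and deferred to the Rocq formalization, alongside Lemmas~\ref{lemma:simple-instance-prop}--\ref{lemma:iteration-instance} and Theorems~\ref{thm:compcert-stmt-RA}--\ref{thm:open-module-RA}), and the direct pointwise unfolding you carry out is exactly the expected verification. Your explicit flagging of dependent choice in the infinity case is apt and is the only non-routine step.
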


Similarly, given analysis results \(\pi_0\) and \(\pi_1\), we can define \(\sem{\constprop(c)} \refines_{(\pi_0, \pi_1)} \sem{c}\) for typical optimizations in Example~\ref{ex:optimize} as follows, such that if \(\sem{\constprop(c)} \refines_{(\pi_0, \pi_1)} \sem{c}\), then \(\sem{\constprop(c)} \precsim_{(\pi_0, \pi_1)} \sem{c}\).  
\begin{align}
  \sem{\constprop(c)} \refines_{(\pi_0, \pi_1)} \sem{c} \DEF \left\{ \,
  \begin{aligned}
    & \fof{}{\constprop(c)}{nrm} \subseteq \gmm{\pi_1}{\pi_0}(\fof{}{c}{nrm}, \fof{}{c}{err}),
    \\ 
    &\fof{}{\constprop(c)}{err} \subseteq \gmm{\kD}{\pi_0}(\oset, \fof{}{c}{err}),  \text{ and }
    \\
    &\fof{}{\constprop(c)}{dvg} \subseteq 
    \gmm{\kD}{\pi_0}(\fof{}{c}{dvg},\fof{}{c}{err})
  \end{aligned} \right.
  \label{eq:bref_abort_opt}
\end{align}

\subsection{Kripke Relations for Cminorgen}
\label{subsec:krel-cminorgen}

CompCert employs a hierarchy of memory relations—ranging from simple identity to extension and, in the most intricate cases, memory injection. Each phase is thus verified with the weakest relation sufficient for capturing its memory effects, which keeps proofs as lightweight as possible.

Among phases of CompCert, Cminorgen is one of the most challenging phase: it eliminates  scalar local variables that are never addressed with the \& operator from activation records and coalesces separate memory blocks allocated to addressed local variables within a function activation into a single block representing the activation record as a whole, thus reshaping memory layout in a nontrivial way. In the original CompCert proof, this is justified using the memory injection relation. In our framework, we recast this verification by exposing memory injections in a Kripke relation. 
We show key definitions as follows, where the memory model of CompCert is explained in \S\ref{subsec:dom_front}.
\paragbf{Memory injection of CompCert}
CompCert introduces \emph{injection functions} to track the general correspondence between memory blocks. An injection function $f: \kmeminj$ ($\kmeminj$ $\triangleq \block \partfun \block \times \offset$) is a partial function mapping source memory blocks to target locations, in the following form:
  \[
     f(b_1) = \bot \quad \text{ or } \quad f(b_1) = (b_2, \delta).
  \]
 The former means that $b_1$ is unmapped, and the later means that a location in $b_1$ is mapped to a target location of block $b_2$ with $\delta$ offset increased. Given an injection function \(f\),
 we use $\vinj{f}{v_1}{v_2}$ to assert the correspondent of program values (naturally lifted to \(\vinj{f}{\Vec{v_1}}{\Vec{v_2}}\) for a list of values), and use $\minj{f}{m_1}{m_2}$ to represent \emph{memory injections}. Specifically,
a pointer value \(\ptr{(b_1, o_1)}\) is injected to \(\ptr{(b_2,o_2)}\) if there is a \(\delta \in \offset\) s.t. \(f(b_1) = {(b_2, \delta)}\) and \(o_2 = o_1 + \delta\); integers or floating numbers are injected if the source value is identical to the target value; and
    the special value \kundef can be injected to any values.
Memory injection \(\minj{f}{m_1}{m_2}\) holds if memory contents in \(m_1\) are preserved to \(m_2\) for all injected addresses (simplified for presentation purposes), i.e., 
\begin{align*}
\text{for any } b_1\app b_2\app o_1\app o_2,\;
    &\begin{aligned}
      \vinj{f}{\ptr(b_1,o_1)}{\ptr(b_2, o_2)} \imply
      \vinj{f}{m_1(b_1,o_1)}{m_2(b_2, o_2)}.        
    \end{aligned}
\end{align*}

We reuse the notion of Kripke memory relation for protection called \kinjp~\cite{DBLP:conf/pldi/KoenigS21,DBLP:journals/pacmpl/ZhangWWKS24} to formulate rely-guarantee reasoning~\cite{DBLP:conf/popl/LiangFF12} in our framework.
    \begin{definition} [The Kripke memory relation \kinjp]
    \label{def:injp}
        $\kwd{injp} \triangleq \pair{W_\sinjp, \leadsto_\sinjp, R_\sinjp}$, where  \(W_\sinjp \triangleq \kmeminj \times \kmem \times \kmem\), \(R_\sinjp:\mathcal{K}_{W_\iii{injp}}(\kmem, \kmem)\) is a Kripke relation between memory states such that
    \[
       \forall f\, m_1\, m_2,\;
       (m_1, m_2) \in R_\sinjp(f, m_1, m_2) \iff \minj{f}{m_1}{m_2}
    \] 
and \(\leadsto_\sinjp (\subseteq W_\sinjp \times W_\sinjp)\) is an accessibility relation s.t. for any worlds  \((f, m_1, m_2), (f', m_1', m_2')\),
\begin{align*}
    (f, m_1,  m_2) \leadsto_\sinjp (f', m_1', m_2') & \iff
    \begin{aligned}
        \left\{
        \begin{aligned}
        (1) &\ \kwd{unmapped}(f) \subseteq \kwd{unchanged\_on}(m_1, m_1'); \\
        (2) &\ \kwd{out\_of\_reach}(f, m_1) \subseteq \kwd{unchanged\_on}(m_2, m_2'); \\
        (3) &\ \kwd{separately\_increased}(f, f', m_1, m_2); \text{ and} \\
        (4) &\ \kwd{mem\_forward}(m_1, m_1') \land \kwd{mem\_forward}(m_2, m_2').
        \end{aligned}
        \right.                
    \end{aligned}
\end{align*}
Here \(\kwd{unchanged\_on}(m, m')\) denotes the set of memory locations where the memory contents remain unchanged between \(m\) and \(m'\).
Clause (1) asserts that values at locations unmapped by \( f \) remain unchanged. 
Clause (2) indicates that values at locations not mapped to by \( f \) from valid blocks of \( m_1 \) are also unchanged. 
Clause (3) states that \( f' \) extends \( f \) by adding mappings for new blocks while preserving the existing mappings, and new blocks are not permitted to map into the existing memory blocks of \( m_2 \). 
Clause (4) represents side conditions that are not essential to the core idea.

Intuitively, the relation \(\leadsto_\sinjp\) governs the evolution of worlds across internal executions and external calls while enforcing disciplined memory protection. Concretely, private and public memory locations are identified by an injection \(f\): public locations are those related by \(f\), whereas private locations are either unmapped in the source or out-of-reach in the target (having no preimage through \(f\)). During world evolution, memory cells at private locations must remain unchanged, ensuring that the environment cannot observe or modify them, while public locations are preserved up to the injection so that loads, stores, and pointer comparisons behave consistently between the source and target. These constraints enable \(\kinjp\) to account for complex memory transformations, such as allocating new stack slots for spilled register variables, reorganizing stack frames, and extending or coalescing memory blocks across compilation phases, while preserving behavioral consistency, as demonstrated in CompCertO~\cite{DBLP:conf/pldi/KoenigS21,DBLP:journals/pacmpl/ZhangWWKS24}. 
\end{definition}

The Kripke memory relation 
\kinjp can be naturally integrated into our framework as follows.

\begin{definition} [The Kripke relation for Statement Transformation of Cminorgen]
Given the set of program states \(\kstate_s, \kstate_t\) (respectively for Csharpminor and Cminor),  where \(\kstate_s = \kstate_t \DEF \ktmp \times \kmem\).
We define the Kripke relation \(R_B: \mcal{K}{W_\iii{injp}}(\kstate_s, \kstate_t) \) for Cminorgen as follows, for any $(f, m_1, m_2) \in W_\sinjp$, \((te_1, m_1) \in \kstate_s\), and \((te_2, m_2) \in \kstate_t\),  
\begin{align*}
    &\bigl((te_1, m_1), (te_2, m_2)\bigr) \in R_B(f, m_1, m_2) \iff 
            \minj{f}{m_1}{m_2} \land  \forall x,\;
       \vinj{f}{te_1(x)}{te_2(x)}
\end{align*}
\end{definition}

\begin{definition} [The Kripke interface relation for Cminorgen]
\label{def:cmin_kir}
    \(A_\sss{C} \DEF \pair{W_\sinjp, \leadsto_\sinjp, \msal{R}{q}{\kC}, \msal{R}{r}{\kC}}\), where
    for any C queries \(q_1 = (id_1, \Vec{v_1}, m_1)\), \(q_2 = (id_2, \Vec{v_2}, m_2)\),
    and C replies \(r_1 = ({v_1'}, m_1')\), \(r_2 = ({v_2'}, m_2')\),
    \begin{align*}
       \bigl((id_1, \Vec{v_1}, m_1), (id_2, \Vec{v_2}, m_2)\bigr) \in \msal{R}{q}{\kC}(f, m_1, m_2)
       &\iff id_1 = id_2 \land
         \vinj{f}{\Vec{v_1}}{\Vec{v_2}} \land
         \minj{f}{m_1}{m_2}
         \\
       \bigl(({v_1'}, m_1'), ({v_2'}, m_2')\bigr) \in \msal{R}{r}{\kC}(f', m_1', m_2') 
       &\iff \vinj{f'}{v_1'}{v_2'} \land
         \minj{f'}{m_1'}{m_2'}
    \end{align*}
\end{definition}

Using the above Kripke relations (with \(A = B \DEF A_\kC\) in Def.~\ref{def:ref_function}), we have successfully reverified Cminorgen---one of the most intricate memory-reorganizing phases in CompCert---within our framework, which indicates that our proof techniques are scalable to the remainder of CompCert.

\subsection{Design Choices}
\label{subsec:design_choice}

In this subsection, we explain why \(\gamma\) is designed to take two arguments rather than just one.
For illustration, recall that Example~\ref{ex:bref_compcert} defines the preservation of terminating behaviors as follows.
\begin{align}
\label{eq:compcert_nrm3}
\begin{aligned} %
    \sem{t}.(\knrm) {\precsim_R}\; & (\sem{s}.(\knrm), \sem{s}.(\kerr))
    \DEF
    \forall \sigma_t\ \tau\ \sigma'_t\ \sigma_s,\
    (\sigma_t, \tau, \sigma'_t) \in \fof{}{t}{nrm} \Rightarrow
    (\sigma_s, \sigma_t) \in R \Rightarrow 
    \\
    &\big(\exists \sigma'_s, (\sigma_s, \tau, \sigma'_s) \in \fof{}{s}{nrm}
    \wedge (\sigma'_s, \sigma'_t) \in R\big) \vee
    \left(\exists \tau_0, (\sigma_s, \tau_0) \in \fof{}{s}{err} \wedge \tau_0 \leqslant_T \tau\right);
\end{aligned}
\end{align}
where 
the termination behavior of a target program may correspond either to the terminating behavior of the source program or to its aborting behavior, with different requirements: starting from related initial states, the former requires the final states to be related, whereas the latter does not. 
{
We next show that designing \(\gamma\)-functions with a single parameter would be less flexible to meet the two  different requirements than that with two parameters by contraposition.
}
{Specifically, suppose we intended to find instances of \(\gamma: \mS \arrow \mT\) to encode Formula~(\ref{eq:compcert_nrm3}) for algebraic reasoning.} There may exist two choices, and they must satisfy the soundness conditions as  follows.
\begin{align}
  & \text{Chioce 1:  }\;
    N_t \subseteq \gmm{\kN}{R}(N_s) \cup \bigl(\gmm{\kE}{R}(E_s) \times \kstate_t\bigr);
    \qquad \text{Choice 2:  } \;
    N_t \subseteq \gmm{\kN}{R}\bigl(N_s \cup (\wE_s \times \kstate_s)\bigr)
  \\
  &\text{Soundness of chioce 1:} \qquad
    N_t \subseteq \gmm{\kN}{R}(N_s) \cup \bigl(\gmm{\kE}{R}(E_s) \times \kstate_t\bigr) 
    \;\imply\;
    N_t \precsim_R (N_s, E_s)
  \\
  &\text{Soundness of choice 2:} \qquad
    N_t \subseteq \gmm{\kN}{R}\bigl(N_s \cup (\wE_s \times \kstate_s)\bigr) 
   \;\imply\;
   N_t \precsim_R (N_s, E_s)
  \\
  &\begin{aligned}
    \text{where }
    &\wE_s \DEF
    \Set{(\sigma, \tau) \in \kstate_s \times \ktrace \mid
    \exists \tau_0\; \tau_1,\;(\sigma, \tau_0) \in E_s \land
    \tau = \tau_0 \cdot \tau_1 }
    \\
    &\gmm{\kN}{R} : \pset{\nrmrel{\kstate_s}} \arrow \pset{\nrmrel{\kstate_t}} 
    \\
    &\gmm{\kE}{R} : \pset{\kstate_s \times \ktrace} \arrow \pset{\kstate_t \times \ktrace} 
  \end{aligned}
\end{align}
where in Choice 2 we employ a trace-extension operator \(\wE_s\) for the source aborting behavior \(E_s\) such that the observable event trace of \(E_s\) can be a prefix of the target terminating behavior \(N_t\).
In fact, we can define the following instances of \(\gamma\) that satisfy the above soundness conditions.
\begin{align}
    &\gmm{\kN}{R}(N) \DEF \Set{(\sigma_t, \tau, \sigma'_t) | 
     \forall \sigma_s,
     (\sigma_s, \sigma_t) \in R \Rightarrow
     (\exists \sigma'_s, (\sigma_s, \tau, \sigma'_s) \in N
     \wedge (\sigma'_s, \sigma'_t) \in R)
     }
\\
    &\gmm{\kE}{R}(E) \DEF \Set{(\sigma_t, \tau) | 
     \forall \sigma_s,
     (\sigma_s, \sigma_t) \in R \Rightarrow
      \exists \tau_0, (\sigma_s, \tau_0) \in E \wedge \tau_0 \leqslant_T \tau
     }
\end{align}
\begin{figure}[t]
    \begin{subfigure}[b]{.48\linewidth}
    \centering 
     \includegraphics[width=0.82\linewidth]{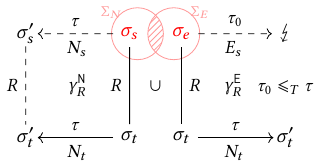}
    \caption{Behavior refinement of choice 1}
    \label{subfig:gamma-choice1}
    \end{subfigure} 
    \begin{subfigure}[b]{.48\linewidth}
        \flushright
    \includegraphics[width=0.82\linewidth]{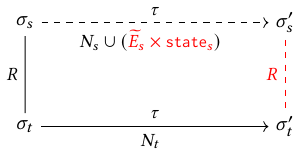}
    \centering
    \caption{Behavior refinement of choice 2}
    \label{subfig:gamma-choice2}
    \end{subfigure}   
\caption{Choice 1 separates correspondences for normal termination and abortion using distinct 
$\gamma$-instances, but they fail to cover full cases: \( \gmm{\kN}{R}\) fails for (\(\Sigma_E-\Sigma_N\)) and \( \gmm{\kE}{R}\) fails for (\(\Sigma_N-\Sigma_E\)). Choice 2 forces uniform correspondence between the final states, even when the target terminates normally but the source aborts.}
    \label{fig:design-choices}
    \Description{This figure compares two unsuccessful design choices for defining behavior refinement between target normal termination and source normal termination or abortion. Subfigure (a), Choice 1, uses two separate \(\gamma\)-instances: one for source normal termination and one for source abortion. The diagram highlights a mismatch between the set \(\Sigma_N\) of source initial states that may terminate and the set \(\Sigma_E\) of source initial states that may abort. States in \(\Sigma_E \setminus \Sigma_N\) are not properly handled by the normal-termination instance, and states in \(\Sigma_N \setminus \Sigma_E\) are not properly handled by the abortion instance, so the two cases cannot be cleanly combined. Subfigure (b), Choice 2, forces normal termination and abortion into a single combined relation. This gives uniform correspondence with the target’s normal termination, but it also requires assigning an artificial final source state even when the source aborts and has no real final state. The figure illustrates why both choices are inadequate for realistic compiler verification.}
\end{figure}

\emph{Choice~1} employs distinct instances of \(\gamma\) to separately capture the two correspondences between source and target behaviors. The instance \(\gmm{\kN}{R}\) characterizes cases where, from any related initial state, the source program may terminate normally with a matching final state. The instance \(\gmm{\kE}{R}\) characterizes cases where, from any related initial state, the source program may abort. 
However, there may exist related source states from which the source program does not terminate or does not abort. More specifically, we define \(\Sigma_N \DEF \Set{\sigma_s|\exists \tau\,\sigma_s',\, (\sigma_s, \tau, \sigma_s') \in N_s}\), and \(\Sigma_E \DEF \Set{\sigma_e|\exists \tau,\, (\sigma_e, \tau) \in E_s}\), which respectively denotes the set of initial states for termination and abortion.
As illustrated in Fig.~\ref{fig:design-choices},
those states in \(\Sigma_N\) but not in \(\Sigma_E\) are not properly quantified by \(\gmm{\kE}{R}\), and those states in \(\Sigma_E\) but not in \(\Sigma_N\) are not properly quantified by \(\gmm{\kN}{R}\).
Simply taking the union of \(\gmm{\kN}{R}\) and \(\gmm{\kE}{R}\) does not solve this problem.
\emph{Choice~2} avoids this issue by taking the union of \(N_s\) and \(\widetilde{E_s} \times \kstate_s\) first.
This enforces a uniform treatment of normal termination and abortion, aligning both with the target’s normal termination. 
As a result, whenever the source program aborts, one must provide an artificial final source state related to the actual target final state, even though aborted executions do not yield canonical final states. 
{That is, for any target state there must exist a related source state, which may not in fact exist.} 
This additional requirement is unnatural and often infeasible under general state-matching relations. 
Therefore, both choices are ill-suited for realistic compiler verification.
{By contraposition, we conclude that it would be more challenging to instantiate a single-parameterized \(\gamma\) for algebraic compiler verification, especially when the source semantics is nondeterministic.}
\section{Compositional  Correctness}
\label{sec:compose_correctness}

Refinement algebras further enhance the compositionality of compiler verification. The semantic operators in extended KATs enable the semantic composition of program substructures, while refinement algebras support the compositionality of behavioral refinement. Their strong algebraic properties allow us to design automated proof tactics to mechanize compiler verification.
In this section, taking the transformation of Clight statements in Cshmgen as an example, we first show the proof style of our framework in \S\ref{subsec:proof_style}, along with two essential automated tactics. We then present the correctness of compilation phases in \S\ref{subsec:compile_correct} and the module-level compositionality in \S\ref{subsec:module_compose}.

\subsection{Proof Style}
\label{subsec:proof_style}
\begin{figure}[ht]
\def\ksuc{\textbf{Suc.}}
\begin{gather*}
  \Infer[Seq]{%
    \tr{s_1} = \ksome(u_1) \;\land\; \tr{s_2} = \ksome(u_2)
  }{%
    \tr{s_1;s_2} = \ksome(u_1;u_2)
  }
  \\
  \Infer[Brk]{%
  }{%
    \tr{\textbf{break}} = \ksuc(\textbf{exit } n_b)
  }
  \quad
  \Infer[Ctn]{%
  }{%
    \tr{\textbf{continue}} = \ksuc(\textbf{exit } n_c)
  }
  \\
  \Infer[Loop]{%
    \trs{1}{0}{s_1} = \ksome(u_1) \;\land\;
    \trs{0}{n_c+1}{s_2} = \ksome(u_2)
  }{%
    \tr{\textbf{loop } s_1\ s_2}
    = \ksome\bigl(\textbf{block} \textcolor{blue}{\{} \textbf{Sloop } (\textbf{block } \{u_1\}; u_2) \textcolor{blue}{\}}\bigr)
  }
\end{gather*}
\caption{Original Cshmgen rules from CompCert (selectively listed), where \ksuc\xspace is short for \ksome.}
\label{fig:compilation_rules}
\Description{This figure lists selected statement-translation rules from the original Cshmgen phase of CompCert, written as inference rules. The Seq rule states that if statements \(s_1\) and \(s_2\) translate successfully to \(u_1\) and \(u_2\), then the sequential statement \(s_1;s_2\) translates to the sequential composition \(u_1;u_2\). The Brk rule translates \textbf{break} to a successful result containing \textbf{exit} \(n_b\), and the Ctn rule translates \textbf{continue} to a successful result containing \textbf{exit} \(n_c\). The Loop rule translates a Clight loop with body statements \(s_1\) and \(s_2\) into a Csharpminor term built from nested \textbf{block} constructs and an \textbf{Sloop}; the translated first body \(u_1\) is placed inside an inner block, followed by the translated second body \(u_2\). Overall, the figure illustrates how Cshmgen compiles structured control flow into lower-level block, loop, and exit constructs.}
\end{figure}
The proof style in our framework is demonstrated by the following proof sketch of Lemma \ref{lemma:clitgen}, where $\mathcal{T}_\ttt{Cshmgens}$ denotes the statement translation of Cshmgen, with selected compilation rules shown in Fig. \ref{fig:compilation_rules}.
We next omit the \(R_B\), defined in Def.~\ref{def:Cshmgen_stmt_ref}, for the behavior refinement \(\refines_{(R_B, n_b, n_c, \chi_e)}\).

\begin{lemma} \label{lemma:clitgen}
For any natural numbers
$n_b$, $n_c$, and callee function behaviors $\chi_t$, $\chi_s$, $\chi_e$, if $n_b \neq n_c$ and $\chi_t \subseteq \gmm{\kN}{R_B}(\chi_s, \chi_e)$, then for any Clight statement $s$ and Csharpminor statement $u$,
    \begin{align*}
    \text{if}\ \ \mathcal{T}_\ttt{Cshmgens}(n_b, n_c, s) = \ksome (u) , \text{ then } \bracket{u}_{\chi_t} \sqsubseteq_{(n_b, n_c, \chi_e)} \bracket{s}_{\chi_s}.
    \end{align*}
\end{lemma}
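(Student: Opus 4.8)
The plan is to proceed by structural induction on the Clight statement $s$, following the recursive structure of $\mathcal{T}_\ttt{Cshmgens}$. For each syntactic case, the inductive hypotheses give us behavior refinement for the sub-statements (with possibly different counters $n_b, n_c$ in the loop case), and we must assemble these into behavior refinement for the whole compiled statement. The key observation that makes this tractable is that the denotational semantics of both Clight statements and Csharpminor statements are built compositionally from the same family of semantic operators ($\cup$, $\circ$, $^*$, $^\infty$, $\mathbf{test}$, $\idrel$), and the behavior refinement $\sqsubseteq_{(R_B, n_b, n_c, \chi_e)}$ is, by Definition~\ref{def:Cshmgen_stmt_ref}, just a conjunction of set inclusions $\fof{}{u}{\kX} \subseteq \gamma_\kX(\fof{}{s}{\ldots}, \ldots)$ for the relevant $\gamma$-instances from Theorem~\ref{thm:compcert-stmt-RA}. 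So each inductive step reduces to pushing a $\gamma$-function through a semantic operator, which is exactly what the enhanced-RA properties (union-preservation, concatenation-preservation, star-preservation, infinity-preservation) established in Theorem~\ref{thm:compcert-stmt-RA} are designed to license.

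Concretely, I would handle the cases as follows. For \textbf{skip}, \textbf{set}, \textbf{call}, \textbf{goto}, \textbf{break}, \textbf{continue}, \textbf{return}: these are base cases where the target behavior is either $\idrel$ (matched by $\mathbf{test}/\idrel$ on the source side, since $R_B = \lambda w. \idrel$ makes $\gamma_\kN$ act essentially as the identity up to the abort disjunct) or the empty set, or directly built from $\mathbf{test}(\ldots)$ relations whose pre/post-conditions are preserved because Cshmgen's expression translation preserves evaluation up to possible source abort --- I would quote the relevant expression-translation correctness lemma for the atomic-statement cases and for the handling of $\fof{}{s}{err}$ arising from failed evaluations. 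Note the subtlety that \textbf{break} compiles to $\textbf{exit}(n_b)$ and \textbf{continue} to $\textbf{exit}(n_c)$: here the target's $\kblk_{n_b}$ field must refine the source's $\kbrk$ field, which is precisely the third and fourth clauses of Definition~\ref{def:Cshmgen_stmt_ref}, and the hypothesis $n_b \neq n_c$ guarantees these do not collide. For the sequential case $s_1; s_2$, each behavior field of $\fof{}{u_1; u_2}{\kX}$ is a union/composition of fields of $u_1$ and $u_2$ (identical in form to the Clight-side definition via the overloaded $\fatsemi$), so applying concatenation-preservation of $\mathcal{A}_\kN$ (self), and of $\mathcal{A}_\kN, \mathcal{A}_\kF$ to $\mathcal{A}_\kF$, etc., plus union-preservation, closes it --- this is exactly the schematic chain $(\ref{pr:def_tgt}) \subseteq (\ref{pr:ind_hypo}) \subseteq (\ref{pr:concat_incl}) = (\ref{pr:src_def})$ previewed after Lemma~\ref{lemma:simple-instance-prop}. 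For \textbf{if}, union-preservation plus the $\mathbf{test}$-on-condition handling suffices. For \textbf{switch}, I would first prove an auxiliary refinement for the case-list semantics $\ssec{\lslist{k}}$ and $\wwec{\lslist{k}}(n)$ by an inner induction on the case list, tracking how exit-number shifting (illustrated in Fig.~\ref{fig:transl_exam}) corresponds to the $\kblk$-indexing, then combine over the (disjoint) union on $n$.

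The main obstacle will be the \textbf{loop} case. Here Cshmgen compiles $\textbf{loop } s_1\ s_2$ into $\textbf{block}\{\textbf{Sloop}(\textbf{block}\{u_1\}; u_2)\}$, and the semantics of the Clight loop uses the iterated forms $N_{12}^*$, $(\triangle N_{12})^\infty$, $((\triangle N_{12})^* \circ \blacktriangle N_{12})^*$, $((\triangle N_{12})^* \circ \blacktriangle N_{12})^\infty$ across six behavior fields (\knrm, \kerr, \kblk, \kfin, \kinf, and the call/goto fields), whereas the Csharpminor side expresses the same iteration through the $\kblk$-indexed semantics of \textbf{Sloop} composed with the \textbf{block} unwrapping. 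The crux is a bridging lemma: that $\fof{}{u_1; u_2}{\knrm}$ (more precisely the ``$N_1^j$'' and ``$N_{12}^j$'' analogues and the block-indexed versions) refines the source $N_1, N_{12}$ under $\gamma$, and then --- the genuinely new algebraic content --- that the star and infinity operators commute with $\gamma$ in the required direction, which is precisely star-preservation (clauses (v)) and infinity-preservation (clause (vi)) of Theorem~\ref{thm:compcert-stmt-RA}. I would isolate an algebraic lemma of the shape: if $N_t \subseteq \gamma_\kN(N_s, E_s)$ and $M_t \subseteq \gamma_\kX(M_s, E_s)$, then $N_t^* \circ M_t \subseteq \gamma_\kX(N_s^* \circ M_s, N_s^* \circ E_s)$ and $(N_t^* \circ M_t) \cup N_t^\infty \subseteq \gamma_\kX((N_s^* \circ M_s) \cup N_s^\infty, N_s^* \circ E_s)$, apply it field by field, and then reconcile the accumulated abort set $N_s^* \circ (\fof{}{s_1}{err} \cup \ldots)$ with the source loop's actual \kerr\ field using the absorption identities of relational Kleene algebra (e.g. $N_s^* \circ E \subseteq$ the loop's error set because the loop's error set already contains $N_{12}^* \circ (\ldots)$). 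The $\triangle/\blacktriangle$-filtered forms in the \kfin/\kinf\ fields require one extra care: since $R_B = \lambda w. \idrel$ preserves traces exactly, $\triangle$ and $\blacktriangle$ commute with the $\gamma$-images, so the silent/non-silent decomposition on source and target match up. Once the loop bridging lemma is in place, the remaining cases (\textbf{label}, which just copies all fields, and the ``otherwise'' cases) are immediate, and the \textbf{goto}/intact-semantics interaction is handled by the fact that at the statement level of Cshmgen the broken-semantics refinement composes through $\unllec{s}.(\kgto)^*$ by the same star-preservation argument.
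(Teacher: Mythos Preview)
Your proposal is correct and follows essentially the same approach as the paper: structural induction on $s$, with each inductive step discharged by invoking the enhanced-RA properties (union-, concatenation-, star-, and infinity-preservation from Theorem~\ref{thm:compcert-stmt-RA}) to push the $\gamma$-functions through the semantic operators, and with atomic cases handled directly from the expression-translation lemmas. The paper's sketch elaborates only the sequential, loop, and call cases (the latter via the diagram of Fig.~\ref{fig:call_proof}) and packages the algebraic reasoning into the tactics \texttt{gamma\_simpl} and \texttt{solvefix}, but your more explicit per-case plan---including the inner induction for \textbf{switch} and the observation that $\triangle/\blacktriangle$ commute with the identity $R_B$---is exactly the unfolding of what those tactics do; your final remark about lifting through $\unllec{s}.(\kgto)^*$ to the intact semantics is really the bridge from this lemma to the function-level Theorem~\ref{thm:cshmgen} rather than part of the present induction, but it is the right next step.
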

\begin{proof} [Proof sketch]
    By induction on the statement $s$, we derive proof obligations asserting that for every syntactic construct of $s$, if each substructure of $s$ satisfies the refinement relation with its compiled counterpart, then the entire statement $s$ does as well. To illustrate how the refinement algebra works in the proof, we demonstrate the core ideas behind three key proof obligations.
     \paragbf{Case 1: $s$ is a sequential statement}
    By induction, since $\mathcal{T}_\ttt{Cshmgens}(n_b, n_c, s_1; s_2) = \ksome(u)$, then
    $\mathcal{T}_\ttt{Cshmgens}(n_b, n_c, s_1) = \ksome(u_1)$,
    $\mathcal{T}_\ttt{Cshmgens}(n_b, n_c, s_2) = \ksome(u_2)$, and $u = u_1;u_2$. Prove that:
\begin{gather}
    \text{if }\bracket{u_1}_{\chi_t} \sqsubseteq_{(n_b, n_c, \chi_e)} \bracket{s_1}_{\chi_s}\! \text{ and }
    \bracket{u_2}_{\chi_t} \sqsubseteq_{(n_b, n_c, \chi_e)} \bracket{s_2}_{\chi_s}  
    \text{, then }
    \bracket{u_1;u_2}_{\chi_t} \sqsubseteq_{(n_b, n_c, \chi_e)} \bracket{s_1;s_2}_{\chi_s}.
    \label{eq:Cshmgen-nrm}
\end{gather}
For the sake of brevity, let $N_i$, $E_i$ and $C_i$ respectively represent
$\fof{\chi_s}{s_i}{nrm}$, $\fof{\chi_s}{s_i}{err}$ and $\fof{\chi_s}{s_i}{cll}$ for $i=$ 1, 2. The proofs for the refinement of terminating behaviors are shown as follows.
\begin{align}
  &\fof{\chi_t}{u_1;u_2}{nrm} =
   \fof{\chi_t}{u_1}{nrm} \circ
  \fof{\chi_t}{u_2}{nrm} \label{pf:def_tgt}
  \\
  &\begin{aligned} \subseteq
  \mathcolor{red}
  {\gmm{\kN}{R_B}(N_1,\ E_1 \cup C_1 \circ \chi_e)} \circ
      \mathcolor{red}
      {\gmm{\kN}{R_B}(N_2,\ E_2 \cup C_2 \circ \chi_e)}
   \end{aligned} \label{pf:ind_hypo}
  \\
  &\begin{aligned} \subseteq
     \gmm{\kN}{R_B}(\mathcolor{red}{N_1 \circ N_2}, \
         \mathcolor{red}
         {(E_1 \cup  C_1 \circ \chi_e) \cup}
        \mathcolor{red}
         {N_1 \circ \left(E_2 \cup C_2 \circ \chi_e
         \right)}
    )
   \end{aligned} \label{pf:concat_incl}
   \\
  &\begin{aligned} \subseteq
     \gmm{\kN}{R_B}(N_1 \circ N_2, \
        \mathcolor{red}
        {(E_1 \cup N_1 \circ E_2) \cup}
         \mathcolor{red}
         {\left(C_1 \cup N_1 \circ C_2
         \right) \circ \chi_e}
    )
    \end{aligned}  \label{pf:cup_circ_comm}
    \\
    &\begin{aligned} =
    \gmm{\kN}{R_B}(\fof{\chi_s}{s_1;s_2}{nrm},
       \fof{\chi_s}{s_1;s_2}{err} \cup
       \fof{\chi_s}{s_1;s_2}{cll} \circ \chi_e
       )
  \end{aligned} \label{pf:src_def}
\end{align}
The key to the above proofs is the inclusion (\ref{pf:ind_hypo}) $\subseteq$ (\ref{pf:concat_incl}), which is directly solved by the concatenation-preserved property of RAs.
The inclusion (\ref{pf:concat_incl}) $\subseteq$ (\ref{pf:cup_circ_comm}) is by the axioms of Kleene Algebras;
the inclusion (\ref{pf:def_tgt}) $\subseteq$ (\ref{pf:ind_hypo}) is by hypotheses;
and the equivalence in (\ref{pf:def_tgt}) and between (\ref{pf:cup_circ_comm}) $\sim$ (\ref{pf:src_def}) holds by definition.
We see that refinement algebras reduce refinement proofs into the proof of relational propositions in Kleene Algebras.
Other cases for sequential statements can be proved similarly.
\paragbf{Case 2: $s$ is a loop statement} By structural induction, the proof goal for the terminating case for example is to show:
 if $\bracket{u_1}_{\chi_t} \sqsubseteq_{(1, 0, \chi_e)} \bracket{s_1}_{\chi_s}\! \text{ and }
    \bracket{u_2}_{\chi_t} \sqsubseteq_{(0, n_c + 1, \chi_e)} \bracket{s_2}_{\chi_s}
$, then
\begin{align}
    &\begin{aligned} \label{eq:loop_refine}
        &\fof{\chi_t}{\textbf{block} \{ \textbf{Sloop } (\textbf{block } \{u_1\}; u_2) \}}{nrm} \ \subseteq  \\ &\quad \quad\gmm{\kN}{R_B}\big(\fof{\chi_s}{\textbf{loop } s_1\ s_2}{nrm},
            \fof{\chi_s}{\textbf{loop } s_1\ s_2}{err} \cup (\fof{\chi_s}{\textbf{loop } s_1\ s_2}{cll} \circ \chi_e)\big)
    \end{aligned}
\end{align}
Since the semantics of loops is defined with extended KATs operators $^*$ (for the reflexive transitive closure) and $^\infty$ (for the infinite iterations of composition), the behavior refinement between them can be efficiently verified using the instances of star-preserved RAs and infinity-preserved RAs.
\newcommand{\gms}[2]{\gmm{\kN}{R_B}(\fof{\chi_s}{s_#1}{#2}, E_#1)}
\newcommand{\Gms}[3]{\mathcolor{#3}{\gmm{\kN}{R_B}(\fof{\chi_s}{s_#1}{#2}, E_#1)}}

In our \Coq implementation, proof obligations for composition statements  (e.g., case 1 and case 2) is mechanically
solved by two automated new proof tactics: {\tts gamma\_simpl} and {\tts solvefix}. The {\tts gamma\_simpl} tactic automatically reduces the behavior refinement between source and target programs to relational propositions of extended KATs, by properties of refinement algebras. The {\tts solvefix} tactic mechanically verifies these relational propositions by properties of extended KATs. More specifically,
by unfolding the definition of the left-hand side of Formula~(\ref{eq:loop_refine}), the goal is to show:
\begin{align*}
     &\begin{aligned}
         & \big((\mathcolor{ForestGreen}{\fof{\chi_t}{u_1}{nrm}}
            \cup \mathcolor{OliveGreen}{\fof{\chi_t}{u_1}{blk}_0})
            \circ \mathcolor{Orange}{\fof{\chi_t}{u_2}{nrm}}\big)^*
            \ \circ \\
         &\quad\big(\mathcolor{Plum}{\fof{\chi_t}{u_1}{blk}_1} \cup
         ((\mathcolor{ForestGreen}{\fof{\chi_t}{u_1}{nrm}} \cup 
         \mathcolor{OliveGreen}{\fof{\chi_t}{u_1}{blk}_0}) \circ
         \mathcolor{Cyan}{\fof{\chi_t}{u_2}{blk}_0})\big)
     \end{aligned}
     \\
     &\begin{aligned}
        \quad \quad \quad \ \ \subseteq &\ \gmm{\kN}{R_B}(\fof{\chi_s}{\textbf{loop } s_1\ s_2}{nrm},
            \fof{\chi_s}{\textbf{loop } s_1\ s_2}{err} \cup (\fof{\chi_s}{\textbf{loop } s_1\ s_2}{cll} \circ \chi_e))
     \end{aligned}
\end{align*}
We then use the induction hypothesis to rewrite every set of the left-hand side of this goal, and remember 
    $\fof{\chi_s}{s_i}{err} \cup (\fof{\chi_s}{s_i}{cll}) \circ \chi_e$) as ${E_i}$ for brevity. The proof goal is reduced to: 
\begin{align}
    &\begin{aligned}
       &\big((\Gms{1}{nrm}{ForestGreen} \cup
        \Gms{1}{ctn}{OliveGreen}) \circ
        \Gms{2}{nrm}{Orange}\big)^* \ \circ \\
         &\quad\big( \Gms{1}{brk}{Plum} \cup
         (\Gms{1}{nrm}{ForestGreen} \cup
         \Gms{1}{ctn}{OliveGreen}) \circ
         \Gms{2}{brk}{Cyan}\big)    \nonumber
    \end{aligned} 
     \\
     &\begin{aligned}
        \quad \quad \quad 
        \quad \quad
        \subseteq &\ \gmm{\kN}{R_B}(\fof{\chi_s}{\textbf{loop } s_1\ s_2}{nrm},
            \fof{\chi_s}{\textbf{loop } s_1\ s_2}{err} \cup (\fof{\chi_s}{\textbf{loop } s_1\ s_2}{cll} \circ \chi_e))
     \end{aligned} \label{eq:gamma_equation}
\end{align}
By applying properties of underlying RA instances (using \tra), we reduce the proof goal to:
\begin{align}
    &\begin{aligned}
        &\big((\fof{\chi_s}{s_1}{nrm} \cup
        \fof{\chi_s}{s_1}{ctn}) \circ
        \fof{\chi_s}{s_2}{nrm}\big)^*
        \ \circ \\
        &\quad \quad \big(\fof{\chi_s}{s_1}{brk}\cup (\mathcolor{dkgreen}{(} \fof{\chi_s}{s_1}{nrm} \cup \fof{\chi_s}{s_1}{ctn}\mathcolor{dkgreen}{)} \circ
        \fof{\chi_s}{s_2}{brk})\big)
      \subseteq
      \fof{\chi_s}{\textbf{loop } s_1\ s_2}{nrm};
      \label{eq:loop_nrm_ka}
    \end{aligned}
    \\
    &\begin{aligned}
        &\big((\fof{\chi_s}{s_1}{nrm} \cup \fof{\chi_s}{s_1}{ctn}) \circ \fof{\chi_s}{s_2}{nrm}\big)^* \ \circ
        \\ 
        &\quad \quad \big(E_1 \cup
            (E_1 \cup E_1) \cup (\mathcolor{dkgreen}{(} \fof{\chi_s}{s_1}{nrm} \cup \fof{\chi_s}{s_1}{ctn}\mathcolor{dkgreen}{)} \circ E_2) \\
        &\quad \quad \quad \quad \cup 
            (E_1 \cup E_1) \cup (\mathcolor{dkgreen}{(} \fof{\chi_s}{s_1}{nrm} \cup \fof{\chi_s}{s_1}{ctn}\mathcolor{dkgreen}{)} \circ E_2)\big) \\
        &\quad \quad \quad \quad \quad \quad \quad \quad \quad \quad
         \quad \quad \quad \ \ 
        \subseteq\ \fof{\chi_s}{\textbf{loop } s_1\ s_2}{err} \cup (\fof{\chi_s}{\textbf{loop } s_1\ s_2}{cll} \circ \chi_e).
        \label{eq:loop_err_ka}
    \end{aligned}
\end{align}
Finally, by properties of extended KATs (using tactic {\tts solvefix}), proof goals (\ref{eq:loop_nrm_ka}) and (\ref{eq:loop_err_ka}) are solved.

Although these equations may appear lengthy, they are actually quite intuitive. For instance, the left-hand side of Formula~(\ref{eq:loop_nrm_ka})
  says that after repeatedly executing $s_1$ and $s_2$ several times---where each time either $s_1$ terminates normally, or $s_1$ exits early due to a \textbf{continue} and then $s_2$ terminates normally---iterated execution may either break in $s_1$ or $s_2$. This behavior is precisely the semantics of the loop on the right-hand side.
  Therefore, using automated proof tactics to solve them is suitable.

\begin{figure}[tbh]
    \includegraphics[width=0.6\linewidth]{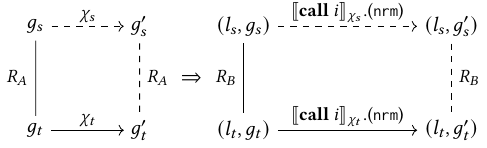}
    \caption{Proofs for the refinement of procedure calls. Here, \(R_A\) relates global states, and \(R_B\) relates pairs of local and global states. They satisfy that for any \(l_s, g_s\), and \(l_t, g_t\), if \(((l_s, g_s), (l_t, g_t)) \in R_B\), then \((g_s, g_t) \in R_A\).
    After procedure calls, if \(l_s\) and \(l_t\) is already related in \(R_B\), and for any $g_s', g_t'$, \((g_s', g_t') \in R_A\), then \(((l_s, g_s'), (l_t, g_t')) \in R_B\).
    }
    \label{fig:call_proof}
   \Description{This figure illustrates the proof idea for refinement of procedure calls. It shows how refinement of a caller’s behavior is reduced to refinement of the callee’s behavior through two state relations: \(R_A\), relating source and target global states, and \(R_B\), relating full states consisting of local and global components. The relation \(R_B\) is compatible with \(R_A\): whenever the full source and target states are related by \(R_B\), their global states are related by \(R_A\). During a procedure call, the local states of caller and callee remain unchanged, while the callee execution transforms only the global states. After the call returns, if the updated global states are related by \(R_A\), then combining them again with the unchanged local states yields full states related by \(R_B\). The figure highlights that caller refinement follows from callee refinement together with preservation of the local-state correspondence across procedure calls.}
\end{figure}

\paragbf{Case 3: $s$ is a call statement}
In our framework, the behavior refinement of procedure calls before and after the compilation depends on that of callee procedures. We use the PCALL language to illustrate the key idea of proofs for simplicity, as shown in Fig. \ref{fig:call_proof}. The refinement of the caller's behavior is derived from the refinement of the callee's behavior. The correspondence between local states is preserved since they remain unchanged before and after the procedure call.
The behavior refinement between Clight function calls and the compiled Csharpminor function calls is proved in a similar way, with additional considerations for arguments and return values.

It's worth noting that the behavior refinement of atomic statements has no algebraic structure and is verified directly according to their definitions.
We reuse some of CompCert original proofs for these cases, such as those for the transformation of expressions. 
\end{proof}
\textbf{Summary.} 
In our framework, almost all the proofs (including proofs of all the theorems in \S\ref{subsec:compile_correct} and \S\ref{subsec:module_compose}) have the following main proof steps (as shown in the proof of Lemma \ref{lemma:clitgen}):
\begin{itemize}
    \item By the algebraic properties of refinement algebras such as (\ref{prop:gunion}) $\sim$ (\ref{prop:gconcat}) and (\ref{prop:gstar}) $\sim$ (\ref{prop:ginfty}), the behavior refinement between substructures are composed into an overall refinement, without unfolding the concrete definition of $\gmm{\kN}{R_B}$ (except for proofs for atomic statements);
    \item By the properties of extended KATs, proofs are finished in a purely relational subsystem.
\end{itemize}
These two proof steps are automated in \Coq by our new {\tts gamma\_simpl} tactic and {\tts solvefix} tactic. For clarity, 
we illustrate the corresponding \Coq proofs to the first two proof cases of Lemma~\ref{lemma:clitgen} in Fig.~\ref{fig:rocq-refine-proofs},
where a predicate (\kwd{refines nbrk nctn dt ds Hree}) means
the denotation \(\kwd{dt}\) of \ClitDenote refines the denotation \(\kwd{ds}\) of \CshmDenote, namely \(\kwd{dt} \refines_{(\sss{nbrk}, \sss{nctn}, \sss{Hree})} \kwd{ds} \), corresponding to \(\sem{u}_{\chi_t} \refines_{(n_b, n_c, \chi_e)} \sem{s}_{\chi_s}\) defined in Def.~\ref{def:Cshmgen_stmt_ref}.
In other words, Fig.~\ref{fig:seq-sem-refines} means the proof obligation of Formula~(\ref{eq:Cshmgen-nrm});
and Fig.~\ref{fig:loop-sem-refines} means that if $\bracket{u_1}_{\chi_t} \sqsubseteq_{(1, 0, \chi_e)} \bracket{s_1}_{\chi_s}\! \text{ and }
    \bracket{u_2}_{\chi_t} \sqsubseteq_{(0, n_c + 1, \chi_e)} \bracket{s_2}_{\chi_s}
$, then
\begin{align}
    &\begin{aligned} \label{eq:loop_refines}
        &\sem{\textbf{block} \{ \textbf{Sloop } (\textbf{block } \{u_1\}; u_2) \}}_{\chi_t} 
        \refines_{(n_b, n_c, \chi_e)} \sem{\Sloop}_{\chi_s}.
    \end{aligned}
\end{align}
 

\begin{figure}[t]
  \centering
  \begin{subfigure}[t]{0.41\textwidth}
    \centering
    \begin{tabular}{|c|}
    \hline
\begin{lstlisting}[language=Coq]
 Lemma seq_sem_refines:
   forall ds1 ds2 dt1 dt2 nbrk ncnt Hree
   (EQ1: refines nbrk ncnt dt1 ds1 Hree)
   (EQ2: refines nbrk ncnt dt2 ds2 Hree),
   refines nbrk ncnt (seq_sem dt1 dt2) 
           (Clit.seq_sem ds1 ds2) Hree.
 Proof.
   preprocess;handle_special;
   try_rewrite EQ1;try_rewrite EQ2;
   gamma_simpl;fold_callee Hree;
   solvefix.
 Qed.
\end{lstlisting}
    \\ \hline
    \end{tabular}
    \caption{Lemma \code{seq\_sem\_refines}}
    \label{fig:seq-sem-refines}
  \end{subfigure}%
  \hfill
  \begin{subfigure}[t]{0.58\textwidth}
    \centering
    \begin{tabular}{|c|}
    \hline
\begin{lstlisting}[language=Coq]
 Lemma loop_sem_refines:
   forall nbrk ncnt Hree dt1 dt2 ds1 ds2 (NEQ: nbrk <> ncnt)
   (EQ1: refines 1 0 dt1 ds1 Hree)
   (EQ2: refines 0 (S ncnt) dt2 ds2 Hree),
   refines nbrk ncnt
     (Sblock_sem (loop_sem (seq_sem (Sblock_sem dt1) dt2)))
     (Clit.loop_sem ds1 ds2) Hree.
 Proof.
   preprocess; try_rewrite EQ1; try_rewrite EQ2;
   gamma_simpl;handle_special; repeat empty_simpl;
    gamma_simpl;fold_callee Hree;solvefix.
  Qed.
\end{lstlisting}
    \\ \hline
    \end{tabular}
    \caption{Lemma \code{loop\_sem\_refines}}
    \label{fig:loop-sem-refines}
  \end{subfigure}
  \caption{Refinement proofs for sequential and looping cases by automated Rocq tactics.}
  \Description{This figure contains two Rocq proof scripts that automate refinement proofs for sequential and loop constructs. Subfigure (a) shows the lemma \code{seq\_sem\_refines}. It states that if two target denotations \(dt_1\) and \(dt_2\) each refine the corresponding source denotations \(ds_1\) and \(ds_2\) under the same refinement parameters \code{nbrk}, \code{ncnt}, and \code{Hree}, then the sequential composition \code{seq\_sem dt1 dt2} refines the source sequential semantics \code{Clit.seq\_sem ds1 ds2}. The proof uses a short sequence of automated tactics, including preprocessing, rewriting with the hypotheses, simplifying \(\gamma\)-instances, folding callee relations, and solving fixed-point obligations. Subfigure (b) shows the lemma \code{loop\_sem\_refines}. It states that if \(dt_1\) refines \(ds_1\) under break and continue indices \(1\) and \(0\), and \(dt_2\) refines \(ds_2\) under indices \(0\) and \code{S ncnt}, then the translated loop built from \code{Sblock\_sem}, \code{loop\_sem}, and sequential composition refines the Clight loop semantics \code{Clit.loop\_sem ds1 ds2}. Its proof follows the same automated style, with additional simplification steps for empty sets. The figure illustrates that both sequential and looping refinement proofs can be discharged concisely by automated Rocq tactics.}
  \label{fig:rocq-refine-proofs}
\end{figure}

Three key tactics are employed in the proofs of Fig.~\ref{fig:rocq-refine-proofs}: \twrite, \tra and \tka.  
The tactic \twrite reduces proof goals by rewriting with the refinement conditions, typically of the form \(x \subseteq \gamma(y, z)\).  
The tactic \tra discharges \(\gamma\)-instances by exploiting the algebraic properties of the underlying RA instances.  
The tactic \tka resolves KAT (in-)equations by applying the equational laws of Kleene algebra with tests.  
Several auxiliary tactics are also used for supporting purposes, such as \kwd{preprocess} for unfolding definitions and simplifying by definitional equalities, \kwd{empty\_simpl} for eliminating empty sets, and \kwd{handle\_special} for handling special operators.

\paragbf{Remark} 
The \tka tactic is designed for KAT (in-)equations and can also handle simple equations with infinity operators (i.e., those grammatically equal), which are sufficient in our setting. More general cases with infinity operators are undecidable and must be handled manually. Atomic statements lack algebraic structure and are therefore verified directly by their definitions.

\subsection{Compilation Correctness}
\label{subsec:compile_correct}

The CompCert front-end compiles Clight to Cminor through the Cshmgen and Cminorgen phases. We use $\mathcal{T}_\ttt{Cshmgenf}$ for the transformation of Clight functions by Cshmgen, and $\mathcal{T}_\ttt{Cmingenf}$ for the transformation of Csharpminor functions by Cminorgen, whose correctness is shown as follows.

\subsubsection{Compilation Correctness from Clight to Cminor}
 We first define the identity Kripke interface relation as: \(A_\kU \DEF \pair{\kwd{Unit}, \leadsto_\kU, \lambda w. \idrel, \lambda w. \idrel}\), \(\leadsto_\kU \,\DEF \kunit \times \kunit\), and \kunit denotes a unit set. The KIR \(A_\kC\) is defined in Def.~\ref{def:cmin_kir}, and ``\(\refines_{\_ \ttarrow \_}\)'' is defined in Def.~\ref{def:ref_function}. We then have the following theorems.
\begin{theorem} [Cshmgen correctness] \label{thm:cshmgen}
    For any Clight function $F_s$ and Csharpminor function $F_u$,
    \begin{align*}
        \text{if }\, \mathcal{T}_\ttt{Cshmgenf} (F_s) = \ksome (F_u), \text{ then }
        \bracket{F_u} \sqsubseteq_{A_\kU \ttarrow A_\kU} \bracket{F_s}.
    \end{align*}
\end{theorem}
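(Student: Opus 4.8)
The plan is to derive the function-level theorem from the statement-level result of Lemma~\ref{lemma:clitgen}, bridging the two levels with correspondence lemmas for the function prologue (\(\fentry\)) and epilogue (\(\varsfree\)). First I would unfold Definition~\ref{def:ref_function} instantiated at \(A = B = A_\kU\). Since the Kripke relations of \(A_\kU\) are constantly the identity and \(\leadsto_\kU\) is the full relation, the gamma instances \(\gmm{\kN}{A_\kU}\), \(\gmm{\kC}{A_\kU}\), \(\gmm{\kF}{A_\kU}\), \(\gmm{\kI}{A_\kU}\) simplify to the prefix-closing shape (a target behavior is admitted precisely when it equals a source behavior or its trace extends a source abortion trace), and, being the \(B := A_\kU\) instance of Theorem~\ref{thm:open-module-RA}, they enjoy the union- and concatenation-preservation properties proved there. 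The goal then splits into (a) the domain equality \(\bracket{F_u}.(\kdom) = \bracket{F_s}.(\kdom)\), which is immediate since \(\mathcal{T}_\ttt{Cshmgenf}\) only reshapes the body and erases types while keeping the identifier, and (b) for every triple of callee oracles \(\chi_t, \chi_s, \chi_e\) with \(\chi_t \subseteq \gmm{\kN}{A_\kU}(\chi_s, \chi_e)\), five inclusions relating \(\knrm\), \(\kerr\), \(\kcll\), \(\kfin\), \(\kinf\) of \((\bracket{F_u})_{\chi_t}\) to the (prefix-closed) corresponding sets of \((\bracket{F_s})_{\chi_s}\).

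Second, I would apply Lemma~\ref{lemma:clitgen} to the function body. From \(\mathcal{T}_\ttt{Cshmgenf}(F_s) = \ksome(F_u)\) the body is translated with some block-exit counters \(n_b \neq n_c\) that Cshmgen uses at function scope (distinctness being exactly the hypothesis Lemma~\ref{lemma:clitgen} needs), so \(\mathcal{T}_\ttt{Cshmgens}(n_b, n_c, s_{F_s}) = \ksome(u_{F_u})\); moreover the oracle premise of (b) is precisely the oracle hypothesis Lemma~\ref{lemma:clitgen} requires, because the statement-level matching relation \(R_B\) of Definition~\ref{def:Cshmgen_stmt_ref} is constantly the identity, which is what \(A_\kU\) induces. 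Hence \(\bracket{u_{F_u}}_{\chi_t} \sqsubseteq_{(n_b, n_c, \chi_e)} \bracket{s_{F_s}}_{\chi_s}\), which unpacks to the body-level inclusions of Definition~\ref{def:Cshmgen_stmt_ref} on \(\knrm\), \(\kerr\), \(\krtn\), the two block-exit levels, \(\kcll\), \(\kgto\), \(\kfin\), \(\kinf\), all against the combined source abortion set \(\bracket{s_{F_s}}_{\chi_s}.(\kerr) \cup (\bracket{s_{F_s}}_{\chi_s}.(\kcll) \circ \chi_e)\).

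Third, I would lift these body inclusions to the function level. The Clight function semantics (\S\ref{subsec:module_sem}), and the analogous Csharpminor one, build the function denotation from the body by pre-composing \(\fentry\), routing \(\krtn\) and \(\knrm\) (followed by a successful \(\varsfree\)) into the function \(\knrm\), routing \(\kbrk\) and \(\kctn\) (on the target: the two block-exit levels), a failing \(\fentry\), and a failing \(\varsfree\) into the function \(\kerr\), routing \(\kcll\) into the function \(\kcll\), and copying \(\kfin\) and \(\kinf\) through. Because Cshmgen erases types without touching the memory model, I would first establish that \(\fentry\) on \(F_u\) succeeds exactly when \(\fentry\) on \(F_s\) does, and then with an identical memory and a matching local environment, and the same for \(\varsfree\); these are the combinatorial facts from CompCert's original Cshmgen development about parameter binding and the translation of variable lists, which I would import rather than reprove. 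Given that correspondence, each function-level inclusion follows by unfolding the relevant prefix-closing gamma on the body: a target behavior either coincides (under the identity relation) with a source body behavior, in which case the shared \(\fentry\)/\(\varsfree\) steps succeed identically on the source side and we land in the source function's \(\knrm\)/\(\kerr\)/\(\kcll\)/\(\kfin\)/\(\kinf\), or its trace extends a source abortion trace, in which case it is absorbed; the monotonicity and union-/concatenation-preservation of the gammas (Theorem~\ref{thm:open-module-RA}) assemble the pieces. The \(\kfin\)/\(\kinf\) cases are simplest since \(\varsfree\) plays no role there. Reassembling (a) and (b) yields \(\bracket{F_u} \sqsubseteq_{A_\kU \ttarrow A_\kU} \bracket{F_s}\) via Definition~\ref{def:ref_function}, and Theorem~\ref{thm:cshm-ref-sound} lets one read this algebraic refinement as the semantic one if desired.

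The step I expect to be the main obstacle is the third one — the entry/exit bridging — since it is the only part not reducible to the algebraic bookkeeping already automated in the paper. Two points need care: (i) the \(\fentry\)/\(\varsfree\) correspondence across the type-erasing translation, where Clight allocates and binds typed parameters and locals while Csharpminor is untyped, so agreement up to the identity memory relation must be extracted from CompCert's existing lemmas and rephrased in denotational form; and (ii) absorbing any residual block-exit behavior of the translated body, since Definition~\ref{def:Cshmgen_stmt_ref} constrains only the levels \(n_b\) and \(n_c\) — here I would rely on the translation invariant that, at nesting level \(0\), \(\mathcal{T}_\ttt{Cshmgens}(n_b, n_c, \cdot)\) yields no block-exit behavior at any other level, equivalently that an unguarded \textbf{exit} (as a compiled \textbf{break} or \textbf{continue}) is stuck in both languages and is therefore soundly mapped to source abortion. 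Everything else — the \(A_\kU\)-simplification of the gammas, the oracle plumbing, the domain equality, and the algebraic composition of the body refinements — is routine given the results the paper has already established.
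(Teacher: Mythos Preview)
Your proposal is correct and follows essentially the same approach as the paper: the paper's own remark after Theorem~\ref{thm:cshmgen} states that it is deduced from Lemma~\ref{lemma:clitgen} with the original CompCert proofs reused for the function entry (\(\fentry\)) and exit (\(\varsfree\)) correspondence, which is exactly your three-step plan. Your identification of the entry/exit bridging as the main obstacle, and your handling of residual block-exit behavior, are sound elaborations of what the paper leaves implicit.
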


Remark that Thm. \ref{thm:cshmgen} is deduced from the translation correctness of Clight statements (i.e., Lemma \ref{lemma:clitgen}), and the original CompCert proof is reused for verifying the translation of initialization when entering the function and the translation of memory releasing when leaving the function.

\begin{theorem} [Cminorgen correctness]
\label{thm:cmingen}
    For any Csharpminor function $F_u$, Cminor function $F_t$,
    $$\text{if }\, \mathcal{T}_\ttt{Cmingenf} (F_u) = \ksome(F_t), \text{ then }
        \bracket{F_t} \sqsubseteq_{A_\kC \ttarrow A_\kC} \bracket{F_u}.
    $$
\end{theorem}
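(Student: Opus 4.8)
The plan is to mirror the proof of Theorem~\ref{thm:cshmgen}: reduce the function-level statement to a statement-level refinement lemma, prove that lemma by structural induction on Csharpminor statements using the refinement-algebra machinery and the tactics \tra{} and \tka, and reuse CompCert's memory-injection infrastructure for the parts with no algebraic structure (expression evaluation, address computations, function entry and exit). The first move is to invoke Theorem~\ref{thm:module-ref-sound} (soundness of algebraic behavior refinement) with $A = B = A_\kC$, the Kripke interface relation of Definition~\ref{def:cmin_kir}: this replaces $\bracket{F_t} \sqsubseteq_{A_\kC \ttarrow A_\kC} \bracket{F_u}$ by the algebraic conditions of Definition~\ref{def:ref_function}. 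Domain equality is immediate, since each side denotes a singleton domain (the translated function's identifier, which Cminorgen preserves); what remains are, for every $\chi_t \subseteq \gmm{\kN}{A}(\chi_s,\chi_e)$, the per-behavior-set inclusions $\denof{D_t}{nrm} \subseteq \gmm{\kN}{B}(\dots)$, $\denof{D_t}{err} \subseteq \gmm{\kF}{B}(\dots)$, $\denof{D_t}{cll} \subseteq \gmm{\kC}{B}(\dots)$, $\denof{D_t}{fin\_dvg} \subseteq \gmm{\kF}{B}(\dots)$, and $\denof{D_t}{inf\_dvg} \subseteq \gmm{\kI}{B}(\dots)$, with the gamma instances taken from Theorem~\ref{thm:open-module-RA}.

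Because the Csharpminor and Cminor function denotations are assembled from the denotation of the function body (a statement), exactly as in Clight, these inclusions reduce to a statement-level lemma in the style of Lemma~\ref{lemma:clitgen}: if the Cminorgen statement translation (with its block-counting arguments, cf.\ the nested \textbf{block}/\textbf{exit} cascade for \textbf{switch} in Fig.~\ref{fig:transl_exam}) maps a Csharpminor statement $u$ to a Cminor statement $t$, then $\bracket{t}_{\chi_t} \sqsubseteq \bracket{u}_{\chi_s}$ under the Kripke relation $R_B : \mcal{K}{W_\sinjp}(\kstate_s,\kstate_t)$ for Cminorgen --- which relates temporary environments pointwise by $\vinj{f}{\cdot}{\cdot}$ and memories by $\minj{f}{\cdot}{\cdot}$ --- and the accessibility relation $\leadsto_\sinjp$ of Definition~\ref{def:injp}. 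I would prove this by structural induction on $u$. For the composite constructs --- sequencing, conditionals, the \textbf{block}/\textbf{exit} pair, the \textbf{Sloop} loop, and \textbf{switch} --- each proof obligation has exactly the shape covered by the union-, concatenation-, star-, and infinity-preservation properties of the RA instances established in Theorem~\ref{thm:compcert-stmt-RA}: \tra reduces every obligation to a pure KAT (in-)equation over the behavior sets, which \tka discharges. Since both source and target use the same $\kblk$-indexed control-flow representation, there is no re-indexing between $\kbrk/\kctn$ and $\kblk$ as in Cshmgen, which actually simplifies the algebra here; the exit-number shift in the \textbf{switch} translation is absorbed into the choice of block arguments passed to the translation, exactly as in Lemma~\ref{lemma:clitgen}, Case~2.

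The atomic cases (\textbf{skip}, assignments, \textbf{return}, \textbf{goto} and labels, and the call statement) carry no algebraic structure and are verified directly from their semantic equations; this is where the genuine work lies. One must show that evaluating a Csharpminor expression injects to the value obtained from the corresponding Cminor expression and --- crucially --- that taking the address of a local variable on the source side (a pointer into that variable's own block in Csharpminor) injects to $sp + \delta_x$ in Cminor, which follows from the shape of the injection fixed at function entry; for these we reuse CompCert's original Cminorgen lemmas about environments together with its $\ttt{Mem.inject}$ preservation lemmas for allocation, store, load, and free, re-expressed in the denotational formulation of the refinement. To lift the statement lemma to the function level I would thread it through $\fentry$ and $\varsfree$: the source $\fentry$ allocates one fresh block per addressed local, while the target $\fentry$ allocates a single stack block of size $stacksize$; I would exhibit the injection $f$ that coalesces the source blocks into that stack block at the offsets chosen by Cminorgen, verify $\minj{f}{\cdot}{\cdot}$ and the pointwise value injection of temporaries so that the post-entry states are $R_B$-related in the world $(f,m_1,m_2)$, and at exit verify that freeing all source local blocks and the target stack block preserves the injection and yields replies related by $\msal{R}{r}{\kC}$ with the world advancing along $\leadsto_\sinjp$. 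The $\kcll$ field is handled as in \S\ref{subsec:module_sem} (calls are recorded; abortion/divergence through callees is deferred to the module layer), and the module-level semantics and semantic linking follow the generic development (Theorems~\ref{thm:ss_equiv} and~\ref{lemma:lemma_co1}) with no Cminorgen-specific content.

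I expect the main obstacle to be this last step: constructing the coalescing injection and, above all, checking that it evolves in accordance with $\leadsto_\sinjp$ across internal calls --- that source-unmapped and target-out-of-reach locations stay unchanged and that freshly allocated blocks never alias existing target memory. This is precisely what CompCert proves for Cminorgen via its memory-injection simulation; the residual effort in our setting is to repackage those invariants as the accessibility relation $\leadsto_\sinjp$ and to check that the RA instances of Theorem~\ref{thm:compcert-stmt-RA} propagate the rely/guarantee discipline compositionally through the statement induction, which --- by construction of those instances --- is mechanical once the atomic lemmas are in hand.
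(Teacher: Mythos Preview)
Your proposal is correct and follows essentially the same approach as the paper: reduce the function-level refinement to a statement-level lemma (analogous to Lemma~\ref{lemma:clitgen}) proved by structural induction with the RA instances of Theorem~\ref{thm:compcert-stmt-RA} and the Kripke relation $R_B$ of \S\ref{subsec:krel-cminorgen}, reuse CompCert's memory-injection lemmas for the atomic cases and for $\fentry$/$\varsfree$, and thread the static environment correspondence established at entry through $\leadsto_\sinjp$ until exit. One minor remark: the symbol $\sqsubseteq_{A_\kC \ttarrow A_\kC}$ in the theorem statement \emph{is} already the algebraic refinement of Definition~\ref{def:ref_function}, so your first move (invoking Theorem~\ref{thm:module-ref-sound}) is a no-op rather than a reduction --- you can unfold Definition~\ref{def:ref_function} directly.
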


\subsubsection{Optimization Correctness}
To demonstrate the applicability of our approach beyond the CompCert front-end, we model typical back-end optimizations of CompCert (i.e., standard constant propagation and dead code elimination) on extended PCALL and CFG languages that additionally account for event traces. We refer to these as \emph{trace-enriched} PCALL/CFG languages. Their denotations have the signature \FDenote, and the associated correctness theorems are given as follows.
\begin{theorem} [Soundness of constant propagation on PCALL]
\label{thm:cp_pcall}
    If a trace-enriched PCALL program $p$ is optimized into $\mathcal{T}_\ttt{cp}(p)$ by constant propagation, 
    then $\bracket{\mathcal{T}_\ttt{cp}(p)} \sqsubseteq_{A_\kU \ttarrow A_\kU} \bracket{p}$.
\end{theorem}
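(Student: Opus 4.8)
The plan is to reduce the module-level claim to a statement-level refinement and then discharge the latter by structural induction, following exactly the route of Lemma~\ref{lemma:clitgen} and Theorem~\ref{thm:cshmgen}. Concretely, I would attach to each program point of $p$ the data-flow fact $\pi\in\Pi$ produced by the constant-propagation analysis, take the state-matching relation $R_\pi$ of Example~\ref{ex:optimize} (under which source and target states coincide and the source state is moreover consistent with $\pi$), and use the trace-enriched analogues of the constant-propagation refinement algebras of Theorem~\ref{thm:optimize-RA} (the $\gmm{\pi_1}{\pi_0}$ and $\gmm{\kD}{\pi_0}$ families, recast with the observable-trace and UB-prefix shape of Example~\ref{ex:bref-open-stmt}). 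The core lemma to establish is: for every trace-enriched PCALL statement $c$, every incoming fact $\pi_0$, and all callee oracles with $\chi_t\subseteq\gmm{\kN}{A_\kU}(\chi_s,\chi_e)$, if the analysis turns $c$ into $c'$ with outgoing fact $\pi_1$, then $\bracket{c'}_{\chi_t}$ refines $\bracket{c}_{\chi_s}$ with $R_{\pi_0}$ at the entry and $R_{\pi_1}$ at the exit, in the algebraic form built from these instances.

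\paragbf{Lifting to the whole program.} Granting the core lemma, procedure-level refinement is immediate: a freshly entered procedure body is analyzed with the all-unknown fact, whose $R_\pi$ imposes nothing beyond equality of states, so the identity query/reply relations of $A_\kU$ (recall $A_\kU\DEF\pair{\kwd{Unit},\leadsto_\kU,\lambda w.\idrel,\lambda w.\idrel}$) are precisely what is needed to bridge the interface-level relation with the $R_{\pi_0}$/$R_{\pi_1}$ relations wrapped by {\tts Init}/{\tts Free} --- it is the collapse of $R_\pi$ to ``equal and $\pi$-consistent'' that makes the induced reply relation the identity. Whole-program refinement then follows by induction on the Kleene approximants of the $\mu$ operator in the module semantics (Formulae~(\ref{eq:pcall_nrm}),~(\ref{eq:module_terminate}) and their divergence counterparts, cf.\ Theorem~\ref{thm:kleene_fixpoint}), using monotonicity of the $\gamma$-functions; the call-induced divergence and the {\tts cll}/{\tts err} components are woven in exactly as in the module-level arguments of \S\ref{subsec:module_compose}, appealing to the enhanced-RA star/infinity properties at the module level.

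\paragbf{The inductive step.} The induction on $c$ splits into the composite constructs and the leaves. For sequencing, \textbf{choice}, \textbf{if}, and \textbf{while}, the proof is mechanical and mirrors \S\ref{subsec:proof_style}: concatenation-preservation threads the analysis facts through a sequence, union-preservation handles \textbf{choice} and the two branches of \textbf{if}, and for \textbf{while} the loop fact is a fixed point $\pi=\pi$, so the star/infinity-preservation properties of Theorem~\ref{thm:optimize-RA} close the obligation; each such goal is reduced by {\tts gamma\_simpl} to a relational (in-)equation of the extended KAT and discharged by {\tts solvefix}, without unfolding $\gamma$. What is genuinely specific to constant propagation lives in the leaves. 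For an assignment $x:=e$ rewritten to $x:=e'$ (every variable that $\pi_0$ maps to a constant replaced by that constant) together with its transfer function producing $\pi_1$, I would verify the refinement directly against the definitions: from $(\sigma_s,\sigma_t)\in R_{\pi_0}$ and any target step one must exhibit a source step with a prefix-compatible trace and final state in $R_{\pi_1}$, or else the source already aborts; this is exactly the soundness of the analysis --- $\pi_0$-consistency of $\sigma_s$ forces $e'$ in $\sigma_t$ to evaluate as $e$ in $\sigma_s$, and the transfer function over-approximates the post-state. For \textbf{call}~$id$ the transformed statement is unchanged, the transfer function discards all global-variable knowledge while retaining local knowledge (by the PCALL convention a call leaves $l$ intact and reshapes only $g$), and the refinement follows from $\chi_t\subseteq\gmm{\kN}{A_\kU}(\chi_s,\chi_e)$ plus the fact that $R_{\pi_1}$ relates $(l,g'_t)$ with $(l,g'_s)$ for arbitrary post-call globals --- the situation of Fig.~\ref{fig:call_proof}.

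\paragbf{Main obstacle.} The algebraic plumbing is identical to Lemma~\ref{lemma:clitgen} and poses no difficulty; the real work is (i) formalizing the soundness of the data-flow analysis so that every transfer function is provably monotone and sound with respect to the family $\{R_\pi\}$, in particular the conservative treatment of \textbf{call} and the fixed-point computation of loop facts, and (ii) installing the trace-enriched constant-propagation RA instances and re-establishing their union-, concatenation-, star-, and infinity-preservation in the presence of the UB-prefix clause, since these properties are exactly what makes the inductive step uniform and machine-checkable. Once both are in place, the passage from statements to procedures to the whole program is routine and reduces, via monotonicity of $\gamma$ and Kleene's theorem, to the claim that statement-level refinement is preserved by the least-fixed-point operator.
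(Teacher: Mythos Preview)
Your proposal is correct and follows essentially the same approach as the paper. The paper does not give a separate proof for this theorem but explicitly subsumes it under the uniform proof pattern summarized at the end of \S\ref{subsec:proof_style}: structural induction on statements using the (trace-enriched) constant-propagation RA instances of Theorem~\ref{thm:optimize-RA}, with composite cases discharged by {\tts gamma\_simpl}/{\tts solvefix} and atomic cases by definition, then lifting to the module level via Lemma~\ref{lemma:fun2module} and the fixed-point argument of Lemma~8.12---exactly the route you outline.
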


\begin{theorem} [Soundness of dead code elimination on PCALL]
\label{thm:dce_pcall}
    If a trace-enriched PCALL program $p$ is optimized into $\mathcal{T}_\ttt{dce}(p)$ by dead code elimination, 
    then $\bracket{\mathcal{T}_\ttt{dce}(p)} \sqsubseteq_{A_\kU \ttarrow A_\kU} \bracket{p}$.
\end{theorem}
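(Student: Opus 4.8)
The plan is to mirror the constant-propagation argument of Theorem~\ref{thm:cp_pcall}: first reduce the whole-program claim to a statement-level behavior refinement, prove that by structural induction, discharge the composite cases using the algebraic closure properties of a suitable family of refinement algebras, and handle the atomic cases directly from the definitions. Since $\bracket{p}$ for a PCALL whole program is a Kleene fixed point over the denotations of its procedures (Formula~(\ref{eq:pcall_nrm})), and each procedure denotation is built from its body statement via $\code{Init}/\code{Free}$ (Formula~(\ref{eq:procedure_sem})), it suffices to (i) prove a statement-level lemma, (ii) lift it to procedures, and (iii) lift it to modules/whole programs. For step (iii), because $A_\kU$ is the identity KIR and a whole program has no external callees, the oracles $\chi_s,\chi_t,\chi_e$ are trivial, so the conclusion $\bracket{\mathcal{T}_\ttt{dce}(p)} \sqsubseteq_{A_\kU \ttarrow A_\kU} \bracket{p}$ (in the form of Def.~\ref{def:ref_function}) collapses to plain set inclusion of the target behavior sets into the corresponding $\gamma$-images of the source ones, obtained by a fixed-point induction that only uses monotonicity of $\gamma$.

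First I would set up the refinement-algebra instances for dead code elimination, in analogy with Theorem~\ref{thm:optimize-RA}. Dead code elimination is driven by a \emph{backward} liveness analysis, so I take analysis facts to be sets of live variables $L \in \mathcal{L}$, define the state-matching relation $R_L$ to relate source and target states that \emph{agree on every variable in $L$}, and instantiate $\gmm{L_1}{L_0}$ for termination and $\gmm{\kD}{L_0}$ for divergence exactly as in Theorem~\ref{thm:optimize-RA}, where $L_0$ is the live-out fact and $L_1$ the live-in fact of the statement under consideration. The essential work here is re-proving the four closure properties: the $R_L$-instances are union-preserved; concatenation-preserved along the liveness transfer function; and star/infinity-preserved for the fixed-point form at an invariant liveness fact (the loop's fixed-point liveness). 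The care point is that, because the analysis is backward, if $c_1;c_2$ has live-out $L_0$ then $c_2$ is analysed with live-out $L_0$ producing live-in $L'$, and $c_1$ is analysed with live-out $L'$; hence the $\circ$ on analysis facts composes opposite to execution order and the concatenation lemma must be stated accordingly. With these lemmas in place I define the RA-encoded statement refinement $\bracket{c'} \sqsubseteq_{(L_0,L_1)} \bracket{c}$ (inclusions of the $\code{nrm}$, $\code{err}$, and $\code{dvg}$ sets into the corresponding $\gamma$-images, as in Formula~(\ref{eq:bref_abort_opt})) and prove its soundness with respect to the semantic refinement $\precsim_{(L_0,L_1)}$, in the style of Theorems~\ref{thm:cshm-ref-sound} and~\ref{thm:module-ref-sound}.

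The core of the argument is then the statement-level lemma, proved by induction on the PCALL statement $c$: if $\mathcal{T}_\ttt{dce}$ translates $c$ under live-out $L_0$ to $c'$ with live-in $L_1$, then $\bracket{c'} \sqsubseteq_{(L_0,L_1)} \bracket{c}$. For the composite constructs ($c_1;c_2$, \textbf{if}, \textbf{choice}, \textbf{while}) the goal is discharged mechanically as in Lemma~\ref{lemma:clitgen}: unfold both denotations with the extended-KAT operators, rewrite the sub-denotations by the induction hypotheses, push the $\gamma$'s outward by the union/concatenation/star/infinity-preservation lemmas (the {\tts gamma\_simpl} tactic), and close the resulting purely relational KAT (in-)equations with {\tts solvefix}. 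The atomic cases are handled by hand from the definitions: a \emph{retained} assignment reduces to $R_L$ being preserved by identical deterministic updates; a \emph{removed} assignment $x = e \rightarrow \textbf{skip}$ is the crucial case — here $x$ is dead after the assignment ($x \notin L_0$), so the source post-state and the target state (which equals the common pre-state) agree on all of $L_0$, witnessing the $\code{nrm}$ inclusion, and since $e$ may abort, the source $\code{err}$ behavior covers any target $\code{err}$; and \textbf{call}, which dead code elimination leaves structurally unchanged, follows because calls touch only the global state — the local half of $R_L$ is preserved verbatim, and the global half is supplied by the callee refinement, which under $A_\kU$ is just inclusion of oracles. Lifting to procedures is routine ($\code{Init}$ allocates a fresh local environment and fixes the entry liveness fact, $\code{Free}$ discards it), and lifting to whole programs is the fixed-point induction mentioned above.

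The step I expect to be the main obstacle is the interaction between the backward direction of the analysis and the concatenation/star preservation lemmas: unlike constant propagation (a forward analysis, where the $\gamma$-composition of Theorem~\ref{thm:optimize-RA}(ii) runs with program order), here the analysis fact flowing through a composition runs opposite to execution order, so I must align the $\circ$ on analysis facts with the $\circ$ on behavior sets and carry soundness of the liveness transfer function as an explicit side condition. That soundness is in fact consumed exactly in the removed-assignment atomic case, where dropping an instruction must be justified by the target variable being absent from the live-out set. Once these lemmas are established, everything downstream is handled uniformly by the same {\tts gamma\_simpl}/{\tts solvefix} automation used for the CompCert front-end phases.
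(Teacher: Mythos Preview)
Your proposal is correct and follows essentially the same approach as the paper: the paper does not spell out a dedicated proof for this theorem but subsumes it under the general methodology of \S\ref{subsec:proof_style} (instantiate refinement algebras for the analysis facts, prove the statement-level refinement by structural induction discharging composite cases via \texttt{gamma\_simpl}/\texttt{solvefix}, then lift via Lemma~\ref{lemma:fun2module}), which is exactly what you outline. One small correction: the concern you flag as the ``main obstacle'' is largely illusory --- although liveness is computed \emph{backward}, the concatenation-preservation lemma for the resulting RA instances has exactly the same shape as Theorem~\ref{thm:optimize-RA}(ii), since the $\gamma$-indices are the pre-/post-matching relations (live-in at entry, live-out at exit) and these compose in execution order regardless of the direction in which the analysis produced them; the backward direction only affects how the transformation computes the facts, not how the refinement algebras compose.
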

Similarly, we have optimization correctness on the trace-enriched control-flow-graph language.
\begin{theorem} [Correctness of constant propagation on CFGs]
\label{thm:cp_cfg}
    If a trace-enriched CFG $p$ is optimized into $\mathcal{T}_\ttt{cp}(p)$ by constant propagation, 
    then $\bracket{\mathcal{T}_\ttt{cp}(p)} \sqsubseteq_{A_\kU \ttarrow A_\kU} \bracket{p}$.
\end{theorem}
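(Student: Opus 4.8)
The plan is to prove Theorem~\ref{thm:cp_cfg} in the same style as Lemma~\ref{lemma:clitgen} and Theorem~\ref{thm:cp_pcall}: instantiate an analysis-indexed refinement algebra for the trace-enriched CFG language, establish a per-node refinement lemma whose only real content is soundness of the dataflow analysis, and then lift to the module level through the union, $\circ$, $^{*}$, $^{\infty}$ and least-fixed-point combinators, relying on the enhanced-RA properties. First I would build the refinement algebra. Theorem~\ref{thm:optimize-RA} already gives analysis-parameterized instances $\msal{\mA}{\pi_1}{\pi_0}$ and $\msal{\mA}{\kD}{\pi_0}$ for pure PCALL, built from a relation family $R:\Pi\to\pset{\kstate_s\times\kstate_t}$; since the CFG here is trace-enriched I would define their trace-aware analogues---call them $\gmm{\pi_1}{\pi_0}$, $\gmm{\kC}{\pi_0}$, $\gmm{\kF}{\pi_0}$, $\gmm{\kI}{\pi_0}$---on top of $R_\pi$ and the prefix order $\leqslant_T$, exactly as $\gmm{\kN}{B}$, $\gmm{\kC}{B}$, $\gmm{\kF}{B}$, $\gmm{\kI}{B}$ of Theorem~\ref{thm:open-module-RA} are built on a KIR, and then check that they are union-preserved, (self-)concatenation-preserved, star-preserved and infinity-preserved. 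This is routine bookkeeping, but it is what lets the later proof run through the automated {\tts gamma\_simpl}/{\tts solvefix} reasoning.

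The core of the argument is a per-node lemma. Let $\pi$ be the dataflow solution computed by the analysis, viewed as a map from CFG nodes to abstract states, and let $R_{\pi(j)}$ be the state-matching relation at node $j$ (source and target states coinciding and consistent with $\pi(j)$). For every node $j$ of the source CFG $G$, with successor $j'$ (two successors for $\mathbf{Cond}$, handled by the union-preserved property), I would show, for the $\knrm$ field,
\[
\fofchi{\mathcal{T}_\ttt{cp}(G)(j)}{nrm} \;\subseteq\; \gmm{\pi(j')}{\pi(j)}\bigl(\fofchi{G(j)}{nrm},\; \fofchi{G(j)}{err}\cup(\fofchi{G(j)}{cll}\circ\chi_e)\bigr),
\]
and analogous inclusions for $\kerr$, $\kcll$, $\kfin$ and $\kinf$. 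Each inclusion rests on exactly two facts: (i) \emph{soundness of the transfer function}---$\pi(j')$ over-approximates the effect of $G(j)$ on any $\pi(j)$-consistent state, i.e.\ $\pi$ solves the dataflow inequalities; for $\mathbf{Call}(i,j)$ this is a conservative ``havoc'' of everything the callee may touch, and the module hypothesis $\chi_t\subseteq\gmm{\kN}{A_\kU}(\chi_s,\chi_e)$ ensures the target callee's effect on the global state is a source-callee effect (or a source abort), while the local state is left untouched as in Fig.~\ref{fig:call_proof}; and (ii) \emph{transparency of the substitution}---evaluating $atom$ (resp.\ the condition $b$) rewritten under $\pi(j)$ in a $\pi(j)$-consistent state returns the same value as the original. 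Fact (ii) together with the $\mathbf{Do}$/$\mathbf{Call}$/$\mathbf{Cond}$ denotational unfoldings has no algebraic structure and is discharged directly from the definitions of $\mathbf{jump}$ and $\fofchi{\mathbf{Do}(atom,j)}{nrm}$, $\fofchi{\mathbf{Call}(i,j)}{nrm}$, $\fofchi{\mathbf{Cond}(b,j_1,j_2)}{nrm}$, reusing standard value-analysis soundness reasoning.

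Given the per-node lemma, I would climb back in three stages. \emph{(1) Procedure body:} $\fofchi{p}{nrm}$ is built from $\fofchi{G}{nrm}^{*}$ between the entry and exit nodes, and its divergence from $\fofchi{G}{nrm}^{\infty}$; lifting the node inclusions through $\bigcup_j$, $^{*}$, $^{\infty}$ uses the union/star/infinity-preserved properties of the new RA instances, with the node-indexing of $R_\pi$ playing the role of the test/world bookkeeping that $\gmm{\pi_1}{\pi_0}$ threads along a path. Choosing $\pi(j_\ttt{entry})$ so that it is implied by any initial state, and using that $R_\pi$ restricted to the global (reply) component is the identity, turns this into $\fofchi{\mathcal{T}_\ttt{cp}(p)}{nrm}\subseteq\gmm{\kN}{A_\kU}(\fofchi{p}{nrm},\ldots)$ with the \emph{identity} interface, and likewise for the other fields. \emph{(2) Module:} $\fofchi{M}{nrm}$ is a Kleene least fixed point, so I would use that the least fixed point is the least pre-fixed point: show $\gmm{\kN}{A_\kU}(\fofchi{M}{nrm},\ldots)$ is a pre-fixed point of the target module's defining operator whenever the oracle $\chi_t$ refines $\chi_s$ (each procedure-body denotation being refinement-preserving in the oracle), and then obtain $\fofchi{M}{err}$, $\fofchi{M}{cll}$ and the divergence fields from the already-established $\fofchi{M}{nrm}$ via the $^{+}$, $^{*}$, $^{\infty}$ combinators and the $\triangle/\blacktriangle$ machinery, exactly as in the definitions of \S\ref{subsec:pcall_dvg}. \emph{(3) Assembly:} collecting the five field inclusions gives Def.~\ref{def:ref_function} with $A=B=A_\kU$, i.e.\ the theorem.

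I expect the main obstacle to be stage~(1): threading the node-indexed family $R_\pi$ coherently across an unstructured CFG. Unlike a structured loop, a CFG has arbitrary join points, so there is no single $\gamma$-instance to iterate; one must use that $\pi$ is a genuine dataflow fixpoint (so $R_{\pi(j)}$ is fed by \emph{every} predecessor of $j$) and prove that iterating node-to-node refinements along $\fofchi{G}{nrm}^{*}$ stays inside one $\gamma$ keyed by (entry analysis, exit analysis)---essentially a telescoping/transitivity lemma for the $\pi$-indexed RA instances, phrased algebraically rather than operationally, which is where the dataflow soundness must be massaged into the precise shape demanded by $\gmm{\pi_1}{\pi_0}$. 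The $\mathbf{Call}$ case adds the subtlety of separating the locals that survive a call from the global state a callee may modify. By contrast, stages~(2)--(3) are routine given the RA framework and the fixed-point reasoning already used for Theorems~\ref{thm:cp_pcall} and~\ref{thm:cshmgen}.
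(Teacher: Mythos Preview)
Your overall strategy---trace-aware RA instances, a per-instruction refinement lemma, then lifting through $\cup$, $\circ$, $^{*}$, $^{\infty}$ and the module-level least fixed point---is exactly the paper's uniform proof style (\S\ref{subsec:proof_style}), and stages~(2)--(3) are as you describe.

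The one place where your plan does not quite go through as written is stage~(1). You phrase the per-node lemma with a \emph{family} $\gmm{\pi(j')}{\pi(j)}$ of $\gamma$-instances in the style of Theorem~\ref{thm:optimize-RA}, and then (correctly) note that there is ``no single $\gamma$-instance to iterate'' when forming $\fofchi{G}{nrm}^{*}$: the star-preservation law in Theorem~\ref{thm:optimize-RA}(iv) requires the \emph{same} $\pi$ at both ends, so a big union of differently-indexed $\gmm{\pi(j')}{\pi(j)}$'s cannot be starred by the enhanced-RA machinery, and no purely algebraic ``telescoping'' repairs this. The resolution is already built into \S\ref{sec:semantics_cfg}: the CFG program state is $\ttt{label}\times\ttt{lstate}\times\ttt{gstate}$, so the program counter $j$ is part of the state. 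Fold the whole family into one relation
\(
R=\{((j,\sigma_s),(j,\sigma_t))\mid(\sigma_s,\sigma_t)\in R_{\pi(j)}\}
\)
on CFG states, and instantiate a \emph{single} RA (the trace-aware analogue of $\msal{\mA}{\kN}{}$, $\msal{\mA}{\kF}{}$, $\msal{\mA}{\kI}{}$ in Theorem~\ref{thm:compcert-stmt-RA}, with $R$ in place of $R_B$ and trivial worlds). The per-node lemma then reads $\fofchi{\mathcal{T}_\ttt{cp}(G)(j)}{nrm}\subseteq\gamma_R(\fofchi{G(j)}{nrm},\ldots)$ with the \emph{same} $\gamma_R$ for every $j$; the dataflow-fixpoint property of $\pi$ (that a $\pi(j)$-consistent state becomes $\pi(j')$-consistent after executing $G(j)$, for every successor $j'$) is consumed exactly here and nowhere else. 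After that, $\bigcup_j$, $^{*}$, and $^{\infty}$ go through by the ordinary union/star/infinity-preservation properties, the {\tts gamma\_simpl}/{\tts solvefix} tactics apply, and your anticipated join-point bookkeeping disappears.
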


\begin{theorem} [Correctness of dead code elimination on CFGs]
\label{thm:dce_cfg}
    If a trace-enriched CFG $p$ is optimized into $\mathcal{T}_\ttt{dce}(p)$ by dead code elimination, 
    then $\bracket{\mathcal{T}_\ttt{dce}(p)} \sqsubseteq_{A_\kU \ttarrow A_\kU} \bracket{p}$.
\end{theorem}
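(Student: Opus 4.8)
\textbf{Proof proposal for Theorem~\ref{thm:dce_cfg}.}
The plan is to replay the algebraic template already used for the other optimization theorems (Theorems~\ref{thm:cp_pcall}, \ref{thm:dce_pcall}, \ref{thm:cp_cfg}), changing only the state-matching relation so that it tracks the backward liveness analysis that drives dead code elimination. First I would fix, for each procedure of the CFG program $p$, the liveness solution $\pi \in \Pi$ mapping every CFG label to the set of variables live at that program point, and define a state-matching relation $R_\pi$ on CFG states $(j, l, g)$ requiring that the source and target states carry the same label $j$, the same global state $g$, and local states $l$ that agree on every variable in $\pi(j)$. Because a CFG state already carries its program counter, $R_\pi$ is an ordinary (un-indexed) relation once $\pi$ is fixed, so I would instantiate a termination refinement algebra $\gmm{\kN}{R_\pi}$ and diverging refinement algebras $\gmm{\kF}{R_\pi}$, $\gmm{\kI}{R_\pi}$ of the shape used in Theorem~\ref{thm:compcert-stmt-RA} (with a trivial world and accessibility), and discharge the enhanced-RA obligations for them — union-preserved, self-/mutually concatenation-preserved, star-preserved, and infinity-preserved. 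These proofs are essentially those of Lemmas~\ref{lemma:simple-instance-prop} and~\ref{lemma:iteration-instance} and of Theorem~\ref{thm:optimize-RA}, reused almost verbatim.

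With the RA instances in hand, the statement $\bracket{\mathcal{T}_\ttt{dce}(p)} \sqsubseteq_{A_\kU \ttarrow A_\kU} \bracket{p}$ unfolds, through Definition~\ref{def:ref_function} and the soundness of algebraic refinement (as in Theorem~\ref{thm:module-ref-sound}, and the soundness noted after Theorem~\ref{thm:optimize-RA}), into algebraic inclusions such as $\fof{\chi_t}{M_t}{nrm} \subseteq \gmm{\kN}{R_\pi}\bigl(\fof{\chi_s}{M_s}{nrm},\ \fof{\chi_s}{M_s}{err} \cup (\fof{\chi_s}{M_s}{cll} \circ \chi_e)\bigr)$, together with the analogous $\kerr$, $\kcll$, $\ttfin$ and $\ttinf$ clauses. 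Since dead code elimination leaves function interfaces unchanged, the KIR is the identity $A_\kU$, so the hypothesis $\chi_t \subseteq \gmm{\kN}{A_\kU}(\chi_s, \chi_e)$ is just inclusion of callee behaviors up to source abort, and the Kleene $\mu$-fixed point in the module semantics of \S\ref{subsec:pcall_module} and \S\ref{sec:semantics_cfg} propagates it monotonically; horizontal composition across linked modules is handled by the semantic-linking definitions and Beki\'{c}'s theorem (Thm.~\ref{lemma:lemma_co1}), exactly as for the other cases. What remains is the CFG-level refinement, obtained in four steps: (i) a per-instruction lemma — a $\mathbf{Do}(atom, j)$ whose assignee is not in $\pi(j)$ is replaced by a pure $\mathbf{jump}(j)$, while $\mathbf{Cond}$ and $\mathbf{Call}$ nodes are kept unchanged — showing that each transformed node maps $R_\pi$-related states to $R_\pi$-related states or to a source abort (the only place where the concrete definition of $\gamma$ is unfolded, and the only place using soundness of the liveness transfer functions); (ii) lifting to $\fof{\chi}{G}{nrm}$, which is a union over all nodes, via the union-preserved property; (iii) lifting to $\fof{\chi}{G}{nrm}^*$, and to its $\infty$-iteration for divergence, from the entry label to the exit label via the star- and infinity-preserved properties, yielding the refinement of $\fof{\chi}{p}{nrm}$ and of the diverging behavior of $p$; and (iv) lifting to open modules and linked programs as above. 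Steps (ii)--(iv) are mechanized by the \twrite, \tra and \tka tactics of \S\ref{subsec:proof_style}.

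The main obstacle I expect is reconciling the backward direction of liveness with the forward, iterated graph semantics $\fof{\chi}{G}{nrm}^*$. Unlike constant propagation, whose invariant flows with execution, the natural soundness statement for liveness --- ``the set recorded live at a node over-approximates what any remaining execution reads'' --- speaks about the \emph{future} of an execution, whereas $R_\pi$ must be a purely local relation between corresponding states so that a single star-preserved RA instance (the $\msal{\mA}{\pi}{\pi}$-to-$\msal{\mA}{\kD}{\pi}$ pattern, part (iv) of Theorem~\ref{thm:optimize-RA}) can be applied uniformly along every path. Making the two views coincide forces $\pi$ to be a genuine fixed point of the liveness transfer function (so that $\pi$ at a loop header is stable under one iteration of the loop body) and forces $R_\pi$ at the exit label to be precisely ``agree on the variables on which the returned global state depends''; getting these boundary conditions to mesh with the per-instruction lemma, and checking that the transformed CFG is faithfully the source graph with the same labels and only simplified instructions (so that each target node's denotation genuinely lies inside $\gamma$ of the matching source node's denotation), is where essentially all the new work lies. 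Everything downstream is a replay of the algebraic proof pattern already validated for the CompCert front-end and for the other optimization theorems.
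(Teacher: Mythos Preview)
Your proposal is correct and follows the paper's general algebraic template (\S\ref{subsec:proof_style}): per-node refinement for the atomic instructions, then union-/star-/infinity-preserved RA instances to lift through $\bigcup_j$, $(-)^*$ and $(-)^\infty$ in the CFG semantics, and finally Lemma~\ref{lemma:fun2module} to reach the module level; the paper gives no bespoke proof for this theorem beyond that pattern. One small inaccuracy: Beki\'{c}'s theorem (Thm.~\ref{lemma:lemma_co1}) is used for the semantic/syntactic linking equivalence (Thm.~\ref{thm:ss_equiv}), not for the single-module refinement here---what you actually need at that stage is just the monotonicity of the Kleene fixed point and Lemma~\ref{lemma:fun2module}.
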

Additionally, we have the correctness of generating CFGs from syntax trees of PCALL as follows.
\begin{theorem} [Correctness of CFGs generation]
\label{thm:cfg_gen}
    If a trace-enriched CFG $\mathcal{T}_\ttt{cfg}(p)$ is generated from a trace-enriched PCALL program $p$, then $\bracket{\mathcal{T}_\ttt{cfg}(p)} \sqsubseteq_{A_\kU \ttarrow A_\kU} \bracket{p}$.
\end{theorem}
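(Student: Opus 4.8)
The plan is to follow the same route as Lemma~\ref{lemma:clitgen} and Theorem~\ref{thm:cshmgen}. By Theorem~\ref{thm:module-ref-sound} and Definition~\ref{def:ref_function} it suffices to prove the algebraic refinement of the whole-program denotations, and since a trace-enriched PCALL (resp.\ CFG) program is a list of procedures whose module denotation is the Kleene least fixed point of the union of its procedure denotations (in the shape of Formula~(\ref{eq:module_terminate})), it in turn suffices --- uniformly in the external oracle --- to establish a refinement between the denotation of each generated CFG procedure and that of its PCALL source procedure. On the CFG side a procedure's denotation is obtained from a reflexive-transitive closure $\fofchi{G}{nrm}^*$ read from the entry node $j_\ttt{entry}$ to the exit node $j_\ttt{exit}$ (see \S\ref{sec:semantics_cfg}), wrapped by the same {\tts Init}/{\tts Free} manipulation as PCALL (Formula~(\ref{eq:procedure_sem})); on the PCALL side it comes directly from $\fofchi{c_p}{nrm}$ and the companion behaviour sets. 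So the procedure case reduces to a \emph{statement-level} refinement, and because the module interface here is $A_\kU$ (identity on queries and replies, since PCALL calls carry no arguments and leave the global state shared), no genuine Kripke machinery is needed at the interface --- the only nontrivial matching relation lives inside the statement-level lemma.

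\textbf{The statement-level lemma.}
The CFG-generation procedure, given a PCALL statement $c$ and a \emph{continuation label} $j_{\ttt{nxt}}$ (where control goes once $c$ finishes normally), returns a fresh graph fragment $G_c$ with an \emph{entry label} $j_{\ttt{ent}}$, maintaining the invariant that distinct fragments use disjoint label sets. Let $R_j$ be the program-counter-tracking state-matching relation that relates a PCALL state $(l,g)$ to the CFG state $(j,l,g)$. The key lemma, proved by induction on the syntax of $c$, asserts that the $j_{\ttt{ent}}$-to-$j_{\ttt{nxt}}$ part of $\fofchi{G_c}{nrm}^*$ refines $\fofchi{c}{nrm}$ with $R_{j_{\ttt{ent}}}$ on inputs and $R_{j_{\ttt{nxt}}}$ on outputs, and likewise that the $\kerr$-, $\kcll$-, $\kfin$- and $\kinf$-components are preserved (a $\mathbf{Call}$-node reached inside $G_c$ yielding exactly the $\kcll$-behaviour of $c$). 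This is precisely the algebraic refinement $\precsim_{R_j}$ of Example~\ref{ex:bref-open-stmt} instantiated with the RA instances of Theorem~\ref{thm:compcert-stmt-RA} (with the trivial accessibility relation). Hence, after unfolding the generated fragment --- $\mathbf{Do}(atom,j)$ as $\fofchi{atom}{nrm}\circ\mathbf{jump}(j)$, $\mathbf{Call}(id,j)$ as the $\chi$-lookup followed by $\mathbf{jump}(j)$, $\mathbf{Cond}(b,j_1,j_2)$ as the union of the two $\mathbf{test}$-guarded jumps, and fragment concatenation as the chaining of entry labels --- each inductive step for a structured construct ({\tts skip}, atomic statement, {\tts call}, sequencing, {\tts if}) reduces, via {\tts gamma\_simpl} and {\tts solvefix}, to a pure (in)equation of the extended KAT, discharged by the union- and concatenation-preserved properties of Lemma~\ref{lemma:simple-instance-prop} exactly as in Case~1 of the proof of Lemma~\ref{lemma:clitgen}.

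\textbf{Main obstacle: loops and the shape of $G^*$.}
The hard part is the {\tts while} case together with the way $\fofchi{G}{nrm}^*$ decomposes. For {\tts while}~$b$~{\tts do}~$c$ the generated fragment is a header node $\mathbf{Cond}(b,j_{\ttt{body}},j_{\ttt{nxt}})$ whose true branch enters $G_c$ with continuation back to the header, so a single denotational iteration corresponds to \emph{several} CFG steps --- the header step, then a full run through $G_c$ back to the header. The central obstacle is therefore a ``segmentation'' lemma: the header-to-$j_{\ttt{nxt}}$ behaviour of $\fofchi{G}{nrm}^*$ must be shown equal to a Kleene star of such iteration blocks, and its divergent behaviour to the silent/reacting divergence expressions of \S\ref{subsec:loop_trace} with the $\triangle$/$\blacktriangle$ separation, carefully distinguishing silent cycling \emph{inside} $G_c$ from genuinely reactive iteration. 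This mirrors the remark in \S\ref{subsec:goto_sem} that unstructured branching behaves like ``a loop with multiple entry points'', and once the decomposition is in place the refinement follows from the star- and infinity-preserved RA instances (Lemma~\ref{lemma:iteration-instance}) exactly as in Case~2 of Lemma~\ref{lemma:clitgen}. Two lesser subtleties are (i) threading the label-freshness invariant through the induction so that unions of fragments behave as the KAT composition expects, and (ii) at the procedure boundary using $R_{j_\ttt{entry}}$ with the fresh local state produced by {\tts Init} and $R_{j_\ttt{exit}}$ at the end; if {\tts choice} is present it has no primitive CFG node and must be realised by the generator, with a correspondingly adjusted base case matching the union in $\fofchi{\textbf{choice}(c_1,c_2)}{nrm}$.

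\textbf{Lifting to modules.}
Finally, the passage from procedures to the whole program is routine: the per-procedure refinement holds for every internal oracle; the module denotations on both sides are least fixed points of the union of the procedure denotations, and the module-level $\kcll$, $\kfin$ and $\kinf$ behaviours are built from the \emph{same} $^*$/$^\infty$ expressions over $\kcll$ on both sides; so monotonicity of these operators together with fixed-point induction --- the same Beki\'c-style reasoning behind Theorem~\ref{thm:ss_equiv} --- yields $\bracket{\mathcal{T}_\ttt{cfg}(p)}\sqsubseteq_{A_\kU \ttarrow A_\kU}\bracket{p}$. As usual the atomic statements and Boolean expressions carry no algebraic structure and are verified directly from their definitions, reusing the relevant parts of the original CompCert development.
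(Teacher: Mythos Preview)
Your proposal is correct and follows the paper's own methodology: the paper does not spell out a proof of this theorem, saying only that ``almost all the proofs (including proofs of all the theorems in \S\ref{subsec:compile_correct} and \S\ref{subsec:module_compose}) have the following main proof steps (as shown in the proof of Lemma~\ref{lemma:clitgen})'', namely structural induction with \texttt{gamma\_simpl} and \texttt{solvefix}, followed by Lemma~\ref{lemma:fun2module} to lift to modules. You have filled in the details the paper omits --- the label-tracking matching relation~$R_j$, the segmentation of $\fofchi{G}{nrm}^*$ into loop iterations, the label-freshness invariant, and the handling of \textbf{choice} --- in a manner consistent with that scheme.

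Two minor remarks. First, the RA instances you need are closer in shape to those of Theorem~\ref{thm:optimize-RA} (indexed by a pre/post parameter, here labels $j_{\ttt{ent}}$ and $j_{\ttt{nxt}}$) than to the Kripke-style instances of Theorem~\ref{thm:compcert-stmt-RA} you cite; either can be made to work, but the former matches your ``$R_{j_{\ttt{ent}}}$ on inputs and $R_{j_{\ttt{nxt}}}$ on outputs'' phrasing directly. Second, the module-lifting step is by Lemma~\ref{lemma:fun2module} rather than the Beki\'c argument behind Theorem~\ref{thm:ss_equiv}; and atomic statements here are toy-language primitives, not CompCert ones, so no CompCert proofs are reused for them.
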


\subsection{Compositional Correctness}
\label{subsec:module_compose}

Based on the definition of refinement between function denotations in Def. \ref{def:ref_function}, we then have the following results, where Lemma \ref{lemma:fun2module} helps derive the refinement of modules from the refinement of functions, Thm. \ref{thm:vertical} is used to compose the correctness of each compilation phase, and Thm. \ref{thm:horizon} shows the correctness of module-level compositionality.

Specifically,
in our implementation, we follow CompCertO's approach \cite{DBLP:conf/pldi/KoenigS21,DBLP:journals/pacmpl/ZhangWWKS24} to support vertical compositionality and use denotation-based semantic linking along with fixed-point-related theorems to support horizontal compositionality.
\begin{lemma} 
    \label{lemma:fun2module}
    For any open modules $M_s$ and $M_t$, and for any Kripke interface relations \(A\) and \(B\),
    \[
    \text{if } \parents{M_t} \sqsubseteq_{A \ttarrow B} \parents{M_s}, 
    \text{ then } 
    \bracket {M_t} \sqsubseteq_{A \ttarrow B} \bracket{M_s}.
    \]
\end{lemma}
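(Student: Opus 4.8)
The plan is to unfold Definition~\ref{def:ref_function} on both sides of the claimed refinement and reduce every resulting obligation to the hypothesis $\parents{M_t} \sqsubseteq_{A\ttarrow B} \parents{M_s}$. The domain clause is immediate: by the defining equation for the $\kdom$ field of module semantics (\S\ref{subsec:module_sem}), $\bracket{M}$ and $\parents{M}$ have literally the same domain field, so $\bracket{M_t}.(\kdom) = \parents{M_t}.(\kdom) = \parents{M_s}.(\kdom) = \bracket{M_s}.(\kdom)$. For the behavioral clauses I would fix arbitrary callee oracles $\chi_t,\chi_s,\chi_e$ with $\chi_t \subseteq \gmm{\kN}{A}(\chi_s,\chi_e)$ and write $E_s$ for the accumulated source aborting behavior $\bracket{M_s}_{\chi_s}.(\kerr) \cup (\bracket{M_s}_{\chi_s}.(\kcll) \circ \chi_e)$. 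The key observation is structural: every field of $\bracket{M}_\chi$ is built, using only the extended-KAT operators $\cup,\circ,{}^*,{}^+,{}^\infty,\triangle,\blacktriangle$, from the combined function-level denotation $\parents{M}$ evaluated either at $\chi_0\cup\chi$ inside the Kleene fixed point (for the $\knrm$ field) or at the completed oracle $\hat\chi = \bracket{M}_\chi.(\knrm)\cup\chi$ (for the other fields). So the hypothesis is exactly the tool that refines the building blocks, and the enhanced refinement algebras of Theorem~\ref{thm:open-module-RA} are exactly what push refinement through the operators.

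\textbf{The $\knrm$ field.} Here $\bracket{M_t}_{\chi_t}.(\knrm) = \mu\chi_0.\,F_t(\chi_0)$ with $F_t(\chi_0) = \parents{M_t}_{\chi_0\cup\chi_t}.(\knrm)$, and $\bracket{M_s}_{\chi_s}.(\knrm)$ is the least fixed point of the analogous $F_s$. By Theorem~\ref{thm:kleene_fixpoint} the target fixed point is $\bigcup_n F_t^{\,n}(\oset)$, so I would prove by induction on $n$ that $F_t^{\,n}(\oset) \subseteq \gmm{\kN}{B}\big(\bracket{M_s}_{\chi_s}.(\knrm),\, E_s\big)$ and then take the union (sound by continuity of the behavior-set operators). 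The base case is trivial. For the step, the inductive hypothesis, the given assumption on $\chi_t$, union-preservation of $\gamma$ (property~(\ref{prop:gunion})), and the fact that $\bracket{M_s}_{\chi_s}.(\knrm)$ is itself a fixed point of $F_s$ together show that the oracle pair $\big(F_t^{\,n}(\oset)\cup\chi_t,\ \bracket{M_s}_{\chi_s}.(\knrm)\cup\chi_s\big)$ satisfies the $\gmm{\kN}{}$-relation required to invoke the hypothesis (monotonicity of $\gamma$ in its error argument lets us over-approximate all accumulated errors by $E_s$); applying $\parents{M_t}\sqsubseteq_{A\ttarrow B}\parents{M_s}$ to this pair yields $F_t^{\,n+1}(\oset)\subseteq \gmm{\kN}{B}\big(\bracket{M_s}_{\chi_s}.(\knrm),E_s\big)$, closing the induction. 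Equivalently this can be phrased as Park induction: one shows directly that $\gmm{\kN}{B}\big(\bracket{M_s}_{\chi_s}.(\knrm),E_s\big)$ is a pre-fixed point of $F_t$.

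\textbf{The $\kerr,\kcll,\kfin,\kinf$ fields.} With the $\knrm$ field established, $\hat\chi_t = \bracket{M_t}_{\chi_t}.(\knrm)\cup\chi_t$ is $\gmm{\kN}{}$-related to $\hat\chi_s = \bracket{M_s}_{\chi_s}.(\knrm)\cup\chi_s$ (by what we just proved together with the given assumption), so the hypothesis applied at the oracle triple $(\hat\chi_t,\hat\chi_s,\chi_e)$ gives that the primitive fields $\parents{M_t}_{\hat\chi_t}.(\kerr)$, $\parents{M_t}_{\hat\chi_t}.(\kcll)$, $\parents{M_t}_{\hat\chi_t}.(\kfin)$, $\parents{M_t}_{\hat\chi_t}.(\kinf)$ refine their source counterparts via $\gmm{\kF}{B},\gmm{\kC}{B},\gmm{\kF}{B},\gmm{\kI}{B}$ respectively. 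Each module field is then obtained from these primitives by precisely the operator pattern used to define them in \S\ref{subsec:module_sem} (mirroring the Clight-loop and module-divergence definitions of \S\ref{sec:semantics_pcall} and \S\ref{sec:semantics_front}): $\kerr$ is $\parents{M}_{\hat\chi}.(\kcll)^*\circ\parents{M}_{\hat\chi}.(\kerr)$; $\kcll$ is $\parents{M}_{\hat\chi}.(\kcll)^+\circ\ktest(\kecall(\parents{M}.(\kdom)))$; $\kfin$ is $\parents{M}_{\hat\chi}.(\kcll)^*\circ\parents{M}_{\hat\chi}.(\kfin)\cup(\triangle\parents{M}_{\hat\chi}.(\kcll))^\infty$; and $\kinf$ follows the usual $\big((\triangle\cdot)^*\circ\blacktriangle(\cdot)\big)^*\circ(\triangle\cdot)^*\circ(\cdot)$-pattern. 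I would discharge each inclusion using the enhanced-RA clauses (ii)--(viii) of Theorem~\ref{thm:open-module-RA} — self-concatenation-, concatenation-, star-, and infinity-preservation of the $\gmm{\kC}{B}$/$\gmm{\kN}{B}$/$\gmm{\kF}{B}$/$\gmm{\kI}{B}$ instances — which supply exactly the chain of inclusions~(\ref{prop:gunion})--(\ref{prop:ginfty}) needed to lift refinement of the primitives through these operators; the $(\triangle\cdot)^\infty$ summand is a grammatically-equal infinity term handled by the infinity-preserved instance just as for loops, and the $\ktest(\kecall(\cdot))$ factor is neutralized using the domain equality already shown.

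\textbf{Main obstacle.} The difficulty concentrates in the $\knrm$ fixed-point step. Two points need care: first, that a single source error bound $E_s$ serves uniformly for all target approximants $F_t^{\,n}(\oset)$ — this leans on monotonicity of $\gmm{\kN}{B}$ in its error argument so that the growing error sets can always be over-approximated; and second, the coherence of the callee-interface relation along the iteration, since the hypothesis guarantees behavior of $\parents{M_t}$ related to the source via the \emph{output} relation $B$, whereas the next approximant consumes that behavior as a callee oracle that is expected to be related via $A$. When $A$ and $B$ coincide — the case in all the intended applications, where a module is compiled under one simulation convention and the module-level compositionality theorems are stated with $A\ttarrow A$ — this matching is automatic; in the general statement one reads the hypothesis so that the relation governing $M$'s own interface is the one governing its internal calls, which makes the two uses agree. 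Everything downstream of the $\knrm$ field is then routine relational algebra, already rehearsed for loops and for module divergence in the earlier sections, so I expect no further obstacles there.
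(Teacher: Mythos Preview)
Your plan is essentially the paper's own: it does not spell out a proof for this lemma, but places it under the umbrella of the general recipe stated in the Summary after Lemma~\ref{lemma:clitgen} (refinement-algebra reasoning via \tra\ plus KAT reasoning via \tka), and points to the packaged fixed-point principle of Lemma~8.4 for the $\knrm$ clause. Your induction on Kleene approximants is just an unrolled form of that principle, and your treatment of the $\kerr,\kcll,\kfin,\kinf$ fields via the enhanced-RA clauses of Theorem~\ref{thm:open-module-RA} is exactly the intended route.

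The one soft spot is your ``general statement'' resolution of the $A$/$B$ coherence. Your diagnosis is right: the inductive step produces $F_t^{\,n}(\oset)$ related to the source via~$B$ (the output convention), but re-applying the hypothesis requires the oracle $F_t^{\,n}(\oset)\cup\chi_t$ to be $A$-related. Your suggested reading---``the relation governing $M$'s own interface is the one governing its internal calls''---does not actually make the two uses agree: in $\sqsubseteq_{A\ttarrow B}$ the callee oracle is constrained by~$A$ while the module's own behaviour (hence the internal approximant $\chi_0$) lands under~$B$, and nothing forces $\gamma^{\kN}_B\subseteq\gamma^{\kN}_A$. The paper states the lemma for arbitrary $A$ and $B$ but only ever invokes it with $A=B$ (Theorems~\ref{thm:cshmgen}, \ref{thm:cmingen}, \ref{thm:horizon} all use $A_\kU\ttarrow A_\kU$, $A_\kC\ttarrow A_\kC$, or $A\ttarrow A$), and with $A=B$ your fixed-point step closes immediately. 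So the honest statement of what you have argued is the $A=B$ instance, which is all the paper needs; the genuinely general claim would require an extra assumption (e.g.\ $\gamma^{\kN}_B\subseteq\gamma^{\kN}_A$) or a different argument that neither you nor the paper supplies.
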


\begin{theorem} [Vertical compositionality]
\label{thm:vertical}
For any modules $M_1$, $M_2$, $M_3$, and KIRs \(A_1, A_2, B_1, B_2\),
\[
  \text{if }
  \sem{M_3} \refines_{A_2 \ttarrow B_2} \sem{M_2}
  \;\text{and}\;
  \sem{M_2} \refines_{A_1 \ttarrow B_1} \sem{M_1},
  \text{ then }
  \sem{M_3} \refines_{A_2 \bcirc A_1 \ttarrow B_2 \bcirc B_1} \sem{M_1}.
\]

\end{theorem}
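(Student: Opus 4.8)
The plan is to unfold the module refinement according to Definition~\ref{def:ref_function}: $\sem{M_t}\refines_{A\ttarrow B}\sem{M_s}$ asserts $\sem{M_t}.(\kdom)=\sem{M_s}.(\kdom)$ together with, for every pair of callee oracles $(\chi_s,\chi_e)$ and every $\chi_t$ with $\chi_t\subseteq\gmm{\kN}{A}(\chi_s,\chi_e)$, the five field inclusions relating $\sem{M_t}_{\chi_t}$ to $\sem{M_s}_{\chi_s}^{\chi_e}$ through the $\gamma$-instances of Theorem~\ref{thm:open-module-RA} (writing $\sem{M_s}_{\chi_s}^{\chi_e}.(\kerr)=\sem{M_s}_{\chi_s}.(\kerr)\cup(\sem{M_s}_{\chi_s}.(\kcll)\circ\chi_e)$ as in Example~\ref{ex:bref-open-module}). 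The domain clause is immediate by transitivity: $\sem{M_3}.(\kdom)=\sem{M_2}.(\kdom)=\sem{M_1}.(\kdom)$. For the rest, I would fix $\chi_3,\chi_1,\chi_e$ with $\chi_3\subseteq\gmm{\kN}{A_2\bcirc A_1}(\chi_1,\chi_e)$ and derive the five field inclusions between $\sem{M_3}_{\chi_3}$ and $\sem{M_1}_{\chi_1}^{\chi_e}$ against $B_2\bcirc B_1$.

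The first move is to introduce an intermediate callee oracle for $M_2$: the canonical pull-back $\chi_2\DEF\gmm{\kN}{A_1}(\chi_1,\chi_e)$ together with an associated intermediate aborting oracle $\chi_{e,2}$ obtained by pulling $\chi_e$ back along $A_1$. Then $\chi_2\subseteq\gmm{\kN}{A_1}(\chi_1,\chi_e)$ holds trivially, so applying the hypothesis $\sem{M_2}\refines_{A_1\ttarrow B_1}\sem{M_1}$ at $(\chi_2,\chi_1,\chi_e)$ yields the $B_1$-inclusions of $\sem{M_2}_{\chi_2}$ against $\sem{M_1}_{\chi_1}^{\chi_e}$; and, \emph{provided} one can establish $\chi_3\subseteq\gmm{\kN}{A_2}(\chi_2,\chi_{e,2})$, applying $\sem{M_3}\refines_{A_2\ttarrow B_2}\sem{M_2}$ at $(\chi_3,\chi_2,\chi_{e,2})$ yields the $B_2$-inclusions of $\sem{M_3}_{\chi_3}$ against $\sem{M_2}_{\chi_2}^{\chi_{e,2}}$. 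These two are then glued field by field: monotonicity of each $\gamma$-instance lets me replace $\sem{M_2}_{\chi_2}$'s behavior by its $B_1$-bound in terms of $\sem{M_1}_{\chi_1}$, while the intermediate aborting set $\sem{M_2}_{\chi_2}^{\chi_{e,2}}.(\kerr)=\sem{M_2}_{\chi_2}.(\kerr)\cup(\sem{M_2}_{\chi_2}.(\kcll)\circ\chi_{e,2})$ is bounded in terms of $\sem{M_1}_{\chi_1}^{\chi_e}.(\kerr)$ using the $\kerr$- and $\kcll$-clauses of the $B_1$-refinement plus the concatenation-preservation of the module-level RA instances (Theorem~\ref{thm:open-module-RA}); it then remains to fold the two $B$-worlds into one $B_2\bcirc B_1$-world, which reduces to the $\gamma$-composition inclusions $\gmm{\kX}{B_2}\bigl(\gmm{\kX}{B_1}(N,E),\gmm{\kF}{B_1}(\oset,E)\bigr)\subseteq\gmm{\kX}{B_2\bcirc B_1}(N,E)$ for $\kX\in\{\kN,\kC,\kF,\kI\}$. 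These last inclusions are routine: the inner $\gamma$-membership already supplies a reply/abort witness uniform in the source query, so one instantiates it at the intermediate query given by the definition of $\bcirc$ and recomposes worlds, reply relations, and trace prefixes.

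The main obstacle is the oracle-splitting step $\chi_3\subseteq\gmm{\kN}{A_2}(\chi_2,\chi_{e,2})$. In contrast to the gluing inclusions above, here one must exhibit, for a target query and a related intermediate query through $A_2$, a \emph{single} matching reply in $\chi_2$ that works uniformly for all source queries related to that intermediate query through $A_1$, whereas the hypothesis $\chi_3\subseteq\gmm{\kN}{A_2\bcirc A_1}(\chi_1,\chi_e)$ only supplies one such witness per source query, so the witnesses must be amalgamated along the Kripke worlds. This is precisely the ``intermediate-correspondence'' difficulty of vertically composing open/backward simulations, and it lies outside the reach of the refinement-algebra layer; the plan is therefore to discharge it by porting CompCertO's vertical composition of simulation conventions and its world management, consistent with the paper's stated design decision to follow CompCertO for vertical compositionality. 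The divergence fields ($\kfin$, $\kinf$) are handled by the reply-free --- hence simpler --- analogues of all of the above, and Lemma~\ref{lemma:fun2module} is not needed here, since the entire argument can be carried out directly on the module-level denotations that Definition~\ref{def:ref_function} already speaks about.
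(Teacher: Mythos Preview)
Your proposal is correct and takes essentially the same approach as the paper: introduce intermediate callee oracles for $M_2$, invoke the two hypotheses, and merge the caller-side refinements via the $\gamma$-composition inclusions, while deferring the callee-oracle decomposition to CompCertO's treatment of simulation conventions (cf.\ \S\ref{subsec:module_compose} and Fig.~\ref{subfig:denote_vert}). The paper itself offers no more detail on the oracle-splitting step than you do, simply stating that it ``follows CompCertO's approach'' for vertical compositionality.
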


\begin{theorem} [Horizontal compositionality]
\label{thm:horizon}
    For any modules $S_1$, $S_2$, $T_1$ and $T_2$, and a KIR \(A\),
\[
  \text{if } 
  \bracket{T_1} \sqsubseteq_{A \ttarrow A} \bracket{S_1}
  \;\text{ and }\;
  \bracket{T_2} \sqsubseteq_{A \ttarrow A} \bracket{S_2},
  \text{ then }
  \bracket{T_1} \oplus \bracket{T_2}
  \sqsubseteq_{A \ttarrow A}
  \bracket{S_1} \oplus \bracket{S_2}.
\]
    
\end{theorem}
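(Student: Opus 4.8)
The plan is to work directly from the fixed-point definition of semantic linking (Formula~(\ref{eq:module_semantic_linking}) together with its extensions to the $\kerr$, $\kcll$, $\kfin$, $\kinf$ fields in \S\ref{subsec:sem_link}) and the algebraic notion of module refinement in Def.~\ref{def:ref_function}. First I would unfold the goal: $\bracket{T_1}\oplus\bracket{T_2}\sqsubseteq_{A\ttarrow A}\bracket{S_1}\oplus\bracket{S_2}$ asks for equality of domains and, for every choice of truly-external callee oracles $\chi_t,\chi_s,\chi_e$ with $\chi_t\subseteq\gmm{\kN}{A}(\chi_s,\chi_e)$, five behavior-set inclusions between $(\bracket{T_1}\oplus\bracket{T_2})_{\chi_t}$ and $(\bracket{S_1}\oplus\bracket{S_2})_{\chi_s}$ (with $\chi_e$ carrying source-side abortion through the $\kcll$ field). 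Domain agreement is immediate: the linking $\kdom$ equation reduces it to domain agreement for the $T_i$ and $S_i$, which holds by hypothesis. The remaining work on the behavior sets rests entirely on the enhanced refinement-algebra structure of Thm.~\ref{thm:open-module-RA} (the module instances $\msal{\mA}{\kN}{}$, $\msal{\mA}{\kC}{}$, $\msal{\mA}{\kF}{}$, $\msal{\mA}{\kI}{}$ for the KIR $A$, which are union-, concatenation-, star- and infinity-preserving) plus one fixed-point induction.

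The crucial step is the $\knrm$ component, the only genuinely recursive one. Writing $\chi_0^t$ for $\linkofchi{\bracket{T_1}\oplus\bracket{T_2}}{nrm}=\mu\chi.(\fof{\chi\cup\chi_t}{T_1}{nrm}\cup\fof{\chi\cup\chi_t}{T_2}{nrm})$ and $\chi_0^s$ for the corresponding source least fixed point, I would show by Kleene fixed-point induction (Thm.~\ref{thm:kleene_fixpoint}), equivalently Park induction, that $\chi_0^t\subseteq\gmm{\kN}{A}(\chi_0^s,\chi_e')$, where $\chi_e'$ is the abortion oracle that the source-linking $\kerr$/$\kcll$ equations build from $\chi_0^s$ and $\chi_e$. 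Since the induction predicate ``$\subseteq$ a fixed set'' is admissible, it suffices to check that $\gmm{\kN}{A}(\chi_0^s,\chi_e')$ is a pre-fixed point of the target functional. Given an approximant $\chi^{(n)}\subseteq\gmm{\kN}{A}(\chi_0^s,\chi_e')$ and $\chi_t\subseteq\gmm{\kN}{A}(\chi_s,\chi_e)$, union-preservation gives $\chi^{(n)}\cup\chi_t\subseteq\gmm{\kN}{A}(\chi_0^s\cup\chi_s,\chi_e'\cup\chi_e)$; but $\chi_0^s\cup\chi_s$ is exactly the callee oracle handed to $S_i$ inside the source-linking fixed point, so the hypothesis $\bracket{T_i}\sqsubseteq_{A\ttarrow A}\bracket{S_i}$, instantiated at this oracle pair, yields $\fof{\chi^{(n)}\cup\chi_t}{T_i}{nrm}\subseteq\gmm{\kN}{A}(\fof{\chi_0^s\cup\chi_s}{S_i}{nrm},\,\cdot\,)\subseteq\gmm{\kN}{A}(\chi_0^s,\chi_e')$, using the source fixed-point equation and monotonicity of $\gmm{\kN}{A}$; union-preservation recombines the two disjuncts. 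This is precisely where fixing the \emph{same} KIR $A$ on both legs matters: the world evolution $\leadsto_A$ that $T_j$'s refinement \emph{guarantees} is the evolution that $T_i$'s refinement is allowed to \emph{rely} on across the now-internal call, so with $A\ttarrow A$ on each side the guarantees feed the relies, and the argument would not close for mismatched $A\ttarrow B$ and $A'\ttarrow B'$. Conveniently, the same instantiations of the hypotheses simultaneously deliver the $\kerr$, $\kcll$, $\kfin$, $\kinf$ inclusions for each $\bracket{T_i}$ under the related oracles, which are the raw material for the remaining components.

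The other four components are non-recursive, being of the closure shapes $C^{*}\circ Y$, $C^{+}\circ\ktest(\cdot)$, and the $\triangle$/$\blacktriangle$-guarded forms. I would first establish that the merged call behavior $\fof{\hat\chi^t}{T_1}{cll}\cup\fof{\hat\chi^t}{T_2}{cll}$ refines $\fof{\hat\chi^s}{S_1}{cll}\cup\fof{\hat\chi^s}{S_2}{cll}$ through $\gmm{\kC}{A}$ (union-preservation together with the internal-oracle relation just proven), and likewise the merged $\kerr$/$\kfin$/$\kinf$ behaviors through $\gmm{\kF}{A}$/$\gmm{\kI}{A}$; then the $C^{*}\circ Y$ and $C^{+}\circ\ktest(\cdot)$ forms of the linking $\kerr$ and $\kcll$ equations transfer by the star-preservation instances of Thm.~\ref{thm:open-module-RA}, a small $\ktest$-preservation lemma disposes of $\ktest(\kecall(\cdot))$ since the external-query predicates on the two sides coincide (the domains agree), and the $\kfin$/$\kinf$ forms transfer by the star- and infinity-preservation instances of the same theorem. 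Assembling domain equality with the five inclusions gives the conclusion via Def.~\ref{def:ref_function}; one may equivalently phrase it in the $\precsim$-form using the soundness result Thm.~\ref{thm:module-ref-sound}.

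I expect the main obstacle to be bookkeeping rather than conceptual, concentrated in two spots. The first is the abortion-oracle plumbing inside the $\knrm$ fixed-point induction: $\chi_e'$ must be defined to coincide with the source-linking $\kerr$ construction (which itself refers to $\chi_0^s$), and one must verify the mildly self-referential absorption inclusions it has to satisfy, using Kleene-algebra identities for $C_s^{*}$ and $C_s^{+}$ and the fact that the external oracle $\chi_e$ is only inhabited on functions outside the linked module. The second is the commutation lemmas for the silent/nonsilent filters needed in the divergence components---showing, e.g., $\triangle\,\gmm{\kC}{A}(N,E)\subseteq\gmm{\kC}{A}(\triangle N,\,\cdot\,)$ and the $\blacktriangle$ analogue---which is delicate precisely because in the abortion disjunct the target trace may strictly extend the matched source prefix, so ``target trace nonempty'' does not transfer verbatim; I would isolate these as auxiliary lemmas relying on the concrete shape of the module $\kcll$ behavior (an event, if any, occurs at the call point). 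Everything else is the kind of purely algebraic manipulation of the union/concatenation/star/infinity laws that the framework is built to discharge mechanically by the refinement tactics of \S\ref{subsec:proof_style}.
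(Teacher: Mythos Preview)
Your plan is correct and tracks the paper's own argument closely: fixed-point induction for the $\knrm$ component, then the union/concatenation/star/infinity laws of Thm.~\ref{thm:open-module-RA} (discharged by \texttt{gamma\_simpl}/\texttt{solvefix}) for $\kerr$, $\kcll$, $\kfin$, $\kinf$. The one place the paper is organized differently is exactly the spot you flag as the main obstacle. Rather than fixing the source $\knrm$ fixed point $\chi_0^s$ and a source-error oracle $\chi_e'$ up front and then doing Park induction on the target side---which forces you to check the ``mildly self-referential absorption'' that the per-module error/call behaviors of $S_i$ at oracle $\chi_0^s\cup\chi_s$ are contained in $\chi_e'$---the paper isolates a general RA lemma: for Scott-continuous $f,g,h$ on the three carriers of an RA, if $N_t\subseteq\gamma(N_s,E_s)$ implies $f(N_t)\subseteq\gamma(g(N_s),h(E_s))$, then $\mu f\subseteq\gamma(\mu g,\mu h)$. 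Instantiating $f$ with the target-linking $\knrm$ functional, $g$ with the source-linking $\knrm$ functional, and $h$ with a fixed-point presentation of the source-linking error (equivalent to $(\bracket{S_1}\oplus\bracket{S_2})_{\chi_s}.(\kerr)\cup(\bracket{S_1}\oplus\bracket{S_2})_{\chi_s}.(\kcll)\circ\chi_e$) runs all three inductions in lockstep. The single-step premise is then precisely your inductive step, but the error side is a co-evolving iterate rather than a fixed $\chi_e'$, so the absorption inclusions you worry about never arise as a separate obligation. What your hands-on version buys is that it makes the rely--guarantee reading (why $A\ttarrow A$ is needed on both hypotheses) completely explicit; what the paper's abstraction buys is that the bookkeeping you anticipate simply disappears into one clean induction.
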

The horizontal compositionality (termination case) is elegantly proved by the following lemma, where \(\mu x.f(x)\) can be \((\semlink{T_1}{T_2})_{\chi_t}.(\knrm)\), \(\mu x.g(x)\) can be \((\semlink{S_1}{S_2})_{\chi_s}.(\knrm)\), and \(\mu x.h(x)\) can be \((\semlink{S_1}{S_2})^{\chi_e}_{\chi_s}(\kerr)\)\footnote{\((\semlink{S_1}{S_2})^{\chi_e}_{\chi_s}(\kerr)\) represents a fixed point equivalent to  \( (\semlink{S_1}{S_2})_{\chi_s}(\kerr) \cup (\semlink{S_1}{S_2})_{\chi_s}(\kcll) \circ \chi_e\).}. 
This lemma itself can be easily proved by induction.
Other proof obligations can be systematically solved by the \tra and \tka tactics akin to proofs for Formula~(\ref{eq:loop_refines}). 
\begin{lemma}
Given a refinement algebra $\mcal{A}{} = (\mS, \mE, \mT, \gamma)$,  
with CPOs $(\mS, \subseteq)$, $(\mE, \subseteq)$, and $(\mT, \subseteq)$,  
and Scott-continuous functions
\(
   f : \mT \to \mT,\,
   g : \mS \to \mS,\,
   h : \mE \to \mE.
\)
We have:
\begin{gather*}
  \bigl(
    \forall N_t \in \mT,\, N_s \in \mS,\, E_s \in \mE,\;
    N_t \subseteq \gamma(N_s, E_s)
    \imply
    f(N_t) \subseteq \gamma(g(N_s), h(E_s))
  \bigr) 
  \imply
  \\
   \mu x.\, f(x)
   \;\subseteq\;
   \gamma\bigl(\mu x.\, g(x),\; \mu x.\, h(x)\bigr).
\end{gather*}    
\end{lemma}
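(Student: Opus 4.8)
\emph{Proof proposal.} The plan is to prove the inclusion by Kleene iteration, reducing it to a single induction on the iteration index. Since $f$, $g$, and $h$ are Scott-continuous functions on the pointed CPOs $(\mS,\subseteq)$, $(\mE,\subseteq)$, and $(\mT,\subseteq)$ whose least elements are the empty behavior set $\varnothing$, Thm.~\ref{thm:kleene_fixpoint} gives $\mu x.f(x)=\bigcup_{n\in\mathbb{N}} f^{n}(\varnothing)$, and likewise $\mu x.g(x)=\bigcup_{n} g^{n}(\varnothing)$ and $\mu x.h(x)=\bigcup_{n} h^{n}(\varnothing)$, where each family is an $\omega$-chain (because $\varnothing\subseteq f(\varnothing)$ and $f$ is monotone, so $f^{n}(\varnothing)\subseteq f^{n+1}(\varnothing)$, and similarly for $g$ and $h$), and the supremum of an $\omega$-chain under $\subseteq$ is its union.

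First I would establish, by induction on $n$, the pointwise inclusion $f^{n}(\varnothing)\subseteq \gamma\bigl(g^{n}(\varnothing),h^{n}(\varnothing)\bigr)$. The base case $n=0$ is immediate, since it reads $\varnothing\subseteq\gamma(\varnothing,\varnothing)$. For the inductive step, assuming $f^{n}(\varnothing)\subseteq\gamma(g^{n}(\varnothing),h^{n}(\varnothing))$, I instantiate the hypothesis of the lemma with $N_t\DEF f^{n}(\varnothing)$, $N_s\DEF g^{n}(\varnothing)$, and $E_s\DEF h^{n}(\varnothing)$, which yields $f^{n+1}(\varnothing)=f(f^{n}(\varnothing))\subseteq\gamma\bigl(g(g^{n}(\varnothing)),h(h^{n}(\varnothing))\bigr)=\gamma(g^{n+1}(\varnothing),h^{n+1}(\varnothing))$, closing the induction.

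Next I would bound every iterate uniformly by the target fixed point: since $g^{n}(\varnothing)\subseteq\mu x.g(x)$ and $h^{n}(\varnothing)\subseteq\mu x.h(x)$ for all $n$, monotonicity of $\gamma$—which is part of the definition of a refinement algebra (Def.~\ref{def:refinement_algebra})—gives $f^{n}(\varnothing)\subseteq\gamma(g^{n}(\varnothing),h^{n}(\varnothing))\subseteq\gamma(\mu x.g(x),\mu x.h(x))$ for every $n$. Hence $\gamma(\mu x.g(x),\mu x.h(x))$ is an upper bound of the chain $\{f^{n}(\varnothing)\}_{n}$, and taking the union on the left gives $\mu x.f(x)=\bigcup_{n} f^{n}(\varnothing)\subseteq\gamma(\mu x.g(x),\mu x.h(x))$, as desired.

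I do not expect this argument to present a real obstacle; the only points that need care are the pointedness of the three CPOs (so that Kleene iteration may start from $\varnothing$ and the base case of the induction is trivial) and the use of joint monotonicity of $\gamma$ in both arguments, exactly as stipulated by Def.~\ref{def:refinement_algebra}. It is worth noting that continuity of $\gamma$ is \emph{not} required: only its monotonicity, together with Scott-continuity of $f$, $g$, and $h$, is used, because the final step only needs $\gamma(\mu x.g(x),\mu x.h(x))$ to be an upper bound of the $f$-iterates, not that $\gamma$ distribute over the union.
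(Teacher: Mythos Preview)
Your proposal is correct and matches the paper's approach: the paper states only that ``this lemma itself can be easily proved by induction,'' and your argument is precisely the natural Kleene-iteration induction one would expect, with the base case, inductive step, and final monotonicity-of-$\gamma$ argument all spelled out correctly.
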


By the horizontal compositionality (Thm. \ref{thm:horizon}) and the equivalence between semantic linking and syntactic linking (Thm. \ref{thm:ss_equiv}), we have the following separate compilation correctness.
\begin{lemma} [Compositional correctness of the CompCert front-end]
    \label{corol:ccc-compcert}
    For any Clight open modules $S_1,\dots, S_n$ and Cminor open modules $T_1,\dots, T_n$, if $\ T_i$ is separately compiled from $S_i$ by the CompCert front-end for each $i$, then
    \(
     \sem{T_1} \oplus \cdots \oplus \sem{T_n}
      \sqsubseteq_{A_\kC \ttarrow A_\kC} 
     \sem{S_1} \oplus \cdots \oplus \sem{S_n}
    \).
\end{lemma}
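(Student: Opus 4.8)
The statement contains no new algebraic content; every nontrivial refinement fact is already supplied by Theorem~\ref{thm:cshmgen} (Cshmgen, function level), Theorem~\ref{thm:cmingen} (Cminorgen, function level), Lemma~\ref{lemma:fun2module} (lifting function refinement to modules), Theorem~\ref{thm:vertical} (vertical composition), Theorem~\ref{thm:horizon} (binary horizontal composition), and Theorem~\ref{thm:ss_equiv} together with its Clight instance Theorem~\ref{app:thm_ss_equiv} (semantic linking equals syntactic linking). The plan is to assemble these in three stages: first promote the two function-level phase-correctness theorems to open modules; then chain the two phases vertically, one module at a time; and finally fold the resulting per-module refinements into a single whole-program refinement using the binary horizontal-compositionality theorem iteratively.

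\textbf{Stage 1: from functions to modules.}
Let a Clight module $M = (id_1,F_1);\dots;(id_m,F_m)$ be compiled by Cshmgen into the Csharpminor module $M' = (id_1,F_1');\dots;(id_m,F_m')$, with $\mathcal{T}_\ttt{Cshmgenf}(F_i) = \ksome(F_i')$ for each $i$; Theorem~\ref{thm:cshmgen} then gives $\sem{(id_i,F_i')} \sqsubseteq_{A_\kU \ttarrow A_\kU} \sem{(id_i,F_i)}$ for every $i$. I would fix a common triple of callee oracles $(\chi_t,\chi_s,\chi_e)$ with $\chi_t \subseteq \gamma_{A_\kU}(\chi_s,\chi_e)$, unfold Definition~\ref{def:ref_function}, and take $\bigcup_i$ of the resulting per-behavior-set inclusions: the left-hand sides assemble into the fields of $\parents{M'}$; since $\circ$ distributes over $\cup$, the $\gamma$-arguments on the right assemble into the fields of $\parents{M}$; and $\bigcup_i \gamma_{A_\kU}(N_i,E_i) \subseteq \gamma_{A_\kU}(\bigcup_i N_i, \bigcup_i E_i)$ by the union-preservedness of the open-module RA instances (Theorem~\ref{thm:open-module-RA}(i)). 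Together with $\denof{\parents{M'}}{dom} = \denof{\parents{M}}{dom}$ (Cshmgen preserves function identifiers), this establishes the hypothesis $\parents{M'} \sqsubseteq_{A_\kU \ttarrow A_\kU} \parents{M}$ of Lemma~\ref{lemma:fun2module}, which yields $\sem{M'} \sqsubseteq_{A_\kU \ttarrow A_\kU} \sem{M}$. The identical argument applied to Theorem~\ref{thm:cmingen} gives, for $M'$ compiled by Cminorgen into the Cminor module $M''$, $\sem{M''} \sqsubseteq_{A_\kC \ttarrow A_\kC} \sem{M'}$.

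\textbf{Stages 2 and 3: vertical then horizontal composition.}
For each $i$, writing $S_i$ for the Clight source, $M_i'$ for its Csharpminor form, and $T_i$ for its Cminor form, I would apply Theorem~\ref{thm:vertical} with $A_1 = B_1 = A_\kU$ and $A_2 = B_2 = A_\kC$ to the chain $\sem{T_i}\sqsubseteq_{A_\kC\ttarrow A_\kC}\sem{M_i'}$ and $\sem{M_i'}\sqsubseteq_{A_\kU\ttarrow A_\kU}\sem{S_i}$, obtaining $\sem{T_i}\sqsubseteq_{A_\kC\bcirc A_\kU \,\ttarrow\, A_\kC\bcirc A_\kU}\sem{S_i}$. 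Because $A_\kU$ is the identity KIR, $A_\kC\bcirc A_\kU$ agrees with $A_\kC$ up to the world isomorphism $W_\sinjp\times\kunit \cong W_\sinjp$ and the fact that pre/post-composing each query/reply Kripke relation with $\idrel$ changes nothing; re-indexing the refinement along this isomorphism gives $\sem{T_i}\sqsubseteq_{A_\kC\ttarrow A_\kC}\sem{S_i}$. Finally I would prove $\sem{T_1}\oplus\cdots\oplus\sem{T_n}\sqsubseteq_{A_\kC\ttarrow A_\kC}\sem{S_1}\oplus\cdots\oplus\sem{S_n}$ by induction on $n$: the base case is the previous line, and in the inductive step Theorem~\ref{thm:ss_equiv} (on the source and target sides, its Clight/Cminor instance being Theorem~\ref{app:thm_ss_equiv}) identifies the partial links $\sem{S_1}\oplus\cdots\oplus\sem{S_k}$ with $\sem{S_1+\cdots+S_k}$, and likewise for the targets, so the binary Theorem~\ref{thm:horizon} with $A = A_\kC$ applies to $(S_1+\cdots+S_k,\;S_{k+1})$ and $(T_1+\cdots+T_k,\;T_{k+1})$, combining the inductive hypothesis with the per-module refinement of Stage~2; associativity of $\oplus$ (immediate from Theorem~\ref{thm:ss_equiv} and associativity of $+$) makes the bracketing irrelevant.

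\textbf{Expected main obstacle.}
There is no genuinely hard step—everything reduces to the cited compositionality theorems—so the real effort is bookkeeping, and the one place that will require care is the collapse $A_\kC\bcirc A_\kU \cong A_\kC$ in Stage~2: I would need to check that $\sqsubseteq_{\_ \ttarrow \_}$ is stable under re-indexing of Kripke worlds along an isomorphism, i.e.\ that the accessibility relation, the query relation, and the reply relation all transport coherently and that composing them with $\idrel$ is a no-op. In the mechanized development this is best isolated once as a lemma stating that the identity KIR is a two-sided unit for $\bcirc$ (up to this world isomorphism), after which Stages~1–3 go through by routine rewriting driven by the {\tts gamma\_simpl} and {\tts solvefix} tactics.
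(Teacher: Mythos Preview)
Your proposal is correct and follows exactly the route the paper indicates: the paper's own justification is the single sentence ``By the horizontal compositionality (Thm.~\ref{thm:horizon}) and the equivalence between semantic linking and syntactic linking (Thm.~\ref{thm:ss_equiv})'', and your three stages (function-to-module lifting via Lemma~\ref{lemma:fun2module}, vertical chaining via Theorem~\ref{thm:vertical}, then iterated binary horizontal composition via Theorem~\ref{thm:horizon} with Theorem~\ref{thm:ss_equiv} supplying the module form of partial links) are precisely the intended unpacking of that sentence. Your flagged obstacle---that $A_\kC \bcirc A_\kU$ must be identified with $A_\kC$---is a genuine bookkeeping point the paper leaves implicit, and isolating it as a unit law for the identity KIR is the right move.
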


\begin{theorem} [Final theorem]
    \label{thm:final-theorem}
    For any Clight open modules $S_1,\dots, S_n$ and Cminor open modules $T_1,\dots, T_n$, if $\ T_i$ is separately compiled from $S_i$ by the CompCert front-end for each $i$, then
    \[
     \sem{T_1 + \cdots + T_n}
      \sqsubseteq_{A_\kC \ttarrow A_\kC} 
     \sem{S_1 + \cdots + S_n}.
    \]
\end{theorem}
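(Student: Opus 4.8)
The plan is to derive the final theorem as a short consequence of two results already in place: the compositional correctness of the CompCert front-end (Lemma~\ref{corol:ccc-compcert}) and the semantic–syntactic linking equivalence (Theorem~\ref{thm:ss_equiv}, instantiated for Clight and Cminor modules as in Theorem~\ref{app:thm_ss_equiv}). Lemma~\ref{corol:ccc-compcert} already gives \(\sem{T_1} \oplus \cdots \oplus \sem{T_n} \sqsubseteq_{A_\kC \ttarrow A_\kC} \sem{S_1} \oplus \cdots \oplus \sem{S_n}\), so all that remains is to rewrite the iterated \emph{semantic} linking on both sides into the iterated \emph{syntactic} linking appearing in the theorem statement. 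Concretely, I would first lift the binary equivalence to an \(n\)-ary one, \(\sem{M_1} \oplus \cdots \oplus \sem{M_n} = \sem{M_1 + \cdots + M_n}\), valid for any list of Clight (resp.\ Cminor) open modules, and then substitute it on both sides of the refinement.

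\textbf{Key steps.} First I would prove the \(n\)-ary equivalence by induction on \(n\), fixing the right-nested parenthesization \(\sem{M_1} \oplus (\sem{M_2} \oplus (\cdots \oplus \sem{M_n}))\). The base case \(n = 1\) is immediate. For the inductive step, Theorem~\ref{thm:ss_equiv} yields \(\sem{M_1} \oplus \sem{M_2 + \cdots + M_n} = \sem{M_1 + (M_2 + \cdots + M_n)}\); applying the induction hypothesis to replace \(\sem{M_2 + \cdots + M_n}\) by \(\sem{M_2} \oplus \cdots \oplus \sem{M_n}\), and folding the syntactic linking under the same convention, closes the case. Next, I would instantiate this equivalence once for the source modules and once for the target modules and rewrite both sides of the refinement from Lemma~\ref{corol:ccc-compcert}: the left side \(\sem{T_1} \oplus \cdots \oplus \sem{T_n}\) becomes \(\sem{T_1 + \cdots + T_n}\) and the right side \(\sem{S_1} \oplus \cdots \oplus \sem{S_n}\) becomes \(\sem{S_1 + \cdots + S_n}\). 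Because \(\sqsubseteq_{A_\kC \ttarrow A_\kC}\) is defined purely in terms of the two denotations being compared (Definition~\ref{def:ref_function}), substituting each denotation by an equal one preserves the relation, and the goal \(\sem{T_1 + \cdots + T_n} \sqsubseteq_{A_\kC \ttarrow A_\kC} \sem{S_1 + \cdots + S_n}\) follows directly.

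\textbf{Main obstacle.} The difficulty is bookkeeping rather than conceptual. One must ensure that Lemma~\ref{corol:ccc-compcert}, the \(n\)-ary linking equivalence, and the final statement all agree on how iterated \(\oplus\) and iterated \(+\) are parenthesized; equivalently, one can discharge associativity of semantic linking once (it follows from the binary equivalence together with associativity of syntactic linking) and then work with unparenthesized notation. If Lemma~\ref{corol:ccc-compcert} is not treated as a black box, a second layer of bookkeeping appears: the per-module two-phase correctness (Theorems~\ref{thm:cshmgen} and~\ref{thm:cmingen}) must be composed via vertical compositionality (Theorem~\ref{thm:vertical}), using the identification \(A_\kC \bcirc A_\kU \cong A_\kC\), and horizontal compositionality (Theorem~\ref{thm:horizon}) must be iterated over the \(n\) modules. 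All of this is routine given the machinery already developed, so the final theorem is best viewed as a packaging result that combines horizontal composition of separate compilation with the adequacy of semantic linking.
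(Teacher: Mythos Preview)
Your proposal is correct and matches the paper's approach: the final theorem is presented as an immediate consequence of Lemma~\ref{corol:ccc-compcert} together with the equivalence between semantic and syntactic linking (Theorem~\ref{thm:ss_equiv}/Theorem~\ref{app:thm_ss_equiv}), exactly as you outline. The paper does not spell out the $n$-ary induction or the associativity bookkeeping, but your treatment of those points is the natural way to make the argument precise.
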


\paragbf{Handling global environments} 
The global environment $\gvar$ maps functions to their definitions and does not vary.
 Suppose that \(F\) is a function of program \(P\).
 For \(F\)'s semantics \(  \sem{F}^\gvar\),  the global environment \(\gvar\) is loaded from the program \(P\) following the original loading process of CompCert. 
 
When a compilation phase modifies the memory layout (e.g., Cminorgen), it is essential that the global environment be preserved under memory injection. Given a global environment $\gvar$, we extend the Kripke relation $R_B$ with respect to $\gvar$ (denoted $\msal{R}{\gvar}{B}$) as follows: for any world $(f, m_1, m_2)$,
 \begin{align}
    \forall a\,b,\;
    (a, b) \in \msal{R}{\gvar}{B}(f, m_1, m_2) \iff
    (a, b) \in R_B(f, m_1, m_2) \land
    \code{genv\_preserved}(\gvar, f, m_1, m_2).
    \label{eq:genv-extend}
 \end{align}
Similarly, a Kripke interface relation \(B = \pair{W_B, \leadsto_B, \msal{R}{q}{B}, \msal{R}{q}{B}}\) is extended under a global environment $\gvar$, if the Kripke relations \(\msal{R}{q}{B}\) and \( \msal{R}{r}{B}\) are extended under \(\gvar\) according to Formula~(\ref{eq:genv-extend}). 
Readers may safely skip these technical details 
without compromising the conceptual grasp of our framework.

\section{Comparison with small-step-based Compiler Verification}
\label{sec:comparison}
We compare our work with small-step-based approach in two aspects: 
 (1) how we prove behavior refinement of statements and procedures; and (2) how we support module-level compositionality.

\subsection{Comparison in Behavior Refinement}
\label{subsec:compare_bref}
The proof of behavior refinement relies on establishing an invariant (i.e., a matching relation) between source and target program states, which must be preserved throughout execution. The denotation-based approach, by focusing on overall outcomes rather than step-by-step processes, simplifies the construction of such invariants and alleviates the concern of accounting for intermediate and nondeterminism correspondence. We illustrate the two points using concrete examples.

\paragbf{Simplification of program states and invariants}

\begin{figure}[t]
\centering
\setlength{\fboxsep}{6pt}
\fbox{%
\begin{minipage}{0.95\linewidth}
\small
\textbf{Original invariant of \textsf{Cminorgen} (informal) includes:} \\[4pt]
\begin{tabular}{@{}ll@{}}
\kwd{match\_function} & The source and target functions currently executed are related. \\
\kwd{match\_statement} & The source and target statements currently executed are related. \\
\kwd{match\_continuation} & The source and target continuations (control contexts) are related. \\
\kwd{value\_inject} & Function arguments and return values are related by the injection $f$. \\
\kwd{match\_memory} & The source and target memories are related by injection 
  ($\minj{f}{m_s}{m_t}$). \\[2pt]
\kwd{match\_callstack} & Each stack frame corresponds with the following relations: \\
\quad -- \kwd{match\_tenv} & Temporary environments correspond under injection 
  ($\tinj{f}{te_s}{te_t}$). \\
\quad -- \kwd{match\_env} & Local environments are related, with pointers related by $f$. \\
\quad -- \kwd{padding\_freeable} & Padding target memory between local variables can be safely released. \\
\quad -- \kwd{match\_bounds} & Local variable addresses lie in valid memory blocks 
  (tracked by \kwd{nextblock}). \\
\end{tabular}
\end{minipage}}
\caption{Informal description of the simulation invariant for \textsf{Cminorgen} in CompCert.}
\Description{This figure summarizes, in informal terms, the simulation invariant used in CompCert’s \textsf{Cminorgen} proof. It lists the components that must be related between a source state and a target state during simulation. These include the currently executing source and target functions, the current statements, and the continuations or control contexts. It also requires that function arguments and return values are related by a memory injection \(f\), and that the source and target memories are related by the same injection. For each corresponding stack frame, the invariant additionally relates temporary environments, local environments with pointers mapped by \(f\), the ability to safely free padding memory between local variables in the target, and validity bounds for local variable addresses tracked by \texttt{nextblock}. Overall, the figure highlights that CompCert’s original front-end simulation invariant must simultaneously maintain correspondence across functions, statements, continuations, values, memories, and stack-frame environments.}
\label{fig:cminorgen-invariant}
\end{figure}

In our framework, we presented the definitions of program states for CompCert front-end languages in Table~\ref{tab:clight-sem-component}, define the \textsf{Cshmgen} proof invariant as the identity relation in Def.~\ref{def:Cshmgen_stmt_ref}, and describe the \textsf{Cminorgen} invariant in \S\ref{subsec:krel-cminorgen}.
By comparison, a front-end state in CompCert, illustrated here using Clight as a representative example, is originally defined with the following structure, with Csharpminor and Cminor adopting a similar pattern.

\begin{lstlisting}[language=Coq]
    Clight.state = | ItnlState: function -> statement -> cont -> env -> tenv -> mem -> state
                   | Callstate: fundef -> val -> list val -> cont -> mem -> state
                   | RtrnState: val -> cont -> mem -> state.
\end{lstlisting}
Here, each state represents either:
(i) an internal execution (\kwd{ItnlState}) of a function at a particular statement with its continuation, local environment, temporary environment, and the current memory; (ii)
a function call (\kwd{Callstate}) with a function definition, arguments, continuation, and the current memory; or (iii) a function return (\kwd{RtrnState}) with a return value, continuation, and the memory.
Consequently, establishing correspondences between source and target program states requires reasoning about all components of these enriched states, as summarized in Fig.~\ref{fig:cminorgen-invariant}.

\begin{wrapfigure}{r}{0.4\textwidth}
    \centering
    \includegraphics[width=0.95\linewidth]{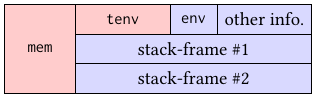}
    \caption{A front-end state in CompCert.}
    \label{fig:program_state}
    \Description{This figure illustrates the structure of a front-end program state in CompCert. The state contains several components used by the small-step semantics: the currently executing function, the current statement, the continuation or control context, the local environment, the temporary environment, and the memory state, together with related call and return states. The figure highlights that only the temporary environment and memory form the dynamic part of the state, shown in light red, because they can change during statement execution. The other components are static or control-related information. The figure is used to emphasize that the denotational approach keeps only the dynamic part in program states and treats the static local environment as a parameter instead.}
\end{wrapfigure}
In the original proofs of CompCert, all constraints over these components have to be maintained at each proof step, as illustrated in Fig.~\ref{subfig:compcert_sim}.
In comparison, the program state in denotational semantics defined in Table~\ref{tab:clight-sem-component} only needs to include the memory state and temporary environment (i.e., the real dynamic part that can be updated by store operations or variable assignments), as colored by light red in Fig.~\ref{fig:program_state}.
The local environment (i.e., the static part that is not modified during the execution of statements) is treated as a parameter of the semantic functions rather than as a component of the program state, and other information specific to small-step semantics is not required in our state definition. Furthermore, simplifying the state definition also reduces the complexity of establishing invariants. We quantify the simplification offered by our framework by comparing the \Coq definitions of program states and invariants in the CompCert front-end with that in our work in Table~\ref{tab:state-invariant}.

In proofs, we derive the function transformation correctness from the statement (i.e., function body) transformation correctness, as illustrated in Fig.~\ref{subfig:vcomp_sim}.
Our invariants for Cminorgen (defined in \S\ref{subsec:krel-cminorgen}) only involve the correspondence of dynamic parts (i.e., memory state and temporary environment), while the correspondence for static components (i.e., local environment and \kwd{padding\_freeable}), once established at function entry, is taken as a prerequisite for the refinement proof of statement transformation, and is preserved until function exit through the accessibility relation \(\leadsto_\sinjp\) (defined in \S\ref{subsec:krel-cminorgen}).
Thus, we lightened the proof burden of maintaining the invariant in each proof step by decomposing the refinement proof into smaller and simpler obligations.

\begin{figure}[t]
    \begin{subfigure}[b]{.48\linewidth}
    \centering
        \includegraphics[width=0.85\linewidth]{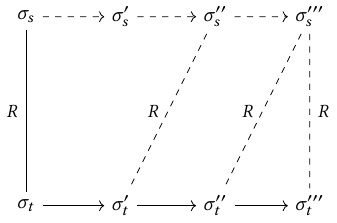}
    \caption{Small-step-based proof style}
    \label{subfig:compcert_sim}
    \end{subfigure} 
    \begin{subfigure}[b]{.48\linewidth}
        \flushright
        \includegraphics[width=0.85\linewidth]{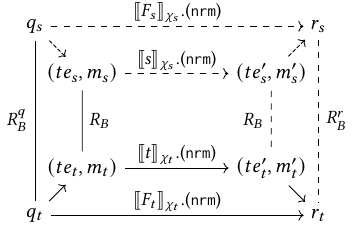}
        \centering
    \caption{Denotation-based proof style}
    \label{subfig:vcomp_sim}
    \end{subfigure}   
    \caption{Comparison between small-step-based approach and denotation-based approach in Cminorgen.}
    \label{fig:proof_decompose}
    \Description{This figure compares two proof styles for Cminorgen. Subfigure (a) shows the traditional small-step-based proof style used in CompCert. At each simulation step, the proof must maintain a large invariant relating many components of the source and target states, including the current function, statement, continuation, environments, memory, and call stack. Subfigure (b) shows the denotation-based proof style. The proof is decomposed into smaller obligations: function transformation correctness is derived from statement or function-body transformation correctness, and the invariant focuses only on the dynamic components, namely memory state and temporary environment. Static information such as the local environment is established at function entry and then preserved separately through an accessibility relation. The figure highlights that the denotation-based approach reduces the burden of maintaining complex invariants at every proof step.}
\end{figure}
  
\begin{table}[tbh]
\centering
\caption{Comparison of program state and invariant definitions in CompCert and our framework.}
\label{tab:state-invariant}
\begin{tabular}{|l|c|c|c|c|c|c|}
\hline
Definitions & Clit.state & Cshm.state & Cmin.state & \textsf{Cshmgen}.inv & \textsf{Cminorgen}.inv & Total \\
\hline
CompCert & 17 & 20 & 20 & 75 & 83 & 237 \\
Our work & 4 & 4 & 4 & 1 & 39 & 52 \\
\hline
\end{tabular}
\end{table}

\paragbf{Elimination of intermediate correspondence}
CompCert uses forward simulation to verify each compilation phase. This works well when each step of the source program indeed corresponds to multiple steps in the target program.
However, the correspondence between the source program and the target program can sometimes be many-to-one instead of one-to-many. In this case, verifying the forward simulation requires additional proof steps for the construction of the intermediate state correspondence.
In more complex compilation phases, no matter whether using forward or backward simulation, additional intermediate correspondences must be constructed. For instance, CompCertM~\cite{DBLP:journals/pacmpl/SongCKKKH20} requires mixed simulation to verify compilation correctness due to its modifications of interaction semantics. In comparison, the denotation-based approach can simplify such verification process by focusing on the overall execution of programs.
\begin{figure}[bht]
    \begin{subfigure}[b]{.47\linewidth}
    \centering
        \includegraphics[width=0.95\linewidth]{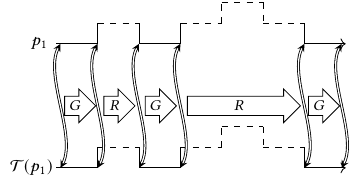}
    \caption{Refinement in small-step semantics}
    \label{subfig:small_ref}
    \end{subfigure} 
    \begin{subfigure}[b]{.52\linewidth}
        \flushright
        \includegraphics[width=0.95\linewidth]{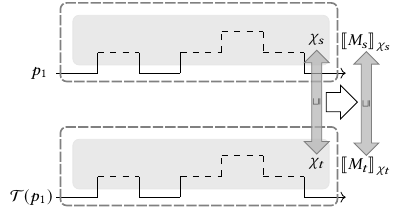}
    \centering
    \caption{Refinement in denotational semantics}
    \label{subfig:denote_ref}
    \end{subfigure}
    \caption{Comparison between the refinement process for small-step semantics and that for denotational semantics. In Fig. \ref{subfig:small_ref}, $R$ and $G$ respectively represent rely and guarantee conditions, vertical curved arrows denote the matching relation between the source and the target program states held throughout the execution,
    and vertical shaded arrows in Fig. \ref{subfig:denote_ref} denote the refinement between function denotations.}
    \Description{This figure compares refinement for open modules in small-step semantics and in denotational semantics. Subfigure (a) shows the small-step style. An incoming call to the target module is matched with a corresponding incoming call to the source module, and then the proof proceeds step by step through execution. Each outgoing external call of the target must match a corresponding outgoing external call of the source, and their returns must also match before the proof can continue. Throughout the execution, a state-matching relation is maintained between source and target states, while rely and guarantee conditions constrain how external calls and internal computations may affect the calling context. Subfigure (b) shows the denotational style. Instead of matching execution step by step, the semantics of an open module is viewed as mapping callee behavior to caller behavior through semantic oracles. Refinement is expressed vertically between the target and source function denotations: if the external source-function behavior is refined by the external target-function behavior, then the denotation of the compiled target module is refined by that of the source module. The figure emphasizes that the denotational approach captures rely and guarantee conditions through refinement between semantic oracles and module denotations, avoiding step-by-step matching of intermediate states and calls.}
    \label{fig:compare_ref}
\end{figure}

\def\kx{\kwd{x}} \def\ky{\kwd{y}}

We use the following examples (previously presented in \S\ref{subsec:compcert_bg}) to demonstrate that handling intermediate correspondence and nondeterminism is straightforward in our framework.
\begin{gather*}
    \begin{aligned}
      &\text{Source program } S:\ \ \kwd{print(x); y = f(x); print(y);} \\
      &\text{Transformed to } T: \ \ \kwd{y = g(x); print(x); print(y);}
    \end{aligned}
    \\
   \begin{aligned}
      &\text{Source program } S': \ \ \kwd{choice(x = 0, x = 1); print(0); print(x);} \\
      &\text{Transformed to } T': \ \ \kwd{print(0); choice(x = 0, x = 1); print(x);}
   \end{aligned}
\end{gather*}
Assume that they are type-safe programs. The following semantics is derived by definitions, where \(\sem{\kwd{print(x)}} \DEF \Set{(\sigma, \tau, \sigma) | \tau = \sigma(\kx) }\), and \(\sigma(\kx)\) means the value of a temporary variable \(\kx\) on state \(\sigma\).
\begin{gather*}
\begin{aligned}
    & \sem{S} = \Set{(\sigma, \tau, \sigma') | (\sigma,\knil,  \sigma') \in \sem{f} \land \tau = \kout(\sigma(\kx)) \kcons \kout(\sigma'(\ky)) \kcons \knil }
    \\
    &\sem{T} = \Set{(\sigma, \tau, \sigma') | (\sigma, \knil, \sigma') \in \sem{g} \land \tau = \kout(\sigma(\kx)) \kcons \kout(\sigma'(\ky)) \kcons \knil  }
\end{aligned}
\\
\begin{aligned}
    &\sem{S'} = \Set{(\sigma, \tau, \sigma') | \tau = \kout(\kwd{0}) \kcons \kout(\sigma'(\kx))\kcons \knil \land 
        \bigl( \sigma' = \sigma[x \mapsto \kwd{0}] \vee
               \sigma' = \sigma[x \mapsto \kwd{1}] \bigr)} 
    \\
    &\sem{T'} = \Set{(\sigma, \tau, \sigma') | \tau = \kout(\kwd{0}) \kcons 
    \kout(\sigma'(\kx))\kcons \knil \land 
        \bigl( \sigma' = \sigma[x \mapsto \kwd{0}] \vee
               \sigma' = \sigma[x \mapsto \kwd{1}] \bigr)}
\end{aligned}
\end{gather*}
As a result, we can straightforwardly conclude that \(\sem{S'} = \sem{T'}\) and if \(\sem{g} \refines \sem{f}\), then \(\sem{T} \refines \sem{S}\).

\subsection{Comparison in Supporting Verified Compositional Compilation}

Verified compositional compilers have been an important research topic for a long time, as evidenced by extensive research efforts such as 
\citeN{DBLP:conf/esop/BeringerSDA14},
\citeN{DBLP:conf/cpp/RamananandroSWKF15},
\citeN{DBLP:conf/popl/StewartBCA15},
\citeN{DBLP:journals/pacmpl/SongCKKKH20},
\citeN{DBLP:conf/pldi/KoenigS21}, and
\citeN{DBLP:journals/pacmpl/ZhangWWKS24}.

The common style of developing compositional semantics for supporting module-level compositionality is based on small-step semantics 
(including~\citeN{DBLP:conf/esop/BeringerSDA14},~\citeN{DBLP:conf/popl/StewartBCA15},~\citeN{DBLP:journals/pacmpl/SongCKKKH20},~\citeN{DBLP:conf/pldi/KoenigS21}, and~\citeN{DBLP:journals/pacmpl/ZhangWWKS24}),
in which the description of behavior refinement must introduce extra mechanisms to capture inter-module function calls. Because C functions in a module can both be called by other modules and call other modules, a C module’s compilation correctness described by small-step semantics, as shown in Fig. \ref{subfig:small_ref}, looks like: 
\begin{itemize}
    \item If an incoming call in the target language corresponds to a call in the source language (i.e., the arguments and programs states match with each other), then the first outgoing calls would match; and further, if the returned value and program states of these two first outgoing calls match with each other, then the second outgoing calls would match, etc. until the original incoming calls return.

    \item For horizontal compositionality, the compilation correctness of internal C functions relies on external
calls satisfying certain well-behavedness conditions (known as rely-conditions; e.g., external calls
do not modify the private memory of callers). In turn, the execution of internal C functions itself guarantees some well-behavedness conditions (known as guarantee-conditions, e.g., they do
not modify the private memory of their calling environments).
\end{itemize}

In comparison, the definition of our behavior refinement is much different.
  Since the semantics of open modules is modeled as a function from callee's denotation to caller's denotation, i.e., the behavior of every external call is interpreted by semantic oracle $\chi_s$ (for the source module) and $\chi_t$ (for the target module),
our compilation correctness is described as follows (shown in Fig. \ref{subfig:denote_ref}): 
  if the behavior of external source functions is refined by the behavior of external target functions, then the denotation of source module is refined by the denotation of the compiled target module.
The refinement of semantic oracles (i.e., callee's behavior) provides rely conditions for the refinement of caller's behavior, and the refinement of caller's behavior itself satisfies guarantee conditions.

 \citeN{DBLP:conf/cpp/RamananandroSWKF15} develop a compositional semantics on top of small-step semantics by
modeling the behavior of external calls as special events.
Such a special event records the procedure name and the memory state before and after the external call.
When composing the behavior of multiple procedures, one has to interpret these special traces, in which when an external procedure is called, instead of asking an oracle to obtain its semantics, the special event trace of the invoked procedure is used to interpret it, and so on.
For example, consider two procedures $p$ and $q$ that form two separate modules and do not use any memory state for simplicity.
If the behavior of $p$ is sequentially calling $q$ three times, and the behavior of $q$ is sequentially outputting zero for two times, then their semantics can be $\bracket{p}$ = {\tts Extcall}($q$)::{\tts Extcall}($q$)::{\tts Extcall}($q$)::{\tts nil}
and $\bracket{q}$ = {\tts OUT}(0)::{\tts OUT}(0)::{\tts nil}, where {\tts Extcall}($q$) is a special event and {\tts OUT}(0) is the original event of CompCert.
After the semantic linking, the behavior of $p$ would be outputting zero for six times, i.e., {\tts OUT}(0)::{\tts OUT}(0)::{\tts OUT}(0)::{\tts OUT}(0)::{\tts OUT}(0)::{\tts OUT}(0)::{\tts nil}.
That is, they compose semantics by replacing special events (based on the small-step semantics of CompCert), which can be viewed as progressing incrementally over the program's execution time. The key to proving behavior refinement for them is to establish the correspondence between events step by step. Therefore, their main proof structure is similar to that of small-step semantics. In comparison, our refinement proofs are built upon RA instances and are achieved in a more compositional way. 

\subsection{Comparison in vertical compositionality}

\begin{figure}[bht]
    \begin{subfigure}[b]{.53\linewidth}
        \flushright
        \includegraphics[width=0.9\linewidth]{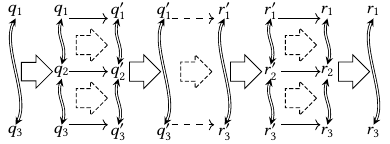}
    \centering
    \caption{Vertical composition in small-step semantics}
    \label{subfig:small_vert}
    \end{subfigure}
    \begin{subfigure}[b]{.46\linewidth}
    \centering
    \includegraphics[width=0.832\linewidth]{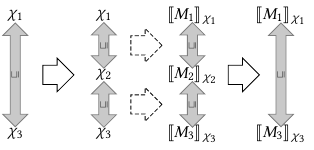}
    \caption{Vertical composition in our framework}
    \label{subfig:denote_vert}
    \end{subfigure}
    \caption{Comparison between vertical composition of open simulations and vertical composition of behavior refinement in our framework, where $q_i$ and $r_i$ represents the query and the reply of a function call in module $M_i$. The solid arrow $\rightarrow$ and dashed arrow $\dashrightarrow$ respectively mean internal steps and external steps. }
    \Description{This figure compares two styles of vertical composition for open-module correctness across two compilation phases, from module \(M_1\) to \(M_2\) and then to \(M_3\). Subfigure (a) shows the small-step style based on open simulations. Starting from an incoming query \(q_3\) to \(M_3\), the proof first splits its correspondence through an intermediate query \(q_2\) and then to \(q_1\). After internal execution steps, shown by solid arrows, each module may issue outgoing external queries \(q_1'\), \(q_2'\), and \(q_3'\); their pairwise correspondences must be merged into a direct correspondence between \(q_1'\) and \(q_3'\). The corresponding outgoing replies are then split again through the intermediate module, and after more internal steps the incoming replies \(r_1\), \(r_2\), and \(r_3\) are merged to obtain a direct correspondence between \(r_1\) and \(r_3\). Dashed arrows denote external steps, and the thick arrows emphasize the nontrivial splitting and merging of correspondences across phases. Subfigure (b) shows vertical composition in the denotational framework. Instead of following the interactive execution step by step, the proof has two higher-level steps: first, refinement of callee behavior between \(M_1\) and \(M_3\) is decomposed through the intermediate module \(M_2\); second, the caller-behavior refinements for \(M_3\) versus \(M_2\) and for \(M_2\) versus \(M_1\) are merged to derive refinement of \(M_3\) versus \(M_1\). The figure emphasizes that the denotational approach reasons about overall caller and callee behaviors, avoiding the repeated stepwise splitting and merging required by small-step simulations.}
    \label{fig:compare_vert}
\end{figure}

Consider three modules \(M_1\), \(M_2\), and \(M_3\) such that \(M_1\) is compiled into \(M_2\), and \(M_2\) is further compiled into \(M_3\) by two sequential compilation phases. The vertical compositionality is to show if \(\bracket{M_3} \sqsubseteq \bracket{M_2}\) and \(\bracket{M_2} \sqsubseteq \bracket{M_1}\), then \(\bracket{M_3} \sqsubseteq \bracket{M_1}\).  
We compare the vertical composition of behavior refinement in our framework with the vertical composition of open simulations between interactive semantics in Fig. \ref{fig:compare_vert}, where dashed fat arrows mean direct deduction by assumptions we know, and solid fat arrows mean that one needs to split the direct correspondence or merge the correspondence of two compilation phases (usually quite nontrivial). As shown in Fig. \ref{subfig:small_vert}, the vertical composition of open simulations typically proceeds as follows:
\begin{itemize}
  \item \emph{Split the initial correspondence of incoming queries}. The direct correspondence between incoming calls with initial queries \(q_1\) and \(q_3\) is decomposed. That means, there exists an incoming query \(q_2\) such that \(q_3\) corresponds to \(q_2\), and \(q_2\) corresponds to \(q_1\).

  \item \emph{Merge correspondences of outgoing queries}. After zero or more internal steps, an outgoing call may occur. By assumption, outgoing queries (\(q_i'\) for module \(M_i\)) are separately related, which are then merged into a direct correspondence between \(q_1'\) and \(q_3'\).  

  \item \emph{Split the correspondence of outgoing replies}. Following the outgoing calls with related queries \(q_1'\) and \(q_3'\), their corresponding replies \(r_1'\) and \(r_3'\) are assumed to be directly related. This correspondence is decomposed again for further deduction of internal steps.

  \item \emph{Merge correspondences of incoming replies}. After zero or more internal steps, the replies of initial incoming queries are separately related by assumptions. Finally, by merging relations of each phase, a direct correspondence between incoming replies \(r_1\) and \(r_3\) is established.
\end{itemize}
In comparison, the vertical composition in our framework, as shown in Fig. \ref{subfig:denote_vert}, goes through two steps:
  (i) decomposing the refinement of callee's behaviors between module $M_1$ and $M_3$; and (ii) merging the refinements of caller's behaviors between $M_1$ and $M_2$, and between $M_2$ and $M_3$. In other words, our framework approaches vertical composition from the perspectives of overall behaviors of the callee and the caller functions, whereas the small-step-based approach through the lens of the program's interactive process.
The connection between these two perspectives lies in that the first and last solid fat arrow in Fig. \ref{subfig:small_vert} corresponds to the last fat arrow in Fig. \ref{subfig:denote_vert}, and the second and third solid fat arrow in Fig.\ref{subfig:small_vert} corresponds to the first fat arrow in Fig. \ref{subfig:denote_vert}.

\begin{figure}[b]
  \begin{subfigure}[b]{0.33\textwidth}
  \includegraphics[width=0.642\linewidth]{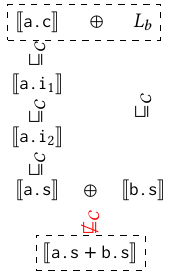}
  \centering
  \caption{CompComp}
  \label{fig:const-refinement}
  \end{subfigure}
  \begin{subfigure}[b]{0.28\textwidth}
  \includegraphics[width=0.96\linewidth]{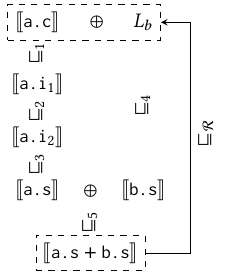}
  \centering
  \caption{CompCertM}
  \label{fig:sum-refinement}
  \end{subfigure}
  \begin{subfigure}[b]{0.33\textwidth}
  \includegraphics[width=0.817\linewidth]{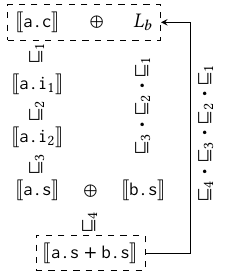}
  \centering
  \caption{CompCertO}
  \label{fig:prod-refinement}
  \end{subfigure}
  \caption{Refinements in existing VCC methods.
Adapted from~\citeN{DBLP:journals/pacmpl/ZhangWWKS24}.
\copyright\xspace 2024 the owner/author(s). Licensed under CC BY 4.0
(\url{https://creativecommons.org/licenses/by/4.0/}).
Changes: replaced $\precsim$ with $\sqsubseteq$.}
  \label{fig:refinements}
  \Description{This figure compares three refinement designs used in existing verified compositional compiler frameworks. Subfigure (a), labeled CompComp, depicts a single uniform refinement relation used across all compilation phases. This relation supports both vertical and horizontal composition, but it is heavyweight and tied to a restricted calling convention. Subfigure (b), labeled CompCertM, shows several phase-specific refinement relations feeding into one uniform refinement relation. The idea is that each individual compilation phase may be verified with its own refinement, and then lifted to a common refinement that supports composition. Subfigure (c), labeled CompCertO, shows refinement relations composed by concatenation along a compilation chain. This makes vertical composition straightforward, because consecutive refinements can be chained through intermediate languages, but it exposes the intermediate semantics and complicates horizontal composition with separately compiled environment modules. Overall, the figure contrasts the tradeoffs among a single uniform refinement, a family of refinements with a common target, and explicit concatenation of phase-specific refinements.}
\end{figure}

Though the above process of vertical composition seems intuitive, constructing interpolating program states to transitively connect evolving source and target states across external calls of open modules turns out to be a big challenge, as evidenced by a series of efforts~\cite{hur2012transitive,DBLP:conf/icfp/NeisHKMDV15,DBLP:journals/pacmpl/0001A19,DBLP:journals/pacmpl/SongCKKKH20,DBLP:journals/pacmpl/ZhangWWKS24}.
\citeN{DBLP:journals/pacmpl/ZhangWWKS24} make a comprehensive conclusion of challenges and approaches to
verified compositional compilers, as shown in Fig.~\ref{fig:refinements}.
Generally, an ideal VCC framework expects to find a refinement relation ``\(\refinesymb\)'' satisfying the following properties, where \(L\) (with subscripts) denotes the semantics of open modules. 
\begin{align}
    &\textbf{Vertical compositionality:} 
      &&L_3 \refinesymb L_2 \Rightarrow
        L_2 \refinesymb L_1 \Rightarrow
        L_3 \refinesymb L_1
    \\
   &\textbf{Horizontal compositionality:}
      &&L_2 \refinesymb L_1 \Rightarrow
        L_2' \refinesymb L_1' \Rightarrow
        L_2 \oplus L_2' \refinesymb
        L_1 \oplus L_1'
    \\
  &\textbf{Adequacy w.r.t. semantics:} \label{eq:adequacy}
     &&\sem{M_1 + M_2} \refinesymb \sem{M_1} \oplus \sem{M_2}
\end{align}
The final goal of VCC is to prove that, for example, if a C module \(\code{a.c}\) is compiled to assembly code
\(\code{a.s}\) going through \(\code{a.c} \rightarrow \code{a.i1} \rightarrow \code{a.i2} \rightarrow \code{a.s}\), and an environment module is separately compiled by a different compiler (typically, an assembly library \(\code{b.s}\) without any compilation), then \(\sem{\code{a.s} + \code{b.s}} \sqsubseteq \sem{\code{a.c}} \oplus L_b\), where \(L_b\) is the top-level semantic specification of \(\code{b.s}\). The proof of this goal usually proceeds as follows.

    (i) Show individual phases correctness, i.e., \(\sem{\code{a.s}} \refinesymb \sem{\code{a.i2}}\),
            \(\sem{\code{a.i2}} \refinesymb \sem{\code{a.i1}}\) and
            \(\sem{\code{a.i1}} \refinesymb \sem{\code{a.c}}\).
            
    (ii) Show end-to-end correctness by vertical compositionality, i.e., \(\sem{\code{a.s}} \refinesymb \sem{\code{a.c}}\) and \(\sem{\code{b.s}} \refinesymb L_b\).
    
    (iii) Show that \(\sem{\code{a.s}} \oplus \sem{\code{b.s}} \refinesymb \sem{\code{a.c}} \oplus L_b\) holds by horizontal compositionality.
    
    (iv) Show that \(\sem{\code{a.s} + \code{b.s}} \refinesymb \sem{\code{a.c}} \oplus L_b\) holds by adequacy (\ref{eq:adequacy}) and vertical compositionality.

\noindent
In practice, each compilation phase may use different refinement relations, since intermediate representations (or IRs) may have different ways of interaction with environment (so called \textit{language interface}). 
\textit{CompComp} uses a unified refinement relation for different compilation phases, which is both vertically and horizontally compositional but heavyweight and limited to interaction with C-like calling conventions. Thus CompComp does not support adequacy w.r.t. assembly semantics.
\textit{CompCertM} proposes a RUSC (Refinement Under Self-relation Context) theory featured with a fixed collection of refinement relations \(\{\refinesymb_1, ..., \refinesymb_n\}\) and a uniform refinement \(\refinesymb_\mathcal{R}\), so that individual compilation phases can be verified with one of \(\{\refinesymb_1, ..., \refinesymb_n\}\). The RUSC theory guarantees that given semantics of open modules \(L_1\) and  \(L_2\), if \(L_2 \refinesymb_i L_1\), then \(L_2\refinesymb_\mathcal{R} L_1\) for \(1 \leq i \leq n\), where \(\refinesymb_\mathcal{R}\) is both vertically and horizontally compositional.
\textit{CompCertO} introduces a naive product operator for concatenation of refinements, i.e.,
\(L_3 \refinesymb_2\compsymb \refinesymb_1 L_1\) iff there exists some \(L_2\) such that \(L_3 \refinesymb_2 L_2\) and \(L_2 \refinesymb_1 L_1\). Thus, vertical composition in CompCertO is trivial. However, as pointed out by Zhang et al., horizontal composition in CompCertO would expose the intermediate semantics of the compilation chain, in the sense that (see Fig. \ref{fig:prod-refinement}) 
for horizontal composition with the compilation correctness of \(\code{a.c}\), namely \(\sem{\code{a.s}} \refinesymb_3\compsymb \refinesymb_2 \compsymb \refinesymb_1 \sem{\code{a.c}}\), one has to show the environment module satisfying the same chain of refinement concatenation, i.e., \(\sem{\code{b.s}} \refinesymb_3\compsymb \refinesymb_2 \compsymb \refinesymb_1 L_b\). Zhang et al. address this problem by introducing a direct refinement of open modules (denoted as \(\refinesymb_\ttt{ac}\)). They show that the concatenation of refinements in CompCertO can be further refined to the direct refinement between C and assembly. For example, if \(\sem{\code{a.s}} \refinesymb_{3} \compsymb \refinesymb_{2} \compsymb \refinesymb_{1} \sem{\code{a.c}}\) and \(\refinesymb_{3} \compsymb \refinesymb_{2} \compsymb \refinesymb_{1}
    \ \preccurlyeq\ \refinesymb_\ttt{ac}\), then \(\sem{\code{a.s}} \refinesymb_\ttt{ac} \sem{\code{a.c}}\). In this perspective, the concatenation of refinements for CompComp and CompCertO can be refined to themselves, i.e.,  \(\refinesymb_\mathcal{C} \compsymb \refinesymb_\mathcal{C} 
    \ \preccurlyeq\ \refinesymb_\mathcal{C}\) and \(\refinesymb_\mathcal{R} \compsymb \refinesymb_\mathcal{R} 
    \ \preccurlyeq\ \refinesymb_\mathcal{R}\), while Zhang et al. achieve this with less restrictions by discovering a uniform and compositional Kripke relation for describing the evolution of memory injections across external calls (shown in Def.~\ref{def:injp} and used in our Cminorgen proof). 
    In our setting, given KIRs \(A_1, A_2\) and \(B_1, B_2\),
    \(
    \refines_{A_1 \ttarrow B_2} \compsymb \refines_{A_2 \ttarrow B_2}\) means  \(\refines_{A_2 \bcirc A_1 \ttarrow B_2 \bcirc B_1}
    \) (see Def.~\ref{def:kir}).
    {That is, the vertical compositionality shown in Example~\ref{ex:bref-open-module} follows the same approach of CompCertO and can achieve direct refinement following Zhang et al's approach.}

\section{Evaluation}
\label{sec:evaluation}
\begin{table}[hbt]
\caption{Significant lines of our code relative to CompCert v3.11.}
\centering
\label{tab:code_evaluate}
\begin{tabular}{|p{4.8cm}|r r|r r|}
\hline
 Portion & \multicolumn{2}{|c|}{Specs.} & \multicolumn{2}{|c|}{Proofs} \\
\hline
\textit{Front-end languages:} & old & new & old & new \\ 
Clight.v (\S\ref{sec:semantics_nocall}) 
    & 489 & 432 & 14 & 194 \\
Csharpminor.v (\S\ref{sec:semantics_nocall})
    & 345 & 305 & 0 & 97 \\
Cminor.v & 748 & 239 & 217 & 184 \\

\hline
\textit{Front-end compilation phases:} & \multicolumn{2}{c|}{} & \multicolumn{2}{c|}{} \\
Cshmgenproof.v & 550 & 453 & 1217 & 1322 \\
Cminorgenproof.v & 753 & 679 & 1149 & {1634} \\

\hline
\textit{Compositionality:} & \multicolumn{2}{c|}{} & \multicolumn{2}{c|}{} \\
Semantic linking (\S\ref{sec:semantics_pcall}, Thm. \ref{thm:ss_equiv})
& \multicolumn{2}{r|}{261} & \multicolumn{2}{r|}{561} \\
Refinement algebras (\S\ref{sec:refinement})
& \multicolumn{2}{r|}{362} & \multicolumn{2}{r|}{475} \\
Automated proof tactics (\S\ref{subsec:proof_style})
& \multicolumn{2}{r|}{1583} & \multicolumn{2}{r|}{1511} \\
Horizontal composition. (\S\ref{subsec:module_compose})
& \multicolumn{2}{r|}{103} & \multicolumn{2}{r|}{392} \\
Vertical composition. (\S\ref{subsec:module_compose})
& \multicolumn{2}{r|}{91} & \multicolumn{2}{r|}{146} \\

\hline
\textit{Demos on toy languages:} & \multicolumn{2}{c|}{} & \multicolumn{2}{c|}{} \\

Const. prop. on PCALL (Thm. \ref{thm:cp_pcall})
& \multicolumn{2}{r|}{650} & \multicolumn{2}{r|}{1173} \\

Const. prop. on CFG (Thm. \ref{thm:cp_cfg})
& \multicolumn{2}{r|}{196} & \multicolumn{2}{r|}{967} \\

DCE on PCALL (Thm. \ref{thm:dce_pcall})
& \multicolumn{2}{r|}{188} & \multicolumn{2}{r|}{519} \\

DCE on CFG (Thm. \ref{thm:dce_cfg})
& \multicolumn{2}{r|}{86} & \multicolumn{2}{r|}{456} \\

CFG generation (Thm. \ref{thm:cfg_gen})
& \multicolumn{2}{r|}{1031} & \multicolumn{2}{r|}{1988} \\

\hline
\textbf{Total}
& \textbf{2885} & \textbf{6659} & \textbf{2597} & \textbf{11619} \\
\hline
\end{tabular}
\end{table}

\begin{table}[htb]
    \caption{Comparison with existing compositional compilers derived from CompCert.}
    \centering
    \begin{tabular}{|p{4.2cm}|c|c|c|c|}
    \hline
    Portion  &   CompComp & CompCertM & CompCertO\footnotemark  & Our work\\
    \hline
    Based  CompCert version & v2.1 & v3.5 & v3.10 & v3.11 \\

    Front-end languages & 7042 & 2670 & 1634 & 1451 \\

    Front-end compilation phases & 11071 & 4306 & 3625 & 4088 \\

    Horizontal \& Vertical compos. & 15489 & 7656 & 948 & 732 \\
    
    \hline
    \textbf{Total}& \textbf{33602} & \textbf{14632} & \textbf{6207} & \textbf{6271} \\
    \hline
    \end{tabular}

    \label{tab:code_compare}
\end{table}

The significant lines of our \Coq development relative to CompCert v3.11 is shown in Table \ref{tab:code_evaluate}. 
  Our verification for the CompCert front-end is 5,539 lines in total (i.e., the \emph{front-end languages} and \emph{compilation phases} portion  in Table \ref{tab:code_evaluate}, including specifications and proofs), which increases only 57 lines relative to the CompCert v3.11's 5,482 lines. 
  The decrease in specification size is due to the reuse of common semantic operators (e.g., union and  composition of denotations) in different languages, as well as the simplification of program states and their invariants mentioned in \S\ref{subsec:compare_bref}.
  The increase in proof size is because we add deterministic proofs for the two phases, so as to apply the forward simulation proof of atomic statements in CompCert to our backward refinement. Furthermore, {the use of multiple behavior sets in semantic definition results in repeated proofs for atomic statements, indicating potential room for further code reduction.} In the \emph{compositionality} portion, our novel semantic linking operator takes 261 lines. The equivalence between semantic and syntactic linking (i.e., Thm. \ref{thm:ss_equiv}) takes 561 lines. We provide a full range of RA instances (including all the examples in \S\ref{subsec:ref_examples}$\sim$\S\ref{subsec:ra_instances}), taking 362 lines for definitions and 475 lines for axiom proofs. Our new proof tactic {\tts gamma\_simpl} and {\tts solvefix} take 1,583 lines for definitions and 1,511 lines for proofs. Finally, our demos on the verification of a typical CFG generation and optimizations (i.e., the \emph{demo} portion in Table \ref{tab:feature_comparison}) on toy languages take totally 2,151 lines for definitions and 5,103 lines for proofs.  

We compare our development with typical compositional compilers in Table \ref{tab:code_compare}. For the implementation of the CompCert front-end (i.e., the rows of front-end languages and compilation phases in Table \ref{tab:code_compare}), our approach does not significantly increase the code size, and yet significantly reduces the required code for supporting horizontal and vertical compositionality. As is well-known, the structured simulation approach employed by CompComp \cite{DBLP:conf/popl/StewartBCA15} extensively increases proof effort. CompCertM \cite{DBLP:journals/pacmpl/SongCKKKH20} supports compositional verification based on the RUSC theory, whose developments totally take about 7.6K code lines.
CompCertO \cite{DBLP:conf/pldi/KoenigS21} requires 948 lines for the horizontal and vertical composition of two simulations. Our compositional verification is as lightweight as CompCertO, but differently, our entire framework is directly based on backward refinement to provide better support for nondeterminism.

\footnotetext{Our statistics focuses on the latest version of CompCertO extended by Zhang et al. for supporting direct refinement.}

\begin{table}[th]
\centering
\caption{Increased significant lines of code for support new features.}
\label{tab:feature_comparison}
\setlength{\extrarowheight}{2pt}
\begin{tabular}{|l|l|c|c|r|}
\hline
Feature             & Portion                   & Before support & After support & Increased by       \\ \hline
\multirow{5}{*}{Procedure calls}
     & Const. Prop. on PCALL      & 1596            & 1823           & 227 (14.2\%)\ \     \\
     & Const. Prop. on CFG        & 970             & 1163           & 193 (19.9\%)\ \   \\
     & DCE on PCALL               & 606             & 707            & 101 (16.7\%)\ \   \\
     & DCE on CFG                 & 450             & 542            & 92 (20.4\%)\ \    \\
     & CFG generation             & 2607            & 3019           & 412 (15.8\%)\ \   \\ \hline
\multirow{2}{*}{Uncond. branch} 
     & Front-end languages         & 1250            & 1451           & 201 (16.1\%)\ \   \\
     & Front-end phases         & 3623            & 4088           & 465 (12.8\%)\ \   \\ \hline
\end{tabular}
\end{table}

We demonstrate the lightweight extensibility of our framework by presenting the incremental code required to support two new features that were not accommodated in our earlier development.
The first one is the support of procedure calls in toy languages. Adding the support of procedure calls to the WHILE and earlier CFG languages totally requires an additional 1,025 lines of code for definitions and proofs of the constant propagation and dead code elimination optimizations (corresponding to the first five portions in Table \ref{tab:feature_comparison}).
The second feature is unconditional branching introduced in \S\ref{subsec:goto_sem}. We now support this feature in the CompCert front-end (i.e., the last two portions in Table \ref{tab:feature_comparison}), with an increase of 201 lines for the CompCert front-end languages (i.e., Clight, Csharpminor and Cminor), and 465 additional lines for proofs of the Cshmgen and Cminorgen.

\section{Discussion}
\label{sec:discussion}
\paragbf{Denotational semantics versus small-step semantics}
There is no definitive consensus on whether denotational semantics or small-step operational semantics should serve as the gold standard for semantic characterization, as both are employed in different contexts.
Denotational semantics inherently supports compositionality and is often most convenient when expressed via least fixed points (e.g., in the semantics of open modules in \S\ref{sec:semantics_pcall}).
By contrast, small-step semantics is frequently preferred for low-level languages such as assembly.
As shown in \S\ref{subsec:compile_correct} and \S\ref{subsec:module_compose}, our results are stated in terms of denotations and the algebraic notion of behavior refinement presented in \S\ref{subsec:ra_instances}. We further proved the soundness of the algebraic behavior refinement with respect to CompCert-style behavior refinement in Thm.~\ref{thm:cshm-ref-sound} and Thm.~\ref{thm:module-ref-sound}.
Consequently, relating our denotation-based refinement results back to the original CompCert presentation reduces to establishing an equivalence theorem between our denotational semantics and CompCert's standard small-step semantics for Clight.
To illustrate this connection, we provide a supplementary mechanized proof of equivalence between denotational and small-step semantics for extended WHILE and CFG languages with I/O events considered.
{The proof relies on structural induction (on the syntax of the program) or rule induction (on the state transitions of small-step semantics).}
However, this proof does not yet cover three features of Clight---namely, procedure calls, unconditional branching (e.g., \textbf{goto}), and structured control flow (e.g., \textbf{break} and \textbf{continue}). 
Extending the equivalence proof to these remaining constructs is orthogonal to the refinement results developed in this article and is left to future work.

\paragbf{Denotational semantics versus big-step semantics}
Big-step operational semantics, widely used in compiler correctness and type-safety proofs, describes how a program evaluates in terms of computations that directly map initial states to final results. For example, the CakeML compiler~\cite{DBLP:conf/popl/KumarMNO14} employs a functional big-step semantics, where divergence is treated using a step-indexed construction rather than a coinductive relation.
Furthermore, coinductive approaches to big-step semantics have been investigated, such as the work of \citeN{DBLP:journals/scp/AnconaDRZ20}, which extends standard big-step semantics with inference systems and corules to capture infinite computations.
More recently, \citeN{DBLP:journals/toplas/ChargueraudCEG23} proposed omni-big-step semantics, which defines a single execution step from an initial state to a set of potential outcomes, demonstrating its applicability to compiler correctness, type-safety for lambda calculi, and reasoning about partial correctness. 
Nevertheless, despite these developments, big-step semantics and denotational semantics remain fundamentally distinct, particularly in their treatment of loops and recursive constructs. Big-step semantics captures the execution of such constructs through operational rules that describe how results are produced, while denotational semantics provides a mathematical interpretation based on abstract structures such as least or greatest fixed points. Furthermore, divergence in big-step semantics is typically modeled using a coinductive relation, while denotational semantics naturally captures nontermination by the use of least or greatest fixed points in suitable semantic domains.

\paragbf{Semantics in UTP} Unifying theories of programming (UTP)~\cite{DBLP:conf/RelMiCS/JifengH98,DBLP:conf/ifm/WoodcockC04} serves as a framework for unifying different programming theories and models, primarily based on denotational semantics.
To be brief, the semantics of recursion in UTP is handled using Tarski's fixed-point theory, leveraging the mathematical framework of complete lattices. In this context, divergence is often represented as the bottom element (\(\bot\)) of the lattice, signifying a program that fails to produce a meaningful or valid output. However, using the semantic framework of UTP to uniformly characterize the various program behaviors required for compiler verification faces a key challenge: it is difficult to establish an appropriate ordering on a complete lattice. If we consider the simple subset relation on sets, neither the least fixed point nor the greatest fixed point can correctly capture divergence with event traces. For instance, the silent divergence of program ``\kwd{\textbf{while} true \textbf{do} skip}'' should be characterized using the greatest fixed point, and its reactive (or nonsilent) divergence should be captured using the least fixed point.
        Conversely, the silent divergence of program ``\kwd{\textbf{while} true \textbf{do} print(0)}'' should be described by the least fixed point, while its reactive divergence should rely on the greatest fixed point.    
    This demonstrates that a conventional ordering cannot meet the requirements. Resolving this issue might necessitate constructing a highly counterintuitive ordering.
    In our framework, all definitions remain intuitive and straightforward, since we describe these behaviors by different behavior sets with relevant semantic operators.

\paragbf{Refinement calculus and Event-B} The refinement calculus~\cite{DBLP:journals/toplas/Morgan88,DBLP:journals/scp/Morris87,DBLP:books/refine/calculus} is a formal framework that focuses on transforming high-level specifications into concrete implementations (e.g., specific data structures or algorithms), through a series of mathematically verified refinement steps. Specifically, specifications in the refinement calculus are written as abstract programs, usually in the form of specification statements. 
Each refinement step would transform a certain component of the abstract program to a more concrete one by applying one of the refinement laws (i.e., inference rules for valid refinements). Event-B~\cite{DBLP:books/daglib/0024570} closely mirrors the conceptual framework of refinement calculus and extends related concepts to state-based systems and events.
Our framework shares a common trait with these pioneering efforts: both of us can derive the refinement of composite structures through the refinement of their substructures.
In comparison, the refinement algebras in our framework are used to algebraically and uniformly
build behavior refinement between the source and target program behaviors
and to enable mechanized verification,
while refinement calculus and Event-B focus on stepwise refinement of statements so as to bridge the gap between specifications and their implementations.

\paragbf{More compilation optimizations} Beyond presented optimizations, our denotation-based framework can be useful for a wide range of other optimizations, including tail recursion optimizations, inlining optimizations, context-sensitive optimization, and specification-based optimizations.

Specifically, {tail recursion optimizations} convert recursive procedure calls into iterative constructs (e.g., a while loop). Since the semantics of both recursive procedures and loops are defined by taking fixed points, proving semantic equivalence before and after the optimization is straightforward. For instance, the procedure
``{\tts src() \{ \textbf{if} $b$ \textbf{then} $c$; src() \textbf{else} skip \}}''
 can be optimized to
``{\tts  tgt() \{ \textbf{while} $b$ \textbf{do} $c$ \}}'' for given Boolean expression $b$ and statement $c$ in which no more recursive calls appear. 
Their denotational semantics\footnote{We here consider a module with the single procedure {\tts foo} for presentation purpose, and omit details about procedure names, arguments, and local variables in the semantic definition since they will not affect the final result.}, respectively shown in Formulae (\ref{eq:foo_src}) and (\ref{eq:foo_tgt}), are intuitively the two forms of loop's semantics, as illustrated by Example \ref{ex:domain_vs_ka}. We believe that more general cases can be proved similarly.
\begin{align}
    &\mu \chi.\fof{{\chi}}{\ttt{src}}{nrm} = \mu \chi.
    \big(\mathbf{test}(\bracket{b}.(\ttt{ffs})) \circ \idrel\big)
        \cup 
    \big(\mathbf{test}(\bracket{b}.(\ttt{tts})) \circ \fof{\chi}{c}{nrm} \circ \chi\big) \label{eq:foo_src}
    \\
    &\mu \chi.\fof{\chi}{\ttt{tgt}}{nrm}  =
     \fof{}{\ttt{while b do c}}{nrm} \label{eq:foo_tgt}
     = \big(\mathbf{test}(\bracket{b}.(\ttt{tts})) \circ \fof{\chi}{c}{nrm} \big)^* \circ
     \mathbf{test}(\bracket{b}.(\ttt{ffs}))
\end{align}

{Inlining optimizations} within a module replace procedure calls with the body of the callee procedures at the call sites. Before inlining, the denotations of all procedures in the module are first combined and then their semantics is derived by taking fixed points. After inlining, the denotations of callee procedures somehow are directly substituted into the call sites within the caller procedures, and then the semantics of caller procedures is derived by taking fixed points. In a procedural perspective, proving the correctness of inlining optimizations is conceptually similar to demonstrating the equivalence between semantic linking and syntactic linking.

{Context-sensitive optimizations} mean that
realistic compilers can make use of information from context-sensitive analyses, as demonstrated by \citeN{DBLP:conf/pldi/LattnerLA07}, \citeN{DBLP:conf/iwmm/LiCK13}, and \citeN{DBLP:journals/pacmpl/PoesiaP20}, so that a callee procedure can be optimized with the information that all calls to the callee satisfy certain properties. Typically, compilers can create multiple versions of a procedure optimized for different contexts (i.e., procedure cloning). Consider the example in Fig. \ref{fig:context_sensitive_opt},
\begin{figure}
    \centering
\begin{tabular}{|l || l|}
    \hline
      Before optimization &  After optimization\\
    {\tts callee(bool b) \{ \textbf{if} b \textbf{then} $c_1$ \textbf{else} $c_2$ \} }
         & {\tts callee\_tt() \{ $c_1$ \} \quad \quad callee\_ff() \{ $c_2$ \} }\\
    {\tts caller\_src() \{ callee(true); callee(false) \} }
         & {\tts caller\_tgt() \{ callee\_tt(); callee\_ff() \} }
         \\
    \hline
\end{tabular}
    \caption{A simple example of context-sensitive optimizations.}
    \label{fig:context_sensitive_opt}
    \Description{This figure shows a simple example of context-sensitive optimization by procedure specialization. The left column, labeled Before optimization, contains two procedures. The first is \texttt{callee(bool b)}, whose body branches to \(c_1\) when \(b\) is true and to \(c_2\) when \(b\) is false. The second is \texttt{caller\_src()}, which calls \texttt{callee(true)} and then \texttt{callee(false)}. The right column, labeled After optimization, shows the specialized version. The original callee is split into two procedures: \texttt{callee\_tt()}, whose body is just \(c_1\), and \texttt{callee\_ff()}, whose body is just \(c_2\). The caller is correspondingly rewritten as \texttt{caller\_tgt()}, which calls \texttt{callee\_tt()} and then \texttt{callee\_ff()}. The figure illustrates how a compiler can use calling-context information to replace a single conditional callee with multiple specialized versions.}
\end{figure}
where {\tts callee} is specialized according to the caller's context.
Verifying such an optimization in our framework can be reduced to 
first show the refinement of callee procedures under assumed conditions and then to show the refinement between caller procedures that guarantee the assumptions.

{Specification-based optimizations} leverage the specification of invoked procedures to optimize the caller procedure. For example, consider a procedure:
  {\tts  foo() \{ $c_1$; x = callee(x); $c_2$ \}}.
If the specification of {\tts callee} guarantees that it does not modify the variable {\tts x} and always returns a value identical to its input, the assignment {\tts x = callee(x)} can then be safely eliminated. Moreover, the information from the callee's specification can also enhance optimizations such as constant propagation and dead code elimination (DCE).
In our framework, the refinement processes for the caller and callee procedures are explicitly decoupled. This separation simplifies the implementation of specification-based optimizations, as it isolates the properties of the callee while preserving the modularity and correctness of the overall refinement.

In summary, these optimizations, particularly those involving procedure calls, are not as intuitively understood using small-step semantics. In contrast, they are much more straightforward when analyzed through the lens of denotational semantics. The latter provides a high-level, mathematical representation of program behavior, making it easier to reason about the transformations and their correctness, especially for higher-level constructs like procedure calls.

\paragbf{Liveness preservation} 
Safety properties guarantee that ``nothing bad will happen'' during program execution, such as preventing invalid states. Liveness properties, on the other hand, ensure that ``something good eventually happens'', asserting that every execution will lead to a desirable outcome within a program.
Although our refinement method primarily focuses on safety, it is important to clarify that this does not inherently preclude the verification of liveness preservation. In fact, our approach is capable of reasoning about liveness preservation if augmented with additional proof strategies or invariants tailored to liveness verification. For instance, \citeN{DBLP:journals/pacmpl/BartheBGHLPT20} develop a Constant-Time Secure (CTS) preserving compiler based on CompCert. The CTS property can be summarized as follows: given a safety program \( P \), if \( P \) is executed from any two ``indistinguishable'' initial states, then both of the two executions will produce the same leakage information (both executions terminate with the same leakage information, or they both diverge with the same leakage information), where \emph{leakage information} includes the truth values of conditions, memory access addresses, and the addresses of invoked functions. Their approach relies on small-step semantics and CT-simulation~\cite{DBLP:conf/csfw/BartheGL18}---a three-dimensional extension of CompCert’s flat simulation, forming a ``cube diagram''.
{We can also address the CTS preservation problem by enriching the event trace of each behavior set in our semantics with leakage information, defining the CTS property using denotational semantics, and showing the CTS preservation compositionally.  }

\paragbf{{Our limitation}}
Currently, our framework supports semantic linking for heterogeneous languages, provided they share the same memory model or employ memory models that can be mutually transformed. Moreover, we have not yet addressed concurrent settings or higher-order features (i.e., features of higher-order languages). We are particularly interested in exploring how our approach can be extended to these domains. Furthermore, unlike interaction trees or small-step semantics, our semantics is not directly executable.
Nevertheless, we have established formal equivalence with small-step operational semantics for simple imperative languages as mentioned above. We believe that extending the equivalence proof to realistic languages presents no fundamental difficulty.

\section{Related Work}
\label{sec:related_work}

\paragbf{Denotational semantics.}

In traditional denotational semantics, two fixed-point theorems are ubiquitous:
the Kleene fixed point theorem (for Scott-continuous operators on CPOs) and
the Knaster-Tarski fixed point theorem (for monotone operators on complete lattices).
Researchers have long tended to commit to one of these frameworks when handling recursion, e.g., Kleene’s theorem in domain theory~\cite{scott1970outline,scott1971toward}, and Knaster-Tarski's theorem in lattice-based frameworks such as UTP~\cite{DBLP:conf/RelMiCS/JifengH98,DBLP:conf/ifm/WoodcockC04}.
This commitment entails adopting a single global order over the semantic domain, endowed with (global) properties such as continuity or completeness.
However, a tension has long been observed: “trivial” domains that easily satisfy these properties often cannot express important program features, whereas richer domains make the required global completeness or continuity difficult to establish.
This tension has motivated increasingly elaborate fixed-point theories.
The phenomenon is particularly evident in the evolution of CSP’s denotational semantics~\cite{DBLP:series/txcs/Roscoe10} (following Hoare and subsequent work): from the \emph{trace model}, to the \emph{(stable) failures model}, to the failures-divergences model~\cite{DBLP:journals/jacm/BrookesHR84}, and eventually to more expressive extensions (e.g., incorporating infinite traces~\cite{DBLP:conf/birthday/Roscoe04}).
 Rather than binding ourselves to a single fixed-point method or a uniform global order, we advocate a staged (componentwise) approach to applying fixed-point theorems: equip each semantic component with its natural (simple) order, and apply the most suitable fixed-point principle per component. This staged and componentwise strategy sidesteps contrived global structures while preserving clarity, proof simplicity, and applicability.

More specifically, the foundational work of~\citeN{scott1970outline} and~\citeN{scott1971toward} in domain theory provides a mathematical framework for the denotational semantics of deterministic programs.
  Following their work, later efforts \cite{DBLP:conf/pc/BroyGW78,DBLP:conf/ac/Park79,apt1981cook,DBLP:journals/tcs/Back83} attempt to extend the framework for supporting nondeterministic programs, especially programs that can produce an infinite number of different results and yet be certain to terminate.
  Among them,
  \citeN{DBLP:journals/tcs/Back83} describes a path semantics for nondeterministic assignment  statements indicated by semantic models described for CSP programs \cite{DBLP:journals/jcss/FrancezHLR79} and data flow programs \cite{DBLP:conf/popl/Kosinki78} based on paths.
  The relational style of defining denotations in this article is inspired from the relational semantics of~\citeN{DBLP:conf/ac/Park79}, which originally aims to address unbounded nondeterminism in the fair scheduling problem of concurrency. See our comparison with this work in \S\ref{subsec:denote_bg} and \S\ref{subsec:while_lang}.

  Besides, \citeN{DBLP:conf/ifip/Kahn74} describes a class of asynchronous deterministic parallel programs, now known as Kahn networks. These networks are represented as systems of recursive equations based on stream transformations. Each process in the network consumes input streams and produces output streams.  
  Kahn also provides a denotational semantics for these networks by leveraging complete partial orders (CPOs), which focuses on stream-based communication between parallel processes. \citeN{paulin2009constructive} presents a formalization of denotational semantics for Kahn networks in the \Coq proof assistant. The formalization includes definitions of processes, channels, and their compositions.
In addition, denotational semantics has been employed to construct semantic models for probabilistic programming languages, as illustrated in works such as \citeN{DBLP:conf/esop/BartheEGGHS18}, \citeN{DBLP:journals/entcs/WangHR19}, and \cite{DBLP:journals/pacmpl/VakarKS19}. 
Among these works, \citeN{DBLP:conf/esop/BartheEGGHS18} adopt the traditional semantics of probabilistic distributions to develop an assertion-based program logic for reasoning about probabilistic programs. \citeN{DBLP:journals/entcs/WangHR19} tackle the interplay of nondeterminism and probability through domain theory. \citeN{DBLP:journals/pacmpl/VakarKS19} support higher-order functions by employing quasi-Borel spaces (QBS) and omega-complete quasi-Borel spaces  (\(\omega\)QBS).

\paragbf{Compiler verification with interaction trees} The Vellvm project~\cite{DBLP:journals/pacmpl/XiaZHHMPZ20,LLVM-IR} uses Interaction Trees (ITrees)~\cite{DBLP:journals/pacmpl/XiaZHHMPZ20} to model the semantics of LLVM
IRs and to verify compiler transformations. Specifically, interaction trees provide a denotational approach for modeling recursive, effectful computations that can interact with their environment.
Much different from the textbook denotational semantics and the relational denotational semantics used in this article, interaction trees are executable via code extraction, making them suitable for debugging, testing, and implementing software artifacts. 
To this end, the ITree-based denotational semantics is also used by~\citeN{CCR} for building executable module semantics in Conditional Contextual Refinement (CCR), and~\citeN{ITree24} further scales the semantics for higher-order languages.

  In comparison, a principal advantage of our approach is the smooth proofs for horizontal compositionality and for semantic adequacy (i.e., the equivalence between syntactic and semantic linking for open modules shown in Thm.~\ref{thm:ss_equiv}). The key observation is that the adequacy proofs reduce directly to the Beki\'{c} theorem~\cite{DBLP:conf/ibm/Bekic84e}.
   ITree-based frameworks model function calls and semantic composition via event handlers. To the best of our knowledge, within these frameworks the equivalence between syntactic and semantic linking for open modules has not been fully established, nor is it clear how such an equivalence could be derived from the Beki\'{c} theorem.

Moreover, ITrees use \kwd{Tau} to hide internal computation steps (i.e., steps with no observable events). Establishing refinement (e.g., strong bisimulation) between two ITrees may require additional treatment of \kwd{Tau} steps---that is, any finite number of \kwd{Tau} can be removed when determining equivalence. In comparison, our refinement directly relates the target event trace with \emph{strictly equal} (or \emph{prefixing} when the source aborts) source event trace. This is possible because our semantics distinguishes finite and infinite event traces via the behavior sets \kfin\ and \kinf straightforwardly.

\paragbf{{Advanced simulation techniques beyond CompCert}}
  In CompCert’s backward simulation, the central proof obligation is that: under a given relation $R$, for any pair of related states $(\sigma_t,\sigma_s)\in R$, if the target can take a step $\sigma_t \to \sigma_t'$, then the source can simulate this behavior by taking zero or more steps $\sigma_s \to^{*} \sigma_s'$ so that $(\sigma_t',\sigma_s')\in R$ and the same observable events are emitted. 
  A broad class of more advanced techniques relaxes this obligation by allowing the post‑step states to be related by a \emph{weaker} relation \(R'\), i.e., $(\sigma_t',\sigma_s')\in R'$ with $R'$ not necessarily equal to $R$ (e.g.,~\cite{Pous2016}). To ensure soundness, such approaches tend to impose additional \emph{recovery} simulation rules (for example, that $R$ be re‑established at designated synchronization points).
  {As one of the state-of-the-art efforts,} FreeSim~\cite{FreeSim} introduces \emph{progress indices} into the matching relation and provides new simulation rules for flexible simulation proofs.
    While the emphasis of FreeSim is put on extending stuttering simulations~\cite{StutterSim} to ensure soundness with only asynchronous progress (i.e., a simulation proof step that allows one side of the simulation to stay in place (stutter) while the other side to step forward) required,
    using progress indices also allows intermediate proof steps to proceed under weaker relations while ensuring that stronger invariants are recovered when required.
Consequently, in \emph{small-step semantics} based settings,
FreeSim can reduce the CompCert-style burden of preserving a single global matching relation by every step, thereby easing proofs involving \emph{intermediate correspondence}. This is illustrated in Fig.~\ref{fig:relaxed-bsim}: transforming ``\kwd{print(x); y = f(x); print(y);}'' into ``\kwd{y = f(x); print(x); print(y);}'' is justified when \kwd{f(x)} is a pure, deterministic computation that produces no observable behavior.
\begin{figure}[htb]
    \begin{subfigure}[b]{.48\linewidth}
    \centering
        \includegraphics[width=0.95\linewidth]{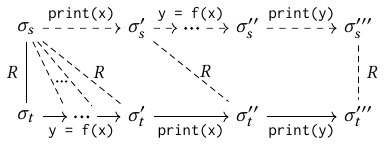}
    \caption{An instance of CompCert backward simulation}
    \label{subfig:compcert-bsim}
    \end{subfigure} 
    \begin{subfigure}[b]{.48\linewidth}
        \flushright
        \includegraphics[width=0.95\linewidth]{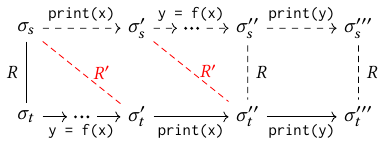}
    \centering
    \caption{An instance of advanced simulation diagrams}
    \label{subfig:relaxed-bsim}
    \end{subfigure}   
    \caption{Comparison between CompCert's simulation with advanced simulation techniques.}
    \label{fig:relaxed-bsim}
    \Description{This figure compares the standard backward simulation used in CompCert with a more flexible advanced simulation style. Subfigure (a) shows a CompCert-style backward simulation step: from a pair of related target and source states under a single relation \(R\), one target step must be matched by zero or more source steps, and the resulting states must again satisfy the same relation \(R\) while producing the same observable events. Subfigure (b) shows an advanced simulation diagram: after a target step and the matching source steps, the resulting states may satisfy a weaker intermediate relation \(R'\) instead of the original relation \(R\). Additional recovery or synchronization steps are then used to re-establish the stronger relation when needed. The figure illustrates how advanced simulation techniques relax the requirement that every proof step preserve one fixed global matching relation, making it easier to reason about intermediate correspondence in transformations such as reordering a pure computation around observable \texttt{print} statements.}
\end{figure}

For the example in Fig.~\ref{fig:f2bsim-eg1}, the flexibility of Freesim can make the proofs that the fragment code \kwd{g} simulates \kwd{f} and the proofs that the larger program $T$ simulates $S$ simpler and easier to carry out respectively. However, it remains unclear how to modularly reuse the former proofs for the latter (w.r.t. proofs based on small-step semantics). In comparison, our denotation-based approach addressed such issues illustrated in \S\ref{subsec:compare_bref}.

{Regarding \emph{nondeterminism correspondence},
FreeSim comfortably handles scenarios where nondeterministic choices occur entirely \emph{before} or entirely \emph{after} an event-synchronization step. However, when \emph{demonic} choices occur on \emph{opposite sides} of such a step (i.e., \textsc{step-event} in FreeSim) and influence subsequent observable events, a small-step semantics based proof in FreeSim generally fails: the later demonic step may not match a value chosen earlier on the other side. Still consider:
\[
\begin{aligned}
\text{Source program}:\quad 
&\mathcolor{gray}{[\sigma_s]}\;
\kwd{choice(x = 0, x = 1);}\; \kwd{print(0);}\;
\mathcolor{gray}{[\sigma_s']}\; \kwd{print(x);} \\
\text{Transformed to }:\quad 
&\mathcolor{gray}{[\sigma_t]}\;\kwd{print(0);}\;
\mathcolor{gray}{[\sigma_t']}\;\kwd{choice(x = 0, x = 1);}\; \kwd{print(x);}
\end{aligned}
\]
At the first event synchronization on \kwd{print(0)}, the proof obligation becomes to relate the \emph{post‑event} states \((\sigma_s',\sigma_t')\) under some derived relation \(R'\).
From there, the target would perform its internal demonic step \kwd{choice}, and the proof must succeed \emph{for all} outcomes \(x_t\in\{0,1\}\).
If \(x_t\neq x_s\) (where \(x_s\) was chosen earlier by the source), the subsequent visible actions \kwd{print(x)} do not match, so the required second event synchronization cannot be discharged and the simulation gets stuck.

Early research \cite{DBLP:journals/tcs/AbadiL91,DBLP:journals/iandc/LynchV95} proposes the use of \emph{prophecy variables} to enhance small-step semantics based simulations for handling such general nondeterminism correspondence. 
 They leverage prophecy variables to anticipate and reconcile nondeterministic choices between the source and target programs, ensuring that simulation aligns with their observable event traces.
 However, these approaches are often challenging to implement in practice \cite{DBLP:journals/toplas/LamportM22}. As far as we know, they have not yet been formalized and used for realistic compiler verification.
 {CompCert achieves a balance between theoretical complexity and practical applicability by eliminating nondeterminism from the outset.}
 
Besides, both relational denotational semantics and big-step operational semantics exhibit equal advantages when dealing with such a specific problem of nondeterminism; see \S\ref{subsec:compare_bref} for our solution.

\paragbf{Choice Trees (CTrees)} CTrees~\cite{ctrees23,ctrees25} extend ITrees with \emph{two} forms of internal choice to facilitate the reasoning about nondeterminism: \emph{stepping} choice, which materializes as $\kwd{Tau}$‑steps in the induced labeled transition system (LTS), and \emph{delayed} choice, which is structural and does not create LTS transitions. This extension enables CTrees to address the above problem: when a choice is modeled as \emph{delayed}, CTrees’ equational laws (congruent for composition) allow that choice to commute with the next observable event, aligning programs where one makes the choice before the event and the other after. In comparison, we model the whole event traces of a program in behavior sets (simpler than a tree structure), and hence the correspondence is immediate.

\paragbf{Compositional compiler verification.}
Many existing works have extended the small-step semantics of CompCert for supporting cross-module semantic linking. Among them, 
 \citeN{DBLP:conf/cpp/RamananandroSWKF15} model the behavior of an external call as a special event recording the function name and memory states before and after the external call so as to signal state transitions made by the environment and then ``big-step'' the small-step semantics to obtain the compositional semantics for semantic linking.
 \citeN{DBLP:conf/esop/BeringerSDA14}
 propose a novel interaction model, called core semantics (also known as interaction semantics), that describes the communication between a local thread with its environment, which is widely used by later efforts such as CompComp \cite{DBLP:conf/popl/StewartBCA15}, CompCertM \cite{DBLP:journals/pacmpl/SongCKKKH20} and CASCompCert \cite{DBLP:conf/pldi/JiangLXZF19}. 
Specifically, with the interaction semantics,
CompComp proves a general correctness result for semantic linking where programs can be written with heterogeneous languages such as C and assembly code. However, to achieve horizontal (module-level) and vertical compositionality, it introduces complicated \emph{structured simulations} which require a large number of changes to the original proofs of CompCert.

Compared to CompComp, SepCompCert \cite{DBLP:conf/popl/KangKHDV16} greatly reduces the complexity of proofs by limiting all source modules to be compiled by the same compiler. 
Furthermore, both CompCertM \cite{DBLP:journals/pacmpl/SongCKKKH20} and CompCertO \cite{DBLP:conf/pldi/KoenigS21} have developed a lightweight verification technique for supporting semantic linking of heterogeneous languages. CompCertM proposes a RUSC (Refinement Under Self-related Contexts) theory for composing two open simulations together, while CompCertO characterizes compiled program components directly in
terms of their interaction with each other and achieves compositionality through a careful
and compositional treatment of calling conventions. Recently, 
 \citeN{DBLP:journals/pacmpl/ZhangWWKS24} have proposed a fully composable and adequate approach to verified compilation with direct refinements between open modules based on CompCertO.
 We believe that their approach to handling calling conventions of heterogeneous languages can be applied to our framework.
\citeN{DBLP:journals/pacmpl/0001A19} make a comprehensive survey of compositional compiler correctness. They take existing compiler correctness theorems as a spectrum which ranges from CompCert, SepCompCert, CompCertX~\cite{DBLP:conf/popl/GuKRSWWZG15,DBLP:journals/pacmpl/WangWS19}, and CompComp up to compilers verified with multi-language techniques.

\paragbf{Other verified compiler or compilation passes}
 CertiCoq \cite{DBLP:journals/pacmpl/Paraskevopoulou21} and CakeML \cite{DBLP:conf/popl/KumarMNO14} are typical verified functional language compilers based on operational semantics. We believe that our method can also be extended to support typed lambda calculus and the verification of functional language compilation. Besides, the CompCert back-end is mostly implemented in control flow graphs. 
If procedure calls are not considered, there is no big difference between denotational semantics and small-step semantics in the back-end. However, we still believe that the proposed semantics using semantic oracle to interpret the behavior of callee functions is beneficial for verifying these back-end passes and then achieving module-level compositionality.
We are interested in these topics and leave them as future work.

\paragbf{Discussion on program verification}
The small-step semantics of Clight is also used in static analyzers like Verasco \cite{DBLP:phd/hal/Jourdan16} and program logics like VST-Floyd \cite{DBLP:journals/jar/CaoBGDA18}. The soundness proof of VST is connected to the proof of CompCert, ensuring that the correctness properties proved for Clight programs are preserved through compilation down to the target machine code.
We believe that for sequential programs, rebuilding Hoare-like logics on denotational semantics should be simple. For concurrent programs, more trace-related elements need to be incorporated into the semantics so as to provide semantic basis for related program logics.

\section{Conclusion}
\label{sec:conclusion}
We have proposed a denotation-based framework for compositional compiler verification. In this framework, we have scaled relational denotational semantics to realistic settings that capture diverse program features, and we have defined a novel semantic linking operator based on fixed-point theorems, reducing its equivalence with syntactic linking to concise fixed-point properties. In particular, we propose the notion of refinement algebras to define the refinement relations of interest and to facilitate the automation of the required proofs, and support verified compositional compilation in a language-independent way. We have applied this denotational semantics to the front-end languages of CompCert, reproved the correctness of compilation from Clight to Cminor, and demonstrated that the denotation-based approach is suitable for verifying typical optimizations.

\begin{acks}
We are grateful to Andrew W. Appel, Naijun Zhan, Zhenjiang Hu, Zhong Shao, Yuxi Fu, Yuxin Deng, Yuting Wang, and Huan Long for insightful suggestions and helpful discussions.
We also thank Ximeng Li, Shuling Wang, and Yiyuan Cao for careful feedback.
We thank the editors and the anonymous TOPLAS referees for their thoughtful and constructive comments.
This work was supported by the National Natural Science Foundation of China (Grant No.~62472274,~61902240).
\end{acks}

\newpage
\bibliographystyle{ACM-Reference-Format}
\bibliography{references.bib}

\end{document}